\newtheorem{theorem}{Theorem}[section]
\newtheorem{lemma}[theorem]{Lemma}
\newtheorem{fact}[theorem]{Fact}
\newtheorem{definition}[theorem]{Definition}
\newtheorem{problem}[theorem]{Problem}
\newtheorem{proposition}[theorem]{Proposition}
\newtheorem{corollary}[theorem]{Corollary}
\newtheorem{remark}[theorem]{Remark}
\newtheorem{claim}[theorem]{Claim}
\numberwithin{equation}{section}
\numberwithin{figure}{section}
\newcommand{\ident}[1]{\ensuremath{\texorpdfstring{\mathrm{\mathnm{#1}}}{#1}}\xspace}
\newcommand{\domain}[1]{\ensuremath{\mathbb{#1}}\xspace}
\newcommand{\powerset}{\mathsf{\mathnm{P}}}
\newcommand{\N}{\domain{N}}
\newcommand{\R}{\domain{R}}
\DeclareMathOperator*{\prob}{\mathbb P}
\newcommand{\eqdef}{\stackrel{\text{def}}=}
\renewcommand{\mod}[1]{\allowbreak\mkern10mu({\operator@font mod}\,\,#1)}
\newcommand{\mathcalsym}[1]{\ensuremath{\mathcal{#1}}\xspace}
\newcommand{\CA}{\mathcalsym{A}}
\newcommand{\CB}{\mathcalsym{B}}
\newcommand{\CD}{\mathcalsym{D}}
\newcommand{\CI}{\mathcalsym{I}}
\newcommand{\CP}{\mathcalsym{P}}
\newcommand{\CR}{\mathcalsym{R}}
\newcommand{\CS}{\mathcalsym{S}}
\newcommand{\CU}{\mathcalsym{U}}
\newcommand{\CV}{\mathcalsym{V}}
\newcommand{\CW}{\mathcalsym{W}}
\newcommand{\comment}[1]{}
\newcommand{\from}{\colon}
\newcommand{\dom}{\ident{dom}}
\newcommand{\lang}{\ident{L}}
	\tikzstyle{ubrace} = [draw, thick, decoration={brace, amplitude=7pt, mirror, raise=0.0cm}, decorate,
	\tikzstyle{rbrace} = [draw, thick, decoration={brace, amplitude=7pt, mirror, raise=0.0cm}, decorate,
	\tikzstyle{obrace} = [draw, thick, decoration={brace, amplitude=7pt, raise=0.0cm}, decorate,
	\tikzstyle{lbrace} = [draw, thick, decoration={brace, amplitude=7pt, raise=0.0cm}, decorate,
\newcommand{\eval}[2]{%
	\pgfmathparse{#2}%
	{\global\edef#1{\pgfmathresult}}%
}
\newcommand{\evalInt}[2]{%
	\pgfmathparse{int(#2)}%
	{\global\edef#1{\pgfmathresult}}%
}
\newcommand{\dL}{\mathtt{{\scriptstyle L}}}
\newcommand{\dR}{\mathtt{{\scriptstyle R}}}
\newcommand{\mathnm}[1]{#1}
\newcommand{\restr}{{\upharpoonright}}
\newcommand{\sufcut}{{\downharpoonleft}}
\tikzstyle{dot} = [draw,shape=circle,fill, minimum size=1mm, inner sep=0pt, outer sep=0pt]
\tikzstyle{edge}  = [draw, thick,->]
\evalInt{\zQtity}{5}
\eval{\xScale}{0.5}
\eval{\yScale}{1.0}
\eval{\xZero}{0}
\eval{\yZero}{0}
\newcommand{\getXofR}[2]{\xZero + \xScale * ((#1 + 1.0) * (\zQtity + 1.0) + (#2))}
\newcommand{\getYofR}[1]{\yZero - \yScale * (#1)}
\newcommand{\showT}[2]{
	\eval{\xDist}{#1}
	
	\eval{\xL}{\getXofR{\minD}{\minD}}
	\eval{\xR}{\getXofR{\maxD}{\maxD}}
	\eval{\y}{\getYofR{-0.2}+0.3}
	
	\node at (\xL-\xDist,\y) {#2};
	\node at (\xR+\xDist,\y) {#2};

	\foreach \i in {\minD,...,\maxD} {
		\eval{\mx}{\getXofR{\i}{\zQtity/2}}

		\node[scale=1] at (\mx,\y) {\i};
	}
}
\newcommand{\showLabel}[1]{
	\eval{\lx}{\getXofR{\minD}{0}-0.4}
	\eval{\ly}{\getYofR{\lev-0.5}}

	\node[scale=1, anchor=mid east] at (\lx,\ly) {#1};
}
\newcommand{\showR}[2]{
	\evalInt{\lev}{#1}
	\evalInt{\n}{#2}

	\eval{\lx}{\getXofR{0}{\dPar}}
	\eval{\ly}{\getYofR{\lev}}

	% \node[scale=1, anchor=mid east] at (\lx,\ly) {$\CS_{\n}$};

	\foreach \i in {\minD,...,\maxD} {
		\eval{\xL}{\getXofR{\i}{0}}
		\eval{\xR}{\getXofR{\i}{\zQtity}}
		\eval{\y}{\getYofR{\lev}}

		\node[dot] (c\lev-\i-L) at (\xL,\y) {};
		\node[dot] (c\lev-\i-R) at (\xR,\y) {};
		
		\draw[line width=0.5mm] (c\lev-\i-L) -- (c\lev-\i-R);
	}

	\eval{\lx}{\getXofR{1}{0}}
	\eval{\rx}{\getXofR{\dPar}{\dPar}}
	\eval{\y}{\getYofR{\lev}}

	% \path[loosely dotted, draw=black!100] (\lx,\y) -- (\rx,\y);
}
\newcommand{\showCopy}[4]{
	\evalInt{\si}{#1}
	\evalInt{\sj}{#2}

	\evalInt{\ti}{#3}
	\evalInt{\tj}{#4}

	\eval{\uy}{\getYofR{\lev-0.9}}
	\eval{\dy}{\getYofR{\lev-0.2}}

	\eval{\sx}{\getXofR{\si}{\sj-0.5}}
	\eval{\tx}{\getXofR{\ti}{\tj-0.5}}

	\draw[-{Latex[width=0.6mm, length=0.6mm]}, line width=0.01mm] (\sx,\dy) -- (\tx,\uy);
}
\newcommand{\showDots}[3]{
	\evalInt{\n}{#1}
	\evalInt{\si}{#2}
	\evalInt{\sj}{#3}
	
	\eval{\uy}{\getYofR{\lev-0.9}-0.05}

	\eval{\mx}{\getXofR{\si}{\j-0.5}}
		
	\node[scale=0.8, anchor=center] at (\mx,\uy) {\ifthenelse{\isodd{\n}}{\bf $\bot$}{\bf $\top$}};
}
\newcommand{\showBid}[2]{
	\evalInt{\lev}{#1}
	\evalInt{\n}{#2}
	\evalInt{\no}{\n+1}
	\evalInt{\nt}{\n+2}
	\showLabel{$\Bid_{\n}$}

	\ifthenelse{\n < \maxD}{
		\foreach \i in {\no,...,\maxD} {
			\foreach \j in {1,...,\zQtity} {
				\showCopy{\i}{\j}{\i}{\j}
			}
		}
	}{}
	
	\ifthenelse{\nt > \maxD}{
		\foreach \i in {\minD,...,\n} {
			\foreach \j in {1,...,\zQtity} {
				\showDots{\n}{\i}{\j}
			}
		}
	}{
		\foreach \i in {\minD,...,\n} {
			\foreach \j in {1,...,\zQtity} {
				\showCopy{\nt}{\j}{\i}{\j}
			}
		}
	}
}
\newcommand{\showCut}[2]{
	\evalInt{\lev}{#1}
	\evalInt{\n}{#2}
	\evalInt{\nn}{\n+1}

	\showLabel{$\Cut_{\n}$}

	\ifthenelse{\n < \maxD}{
		\foreach \i in {\nn,...,\maxD} {
			\foreach \j in {1,...,\zQtity} {
				\showCopy{\i}{\j}{\i}{\j}
			}
		}
	}{}

	\foreach \i in {\minD,...,\n} {
		\foreach \j in {1,...,\zQtity} {
			\showCopy{\n-1}{\j}{\i}{\j}
		}
	}
}
\newcommand{\showTran}[2]{
	\evalInt{\i}{#1}
	\evalInt{\j}{#2}

	\eval{\trade}{0.15}

	\eval{\lu}{\getXofR{\i}{\trade}}
	\eval{\ru}{\getXofR{\i}{\zQtity-\trade}}

	\eval{\lb}{\getXofR{\i}{\trade}}
	\eval{\rb}{\getXofR{\j}{\zQtity-\trade}}

	\eval{\uy}{\getYofR{\lev-0.8}}
	\eval{\by}{\getYofR{\lev-0.2}}

	\ifthenelse{\i=0}{\path[draw=black!60, pattern=north east lines, pattern color=black!40]
		(\lu,\uy) -- (\ru,\uy) -- (\rb,\by) -- (\lb,\by) -- cycle;}{
	\ifthenelse{\i=1}{\path[draw=black!60, pattern=vertical lines  , pattern color=black!40]
		(\lu,\uy) -- (\ru,\uy) -- (\rb,\by) -- (\lb,\by) -- cycle;}{
	\ifthenelse{\i=2}{\path[draw=black!60, pattern=north west lines, pattern color=black!40]
		(\lu,\uy) -- (\ru,\uy) -- (\rb,\by) -- (\lb,\by) -- cycle;}{
	\ifthenelse{\i=3}{\path[draw=black!60, pattern=horizontal lines, pattern color=black!40]
		(\lu,\uy) -- (\ru,\uy) -- (\rb,\by) -- (\lb,\by) -- cycle;}{
					  \path[draw=black!60, pattern=north east lines, pattern color=black!40]
		(\lu,\uy) -- (\ru,\uy) -- (\rb,\by) -- (\lb,\by) -- cycle;
	}%
	}%
	}%
	}
}
\newcommand{\showDelta}[1]{
	\evalInt{\lev}{#1}

	\showLabel{$\Delta$}

	\foreach \i in {\minD,...,\maxD} {
		\showTran{\i}{\maxD}
	}
}
\newcommand{\showLim}[4]{		
	\evalInt{\ulev}{#2}
	\evalInt{\dlev}{#1}

	\eval{\lx}{\getXofR{\maxD}{\zQtity}+0.6+0.4*(#3)}
	
	\eval{\uy}{\getYofR{\ulev}+0.2}
	\eval{\dy}{\getYofR{\dlev}-0.2}
	\eval{\cy}{\getYofR{(\ulev+\dlev)*0.5}}

	\draw (\lx-0.1,\uy) -- (\lx,\uy) -- (\lx, \dy) -- (\lx-0.1,\dy);

	\ifthenelse{\equal{#4}{U}}{
		\draw[-{Latex}] (\lx+0.2,\cy-0.3*\yScale) -- (\lx+0.2,\cy+0.3*\yScale);
	}{
		\draw[-{Latex}] (\lx+0.2,\cy+0.3*\yScale) -- (\lx+0.2,\cy-0.3*\yScale);
	}
}	
\newcommand{\showLimU}[3]{
	\showLim{#1}{#2}{#3}{U}
}
\newcommand{\showLimD}[3]{
	\showLim{#1}{#2}{#3}{D}
}
\newcommand{\showForm}[4]{
	\evalInt{\ulev}{#2}
	\evalInt{\dlev}{#1}

	\eval{\lx}{\getXofR{\minD}{0}-1.6-0.6*(#3)}
	
	\eval{\uy}{\getYofR{\ulev}+0.2}
	\eval{\dy}{\getYofR{\dlev}-0.2}
	\eval{\cy}{\getYofR{(\ulev+\dlev)*0.5}}

	\draw (\lx+0.1,\uy) -- (\lx,\uy) -- (\lx, \dy) -- (\lx+0.1,\dy);

	\node[anchor=mid east] at (\lx+0.09,\cy) {#4};
}
\eval{\ssX}{+1.0}
\eval{\ssY}{+1.0}
\eval{\ddX}{+1.0}
\eval{\ddY}{-1.0}
\tikzstyle{form} = [anchor=mid, align=center]
\tikzstyle{semm} = [form, scale=1.2]
\tikzstyle{edde} = []
\tikzstyle{dasm} = [loosely dashed, line cap=round, line width=0.3mm]
\newcommand{\showTreeInd}[1] {
	\evalInt{\i}{#1}
	\evalInt{\j}{#1 - 1}
	\eval{\bx}{-3 * \i * \ddX}
	\eval{\by}{-3 * \i * \ddY}

	\global\gdef\limI{\ifthenelse{\equal{\intcalcMod{\i}{2}}{0}}%
        {\limD}%
        {\limU}}

	\node[semm](r\i)
		at (\bx + 0.00 * \ddX, \by + 0.00 * \ddY)
		{\textbf{;}};

	\node[form](b\i)
		at (\bx + 0.00 * \ddX, \by + 1.50 * \ddY)
		{$\Bid_\i$};

	\node[form](a\i)
		at (\bx + 0.75 * \ddX, \by + 0.75 * \ddY)
		{$\limI{}$};

	\node[semm](s\i)
		at (\bx + 1.50 * \ddX, \by + 1.50 * \ddY)
		{\textbf{;}};

	\node[form](d\i)
		at (\bx + 3.00 * \ddX, \by + 1.50 * \ddY)
		{$\limI{\Delta}$};

	\node[form](c\i)
		at (\bx + 1.50 * \ddX, \by + 3.00 * \ddY)
		{$\Cut_\i$};

	\draw[edde] (r\i) -- (b\i);
	\draw[edde] (r\i) -- (a\i);
	\draw[edde] (a\i) -- (s\i);
	\draw[edde] (s\i) -- (d\i);
	\draw[edde] (s\i) -- (r\j);
	\draw[edde] (s\i) -- (c\i);

	\draw[dasm]
		(\bx - 0.85, -2.5) --
		(\bx - 0.85, \by) arc (180:90:0.85) --
		(+2.5, \by + 0.85);

	\node[form, scale=1.2] at (+2.25, \by + 0.5) {$\Phi_{\i}$};
}
\newcommand{\showTreeBas}[1] {
	\evalInt{\i}{#1}
	\eval{\bx}{0}
	\eval{\by}{0}

	\global\gdef\limI{\ifthenelse{\equal{\intcalcMod{\i}{2}}{0}}%
        {\limD}%
        {\limU}}

	\node[semm](r\i)
		at (\bx + 0.0 * \ddX, \by + 0.0 * \ddY)
		{\textbf{;}};

	\node[form](b\i)
		at (\bx + 0.0 * \ddX, \by + 1.5 * \ddY)
		{$\Bid_\i$};

	\node[form](d\i)
		at (\bx + 1.5 * \ddX, \by + 0.0 * \ddY)
		{$\limI{\Delta}$};

	\draw[edde] (r\i) -- (b\i);
	\draw[edde] (r\i) -- (d\i);

	\draw[dasm]
		(\bx - 0.85, -2.5) --
		(\bx - 0.85, \by) arc (180:90:0.85) --
		(+2.5, \by + 0.85);

	\node[form, scale=1.2] at (+2.25, \by + 0.5) {$\Phi_{\i}$};
}
\title{On the Computability of Measures of Regular Sets of Infinite Trees%
\thanks{Work supported by the National Science Centre, Poland (grant no.\@ 2021/\allowbreak41/\allowbreak B/\allowbreak ST6/\allowbreak03914).}}
\author{Damian Niwiński \and Paweł Parys \and Michał Skrzypczak}
\DeclareSymbolFont{yhlargesymbols}{OMX}{yhex}{m}{n}
\DeclareMathAccent{\yhwidehat}{\mathord}{yhlargesymbols}{"62}
\renewcommand{\widehat}{\yhwidehat}
\newcommand{\trees}{\ident{Tr}}
\newcommand{\ar}{\ident{ar}}
\newcommand{\vtau}{\bar{\tau}}
\newcommand{\vsigma}{\bar{\sigma}}
\newcommand{\Fix}{\mathrm{Fix}}
\newcommand{\Bid}{\ident{Bid}}
\newcommand{\Cut}{\ident{Cut}}
\newcommand{\limU}[1]{{#1}{\uparrow}}
\newcommand{\limD}[1]{{#1}{\downarrow}}
\newcommand{\dcolon}{\!::}
\newcommand{\comp}{;}
\newcommand{\Pcc}{\powerset_\mathsf{cc}}
\newcommand\set[1]{\{#1\}}
\renewcommand\phi\varphi
\renewcommand\epsilon\varepsilon
\newcommand{\parto}{\rightharpoonup}
\Crefname{lemma}{Lemma}{Lemmata}
\Crefname{fact}{Fact}{Facts}
\Crefname{claim}{Claim}{Claims}
\Crefname{equation}{Equation}{Equations}
\Crefname{inequality}{Inequality}{Inequalities}\creflabelformat{inequality}{\begingroup(#2#1#3)\endgroup}
\Crefname{expression}{Expression}{Expressions}\creflabelformat{expression}{\begingroup(#2#1#3)\endgroup}
\begin{document}

\maketitle

\begin{abstract}
	The Rabin tree theorem yields an~algorithm to solve the satisfiability problem for monadic second\=/order logic over infinite trees.
	Here we solve the probabilistic variant of this problem.
	Namely, we show how to compute the probability that a~randomly chosen tree satisfies a~given formula.
	We additionally show that this probability is an~algebraic number.
	This closes a~line of research where similar results were shown for formalisms weaker than the full monadic second\=/order logic.
\end{abstract}

\maketitle

\section{Introduction}

The Rabin tree theorem had an~essential impact on formal methods in computer science, thanks to both the large applicability of the result and the strength of techniques introduced in the proof~\cite{rabin_s2s}.
Basically, it reduces the satisfiability problem for monadic second\=/order logic (MSO) over the infinite binary tree
to the non\=/emptiness problem of finite automata running over labelled infinite trees, which yields decidability.
The acceptance criterion used in the automata generalises previous ideas of B\"uchi~\cite{buchi_decision}.
The result gave rise to the whole area of automata-based methods in verification, whose realm of applications goes far beyond the original theory of the tree,
see e.g.,~\cite{thomas_languages,2008thomas,Demri2016,Pin2021} for surveys.

The potential of automata on infinite trees for probabilistic questions in verification has been noted by several authors~\cite{chen_model_checking,mio_branching_games,CarayolHS14}.
The space of trees can be naturally equipped with a~Cantor\=/like topology as well as the probability measure,
where we assume that each letter is chosen independently at random (called \emph{coin\=/flipping measure}).
The following question can be viewed as a~probabilistic counterpart of the problem solved by Rabin.

\begin{problem}\label{problem}
	Given a~representation of a~regular tree language $L$, find the probability that a~randomly generated tree belongs to $L$.
\end{problem}

An analogous question for regular languages of $\omega$\=/words can be traced back to the early work on verification of finite-state probabilistic systems initiated, among others, by Vardi~\cite{Vardi85}.
An efficient solution follows from the work by Courcoubetis and Yannakakis~\cite{courcou-yanna-acm} (see also~\cite{Vardi99-proba}).
It has been noted by several authors that, for the coin\=/flipping measure, the probability in question is rational (see, e.g., \cite{chatterjee_stochastic_parity,takeuti-rational}).

In the case of infinite trees, it is not even evident that the probability in question always exists, as a~regular tree language need not in general be Borel~\cite{niwinski_gap}.
However, it was established by Gogacz et al.~\cite{michalewski_measure_final} that all regular languages of trees are measurable.

Chen et al.~\cite{chen_model_checking} addressed the question for trees in~the case where the tree language~$L$ is~recognised by~a~deterministic top\=/down parity automaton
and the measure is~induced by~a~stochastic branching process, which then makes also a~part of~the input data.
Their algorithm compares the probability with any~given rational number in~polynomial space and with $0$ or~$1$ in~polynomial time.

Michalewski and Mio~\cite{michalewski_comp_measure} gave an~algorithm for a~larger class of languages~$L$ given by~so\=/called \emph{game automata}.
Their algorithm (given for the coin\=/flipping measure) reduces the problem to~computing the value of~a~Markov branching play, and uses Tarski's decision procedure for the theory of~reals.
These authors also discovered that the measure of~a~regular tree language can be~irrational.
Later on, Niwiński et al.~\cite{niwinski_measures_wmso} solved the problem for weak alternating automata, which capture precisely weak MSO logic (with quantifiers restricted to finite sets).

In this paper we ultimately solve \cref{problem} for nondeterministic parity tree automata, which capture the whole MSO.

\begin{theorem}\label{thm:main-theorem}
	There is an~algorithm that inputs a~nondeterministic parity tree automaton $\CA$ and computes the coin\=/flipping measure of the language of trees recognised by $\CA$.
	Moreover, this measure is an~algebraic number.
\end{theorem}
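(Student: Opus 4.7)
The plan is to reduce $\mu(L(\CA))$ to the value of a finite\=/state stochastic branching game associated with $\CA$, and then to compute that value by encoding it as a sentence of the first\=/order theory of the reals. For each state $q$ of $\CA$ let $p_q$ denote the coin\=/flipping measure of the language of $\CA$ started from $q$. I would consider a two\=/player game $\CG_{\CA}$ on the finite state space $Q$ of $\CA$ in which, at each node, Chance first samples a letter $a$ of the alphabet uniformly, and then Eve picks a transition $(q_L, q_R) \in \delta(q,a)$; the play recurses on the two children with initial states $q_L$ and $q_R$, and Eve wins if the parity condition holds on every infinite branch of the resulting run.

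The first main step is to prove that $p_{q_0}$ equals the value of $\CG_{\CA}$ started at the initial state $q_0$. The nontrivial direction is that Eve can realise her game\=/theoretic value simultaneously on a positive\=/measure set of trees. This should follow from the positional determinacy of parity games combined with a measurable\=/selection argument: for a fixed tree $\tau$, existence of an accepting run is equivalent to existence of a positional winning strategy of Eve in the corresponding deterministic game on $\tau$, and these positional strategies can be chosen in a Borel\=/measurable way in $\tau$, using the measurability result of Gogacz, Michalewski, Mio, and Skrzypczak as a starting point. The second main step is to characterise the tuple of values $(v_q)_{q \in Q}$ as the unique solution of a system combining polynomial equations (for the probabilistic branching, $v_q$ being a convex combination of products $v_{q_L} \cdot v_{q_R}$) with $\max$ operators (for Eve's choices). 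To accommodate the parity condition I would stratify the variables by priority, yielding a nested fixed\=/point system --- the tree analogue of the $\mu$\=/calculus translation of parity games --- and encode the whole characterisation as a first\=/order sentence over $(\R,+,\cdot,\leq)$. Tarski's decision procedure then yields an algebraic description of $v_{q_0}$, giving both algebraicity and computability simultaneously.

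The hardest point will be the second step, namely establishing \emph{uniqueness} of the fixed\=/point system and its coincidence with the measure. For weak\=/MSO the well\=/foundedness of weak alternation renders the candidate operator monotone with a unique extremal fixed point, as exploited by Niwiński, Przybyłko, and Skrzypczak. For the full parity condition, however, the nesting of $\max$ over Eve's choices with non\=/monotone priority alternation allows several algebraic fixed points, only one of which is the true value vector. Singling it out will require a stochastic adaptation of the memoryless\=/strategy extraction familiar from the Gurevich--Harrington proof of Rabin's complementation theorem, applied uniformly to almost every tree. A secondary obstacle is that one may need to enrich the state space of the game beyond $Q$ with some finite memory (for instance a ranking function, or a latest\=/appearance record in the spirit of Büchi's classical construction) so that the value can indeed be described by a finite polynomial system; finding the right enrichment is a central design choice on which the argument hinges.
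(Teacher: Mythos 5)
Your first main step is false for non\-/deterministic automata, and this is precisely where the proposal breaks. In the branching game $\CG_\CA$ you describe, Eve chooses a transition at a node knowing only the history along the path from the root; she cannot see the subtrees below. But an accepting run of $\CA$ on a tree $t$ may commit to states that depend on the entire subtree below, so the game value can be strictly smaller than $\prob\big(\lang(\CA)\big)$. For a concrete counterexample, take $\Sigma=\{a,b\}$ and an automaton whose initial transition (on any letter) may go to either $(q_a,q_\top)$ or $(q_b,q_\top)$, where $q_a$ continues only on letter $a$, $q_b$ only on letter $b$, and $q_\top$ accepts everything. Every tree is accepted, so $\prob\big(\lang(\CA)\big)=1$; yet in the branching game Eve must guess the label of the root's left child before it is sampled, and her value is only $1/2$. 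This is exactly the phenomenon that confined Michalewski and Mio to \emph{game automata}, where non\-/determinism carries no such look\-/ahead power; positional determinacy and measurable selection cannot repair it, because the game you defined simply computes the wrong number.

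More fundamentally, the proposal never confronts the obstacle the paper identifies as central: for a non\-/deterministic automaton, the recursion for the values $v_q$ involves a union over transitions that is not disjoint, and $\prob$ of a union is not a function of the $\prob$'s of its parts --- one needs joint distributions, and no finite\-/memory enrichment of $Q$ (ranking functions, latest\-/appearance records) supplies them. The actual proof sidesteps this by working throughout with a single compound random variable, the \emph{profile} $t\mapsto\{q\in Q\mid t \text{ accepted from } q\}$, further refined by priorities so as to take values in the finite lattice $\powerset\big(Q\times\{1,\ldots,d\}\big)$, and tracking its full distribution; the nested $\mu/\nu$ formula is then replaced by a formula $\Phi_1$ of a purpose\-/built \emph{unary} $\mu$\-/calculus, built only from one\-/argument, distribution\-/commuting operations, so that no binary $\vee$ or $\wedge$ across separately\-/distributed variables ever occurs. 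The two ingredients of your plan that do survive into the real proof are the use of the measurability result of Gogacz, Michalewski, Mio, and Skrzypczak (via a Lusin--Sierpi\'nski\-/style argument) and the final encoding into Tarski's first\-/order theory of reals; everything in between is structurally different, and the branching\-/game reduction would have to be abandoned.
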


As a~consequence, given additionally a~rational (or just algebraic) number $q$, we can decide if the resulting measure is equal, smaller, or greater than $q$.

Our algorithm is able to produce a~representation of the above measure in three\=/fold exponential time.
The decision problems with a~fixed rational number $q$ can be solved in two\=/fold exponential space.
These complexity upper bounds coincide with the ones provided by Przybyłko and Skrzypczak~\cite{przybylko_simple_sets} in the case of safety automata.

Instead of considering the coin\=/flipping measure over a full binary tree,
one may also be interested in trees generated by a given stochastic branching process~\cite{chen_model_checking,niwinski_measures_wmso}.
It was already observed by Niwiński et al.~\cite[Section~7.1]{niwinski_measures_wmso} that such a situation reduces to the basic case of the coin\=/flipping measure:
essentially, instead of an automaton $\CA$ reading a tree generated by a branching process~$\CP$,
we may consider an automaton being a product of $\CA$ and $\CP$, reading a binary tree chosen randomly according to the coin-flipping measure.
We thus have the following corollary, with the same complexity bounds as for \cref{thm:main-theorem}.

\begin{corollary}\label{cor:branching}
	There is an~algorithm that inputs a stochastic branching process $\CP$ and a~nondeterministic parity tree automaton $\CA$
	and computes the measure induced by $\CP$ of the language of trees recognised by $\CA$.
	Again, this measure is an~algebraic number.
\end{corollary}

It is worth to note that natural stronger probabilistic variants of the satisfiability problem are known to be undecidable.
For example, both the non\=/emptiness problem~\cite{Paz1971} and the value\=/$1$ problem~\cite{gimbert_probabilistic} for probabilistic finite automata, even for finite words, are undecidable.
Moreover, Baier et al.~\cite{BaierBG08} showed that it is undecidable whether there exists an $\omega$\=/word that is accepted by a~given probabilistic B\"uchi automaton with a~positive probability.
So presumably we are close to the frontiers of decidability here.

One should note that our result does not entail the Rabin tree theorem---the satisfiability problem does not directly reduce to its probabilistic variant.
Also, the core of the construction of Rabin was a~translation between an~MSO formula and an~automaton, whereas we begin our construction already with a~given automaton.

The main difficulty while considering probabilistic questions about automata lies in nondeterminism, which breaks a~correspondence between input objects and runs of the automaton.
It is known that in the case of infinite trees nondeterminism is inherent.
As a~remedy, we take the $\mu$\=/calculus perspective on tree automata.
Indeed, it is well-known that $\mu$\=/calculus (even in a~restricted form) interpreted over trees captures the full power of tree automata~\cite{niwinski88}
(the opposite direction follows easily from the Rabin tree theorem).

This helps to a~large degree, but not fully.
Having two random events $A$, $B$ we can speak about their union $A\cup B$ and intersection $A\cap B$,
but knowing only the probabilities $\prob(A)$, $\prob(B)$ we cannot say what is the probability of $\prob(A{\cup}B)$ or $\prob(A{\cap} B)$.
A~similar problem arises with other functions of arity greater than $1$.
In order to overcome this difficulty, we introduce a~new formalism: \emph{unary $\mu$\=/calculus}.
We believe that this formalism is of independent interest, and can potentially find other applications.

This paper extends a conference version published at LICS 2023~\cite{probabilistic-rabin}.

\section{Basic concepts}\label{sec:basic-concepts}

\subsection*{Trees and automata}

A \emph{tree} over an \emph{alphabet} $\Sigma$ is a~function $t\from\set{\dL,\dR}^\ast\to \Sigma$.
Here $\dL,\dR$ represent the directions leading from a \emph{node} $v \in \set{\dL,\dR}^\ast$ to its children $v\dL$ and $v\dR$.
The empty word $\epsilon$ is the \emph{root} of the tree.
A \emph{subtree} of a~tree $t$ starting in a~node $v$ is a~tree denoted $t.v$ defined by $t.v(w)=t(vw)$, for all $w\in\set{\dL,\dR}^\ast$.
The set of all trees over $\Sigma$ is denoted $\trees_\Sigma$.

A \emph{nondeterministic $\max$\=/parity tree automaton} is defined to be $\CA=\langle \Sigma,\allowbreak Q,\allowbreak q_I,\allowbreak\gamma,\allowbreak \Omega\rangle$, where
$\Sigma$ is a~finite input alphabet,
$Q$ a~finite set of \emph{states},
$q_I\in Q$ an \emph{initial state},
$\gamma\subseteq Q\times \Sigma\times Q\times Q$ a \emph{transition} relation,
and $\Omega\from Q \to \set{0,\ldots, d{-}1}$ assigns a \emph{priority} to each state.%
	\footnote{For the sake of readability, we decided to change the acceptance condition from $\min$\=/parity to $\max$\=/parity,
	when comparing to the conference version of this paper~\cite{probabilistic-rabin}.}
We assume without loss of generality that $d>0$ is an~even number.
A \emph{run} of such an~automaton $\CA$ over a~tree $t\in\trees_\Sigma$ is a~function $\rho\from\set{\dL,\dR}^\ast\to Q$ that assigns a~state to every node of $t$,
such that $(\rho(v),t(v),\rho(v\dL),\rho(v\dR))\in\gamma$ for all $v \in\set{\dL,\dR}^\ast$.
A run $\rho$ is \emph{accepting} if, for any branch $w\in\set{\dL,\dR}^\omega$, the maximal priority occurring infinitely often is even, that is,
$\limsup_{n \to \infty} \Omega\big(\rho(w_n)\big)$ is even, where $w_n$ is the prefix of $w$ of length~$n$.
A tree is \emph{accepted} by the automaton $\CA$ if it admits an~accepting run $\rho$ starting from the initial state, that is, $\rho(\epsilon)=q_I$.
The set of trees accepted by $\CA$ is denoted $\lang(\CA)$.

\subsection*{Orders and fixed points}

Let $(X,{\leq})$ be a~partial order.
The least upper bound of a~subset $A \subseteq X$ (if exists) is also called the \emph{supremum} of $A$ and denoted by $\sup A$,
and the greatest lower bound is the \emph{infimum} denoted by $\inf A$.
A~\emph{complete lattice} is a~partial order such that suprema and infima exist for all subsets.
In particular $\sup X = \top$
and $\inf X = \bot$ are the greatest and least elements of $X$.
The supremum and infimum of a~two-element set $\{a,b\}$ are denoted $a{\lor}b$ and $a{\land}b$, respectively.

The celebrated Knaster\=/Tarski theorem states that if $(X,{\leq})$ is a~complete lattice
then any monotone (i.e., order\=/preserving) function $f \from X \to X$ has the least ($\mu$) and the greatest ($\nu$) fixed points satisfying
\begin{alignat*}{6}
	\mu x.\,f(x) &= \inf\, &&\big\{ a &&\mid f(a) \leq a &&\big\}, \\
	\nu x.\,f(x) &= \sup\, &&\big\{ a &&\mid a \leq f(a) &&\big\}.
\end{alignat*}
For a~monotone function of several arguments, we can apply fixed-point operators successively.
For example, $\nu y.\,\mu x.\,g(x,y)$ is the greatest fixed point of the mapping $y \mapsto \mu x.\,g(x,y)$ (it is easy to observe that this mapping is also monotone).

A \emph{chain} in a~partial order $(X,{\leq})$ is any non\=/empty subset $C\subseteq X$ linearly ordered by ${\leq}$.
A subset $A\subseteq X$ is called a~\emph{chain\=/complete subset of $X$} if the supremum and the infimum of every chain $C\subseteq A$ exists and remains in $A$.
We write $\Pcc(X)$ for the set of all chain\=/complete subsets of $X$.

We say that a~monotone function $f\from X\to Y$ between two complete lattices, $(X,{\leq})$ and $(Y,{\leq})$, is \emph{chain\=/continuous}
if for every chain $C\subseteq X$ it holds $\sup(f(C))=f(\sup C)$ and $\inf(f(C))=f(\inf C)$.

Whenever we consider a~space of the form $X^n$ for a~partial order
$(X,{\leq})$ and a~natural number $n\geq 0$, we assume that $X^n$ is ordered coordinate\=/wise,
that is, $(x_0,\ldots,x_{n-1})\leq(y_0,\ldots,y_{n-1})$ if $x_i\leq y_i$ for all $i\in\set{0,\ldots,n{-}1}$.

\subsection*{Measures}

The set $\trees_\Sigma$ of trees over a~finite alphabet $\Sigma$ can be equipped with a~topology generated by~a~basis consisting of all the sets $U_f$,
where ${f}\from {\dom(f)} \to {\Sigma }$ is a~function with a~finite domain $\dom(f)\subset\{\dL,\dR\}^*$, and $U_f$ consists of all trees $t$ that coincide with $f$ on $\dom(f)$.
If $\Sigma$ has at least $2$ elements then this topology is homeomorphic to the Cantor discontinuum $\{0,1\}^\omega$ (see, e.g.,~\cite{perrin_pin_words}).

The set of trees can be further equipped with the coin\=/flipping measure, where a~letter in $\Sigma$ is chosen uniformly at random independently at each node.
This is the standard Lebesgue measure on the product space defined on the basis by $\prob\left(U_f\right) = \big|\Sigma \big|^{-| \dom(f)|}$.

We use the well\=/known facts that the measure $\prob$ is \emph{Borel} (i.e.,~every Borel set is $\prob$\=/measurable)
and that it is \emph{complete} (i.e.,~every subset of a~set of $\prob$\=/measure $0$ is measurable and of $\prob$\=/measure $0$).

\section{Fixed-point approach}

Our proof relies on fixed\=/point definitions of regular tree languages in the powerset lattice of $\trees_\Sigma$.
Such a definition can be derived from a~parity automaton (either nondeterministic or alternating) recognising the given language~\cite{niwinski88,niwinski_mu_calc_index}.
The operations involved are $a(\cdot,\cdot)$, for each $a\in\Sigma$, the binary set union $\cup$ and intersection $\cap$, and the least ($\mu$) and greatest ($\nu$) fixed\=/point operators.
The aforementioned operation $a(L_\dL,L_\dR)$ sends the languages $L_\dL$ and $L_\dR$ to the set of trees $t$, such that $t.\dL\in L_\dL$, $t.\dR\in L_\dR$, and $t(\epsilon)=a$.
For example, an~expression $\mu x.\,\nu y.\,a(x,x)\cup b(y,y)$ defines the set of trees over the alphabet $\{a,b\}$ where, on each branch, $a$ occurs only finitely often.
In general, it is convenient to use \emph{vectorial} fixed\=/point expressions of the form
\begin{align}\label[expression]{punkt-staly}
	\mu x_{d-1}.\,\nu x_{d-2}\ldots\nu x_2.\,\mu x_1.\,\nu x_0.\,F(x_0,x_1,\ldots,x_{d-1}),
\end{align}
where the vectorial variables $x_i=(x_{i,1},\ldots,x_{i,k})$, for $i\in\set{0,1,\ldots,d{-}1}$,
range over \emph{$k$\=/tuples} of subsets of $\trees_\Sigma$, and $F$ is a $k$\=/tuple of terms constructed from the basic operations mentioned above.
If we start with a~parity tree automaton, the parameter $k$ corresponds to the number of states, and the parameter~$d$ to the number of priorities (see, e.g.,~\cite{niwinski_rudiments}).
More specifically, if the $j$\=/th state has priority $i$, it becomes a~variable $x_{i,j}$.
Moreover, if the automaton is nondeterministic then no intersection $\cap$ is used in the expression.

\subsection{Non-emptiness for nondeterministic automata}

In the case of nondeterministic automata, this fixed\=/point characterisation leads to a simple, albeit not very efficient,
algorithm to decide if a~regular tree language is non\=/empty~\cite[Section~4.2]{niwinski_mu_calc_index}.

The idea is to take the $k$\=/th power of the Boolean lattice $\{0,1\}$, and evaluate \cref{punkt-staly} there,
with the operation $a(x,y)$ interpreted as $x \land y$ (for any symbol $a$) and $\cup$ as $\lor$.
Now the set $L_q$ of trees accepted from the state $q$ of the automaton is non\=/empty if and only if the respective component equals 1
(i.e., the component corresponding to the state $q$ in \cref{punkt-staly} evaluated in $\{0,1\}^k$).
The argument relies on good properties of a~mapping $h\colon \powerset(\trees_\Sigma) \to \{0,1\}$ that sends all non\=/empty sets to $1$ and the empty set to $0$.
We have $a\big(h(L_\dL),h(L_\dR)\big)=h\big(a(L_\dL,L_\dR)\big)$ and $h(L_1)\lor h(L_2)=h\big(L_1\cup L_2\big)$.
It can be further shown that $h$ preserves the fixed points in consideration as well;
here the argument relies on unravelling the involved expression $F$ into a~tree labelled by an~accepting run of the automaton.

However, an analogous construction fails when an automaton is alternating.
Indeed, the aforementioned mapping $h$ does not preserve
intersection $\cap$ if we interpret it as $\wedge$.
Moreover, the mapping $h$ is not \emph{chain\=/continuous}:
there can be a~sequence of non\=/empty sets $L_0\supseteq L_1\supseteq L_2\supseteq\ldots$ with $\bigcap_{n} L_n=\emptyset$, which means that $\forall n.\ h(L_n)=1$ and $h\big(\bigcap_{n} L_n\big)=0$.

\subsection{Unary \texorpdfstring{$\mu$\=/}{mu-}calculus}

Is it possible to apply a~fixed-point approach when, instead of non\=/emptiness, we are interested in the measure of a set?

On a~high level, we would like to find some measure\=/theoretic analogue of the Boolean lattice $\{0,1\}$, so that an~appropriate interpretation of \cref{punkt-staly} would yield the researched measure.
A natural candidate is the space $[0,1]$ along with the monotone mapping $\prob: L\mapsto \prob(L)$, which assigns to each (measurable) set $L$ the probability of $L$.
In analogy to the mapping $h$ from the previous section, the operation $a(\cdot,\cdot)$ can be interpreted over measures of languages.
Indeed, $\prob\big(a(L_\dL,L_\dR)\big)=|\Sigma|^{-1}\cdot\prob(L_\dL)\cdot \prob(L_\dR)$.
Moreover, continuity of measure implies that the mapping $\prob$ is chain\=/continuous: $\prob\big(\bigcap_{n} L_n\big)=\lim_{n\to\infty}\prob(L_n)$ whenever the languages $L_n$ form a~chain.

However, the measures of neither the union $\cup$ nor the intersection $\cap$ of two languages are determined by the measures of the languages themselves,
so it is not possible to interpret $\cup$ nor $\cap$ here.
This reflects a~more general phenomenon: the knowledge of distributions of two random variables does not determine their joint distribution.
Thus, when working with measures, we should avoid using functions with more than one argument.

To overcome this obstacle, we introduce the main technical invention of our paper: unary $\mu$\=/calculus.

In the classical $\mu$\=/calculus, an~application of a fixed\=/point operator, say $\mu$, to a mapping $F\colon X\to X$ returns an~element $\mu x.\,F(x)$ of $X$, and if applied to a~distinguished argument of a $(k{+}1)$\=/ary function, it returns a~$k$\=/ary function, for example $\mu x_0.\,G(x_0,x_1,\ldots,x_k)$, and so on.
In unary $\mu$\=/calculus, an~application of a~fixed\=/point operator
to a~unary monotone function returns another unary monotone function.
More precisely, when applied to a~function $F\colon X \to X$, it returns a (partial) function $\limU{F}\colon X \parto X$,
that sends an element $x\in X$ to the least fixed point of $F$ \emph{above} $x$ (if exists).
Similarly, $\limD{F}\colon X \parto X$ sends an element $x\in X$ to the greatest fixed point of $F$ \emph{below} $x$ (if exists).
Such fixed points need not always exist even if $X$ is a~complete lattice: although the respective subsets are complete sublattices, the function $F$ need not preserve them.
However, when $F(x)\geq x$ then $\limU{F}(x)$ exists and can be obtained by taking a~limit of an~ascending chain of approximations (as in Knaster\=/Tarski theorem);
similarly for $\limD{F}(x)$, whenever $F(x)\leq x$.

An~essential step of our construction is to define a~formula of unary $\mu$\=/calculus which, in the domain of tree languages, is equivalent to \cref{punkt-staly}.
Since transitions of the considered automaton are already black\=/boxed into a~single function $F$, the exact type of the automaton does not matter:
the construction works in the same manner for both nondeterministic and alternating automata.
We further show that the derived formula interpreted in an appropriate probabilistic domain yields the researched measure.

\subsection{Computing measure}

We now provide more details on our formula and its interpretation.

Viewing $\trees_\Sigma$ as a~sample space, a~measurable tree language $L \subseteq \trees_\Sigma$ can be identified with a~random variable, say $\widetilde{L}\from\trees_\Sigma\to\{0,1\}$,
which is the characteristic function of $L$; we search for $\prob(\{t\mid\widetilde{L}(t)=1\})$.
Given an automaton, it is apparent that we need to consider the product variable
\begin{align}\label[expression]{product-var}
	\big(\widetilde{L_q}\big){}_{q \in Q},
\end{align}
where $L_q$ is the set of trees accepted from the state $q$.
Note that the variables $\widetilde{L_q}$ are, in general, not independent (even pairwise).
In particular, the probability that a~random tree $t$ is accepted both from a~state $p$ and from a~state $q$ is not determined by the probabilities of these two events taken separately.
Therefore, if we would like to define the probability distribution of the variables $\widetilde{L_q}$ as a nested fixed point,
we cannot rely on a~direct product construction as in the non\=/emptiness problem (where we would just store a tuple of probabilities, one for each variable $\widetilde{L_q}$).

To proceed further, we find it convenient to present the product variable from \cref{product-var}
as a single random variable $\tau_{\CA} \colon \trees_\Sigma\to \powerset(Q)$ that sends a tree $t$ to the set $\{q\mid t\in L_q\}$, that is,
\begin{align}\label{profil-krata}
	\tau_{\CA} (t) = \{ q\mid \widetilde{L_q} (t) = 1 \}.
\end{align}
Clearly, the probability distribution of $\tau_{\CA}$ yields the measure of the language recognised by the automaton,
which amounts to the probability of the event $q_I \in \tau_{\CA}(t)$ (where $q_I$ is the initial state).
We aim for a formula of unary $\mu$\=/calculus that defines the probability distribution of the variable~$\tau_{\CA}$.

To this end, we consider the space of all functions $\tau\colon\trees_\Sigma\to\powerset(Q)$, which we call \emph{profiles}.
Clearly this space is a complete lattice with the component\=/wise ordering induced by the inclusion ordering on $\powerset(Q)$.
If we evoke the identification from \cref{profil-krata}, this ordering coincides with the component\=/wise inclusion on $(\powerset(\trees_\Sigma))^{|Q|}$.
The fixed-point characterisation of regular tree languages~\cite{niwinski_rudiments} gives us a formula as in \cref{punkt-staly} that defines the profile $\tau_{\CA}$.
By a subtle analysis of the structure of nested fixed points, we manage to transform it into a formula of unary $\mu$\=/calculus.

The next step is to define an~appropriate structure on the space of probability distributions of measurable profiles.
As they take values in $\powerset(Q)$, we use here a~general construction of \emph{probabilistic powerdomain}
which settles an~appropriate ordering on the set of probability distributions on a~complete lattice $\CR$.
This structure is in general not a~lattice, but it is chain\=/continuous, which is sufficient for unary $\mu$\=/calculus.

Now, our formula of unary $\mu$\=/calculus can be reinterpreted in the probabilistic powerdomain over a~single lattice $\CR$,
which in our case is $\powerset(Q \times \{ 0,1,\ldots,d{-}1\})$ (i.e., each state is multiplied by all possible priorities).
The new interpretation yields the desired probability distribution of $\tau_{\CA}$ and the actual values can be expressed in the Tarski's theory of reals.

\subsection*{Overview of the proof}

\cref{sec:spaces} introduces the space of profiles and recalls the fixed-point characterisation of regular tree languages in terms of profiles~\cite{niwinski_rudiments}.
\cref{sec:unary} introduces unary $\mu$\=/calculus along with a typing system that allows us to verify if an expression of this calculus is well-defined.
\crefrange{sec:basic-functions}{sec:realisations} are devoted to showing that the generic formula from \cref{punkt-staly} can be translated into unary $\mu$\=/calculus.
This is a general result on fixed-point definitions and the proof is carried out in an~abstract setting, not restricted to the interpretation in the domain of trees.
We come back to trees in \cref{sec:measurability} and verify that all the sets of interest for us are measurable.
This concerns not only the regular tree languages whose measurability is known~\cite{michalewski_measure_final}, but also the sets obtained in the iterative approximation of fixed points.
Knowing this, we can switch our perspective from profiles to probability distributions.
In \cref{sec:distributions} we accomplish the main technical result of our paper
showing that the purchased distribution of the random variable $\tau_{\CA}$ associated with the automaton
can be obtained by reinterpreting our formula of unary $\mu$\=/calculus in the probabilistic powerdomain in consideration.
The proof of \cref{thm:main-theorem} is finished in \cref{sec:tarski},
where we show that formulae of unary $\mu$\=/calculus interpreted in the probabilistic powerdomain can be expressed using Tarski's first-order theory over reals.
Therefore, decidability of this theory implies decidability of our question expressed by a~unary $\mu$\=/calculus formula.
Finally, in \cref{sec:branching} we treat the case of trees generated by a stochastic branching process, proving \cref{cor:branching}.

\section{Profiles}\label{sec:spaces}

As already mentioned, we work with two spaces: the space of profiles, and the space of probability distributions over profiles and vectors thereof.
For the sake of simplicity, we work with nondeterministic automata, see \cref{rem:alternating} for the case of alternating automata.
Thus, fix a~nondeterministic parity~automaton $\CA=\langle \Sigma,Q,q_I,\gamma,\Omega\rangle$.
Recall that we assume that the maximal priority $d$ is even.

The space of \emph{profiles} is $\CV=\big(\trees_\Sigma\to\powerset(Q)\big)$.
A~single profile $\tau\in\CV$ is thus a~function that assigns a~set of states to each tree $t$ in $\trees_\Sigma$.
Ultimately, we are interested in the unique profile $\tau_\CA\in\CV$ that to each tree $t$
assigns the set of those states from which there exists an~accepting run of $\CA$ on $t$.
In the meantime, we also consider other profiles.

We fix the order on $\CV$ defined for $\tau,\tau'\in\CV$ by $\tau\leq \tau'$ if $\tau(t)\subseteq \tau'(t)$ for all trees $t\in\trees_\Sigma$.
It is easy to see that $(\CV,{\leq})$ is a~complete lattice: a~supremum (infimum) can be obtained by taking the union (intersection) for every tree $t\in\trees_\Sigma$.

Let us now define a~function $\delta\from \CV^d\to\CV$ corresponding to the transition function of the automaton.
For each $a\in\Sigma$ we define $\delta_a\from\powerset(Q)\times\powerset(Q)\to\powerset(Q)$: for $K^\dL,K^\dR\in\powerset(Q)$ we take
\begin{align*}
	\delta_a(K^\dL,K^\dR)\eqdef\big\{q\in
	Q\mid \exists(q,a ,q^\dL,q^\dR)\in\gamma.\,q^\dL\in K^\dL\land q^\dR\in K^\dR\big\}.
\end{align*}
Then, for $\vtau=(\tau_0,\ldots,\tau_{d-1})\in\CV^d$ and $t\in\trees_\Sigma$ we define
\begin{align*}
	K^{\dL}_{t}(\vtau)&\eqdef\big\{q\in Q\mid q\in\tau_{\Omega(q)}(t.\dL)\big\},\\
	K^{\dR}_{t}(\vtau)&\eqdef\big\{q\in Q\mid q\in\tau_{\Omega(q)}(t.\dR)\big\},\\
	\delta(\vtau)(t)&\eqdef\delta_{t(\epsilon)}\Big(K^\dL_t(\vtau),K^\dR_t(\vtau)\Big).
\end{align*}
The intuition is: a state $q$ belongs to the profile $\delta(\vtau)$ in the root of a~hypothetical tree $t$
if after a~single transition from a~state $q$
the two successor states $q^\dL$ and $q^\dR$ belong respectively to the profiles $\tau_{\Omega(q^\dL)}$ and $\tau_{\Omega(q^\dR)}$ applied to the two subtrees of $t$ starting in the children of the root.
Note that $\delta$ ignores most of the input profiles.
Namely, for every state $q$ we check whether $q$ belongs to $\tau_i(t)$ only for $i=\Omega(q)$, while this information is meaningless for $i\neq\Omega(q)$.
In such a situation one may wonder if we can ``compress'' all the input profiles $\tau_0,\dots,\tau_{d-1}$ into a single one.
This would be problematic, however, because different components of the input $\tau_0,\dots,\tau_{d-1}$ should be treated differently with respect to fixed-point operators
(cf.\@ the formula from \cref{tau-automaton} below).

\begin{lemma}\label{lem:delta-mono-chain}
	The function $\delta\from \CV^d\to\CV$ is monotone and chain\=/continuous.
\end{lemma}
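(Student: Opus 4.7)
The plan is to decompose $\delta$ into two layers: the ``repacking'' step that builds $R^\dL_t(\vtau)$ and $R^\dR_t(\vtau)$ from the coordinates of $\vtau$, and the ``local transition'' step $\delta_a$. The repacking step is tautologically both monotone and chain-continuous in each coordinate, since membership of $(q,i)$ in $R^\dL_t(\vtau)$ is literally the membership of $q$ in $\tau_i(t.\dL)$, and similarly for $\dR$. The entire problem therefore reduces to understanding how $\delta_a\from\CR\times\CR\to\powerset(Q)$ interacts with suprema and infima of chains in $\CR\times\CR$.

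For monotonicity, I would observe that $\delta_a$ is defined by an existential formula whose atomic predicates are membership conditions in $R^\dL$ and $R^\dR$; these predicates are manifestly monotone, and composition of monotone maps is monotone, so $\delta$ is monotone. For chain-continuity under suprema, I would take $\vtau^*=\sup C$, note that $R^\dL_t(\vtau^*)=\bigcup_{\vtau\in C} R^\dL_t(\vtau)$ and $R^\dR_t(\vtau^*)=\bigcup_{\vtau\in C} R^\dR_t(\vtau)$, and verify the non-trivial inclusion $\delta(\vtau^*)(t)\subseteq \bigcup_{\vtau\in C}\delta(\vtau)(t)$ as follows: a witness $(q,t(\epsilon),q^\dL,q^\dR)\in\gamma$ places $(q^\dL,\Omega(q))$ in some $R^\dL_t(\vtau')$ and $(q^\dR,\Omega(q))$ in some $R^\dR_t(\vtau'')$, and since $C$ is a chain the maximum of $\vtau',\vtau''$ already witnesses $q\in\delta(\vtau)(t)$ for a single $\vtau\in C$.

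The main obstacle is chain-continuity under infima, where the existential in $\delta_a$ has to be commuted past an intersection. Here I would exploit the finiteness of $\gamma$ in the following way. Fix $t$, set $\vtau^*=\inf C$, and take $q\in\bigcap_{\vtau\in C}\delta(\vtau)(t)$; the goal is to produce a single transition from $\gamma$ that works at $\vtau^*$. For each $\vtau\in C$ let $\Gamma_\vtau\subseteq\gamma$ be the non-empty set of transitions $(q,t(\epsilon),q^\dL,q^\dR)$ with $(q^\dL,\Omega(q))\in R^\dL_t(\vtau)$ and $(q^\dR,\Omega(q))\in R^\dR_t(\vtau)$. Monotonicity gives $\Gamma_\vtau\subseteq\Gamma_{\vtau'}$ whenever $\vtau\leq\vtau'$, so $\{\Gamma_\vtau\}_{\vtau\in C}$ is a chain of non-empty subsets of a finite set, hence attains a minimum $\Gamma_{\vtau_0}\neq\emptyset$. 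Any element of $\Gamma_{\vtau_0}$ is then a witness that belongs simultaneously to $\Gamma_\vtau$ for every $\vtau\in C$, so the corresponding $(q^\dL,\Omega(q))$ and $(q^\dR,\Omega(q))$ survive in $\bigcap_\vtau R^\dL_t(\vtau)=R^\dL_t(\vtau^*)$ and $\bigcap_\vtau R^\dR_t(\vtau)=R^\dR_t(\vtau^*)$, giving $q\in\delta(\vtau^*)(t)$.

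In summary, monotonicity is routine, sup-continuity is a standard cofinality argument along a chain, and the core point is that the finiteness of the transition relation $\gamma$ supplies a uniform witness across the infimum of a chain. This finiteness is the essential ingredient; once it is isolated, the verification is formal.
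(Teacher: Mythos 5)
Your proposal is correct. It takes a route that is related to but genuinely different from the paper's. The paper makes a single uniform observation that handles suprema and infima symmetrically: for a fixed tree $t$, the pair $\big(R^\dL_t(\vtau),R^\dR_t(\vtau)\big)$ ranges over the \emph{finite} set $\CR^2$, so along a chain $C$ this pair takes only finitely many values, hence is realised (both at the sup and at the inf) by a single $\vtau\in C$; since $\delta(\vtau)(t)$ depends only on that pair, one immediately gets $\delta(\sup C)(t)=\sup\{\delta(\vtau)(t)\}$ and dually. You instead give two asymmetric arguments: for suprema a directedness/cofinality argument (combining the two witnesses $\vtau'$ and $\vtau''$ by taking their max in the chain), and for infima a different finiteness argument, exploiting that $\gamma$ is finite so the descending chain of witness sets $\Gamma_\vtau\subseteq\gamma$ stabilises at a non-empty minimum. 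Both are sound; the paper's is a bit more abstract and symmetric (it does not look inside $\delta_a$ and would apply verbatim to any function that factors through a finite codomain), whereas yours is tied to the specific ``existential over a finite transition table'' shape of $\delta_a$. Your formulation does have the merit of isolating exactly where finiteness is used in the harder (inf) direction, which is a useful piece of intuition even though it is not strictly more economical than the paper's route.
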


\begin{proof}
	Monotonicity of $\delta$ follows from monotonicity of $\delta_a$.
	To see that $\delta$ is chain\=/continuous, consider a~chain $C\subseteq\CV^d$,
	its supremum $\vsigma\in\CV^d$, and the supremum $\sigma'\in\CV$ of the chain $\set{\delta(\vtau)\mid\vtau\in C}\subseteq\CV$.
	We have to prove that $\delta(\vsigma)=\sigma'$.
	To this end, fix a~tree $t\in\trees_\Sigma$.
	Notice that the pair $\big(K^\dL_t(\vtau),K^\dR_t(\vtau)\big)$ belongs to the finite set $\powerset(Q)^2$, and that the functions $K^\dL_{t}$, $K^\dR_{t}$ are monotone.
	Because the order in $\CV$ is defined for each tree separately, we have that $\big(K^\dL_t(\vsigma), K^\dR_t(\vsigma)\big)$ is the supremum of the finite chain
	$\big\{(K^\dL_t(\vtau),K^\dR_t(\vtau))\mid\vtau\in C\big\}$;
	likewise $\sigma'(t)$ is the supremum of the finite chain $\big\{\delta(\vtau)(t)\mid\vtau\in C\big\}$.
	The supremum of a~finite chain is just its greatest element, so there exists some $\vtau\in C$ such that $K^\dL_t(\vsigma)=K^\dL_t(\vtau)$, $K^\dR_t(\vsigma)=K^\dR_t(\vtau)$, and $\sigma'(t)=\delta(\vtau)(t)$.
	Next, notice that $\delta(\vtau)(t)$ depends only on $K^\dL_t(\vtau)$ and $K^\dR_t(\vtau)$;
	in other words, $\big(K^\dL_t(\vsigma),K^\dR_t(\vsigma)\big)=\big(K^\dL_t(\vtau),K^\dR_t(\vtau)\big)$ implies $\delta(\vsigma)(t)=\delta(\vtau)(t)=\sigma'(t)$.
	We thus have the equality $\delta(\vsigma)(t)=\sigma'(t)$ for all
	$t\in\trees_\Sigma$, which simply means that $\delta(\vsigma)=\sigma'$,
	as required.

	The proof for the infimum of $C$ is analogous.
\end{proof}

Recall the special profile $\tau_\CA\in\CV$ corresponding to the automaton $\CA$ (note that $d$ is assumed to be even).

\begin{proposition}\label{tau-automaton}
	We have
	\begin{align*}
		\tau_\CA=\mu x_{d-1}.\,\nu x_{d-2}.\,\mu x_{d-3}.\,\nu x_{d-4}\ldots\mu x_{1}.\,\nu x_{0}.\,\delta(x_0,\ldots,x_{d-1}).
	\tag*{\qed}\end{align*}
\end{proposition}

This proposition expresses a~well\=/known equivalence between $\mu$\=/calculus and tree automata first observed by Niwiński~\cite{niwinski88}.
A proof based on parity games can be found for instance in a~book by Arnold and Niwiński~\cite[Proposition~7.3.5]{niwinski_rudiments},
showing that the two parity games corresponding to the automaton and to the formula, respectively, are equivalent.
The cited argument is done in the complete lattice $\powerset(\trees_\Sigma)$, but this domain corresponds to the space $\CV$
since a~profile $\tau\from\trees_\Sigma\to\powerset(Q)$ (with $Q=\{q_0,\ldots,q_{m-1}\}$, say)
can be identified with an $m$\=/vector of tree languages $(L_0,\ldots,L_{m-1})$, where $L_i=\{t\in\trees_\Sigma\mid q_i\in\tau(t)\}$.

Note that in this work we use functional $\mu$\=/calculus rather than the modal one.
An analogous formula for the modal $\mu$\=/calculus has been given by Janin and Walukiewicz~\cite{JaninW95}.

\subsection{Alternating automata}

An alternating tree automaton is defined analogously to a nondeterministic one, except that the transition relation $\gamma$ assigns to each state $q\in Q$ a~formula $\gamma(q)$ built using the syntax
\[ \theta ::= a(q_\dL,q_\dR) \mid \theta_1\land\theta_2 \mid \theta_1\lor\theta_2,\]
where $a\in\Sigma$ and $q_\dL,q_\dR\in Q$.
In other words, a~transition is a~positive Boolean combination of \emph{deterministic} transitions of the form $a(q_\dL,q_\dR)$.
A~nondeterministic automaton can be seen as a~special case of an~alternating one, where the formulae $\gamma(q)$ are disjunctions of deterministic transitions.

The semantics of alternating automata is given in terms of acceptance game, see, for instance, \cite{Kirsten-alter-auto}, \cite{Julian-Igor-hand}, or \cite{niwinski_rudiments}.
In particular, $\tau_\CA(t)$ can again be defined as the set of states from which $\CA$ accepts $t$.
Our goal is to provide an~alternative definition of the function $\delta_a\from \powerset(Q)\times\powerset(Q)\to\powerset(Q)$, which is now defined for an~alternating automaton.
The remaining definitions of $K^{\dL}_{t}(\vtau)$, $K^{\dR}_{t}(\vtau)$, and $\delta(\vtau)(t)$ remain the same.

To define $\delta_a(K^\dL,K^\dR)$ we proceed inductively over the structure of the formula $\gamma(q)$, defining auxiliary sets $K_\theta$ for all subformulae of $\gamma(q)$ as follows:
\begin{align*}
	K_{a(q^\dL,q^\dR)}&\eqdef\big\{q\in Q\mid q^\dL\in K^\dL \land q^\dR\in K^\dR\big\},\\
	K_{b(q^\dL,q^\dR)}&\eqdef\emptyset&\text{for $b\neq a$},\\
	K_{\theta_1\lor\theta_2}&\eqdef K_{\theta_1}\cup K_{\theta_2},\\
	K_{\theta_1\land\theta_2}&\eqdef K_{\theta_1}\cap K_{\theta_2}.
\end{align*}
Finally, we put $\delta_a(K^\dL,K^\dR)\eqdef K_{\gamma(q)}$.

The correctness of this construction relies on the following fact.

\begin{fact}
	Assume that $\tau_\CA(t.\dL)=K^\dL$, $\tau_\CA(t.\dR)=K^\dR$, and $a=t(\epsilon)$.
	Then $\tau_\CA(t)=\delta_a(K^\dL,K^\dR)$.
\qed\end{fact}

\begin{remark}\label{rem:alternating}
	Assume that the considered automaton is alternating and $\delta$ is defined based on $\delta_a$ as above.
	Then, \cref{lem:delta-mono-chain} and the equivalence from \cref{tau-automaton} hold.
	Moreover, the rest of our proof, relating $\tau_\CA$ to the constructed formula of unary $\mu$\=/calculus and then showing how to compute measures, remains also valid;
	it works for any monotone function $\delta_a$.
\end{remark}

\section{Unary \texorpdfstring{$\mu$\=/}{mu-}calculus}\label{sec:unary}

We now proceed to define the notion of unary $\mu$\=/calculus.
This is done in generality, for any partial order $(X,{\leq})$.

\subsection{Intensional definition}\label{ssec:intensional}

Consider a~partial order $(X,{\leq})$.
A \emph{partial function} $F\from X\parto X$ is a~function $F\from\dom(F)\to X$ for some $\dom(F)\subseteq X$.
Let $F,G\from X\parto X$ be monotone partial functions.
We write $F\comp G$ for their composition, defined by
\begin{align*}
	(F\comp G)(x)&\eqdef G(F(x))\hspace{9em}\mbox{with}\\
	\dom(F\comp G)&\eqdef \big\{x\in\dom(F)\mid F(x)\in \dom(G)\big\}.
\end{align*}
We also define $\Fix(F)$ as the set of fixed points of $F$, that is, $\Fix(F)\eqdef\set{x\in\dom(F)\mid F(x)=x}$.
Then, we define a~partial function $\limU{F}\from X\parto X$ taking
\begin{align*}
	\limU{F}(x) &\eqdef \min \big\{y\in \Fix(F)\mid y\geq x\big\}\hspace{9em}\text{with}\\
	\dom(\limU{F})&\eqdef\big\{x\in X\mid\exists y\in \Fix(F).\, \big(y\geq x\land
		\forall y'\in\Fix(F).\, (y'\geq x\Rightarrow y'\geq y)\big)\big\},
\end{align*}
That is, $x\in\dom(\limU{F})$ if there exists the least fixed point of $F$ that is greater or equal $x$, and then $\limU{F}(x)$ is this fixed point.
Note that the domain of $\limU{F}$ need not be contained in the domain of $F$.
Dually, we define
\begin{align*}
	\limD{F}(x) &\eqdef \max \big\{y\in \Fix(F)\mid y\leq x\big\},
\end{align*}
and $\dom(\limD{F})$ is defined accordingly.

We fix a~finite set of symbols $\mathfrak{F}$ that we call a~\emph{unary functional signature} (\emph{signature} for short).
A~partial order $(X,{\leq})$ is expanded to a~\emph{structure over $\mathfrak{F}$} if each symbol $H\in\mathfrak{F}$ is interpreted by a~monotone function $H_X\from X \to X$.
We call the functions $H_X$ the \emph{basic functions}.

A~formula of \emph{unary $\mu$\=/calculus} over a~signature $\mathfrak{F}$ is any term $F$ obtained using the grammar
\[ F ::= H \mid F_1\comp F_2 \mid \limU{F} \mid \limD{F},\]
where $H\in\mathfrak{F}$.

Assume that $(X,{\leq})$ is a~structure over $\mathfrak{F}$.
We now define, for each formula $F$ of unary $\mu$\=/calculus, a~partial function $F_X\from X\parto X$:
\begin{align*}
	(H)_X\eqdef H_X,\qquad
	(F\comp G)_X\eqdef F_X\comp G_X,\qquad
	(\limU{F})_X\eqdef\limU{(F_X)},\qquad
	(\limD{F})_X\eqdef\limD{(F_X)}.
\end{align*}

The following fact follows easily from the definition.

\begin{lemma}
	If $F$ is a~formula of unary $\mu$\=/calculus then $F_X\from X\parto X$ is a~monotone partial function.
\qed\end{lemma}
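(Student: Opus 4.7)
The plan is to proceed by structural induction on the formula $F$. The claim has two parts to verify at each stage: that $F_X$ is \emph{monotone} (i.e.\ $x \leq x'$ with $x, x' \in \dom(F_X)$ implies $F_X(x) \leq F_X(x')$), and that the domain and values are well defined as a single-valued partial function. Both are automatic in the base case $F = H \in \mathfrak{F}$, since the definition of a structure over $\mathfrak{F}$ stipulates that each $H_X$ is a monotone (total) function.

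For the composition case $F = F_1 \comp F_2$, assume inductively that $(F_1)_X$ and $(F_2)_X$ are monotone partial functions. Take $x \leq x'$ with $x, x' \in \dom(F_X)$. By definition of the domain of composition, $(F_1)_X(x)$ and $(F_1)_X(x')$ are both defined and both lie in $\dom((F_2)_X)$; monotonicity of $(F_1)_X$ gives $(F_1)_X(x) \leq (F_1)_X(x')$, and then monotonicity of $(F_2)_X$ gives $(F_2)_X((F_1)_X(x)) \leq (F_2)_X((F_1)_X(x'))$, which is exactly $F_X(x) \leq F_X(x')$.

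For the case $F = \limU{G}$, assume inductively that $G_X$ is monotone. Suppose $x \leq x'$ with both in $\dom(\limU{G_X})$. Let $y = \limU{G_X}(x)$ and $y' = \limU{G_X}(x')$. By definition, $y'$ is a fixed point of $G_X$ with $y' \geq x' \geq x$. By the minimality clause in the definition of $\dom(\limU{G_X})$ witnessed by $x$, every fixed point of $G_X$ bounding $x$ from above must bound $y$ from above; thus $y \leq y'$, as required. The case $F = \limD{G}$ is entirely dual, using the maximality clause for $\limD{G_X}(x')$.

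The step I would be most careful with is the $\limU{}$ (and $\limD{}$) case, not because the monotonicity argument is hard, but because the domain condition is subtle: one must invoke precisely the existence-plus-minimality formulation in the definition of $\dom(\limU{F})$. Note that monotonicity of the inductive hypothesis $G_X$ is not actually used in the monotonicity step for $\limU{G}$ itself; it is only needed downstream (e.g., to guarantee that the set of fixed points behaves well when further composed). The proof is thus a straightforward structural induction with no surprises.
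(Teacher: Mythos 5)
Your proof is correct and is essentially what the paper has in mind: the paper states this as a fact ``that follows easily from the definition'' and gives no proof, and your structural induction is the natural elaboration of it. The base case and the composition case are straightforward uses of the inductive hypothesis and the definition of $\dom(F\comp G)$. The $\limU{}$/$\limD{}$ cases are handled exactly right: from $x \leq x'$, the value $\limU{G_X}(x')$ is a fixed point of $G_X$ above $x'$ and hence above $x$, so by the minimality clause defining $\dom(\limU{G_X})$ at $x$ it dominates $\limU{G_X}(x)$. Your closing observation is also accurate and worth noting: monotonicity of $G_X$ plays no role in showing $\limU{G_X}$ is monotone on its domain (that follows purely from the ``least fixed point above'' shape of the definition); the inductive hypothesis is carried along only so that the overall statement remains well-formed at every subformula, and so that monotonicity is available where it \emph{is} needed, namely in the composition case.
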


\subsection{Computing limits}

For an~ordinal $\eta>0$ consider a~sequence $(x_\zeta)_{\zeta<\eta}$ of elements of $X$.
We say that such a~sequence is \emph{increasing} if $x_\zeta\leq x_{\zeta'}$ whenever $\zeta<\zeta'<\eta$.
We define the \emph{limit} of such a~sequence, denoted $\lim_{\zeta<\eta}x_\zeta$, as the supremum of the chain $\set{x_\zeta\mid\zeta<\eta}$.
Analogously we define a \emph{decreasing} sequence, whose limit is $\inf\set{x_\zeta\mid\zeta<\eta}$.
A sequence is \emph{monotonic} if it is either increasing or decreasing.

\begin{definition}
	We say that a~partial function $F\from X\parto X$ is \emph{ascending on $A\subseteq X$} if:
	$F$ is monotone, $A\subseteq \dom(F)$, $A$ is a~chain\=/complete subset of $X$, and for each $x\in A$ we have $F(x)\in A$ and $F(x)\geq x$.

	$F$ is \emph{descending on $A$} if the analogue conditions are met with $F(x)\leq x$.
\end{definition}

Assume that $F\from X \parto X$ is either ascending or descending on $A\subseteq X$ and $x\in A$.
For an~ordinal $\eta$ define inductively $F^\eta(x)$ to be
\begin{itemize}
\item	$x$ if $\eta=0$,
\item	$F(F^{\eta-1}(x))$ if $\eta$ is a~successor ordinal, and
\item	$\lim_{\zeta<\eta}F^\zeta(x)$ if $\eta$ is a~limit ordinal.
\end{itemize}
This sequence is well-defined thanks to the assumption that $F$ is monotone and ascending (descending) on $A$ and $A$ is chain complete;
in particular it is increasing (decreasing, respectively) and all its elements remain in $A$.
Then, for some ordinal $\eta_\infty$ (at latest for the first ordinal of cardinality greater than $|X|$) the values must stabilise: $F^{\eta_\infty+1}(x) = F^{\eta_\infty}(x)$.
Define $F^\infty(x)\eqdef F^{\eta_\infty}(x)$.

\begin{lemma}\label{lem:lim-as-iteration}
	Assume that $F\from X\parto X$ is ascending on $A\subseteq X$.
	Then $A\subseteq \dom(\limU{F})$ and for every $x\in A$ we have $(\limU{F})(x)=F^\infty(x)\in\Fix(F)\cap A$.

	Analogously, if $F$ is descending on $A\subseteq X$ then $A\subseteq \dom(\limD{F})$ and for every $x\in A$ we have $(\limD{F})(x)=F^\infty(x)\in\Fix(F)\cap A$.
\end{lemma}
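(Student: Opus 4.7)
The plan is a straightforward transfinite iteration argument, essentially a Knaster--Tarski construction carried out within the restricted domain $A$.

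First, I would fix $x \in A$ and verify that the transfinite sequence $\big(F^\eta(x)\big)_\eta$ is well-defined and stays inside $A$. At successor stages this uses $A \subseteq \dom(F)$ and $F(A) \subseteq A$; at limit stages it uses chain-completeness of $A$, which guarantees that the supremum of the (increasing, by monotonicity of $F$ and the assumption $F(y) \geq y$) chain $\{F^\zeta(x) \mid \zeta < \eta\}$ exists in $X$ and belongs to $A$. The sequence is increasing throughout, so by the cardinality bound it stabilises at some $\eta_\infty$, and then $F^\infty(x) := F^{\eta_\infty}(x)$ satisfies $F(F^\infty(x)) = F^\infty(x)$ and lies in $\Fix(F) \cap A$, with $F^\infty(x) \geq x$.

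Next I would verify minimality: for any $y \in \Fix(F)$ with $y \geq x$, I claim $y \geq F^\eta(x)$ for every ordinal $\eta$, by transfinite induction. The base case $\eta = 0$ is the hypothesis. For a successor $\eta$, monotonicity of $F$ applied to $y \geq F^{\eta-1}(x)$ yields $y = F(y) \geq F(F^{\eta-1}(x)) = F^\eta(x)$. For a limit $\eta$, the inductive hypothesis says $y$ is an upper bound of the chain $\{F^\zeta(x) \mid \zeta < \eta\}$ in $X$; because $A$ is chain-complete in $X$, the element $F^\eta(x)$ is both the supremum in $A$ and in $X$, so $y \geq F^\eta(x)$. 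Taking $\eta = \eta_\infty$ shows $y \geq F^\infty(x)$. Hence the set $\{y \in \Fix(F) \mid y \geq x\}$ has a least element, namely $F^\infty(x)$; by definition this means $x \in \dom(\limU{F})$ and $(\limU{F})(x) = F^\infty(x)$.

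The descending case is completely dual: replace ``increasing'' by ``decreasing'', suprema by infima, and ``least fixed point above $x$'' by ``greatest fixed point below $x$''; chain-completeness of $A$ is again what keeps limits inside $A$, and the minimality argument is mirrored to give maximality.

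I do not anticipate a real obstacle: the only subtle point is the limit step of the minimality induction, where one must remember that an arbitrary fixed point $y \geq x$ need not lie in $A$, so one has to observe that chain-completeness of $A$ forces the supremum of the chain in $A$ to coincide with its supremum in $X$, making $y$ automatically an upper bound of $F^\eta(x)$. Everything else is a routine transfinite recursion plus monotonicity of $F$.
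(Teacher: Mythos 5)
Your proof is correct and follows essentially the same route as the paper's: well-definedness of the transfinite iteration via chain-completeness of $A$, stabilisation to a fixed point $F^\infty(x)\in\Fix(F)\cap A$ with $F^\infty(x)\geq x$, and minimality by transfinite induction (base, successor via monotonicity, limit via the supremum's upper-bound property). Your extra remark at the limit step — that an arbitrary fixed point $y\geq x$ need not lie in $A$ — is a valid observation; it is handled automatically because the paper defines a chain-complete subset so that the supremum is computed in $X$ and merely required to land in $A$, so $y$ is an upper bound of the chain in $X$ regardless.
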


\begin{proof}
	We conduct the proof for $\limU{F}$; the case of $\limD{F}$ is symmetric.
	Fix some $x\in A$ and recall the sequence $F^\eta(x)$ used to define $F^\infty(x)$.
	We have $F\big(F^{\eta_\infty}(x)\big)=F^{\eta_\infty+1}(x)=F^{\eta_\infty}(x)$, so $F^{\eta_\infty}(x)\in\Fix(F)$.
	Moreover $F^{\eta_\infty}(x)\geq x$ because $F^{\eta_\infty}(x)$ belongs to an~increasing sequence starting at $x$.

	Take now any other fixed point $y'\in\Fix(F)$ such that $y'\geq x$.
	We can see by induction that $y'\geq F^\eta(x)$ for all ordinals $\eta$.
	Indeed, for $\eta=0$ this simply says that $y'\geq x$.
	If $\eta$ is a~successor ordinal, from the induction hypothesis $y'\geq F^{\eta-1}(x)$ we obtain
	\[y'=F(y')\geq F\big(F^{\eta-1}(x)\big)=F^\eta(x)\]
	by monotonicity of $F$.
	Finally, if $\eta$ is a~limit ordinal and $y'\geq F^\zeta(x)$ for all $\zeta<\eta$, then $y'$ is an~upper bound of the chain $\set{F^\zeta(x)\mid\zeta<\nobreak\eta}$,
	so the supremum $F^\eta(x)$ of this chain satisfies $y'\geq F^\eta(x)$.
	It follows that $y'\geq F^{\eta_\infty}(x)$.
	Thus, $x\in\dom(\limU{F})$ and $(\limU{F})(x)=F^{\eta_\infty}(x)$, as required.
\end{proof}

\subsection{A type system}

The definition of the partial function $F_X\from X\parto X$ corresponding to a~formula $F$ of unary $\mu$\=/calculus does not provide any natural way of proving that $x\in\dom(F_X)$ for a~given $x\in X$.
Therefore, to prove statements of this form, we introduce a~type system.
To this end, let $A,B\subseteq X$, and let $F$ be a~formula of unary $\mu$\=/calculus.
We show how to derive a~clause $F\dcolon A\to B$.
The type system is developed in a~way that ensures the following \lcnamecref{claim:type-system}.

\begin{claim}\label{claim:type-system}
	If we can derive $F\dcolon A \to B$ then $A\subseteq\dom(F_X)$ and moreover for each $x\in A$ we have $F_X(x)\in B$.
\end{claim}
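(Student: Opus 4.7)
The plan is to proceed by induction on the derivation of the clause $F\dcolon A\to B$. Since the type system will have one inference rule per constructor of unary $\mu$\=/calculus, the induction splits into four cases: a~basic symbol $H\in\mathfrak{F}$, a~composition $F_1\comp F_2$, and the two limit constructors $\limU{F}$ and $\limD{F}$. The invariant carried through the induction is precisely the statement of the Claim, namely $A\subseteq\dom(F_X)$ and $F_X(A)\subseteq B$.

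Two of the cases should be essentially routine. The base case will require the rule for a~basic $H\in\mathfrak{F}$ to demand $H_X(A)\subseteq B$; since each $H_X$ is a~total monotone function, this gives $A\subseteq\dom(H_X)$ and $H_X(A)\subseteq B$ at once. For composition, the natural rule derives $F_1\comp F_2\dcolon A\to C$ from premises $F_1\dcolon A\to B$ and $F_2\dcolon B\to C$; applying the inductive hypothesis to both premises yields $A\subseteq\dom((F_1)_X)$ with $(F_1)_X(A)\subseteq B\subseteq\dom((F_2)_X)$, and the definition of composition of partial functions then delivers $A\subseteq\dom((F_1\comp F_2)_X)$ with image contained in~$C$.

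The main obstacle is the case of $\limU{F}$ (and, symmetrically, $\limD{F}$). The partial function $(\limU{F})_X$ was defined non\=/constructively via a~least fixed point lying above the argument, and membership in its domain cannot be certified directly from the syntax of the formula. Here the decisive tool is \Cref{lem:lim-as-iteration}. I anticipate the rule for $\limU{F}$ to permit deriving $\limU{F}\dcolon A\to B$ given a~premise guaranteeing that $F_X$ maps $A$ into $A$, that $A$ is chain\=/complete, and that $F_X(x)\geq x$ for every $x\in A$---that is, exactly that $F_X$ is ascending on~$A$ (monotonicity being already ensured by the Lemma at the end of \Cref{ssec:intensional}). \Cref{lem:lim-as-iteration} then produces $A\subseteq\dom((\limU{F})_X)$ and $(\limU{F})_X(x)=F_X^\infty(x)\in\Fix(F_X)\cap A$ for each $x\in A$, which is contained in the target set $B$ prescribed by the rule's conclusion. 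The dual case for $\limD{F}$ is handled in the same manner using the descending variant of \Cref{lem:lim-as-iteration}.
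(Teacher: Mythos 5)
Your proposal is correct and follows essentially the same approach as the paper, which dispenses with the claim in one line: ``an easy induction on the structure of $F$, where in the last two cases we use \cref{lem:lim-as-iteration}.'' You correctly anticipated the shape of all four typing rules (including the side conditions that make $F_X$ ascending/descending on $A$ and that $\Fix(F_X)\cap A\subseteq B$), and your inductive argument for each case, in particular the invocation of \cref{lem:lim-as-iteration} to handle $\limU{F}$ and $\limD{F}$, is exactly what the paper intends.
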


The rules of the typing system follow.
A general form is
\[
	\frac{\mathit{clause}_1, \ldots ,\mathit{clause}_k}{\mathit{clause}}
	\hspace{7pt} \mathit{condition},
\]
where \emph{condition} (possibly empty) can be any set\=/theoretic condition involving the formulae and sets appearing in the clauses.
The reading of a~rule is: if $\mathit{clause}_1$, \ldots, $\mathit{clause}_k$ are already derived and the \emph{condition} is satisfied, then \emph{clause} is derived as well.
\[\infer
[\forall x\in A.\,H_X(x)\in B]
{H\dcolon A \to B}
{\ }\qquad\qquad
\infer[]
{(F_1\comp F_2)\dcolon A\to B}
{F_1\dcolon A\to C & F_2\dcolon C\to B}\]

\[\infer[A\in\Pcc(X),\forall x\in A.\, F_X(x)\geq x, \Fix\big(F_X\big)\cap A\subseteq B]
{(\limU{F})\dcolon A\to B}
{F\dcolon A\to A}\]

\[\infer[A\in\Pcc(X),\forall x\in A.\, F_X(x)\leq x, \Fix\big(F_X\big)\cap A\subseteq B]
{(\limD{F})\dcolon A\to B}
{F\dcolon A\to A}\]

Note that the above conditions in the case of $(\limU{F})$ and $(\limD{F})$ boil down to $F_X$ being ascending (respectively descending) on $A$.
\cref{claim:type-system} above then follows by an easy induction on the structure of $F$, where in the last two cases we use \cref{lem:lim-as-iteration}.

\section{Basic functions}\label{sec:basic-functions}

In this section we define a~structure in which we want to evaluate our formulae of unary $\mu$\=/calculus.

Let $(\CV,{\leq})$ be some complete lattice, with the least and the greatest elements denoted $\bot$ and $\top$, respectively.
Let $d> 0$ be an~even natural number.
Finally, we fix a~monotone chain\=/continuous function $\delta\from \CV^d\to\CV$.

While the results of this section hold in general, we actually need them for the particular $\CV$ and $\delta$ defined in \cref{sec:spaces}.
We work with the domain $\CV^d$ ordered component\=/wise, which, as already noted, is a~complete lattice.
When working in $\CV^d$ we write $\bar\bot$ for the vector $(\bot,\ldots,\bot)\in\CV^d$.
For $\vtau\in\CV^d$ and $i\in\set{0,\ldots,d{-}1}$ we write $\tau_i\in\CV$ for the respective coordinate of $\vtau$ (i.e., $\vtau=(\tau_0,\ldots,\tau_{d-1})$).
For convenience, we additionally assume that $\tau_{d}=\top$ and $\tau_{d+1}=\bot$.
Moreover, for $i\in\set{0,\ldots,d}$ we define $\vtau\sufcut_i = (\tau_i,\tau_{i+1},\ldots,\tau_{d-1}) \in\CV^{d-i}$ as the restriction of $\vtau$ in which we cut off its first $i$ coordinates.
In particular, $\vtau\sufcut_d$ is $()$, the empty tuple, while $\vtau\sufcut_0 = \vtau$.

\subsection{Functions in \texorpdfstring{$\CV^d$}{V\^{ }d}}\label{ssec:funs_in_V}

Let $\mathfrak{F}=\{\Delta\}\cup\{\Bid_0,\ldots,\Bid_{d-1}\}\cup \{\Cut_1,\ldots,\Cut_{d-1}\}$.
Our goal is to define actual monotone functions with these names, which will turn $\CV^d$ into a~structure over $\mathfrak{F}$.
All our work here is done in $\CV^d$, and therefore we skip the subscript;
that is, the function $H_{\CV^d} \from \CV^d\to\CV^d$, for $H\in\mathfrak{F}$, is denoted just $H$, similarly for more complex formulae $F$.

Let $\Delta\from \CV^d\to \CV^d$ be defined for $\vtau=\big(\tau_0,\ldots,\tau_{d-1}\big)\in \CV^d$ and $i\in\{0,\ldots,d{-}1\}$ as
\[\Delta(\tau_0,\ldots,\tau_{d-1})_i\eqdef \delta\big(\tau_i,\tau_i,\ldots,\tau_i,\tau_{i+1},\ldots,\tau_{d-1}\big).\]

\begin{fact}\label{cl:delta-restr}
	For $0\leq i \leq d$ and $\vtau,\vtau'\in \CV^d$ if $\vtau\sufcut_i=\vtau'\sufcut_i$ then
	\[\Delta(\vtau)\sufcut_i = \Delta(\vtau')\sufcut_i.\tag*{\qed}\]
\end{fact}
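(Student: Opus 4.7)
The plan is to observe that the definition of $\Delta$ is ``upper triangular'' in the sense that the $j$\=/th coordinate $\Delta(\vtau)_j$ depends on $\vtau$ only through its first $j$ coordinates $\vtau\restr_j$, and then to read off the claim directly. I would first fix arbitrary $\vtau,\vtau'\in\CV^d$ with $\vtau\restr_i=\vtau'\restr_i$, and fix an~arbitrary $j\in\{1,\ldots,i\}$.

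Next, I would unfold the definition:
\[\Delta(\vtau)_j=\delta\big(\tau_1,\tau_2,\ldots,\tau_{j-1},\tau_j,\tau_j,\ldots,\tau_j\big),\]
noting that all arguments passed to $\delta$ belong to the list $\tau_1,\ldots,\tau_j$. Since $j\leq i$ and $\vtau\restr_i=\vtau'\restr_i$, we have $\tau_k=\tau'_k$ for every $k\in\{1,\ldots,j\}$, so the tuple passed to $\delta$ in the definition of $\Delta(\vtau)_j$ coincides coordinate\=/by\=/coordinate with the tuple passed to $\delta$ in the definition of $\Delta(\vtau')_j$. Hence $\Delta(\vtau)_j=\Delta(\vtau')_j$.

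Since this holds for every $j\in\{1,\ldots,i\}$, we obtain $\Delta(\vtau)\restr_i=\Delta(\vtau')\restr_i$, as required.

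There is no real obstacle here: the fact is an~immediate syntactic consequence of how $\Delta$ was defined, namely that the $j$\=/th component is a~function of $\vtau\restr_j$ alone (and in particular of $\vtau\restr_i$ whenever $j\leq i$). The only thing to watch out for is the boundary case $i=0$, where the statement is vacuous because $\vtau\restr_0=\vtau'\restr_0=()$ and $\Delta(\vtau)\restr_0=\Delta(\vtau')\restr_0=()$ automatically.
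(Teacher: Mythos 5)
Your proof is correct and matches the intended argument: the paper states this as an immediate \emph{Fact} (with no written proof), precisely because $\Delta(\vtau)_j$ is, by its very definition, a function of $\tau_1,\ldots,\tau_j$ alone, so equality of the first $i$ coordinates of the inputs forces equality of the first $i$ coordinates of the outputs. Your handling of the vacuous case $i=0$ is a sensible touch, though not strictly necessary.
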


Define, for $n\in\set{0,\ldots,d{-}1}$, a~function $\Bid_n\from \CV^d\to \CV^d$, which replaces the coordinates $0,1,\ldots,n$ by the coordinate $n{+}2$:
\begin{align*}
	\Bid_n(\vtau)_i&\eqdef\left\{\begin{array}{ll}
		\tau_i &\text{if $i > n$;}\\
		\tau_{n+2} &\text{if $i\leq n$.}
	\end{array}\right.
\end{align*}
For $n\geq d{-}2$ we use the convention that $\tau_{d}=\top$ and $\tau_{d+1}=\bot$.

Similarly, define for $n\in\set{1,\ldots,d{-}1}$ a~function $\Cut_n\from \CV^d\to \CV^d$, which replaces the coordinates $0,1,\ldots,n$ by the coordinate $n{-}1$:
\begin{align*}
	\Cut_n(\vtau)_i&\eqdef\left\{\begin{array}{ll}
		\tau_i &\text{if $i > n$;}\\
		\tau_{n-1}&\text{if $i\leq n$.}
	\end{array}\right.
\end{align*}

\begin{fact}
	The functions $\Delta$, $\Bid_n$, and $\Cut_n$ are all monotone.
\qed\end{fact}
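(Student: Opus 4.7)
The plan is a routine coordinate-wise verification, made essentially immediate by prior work. Since the order on $\CV^d$ is coordinate-wise, to show that a function $\CV^d \to \CV^d$ is monotone it suffices to show that each coordinate of its output is monotone as a function $\CV^d \to \CV$.

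For $\Bid_n$ and $\Cut_n$ I would simply inspect the two cases of the defining formula. In every case, the $i$-th coordinate of the output is either a projection $\vtau \mapsto \tau_j$ for some $j \in \{1,\ldots,d\}$, or (only for $\Bid_n$ with $n \leq 2$, via the conventions $\tau_{-1} = \bot$ and $\tau_0 = \top$) a constant function. Projections and constants are trivially monotone, so $\Bid_n$ and $\Cut_n$ are monotone.

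For $\Delta$ I would factor each coordinate through $\delta$: namely, $\Delta(\vtau)_i = \delta(\phi_i(\vtau))$, where $\phi_i \from \CV^d \to \CV^d$ is the map $\vtau \mapsto (\tau_1,\ldots,\tau_{i-1},\tau_i,\tau_i,\ldots,\tau_i)$, each of whose components is itself a projection and therefore monotone. Hence $\phi_i$ is monotone coordinate-wise, and composing with $\delta$, which is monotone by \cref{lem:delta-mono-chain}, yields a monotone function $\CV^d \to \CV$. This gives monotonicity of $\Delta$.

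There is no real obstacle here; the only point requiring any care is the bookkeeping around the conventions $\tau_{-1}=\bot$, $\tau_0=\top$ in the definition of $\Bid_n$ for small $n$, and verifying that the argument-tuple built inside $\delta$ in the definition of $\Delta$ really is a monotone function of $\vtau$. Both are immediate on inspection, so the \textbf{fact} is essentially a sanity check ensuring that $\CV^d$ becomes a genuine structure over the signature $\mathfrak{F}$, as required for interpreting formulae of the unary $\mu$-calculus.
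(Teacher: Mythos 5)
Your argument is correct and is exactly the routine coordinate-wise check the paper leaves implicit (the Fact carries only a $\qed$, i.e.\ no proof is given). One small point of bookkeeping: at the place where this Fact is stated (\cref{ssec:funs_in_V}), the paper has just fixed an \emph{arbitrary} monotone chain-continuous $\delta\from\CV^d\to\CV$, so monotonicity of $\delta$ is available by assumption rather than by \cref{lem:delta-mono-chain}; the latter is what discharges that assumption for the specific $\delta$ of \cref{sec:spaces}, but in the generality in which the Fact is proved you should appeal to the standing hypothesis on $\delta$. With that citation adjusted, the proof is complete and takes the expected approach.
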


\subsection{Invariants}\label{ssec:invariants}

The goal of this subsection is to define sets $\CS_n\subseteq\CV^d$ for $n=0,\ldots,d$.
These sets are used later in the paper to derive types of certain functions on $\CV^d$.

Define the \emph{parity order} on $\{0,1,\ldots\}$ as the order $\dots \prec 5\prec 3 \prec 1 \prec 0 \prec 2 \prec 4\prec \dots$,
that is, the unique linear order $\preceq$ satisfying $2n+3 \prec 2n+1 \prec 2n \prec 2n+2$ for every $n\in\N$.
We often use the basic fact that if $n$ is odd and $n'$ is even then $n\prec n'$.

Consider $n\in\set{-1,0,\ldots,d{-}1}$ and $\vtau=(\tau_0,\ldots,\tau_{d-1})\in\CV^d$.
We say that $\vtau$ is:
\begin{itemize}
\item \emph{ordered} if for $0\leq i,j\leq d{-}1$ with $i\preceq j$ we have $\tau_i\leq \tau_j$;
\item \emph{$n$\=/fixed} if for $i\in\{0, \ldots, n\}$ we have $\tau_i=\tau_n$;
\item \emph{$n$\=/saturated} if $\Delta(\vtau)\sufcut_{n+1} = \vtau\sufcut_{n+1}$.
\end{itemize}
In other words, $\vtau$ is \emph{$n$\=/fixed} if the first coordinates of $\vtau$, up to the $n$\=/th, are all equal to each other;
while $\vtau$ is \emph{$n$\=/saturated} if the application of $\Delta$ does not change the coordinates of $\vtau$ above $n$.

For $n\in\set{-1,0,\ldots,d{-}1}$ let $\CS_n$ be the subset of $\CV^d$ containing those $\vtau\in \CV^d$ that are ordered, $n$\=/fixed, $n$\=/saturated, and satisfy
\begin{itemize}
\item	$\Delta(\vtau)\geq\vtau$ for $n$ odd,
\item	$\Delta(\vtau)\leq\vtau$ for $n$ even.
\end{itemize}
Finally, we say that a~partial function $F\from\CV^d\parto\CV^d$ is \emph{$n$\=/stable on $A\subseteq \CV^d$}
for $n\in\{-1,0,\ldots,d{-}1\}$ if $A\subseteq\dom(F)$ and for all $\vtau\in A$ we have $F(\vtau)\sufcut_{n+1} = \vtau\sufcut_{n+1}$.
In particular, the condition of being $n$\=/saturated implies that $\Delta$ is $n$\=/stable on $\CS_n$.

We now prove a~number of properties of the sets $\CS_n$ and of $n$\=/stability.

\begin{fact}\label{ft:bot-in-cs}
	The least element $\bar\bot$ of $\CV^d$ belongs to $\CS_{d-1}$ (recall that $d$ is even).
\qed\end{fact}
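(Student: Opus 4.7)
The plan is to unfold the definition of $\CS_1$ and verify each of the four defining conditions directly for the vector $\bot = (\bot, \ldots, \bot) \in \CV^d$. Each verification is immediate from the fact that $\bot$ is the least element of the component lattice $\CV$, so I expect no real obstacles here; the only thing to check is that no condition depends on a property of $\vtau$ that fails for trivial reasons at $n=1$.

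First I would observe that $\bot$ is \emph{ordered}: for any $i,j \in \{1,\ldots,d\}$ with $i \preceq j$ we have $(\bot)_i = \bot = (\bot)_j$, so the required inequality $(\bot)_i \leq (\bot)_j$ holds trivially. Next, $\bot$ is \emph{$1$-fixed}: the condition demands $(\bot)_i = (\bot)_1$ for all $i \in \{1,\ldots,d\}$, which is again immediate since every coordinate equals $\bot$.

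Third, $\bot$ is \emph{$1$-saturated}: we must check $\Delta(\bot)\restr_0 = \bot\restr_0$, but restriction to the first $0$ coordinates yields the empty tuple on both sides, so the equality holds vacuously. Finally, since $n=1$ is odd, the remaining condition is $\Delta(\bot) \geq \bot$, which holds because $\bot$ is the least element of $\CV^d$ and so is dominated by every element, in particular by $\Delta(\bot)$. All four conditions are satisfied, so $\bot \in \CS_1$.
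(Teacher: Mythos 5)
Your proof is correct and proceeds the way anyone would: unfold the four clauses in the definition of $\CS_1$ and check each one for $\bot$. The paper itself omits the proof (stating the fact with an immediate \qed), treating all four verifications as obvious, so your write-out simply makes explicit what the paper leaves implicit.
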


\begin{fact}\label{ft:saturation_mono}
	If $\vtau$ is $n$\=/saturated then it is $i$\=/saturated for all $i\in\{n,\ldots,d{-}1\}$.
\qed\end{fact}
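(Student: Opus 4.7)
The plan is to derive the claim directly from the definition of $n$-saturation, which at this point is just a statement about equality of tuples under truncation. I would start by unfolding the definition: $\vtau$ is $n$-saturated means $\Delta(\vtau)\restr_{n-1} = \vtau\restr_{n-1}$ as elements of $\CV^{n-1}$, i.e.\ $\Delta(\vtau)_j = \vtau_j$ for every $j \in \{1,\ldots,n-1\}$.

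Next I would fix any $i \in \{1,\ldots,n\}$ and observe that $\{1,\ldots,i-1\} \subseteq \{1,\ldots,n-1\}$, so the coordinate-wise equality inherited from the $n$-saturation assumption immediately yields $\Delta(\vtau)_j = \vtau_j$ for every $j \in \{1,\ldots,i-1\}$. Repackaging this back into tuple notation gives $\Delta(\vtau)\restr_{i-1} = \vtau\restr_{i-1}$, which is precisely the statement that $\vtau$ is $i$-saturated. The degenerate case $i = 0$ is vacuous, since it would concern the equality of empty tuples.

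Since the entire argument is bookkeeping in the $\restr_k$ notation, there is no genuine obstacle here; the fact is essentially an unfolding of the definition. The role of this \lcnamecref{ft:saturation_mono} in the paper is presumably to let later arguments ``descend'' in the saturation index freely, and its triviality is the reason it is stated as a \emph{fact} rather than a lemma.
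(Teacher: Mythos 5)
Your argument is correct and is exactly the intended (trivial) one: the paper states this as a \qed-ed fact with no proof precisely because, as you note, $\Delta(\vtau)\restr_{i-1} = \vtau\restr_{i-1}$ is just a truncation of the hypothesised equality $\Delta(\vtau)\restr_{n-1} = \vtau\restr_{n-1}$ when $i\leq n$, with $i=0$ vacuous.
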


\begin{lemma}\label{fixed-after-delta}
	For $n\in\set{0,\ldots,d{-}1}$ if $\vtau=(\tau_0,\ldots,\tau_{d-1})\in\CV^d$ is $n$\=/fixed then $\Delta(\vtau)$ is $n$\=/fixed as well.
\end{lemma}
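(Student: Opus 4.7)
The plan is to unfold the definition of $\Delta$ at indices $i\in\{n,\ldots,d\}$ and observe that, under the $n$\=/fixed assumption, the argument tuple fed to $\delta$ does not depend on $i$.

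More precisely, I would first fix $i\in\{n,\ldots,d\}$ and write out
\[\Delta(\vtau)_i = \delta\big(\tau_1,\tau_2,\ldots,\tau_{i-1},\tau_i,\tau_i,\ldots,\tau_i\big).\]
Then I would use $n$\=/fixedness of $\vtau$, which says that $\tau_j=\tau_n$ for every $j\in\{n,\ldots,d\}$. Since $i\geq n$, this applies both to the coordinate $\tau_i$ appearing in the tail of the argument (giving $\tau_i=\tau_n$) and to each $\tau_j$ with $n\leq j\leq i-1$ appearing in the head. Substituting, the tuple simplifies to $(\tau_1,\ldots,\tau_{n-1},\tau_n,\tau_n,\ldots,\tau_n)$, which is manifestly independent of $i$. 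Therefore
\[\Delta(\vtau)_i \;=\; \delta\big(\tau_1,\ldots,\tau_{n-1},\tau_n,\tau_n,\ldots,\tau_n\big) \;=\; \Delta(\vtau)_n\]
for every $i\in\{n,\ldots,d\}$, which is exactly the statement that $\Delta(\vtau)$ is $n$\=/fixed.

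There is no real obstacle here: the lemma is a direct consequence of how $\Delta$ was set up in \cref{ssec:funs_in_V}, namely that the $i$\=/th coordinate of $\Delta(\vtau)$ looks at $\tau_1,\ldots,\tau_{i-1}$ once and then at $\tau_i$ repeatedly in the remaining slots. The only point to be a bit careful about is bookkeeping of indices when $i=n$ (the head $\tau_1,\ldots,\tau_{n-1}$ is unaffected by the substitution) versus $i>n$ (where some head entries also get rewritten via $n$\=/fixedness), but both reduce to the same canonical tuple. No appeal to chain\=/continuity or monotonicity of $\delta$ is needed.
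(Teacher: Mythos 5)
Your proof is correct and follows essentially the same route as the paper: unfold the definition of $\Delta(\vtau)_i$, replace every coordinate $\tau_j$ with $n\leq j\leq i$ by $\tau_n$ using $n$\=/fixedness, and observe that the resulting argument tuple is the one defining $\Delta(\vtau)_n$. Your remark that no monotonicity or chain\=/continuity is needed is also accurate.
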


\begin{proof}
	Take $i\in\{0,\ldots,n\}$ and notice that
	\begin{alignat*}{8}
		\Delta(\vtau)_i
		&=\delta\big(\tau_i&&,\ldots&&,\tau_i&&,\tau_{i+1}&&,\ldots,\tau_{n-1}&&,\tau_n,\tau_{n+1},\ldots,\tau_{d-1}&&\big)\\
		&=\delta\big(\tau_n&&,\ldots&&,\tau_n&&,\tau_n    &&,\ldots,\tau_n    &&,\tau_n,\tau_{n+1},\ldots,\tau_{d-1}&&\big)=\Delta(\vtau)_n;
	\end{alignat*}
	where the middle equation follows from the fact that $\vtau$ is $n$\=/fixed.
\end{proof}

\begin{lemma}\label{ft:stability-in-limit}
	Assume that $F\from\CV^d\parto \CV^d$ is a~partial function that is $n$\=/stable on $A\subseteq \CV^d$.
	If $F$ is ascending on $A$, then $\limU{F}$ is $n$\=/stable on $A$.

	Dually, if $F$ is descending on $A$, then $\limD{F}$ is $n$\=/stable on $A$.
\end{lemma}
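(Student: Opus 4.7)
The plan is to invoke \cref{lem:lim-as-iteration} to write $\limU{F}(\vtau)$ and $\limD{F}(\vtau)$ as the iteration $F^\infty(\vtau)$, and then prove by transfinite induction on $\eta$ that $F^\eta(\vtau)\restr_{n-1} = \vtau\restr_{n-1}$ for every $\vtau\in A$. I handle the ascending case; the descending case is symmetric.

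First, since $F$ is ascending on $A$, \cref{lem:lim-as-iteration} gives $A\subseteq\dom(\limU{F})$, which is half of what we need, and $(\limU{F})(\vtau)=F^\infty(\vtau)$ for every $\vtau\in A$. Moreover, the proof of that lemma already establishes (and we will reuse) the fact that all iterates $F^\eta(\vtau)$ stay inside $A$, so that $n$-stability can be applied to them.

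The induction then proceeds as follows. For $\eta=0$ we have $F^0(\vtau)=\vtau$, so the restriction to the first $n-1$ coordinates is trivially $\vtau\restr_{n-1}$. For a successor $\eta=\zeta+1$, by the induction hypothesis $F^\zeta(\vtau)\restr_{n-1}=\vtau\restr_{n-1}$, and since $F^\zeta(\vtau)\in A$ and $F$ is $n$-stable on $A$, we obtain
\[
F^{\zeta+1}(\vtau)\restr_{n-1} \;=\; F\big(F^\zeta(\vtau)\big)\restr_{n-1} \;=\; F^\zeta(\vtau)\restr_{n-1} \;=\; \vtau\restr_{n-1}.
\]
For a limit ordinal $\eta$, using that the order on $\CV^d$ is component-wise and that suprema in $\CV^d$ are computed coordinate-wise, for each $i\in\set{1,\ldots,n-1}$ we get
\[
F^\eta(\vtau)_i \;=\; \sup_{\zeta<\eta}F^\zeta(\vtau)_i \;=\; \sup_{\zeta<\eta}\vtau_i \;=\; \vtau_i,
\]
by the induction hypothesis applied at each $\zeta<\eta$. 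Taking $\eta=\eta_\infty$ yields $(\limU{F})(\vtau)\restr_{n-1}=F^\infty(\vtau)\restr_{n-1}=\vtau\restr_{n-1}$, which is exactly $n$-stability of $\limU{F}$ on $A$.

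No step is really delicate: the heavy lifting (the existence of $F^\infty(\vtau)$, its membership in $A$, and the identification with $\limU{F}(\vtau)$) has already been packaged in \cref{lem:lim-as-iteration}. The only mild point to keep in mind is that the component-wise description of suprema in $\CV^d$ is what makes the limit step of the induction work, and that we need $F^\zeta(\vtau)\in A$ at the successor step in order to invoke $n$-stability of $F$. The descending case is completely analogous: \cref{lem:lim-as-iteration} gives $(\limD{F})(\vtau)=F^\infty(\vtau)$, the iteration is decreasing, and the limit step uses that infima in $\CV^d$ are computed coordinate-wise.
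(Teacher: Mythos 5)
Your proof is correct and is essentially the same argument as in the paper: both invoke \cref{lem:lim-as-iteration} to write $(\limU{F})(\vtau)$ as $F^\infty(\vtau)$ and then observe that neither applying $F$ (by $n$-stability) nor taking coordinate-wise limits changes the first $n-1$ coordinates. You merely phrase the paper's informal observation as an explicit transfinite induction, which is a fine way to present the same idea.
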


\begin{proof}
	Let $\vtau\in A$.
	We use \cref{lem:lim-as-iteration} to see that $(\limU{F})(\vtau)=F^\infty(\vtau)$ or $(\limD{F})(\vtau)=F^\infty(\vtau)$, respectively.
	The element $F^\infty(\vtau)$ is defined by applying $F$ and by taking limits of sequences.
	Applying $F$ preserves the coordinates above $n$ by $n$\=/stability of $F$ on $A$.
	On the other hand, if we have a~sequence of elements of $\CV^d$ which all agree on the coordinates above $n$,
	then the limit of this sequence also agrees with them on these coordinates.
\end{proof}

\begin{lemma}\label{S-chain-complete}
	For $n\in\set{-1,0,\ldots,d{-}1}$ we have $\CS_n\in\Pcc(\CV^d)$, that is, $\CS_n$ is a~chain\=/complete subset of~$\CV^d$.
\end{lemma}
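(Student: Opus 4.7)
The plan is to verify closure under chain-suprema and chain-infima separately, checking each of the four defining conditions of $\CS_n$. Fix a chain $C\subseteq\CS_n$; since $\CV^d$ is a complete lattice with coordinate-wise order, $\vsigma\eqdef\sup C$ and $\vsigma'\eqdef\inf C$ exist and are computed coordinate-wise. The goal is to show $\vsigma,\vsigma'\in\CS_n$.

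The two ``structural'' conditions---orderedness and $n$-fixedness---are straightforward. If $i\preceq j$ and $\tau_i\leq\tau_j$ for every $\vtau\in C$, then taking suprema (or infima) coordinate-wise preserves this inequality, so both $\vsigma$ and $\vsigma'$ are ordered. Similarly, if $\tau_i=\tau_n$ for all $\vtau\in C$ and all $i\geq n$, then $\vsigma_i=\sup\{\tau_i\mid\vtau\in C\}=\sup\{\tau_n\mid\vtau\in C\}=\vsigma_n$, and likewise for $\vsigma'$, so both are $n$-fixed.

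The remaining two conditions---$n$-saturation and $\Delta(\vtau){\geq}\vtau$ (resp.\ ${\leq}$)---rely on chain-continuity of $\Delta$. The key observation is that $\Delta$ inherits chain-continuity from $\delta$: by its definition $\Delta(\vtau)_i=\delta(\tau_1,\ldots,\tau_{i-1},\tau_i,\tau_i,\ldots,\tau_i)$, and since duplication of coordinates sends chains to chains with the expected supremum/infimum, \cref{lem:delta-mono-chain} yields $\Delta(\sup C)=\sup\Delta(C)$ and $\Delta(\inf C)=\inf\Delta(C)$. Consequently, restricted to the first $n{-}1$ coordinates, $\Delta(\vsigma)\restr_{n-1}=\sup\{\Delta(\vtau)\restr_{n-1}\mid\vtau\in C\}=\sup\{\vtau\restr_{n-1}\mid\vtau\in C\}=\vsigma\restr_{n-1}$, using $n$-saturation of each $\vtau\in C$; the same argument with $\inf$ handles $\vsigma'$.

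For the last condition, suppose $n$ is odd, so $\Delta(\vtau)\geq\vtau$ for each $\vtau\in C$. Then $\Delta(\vsigma)=\sup\Delta(C)\geq\sup C=\vsigma$; and for the infimum, each $\Delta(\vtau)\geq\vtau\geq\vsigma'$, so $\vsigma'$ is a lower bound of $\Delta(C)$, giving $\Delta(\vsigma')=\inf\Delta(C)\geq\vsigma'$. The even case is completely symmetric. The main subtlety---the only point that is not purely formal---is the chain-continuity of $\Delta$, but this reduces immediately to \cref{lem:delta-mono-chain}, so I expect no serious obstacle in carrying out the plan.
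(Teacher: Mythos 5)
Your proposal is correct and takes essentially the same approach as the paper: verify each of the four defining conditions of $\CS_n$ for both $\sup C$ and $\inf C$, using chain\=/continuity of $\delta$ (via \cref{lem:delta-mono-chain}) to commute $\Delta$ with the chain limit for the $n$\=/saturation and $\Delta(\vtau)\geq\vtau$ (resp.\ $\leq$) conditions. The only cosmetic difference is that you package the key step as ``$\Delta$ is chain\=/continuous'' whereas the paper argues coordinate\=/by\=/coordinate directly; and you spell out the infimum case while the paper invokes symmetry.
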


\begin{proof}
	Consider a~chain $C\subseteq\CS_n$, and its supremum $\vsigma=(\sigma_0,\ldots,\sigma_{d-1})\in\CV^d$.
	We have to prove that $\vsigma\in\CS_n$; we also have to prove this for the infimum, but the proof is symmetric, so we continue only with the supremum.

	To see that $\vsigma$ is ordered, consider two coordinates $0\leq i,j\leq d{-}1$ such that $i\preceq j$.
	We have $\tau_i\leq\tau_j\leq\sigma_j$ for all $\vtau\in C$, because $\vtau$ is ordered and because $\sigma_j$ is the supremum of $\set{\tau_j\mid\vtau\in C}$.
	This means that $\sigma_j$ is an~upper bound of the chain $\set{\tau_i\mid\vtau\in C}$, hence the supremum $\sigma_i$ of this chain satisfies $\sigma_i\leq\sigma_j$, as required.

	The order in $\CV^d$ works on every coordinate $i\in\set{0,\ldots,d{-}1}$ separately, that is, $\sigma_i$ is the supremum of the chain $\set{\tau_i\mid\vtau\in C}$.
	These chains for $i\in\set{0,\ldots,n}$ are all the same, because all $\vtau\in C$ are $n$\=/fixed, which implies that the suprema of these chains are equal.
	This means that $\vsigma$ is $n$\=/fixed.

	While proving $n$\=/saturation we use the assumption that $\delta$ is chain\=/continuous.
	It implies that the element $\Delta(\vsigma)_i$, that is, $\delta(\sigma_i,\sigma_i,\ldots,\sigma_i,\sigma_{i+1},\ldots,\sigma_{d-1})$ is the supremum of the chain
	$\set{\delta(\tau_i,\tau_i,\ldots,\tau_i,\allowbreak\tau_{i+1},\ldots,\tau_{d-1})\mid\vtau\in C}$, that is, $\set{\Delta(\vtau)_i\mid\vtau\in C}$.
	But all $\vtau\in C$ are $n$\=/saturated, so for $i\in\set{n{+}1,\ldots,d{-}1}$ the set $\set{\Delta(\vtau)_i\mid\vtau\in C}$ equals $\set{\tau_i\mid\vtau\in C}$,
	whose supremum is $\sigma_i$;
	we obtain $\Delta(\vsigma)_i=\sigma_i$, as required.

	To see that $\Delta(\vsigma)\geq\vsigma$ for odd $n$ we once again notice that $\Delta(\vsigma)_i$ is the supremum of $\set{\Delta(\vtau)_i\mid\vtau\in C}$ for all $i\in\set{0,\ldots,d{-}1}$;
	in other words, $\Delta(\vsigma)$ is the supremum of $\set{\Delta(\vtau)\mid\vtau\in C}$.
	Then $\Delta(\vsigma)\geq\Delta(\vtau)\geq\vtau$ for all $\vtau\in C$, that is, $\Delta(\vsigma)$ is an~upper bound of $C$.
	This implies that $\Delta(\vsigma)\geq\vsigma$, because $\vsigma$ is the supremum of $C$.
	The inequality $\Delta(\vsigma)\leq\vsigma$ for even $n$ is similar: we have $\Delta(\vtau)\leq\vtau\leq\vsigma$ for all $\vtau\in C$,
	that is, $\vsigma$ is an~upper bound of the chain $\set{\Delta(\vtau)\mid\vtau\in C}$ whose supremum is $\Delta(\vsigma)$.
\end{proof}

\begin{lemma}\label{Fix-increases-S}
	For $n\in\set{0,\ldots,d{-}1}$ we have $\Fix(\Delta)\cap\CS_n\subseteq\CS_{n-1}$.
\end{lemma}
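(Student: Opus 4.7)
The statement asks to verify that once $\vtau\in\CS_n$ is actually a fixed point of $\Delta$, it belongs to $\CS_{n+1}$. Since $\Delta(\vtau)=\vtau$, the two parity\=/dependent inequalities $\Delta(\vtau)\geq\vtau$ (for $n{+}1$ odd) and $\Delta(\vtau)\leq\vtau$ (for $n{+}1$ even) both hold trivially with equality, so the parity switch between $\CS_n$ and $\CS_{n+1}$ causes no trouble. Hence the plan is simply to check the four defining conditions of $\CS_{n+1}$ one by one for an arbitrary $\vtau\in\Fix(\Delta)\cap\CS_n$.

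First I would observe that orderedness is a property of $\vtau$ alone and is already guaranteed by $\vtau\in\CS_n$, so it is inherited. Second, to see that $\vtau$ is $(n{+}1)$\=/fixed, I would use that $n$\=/fixedness gives $\tau_i=\tau_n$ for every $i\in\{n,\ldots,d\}$; in particular $\tau_{n+1}=\tau_n$, so for every $i\in\{n{+}1,\ldots,d\}$ we get $\tau_i=\tau_n=\tau_{n+1}$, which is exactly $(n{+}1)$\=/fixedness.

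Third, for $(n{+}1)$\=/saturation I need $\Delta(\vtau)\restr_n=\vtau\restr_n$, and this is immediate from the assumption $\Delta(\vtau)=\vtau$. Finally, the inequality condition for $\CS_{n+1}$ is handled as noted above: $\Delta(\vtau)=\vtau$ yields both $\Delta(\vtau)\geq\vtau$ and $\Delta(\vtau)\leq\vtau$, so whichever of the two is required by the parity of $n{+}1$ holds.

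There is essentially no obstacle here; the lemma is a direct unpacking of the definitions, and its role is to make precise the intuition that the sets $\CS_n$ are designed so that a fixed point of $\Delta$ inside $\CS_n$ is already ``good enough'' to start the next stage of the fixed\=/point alternation in $\CS_{n+1}$. I would present the argument as a compact verification of the four bullets, possibly noting up front that the parity flip is harmless precisely because we are at a fixed point of $\Delta$.
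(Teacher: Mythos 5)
Your proof is correct and follows the same approach as the paper: orderedness and $(n{+}1)$\=/fixedness come from membership in $\CS_n$ (with $n$\=/fixed implying $(n{+}1)$\=/fixed), while $(n{+}1)$\=/saturation and both order inequalities come for free from $\Delta(\vtau)=\vtau$. Your write\-up just spells out the four conditions a bit more explicitly than the paper does.
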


\begin{proof}
	Consider $\vtau\in\Fix(\Delta)\cap\CS_n$.
	Because $\vtau\in\CS_n$, we know that $\vtau$ is ordered and $n$\=/fixed, hence also $(n{-}1)$\=/fixed.
	On the other hand $\Delta(\vtau)=\vtau$, which implies that $\vtau$ is $n$\=/saturated and satisfies both $\Delta(\vtau)\geq\vtau$ and $\Delta(\vtau)\leq\vtau$.
\end{proof}

\subsection{Typing the basic functions}

We now prove a~number of typing properties of the functions $\Delta$, $\Bid_n$, and $\Cut_n$.

\begin{lemma}\label{ft:typing_delta}
	For $n\in\set{0,\ldots,d{-}1}$ we can derive $\Delta\dcolon \CS_n\to \CS_n$.
	Moreover, $\Delta$ is $n$\=/stable on $\CS_n$.
\end{lemma}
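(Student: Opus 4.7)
The plan is to invoke the typing rule for a basic symbol (here $\Delta\in\mathfrak{F}$), which reduces the task to checking, purely set\=/theoretically, that $\Delta(\vtau)\in\CS_n$ for every $\vtau\in\CS_n$. In other words, I would fix an arbitrary $\vtau\in\CS_n$ and verify the four defining conditions of $\CS_n$ for $\Delta(\vtau)$, and then separately read off the $n$\=/stability claim.

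The easy parts come first. The $n$\=/stability of $\Delta$ on $\CS_n$ is literally the condition of $n$\=/saturation of $\vtau$, which says $\Delta(\vtau)\restr_{n-1}=\vtau\restr_{n-1}$; so nothing has to be proved. The $n$\=/fixedness of $\Delta(\vtau)$ is exactly \cref{fixed-after-delta}. The $n$\=/saturation of $\Delta(\vtau)$ follows by combining the $n$\=/saturation of $\vtau$ with \cref{cl:delta-restr}: since $\Delta(\vtau)$ and $\vtau$ agree on the first $n{-}1$ coordinates, so do $\Delta(\Delta(\vtau))$ and $\Delta(\vtau)$. Finally, the inequality $\Delta(\Delta(\vtau))\geq\Delta(\vtau)$ for odd $n$ (respectively $\leq$ for even $n$) follows by applying monotonicity of $\Delta$ to $\vtau\leq\Delta(\vtau)$ (respectively $\vtau\geq\Delta(\vtau)$), which holds because $\vtau\in\CS_n$.

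The only nontrivial condition is orderedness of $\Delta(\vtau)$. For each pair $i\preceq j$ (in the parity order) I would write $\Delta(\vtau)_i$ and $\Delta(\vtau)_j$ explicitly as $\delta$ applied to $d$\=/tuples and appeal to monotonicity of $\delta$, reducing the problem to comparing these tuples coordinate by coordinate. The useful combinatorial facts about the parity order that drive this are: if $i\preceq j$ and $i<j$ in the usual order then $i$ must be odd, and hence $i\preceq k$ for every $k>i$; dually, if $i\preceq j$ and $i>j$ then $j$ must be even, and hence $k\preceq j$ for every $k>j$. Combining this with the orderedness of $\vtau$ yields the required coordinate\=/wise inequalities in both directions, splitting into the cases $i<j$ and $i>j$ (the case $i=j$ being trivial).

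The main obstacle is precisely this orderedness step: one has to carefully track which coordinates of the tuples fed to $\delta$ in the definitions of $\Delta(\vtau)_i$ and $\Delta(\vtau)_j$ actually differ, and then use the parity\=/order observations above to show that each such difference goes the right way. The remaining conditions are direct consequences of \cref{cl:delta-restr}, \cref{fixed-after-delta}, and monotonicity.
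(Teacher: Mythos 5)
Your proposal is correct and follows the paper's overall structure: invoke the typing rule for basic symbols, then check the four defining conditions of $\CS_n$ for $\Delta(\vtau)$, handling $n$\=/fixedness via \cref{fixed-after-delta}, $n$\=/saturation via \cref{cl:delta-restr}, the sign condition via monotonicity, and reading off $n$\=/stability from $n$\=/saturation.

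Where you diverge is in the orderedness step, and the difference is real but modest. You propose a uniform argument: for an arbitrary pair $i\preceq j$, expand $\Delta(\vtau)_i$ and $\Delta(\vtau)_j$ as $\delta$ applied to $d$\=/tuples and compare coordinate\=/wise, using the observations that $i\preceq j$ with $i<j$ forces $i$ odd (hence $\tau_i\leq\tau_k$ for all $k\geq i$), and $i\preceq j$ with $i>j$ forces $j$ even (hence $\tau_k\leq\tau_j$ for all $k\geq j$). I checked this: in the first case the $k$\=/th coordinate of the tuple for $\Delta(\vtau)_i$ is $\tau_i$ or $\tau_k$ with $k\leq j$, and in each case it is dominated by the corresponding coordinate for $\Delta(\vtau)_j$, so monotonicity of $\delta$ gives $\Delta(\vtau)_i\leq\Delta(\vtau)_j$; dually for $i>j$. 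The paper instead reduces to only two comparisons: it observes that the first $n{-}1$ coordinates of $\Delta(\vtau)$ equal $\tau_1,\ldots,\tau_{n-1}$ (by $n$\=/saturation), that the coordinates $n,\ldots,d$ of $\Delta(\vtau)$ are all equal (by $n$\=/fixedness) and occupy a contiguous block in the parity order, so transitivity in a linear order reduces everything to the two inequalities relating $\Delta(\vtau)_n$ to $\Delta(\vtau)_{n-1}$ and $\Delta(\vtau)_{n-2}$, each verified by a single expansion of $\delta$. Your version is more verbose but more self\=/contained (it does not rely on the observation that $\{n,\ldots,d\}$ is a parity\=/order interval); the paper's version is shorter once that structural fact is noticed. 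Both are correct.
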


\begin{proof}
	We have to prove that if $\vtau=(\tau_0,\ldots,\tau_{d-1})\in \CS_n$ then $\Delta(\vtau)\in \CS_n$ (then we can simply apply the first typing rule).
	Consider such $\vtau$.
	\cref{fixed-after-delta} implies that $\Delta(\vtau)$ is $n$\=/fixed.

	We now observe that $\Delta(\vtau)$ is $n$\=/saturated, using $n$\=/saturation of $\vtau$ and \cref{cl:delta-restr}:
	\begin{align*}
		\Delta(\vtau)\sufcut_{n+1}=\vtau\sufcut_{n+1}\quad\Longrightarrow\quad\Delta(\Delta(\vtau))\sufcut_{n+1}=\Delta(\vtau)\sufcut_{n+1}.
	\end{align*}

	Next, we prove that $\Delta(\vtau)$ is ordered.
	By $n$\=/saturation of $\vtau$, for $i\in\{n{+}1,\ldots,d{-}1\}$ we know that $\Delta(\vtau)_i=\tau_i$, and $\vtau$ is ordered.
	By the fact that $\Delta(\vtau)$ is $n$\=/fixed, the first coordinates of $\Delta(\vtau)$, up to the $n$\=/th, are all equal.
	Therefore, it remains to check the order of $\Delta(\vtau)_n$ with respect to the next two coordinates,
	that is, $\Delta(\vtau)_{n+1}$ and $\Delta(\vtau)_{n+2}$, if they exists.
	Consider the case of odd $n$; the other case is symmetric.
	Then one needs to check that $\Delta(\vtau)_n\leq\Delta(\vtau)_{n+1}$ if $n< d{-}1$ and $\Delta(\vtau)_n\geq\Delta(\vtau)_{n-2}$ if $n< d{-}2$.
	We have
	\begin{alignat*}{8}
		\Delta(\vtau)_n
		&=   \delta\big(\tau_n    &&,\tau_n    &&,\ldots&&,\tau_{n}  &&,\tau_{n+1}&&,\tau_{n+2},\tau_{n+3},\ldots,\tau_{d-1}&&\big)\\
		&\leq\delta\big(\tau_{n+1}&&,\tau_{n+1}&&,\ldots&&,\tau_{n+1}&&,\tau_{n+1}&&,\tau_{n+2},\tau_{n+3},\ldots,\tau_{d-1}&&\big)=\Delta(\vtau)_{n+1}
	\end{alignat*}
	because $\vtau$ is ordered (i.e., $\tau_n\leq\tau_{n+1}$) and $\delta$ is monotone.
	Likewise $\tau_{n+1}\geq\tau_n\geq\tau_{n+2}$, which implies
	\begin{alignat*}{8}
		\Delta(\vtau)_n
		&=   \delta\big(\tau_n    &&,\tau_n    &&,\ldots&&,\tau_n    &&,\tau_{n+1}&&,\tau_{n+2},\tau_{n+3},\ldots,\tau_{d-1}&&\big)\\
		&\geq\delta\big(\tau_{n+2}&&,\tau_{n+2}&&,\ldots&&,\tau_{n+2}&&,\tau_{n+2}&&,\tau_{n+2},\tau_{n+3},\ldots,\tau_{d-1}&&\big)=\Delta(\vtau)_{n+2}.
	\end{alignat*}

	Finally, monotonicity of $\Delta$ allows us to deduce $\Delta(\Delta(\vtau))\geq\Delta(\vtau)$ from $\Delta(\vtau)\geq\vtau$ (for odd $n$)
	and $\Delta(\Delta(\vtau))\leq\Delta(\vtau)$ from $\Delta(\vtau)\leq\vtau$ (for even $n$).

	This finishes the proof of $\Delta(\vtau)\in\CS_n$, which allows us to derive $\Delta\dcolon \CS_n\to \CS_n$.
	The fact that $\Delta$ is $n$\=/stable on $\CS_n$ is immediate: it is just another way of saying that elements of $\CS_n$ are $n$\=/saturated.
\end{proof}

\begin{lemma}\label{ft:typing_delta_arr}
	For $n\in\set{0,\ldots,d{-}1}$ we can derive
	\begin{align*}
		\limU{\Delta} &\dcolon \CS_n\to \CS_{n-1} &&\text{for odd $n$;}\\
		\limD{\Delta} &\dcolon \CS_n\to \CS_{n-1} &&\text{for even $n$.}
	\end{align*}
	Moreover, $(\limU{\Delta})$ (and $(\limD{\Delta})$, respectively) is $n$\=/stable on $\CS_n$.
\end{lemma}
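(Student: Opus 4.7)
The plan is to derive both typing judgments by a~direct application of the typing rule for $\limU{F}$ (resp.\ $\limD{F}$), using as building blocks the already\=/established \cref{ft:typing_delta,S-chain-complete,Fix-increases-S,ft:stability-in-limit}. No new calculation should be needed; the whole content of the statement is that the hypotheses of those results line up with the side conditions of the limit rules.

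First I would fix odd $n$ and verify the four preconditions of the rule for $\limU{\Delta}$ applied with $A=\CS_n$ and $B=\CS_{n+1}$. The premise $\Delta\dcolon\CS_n\to\CS_n$ is \cref{ft:typing_delta}. Chain\=/completeness $\CS_n\in\Pcc(\CV^d)$ is \cref{S-chain-complete}. The pointwise inequality $\Delta(\vtau)\geq\vtau$ for all $\vtau\in\CS_n$ is built into the definition of $\CS_n$ for odd $n$. And $\Fix(\Delta)\cap\CS_n\subseteq\CS_{n+1}$ is exactly \cref{Fix-increases-S}. Applying the rule therefore yields $\limU{\Delta}\dcolon\CS_n\to\CS_{n+1}$.

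For even $n$ the argument is dual, with $\Delta(\vtau)\leq\vtau$ coming from the definition of $\CS_n$ for even $n$, so the symmetric rule for $\limD{F}$ gives $\limD{\Delta}\dcolon\CS_n\to\CS_{n+1}$. For the ``moreover'' clause, I would observe that the same set of hypotheses verified above witness that $\Delta$ is ascending on $\CS_n$ for odd $n$ (monotone, mapping $\CS_n$ into itself, chain\=/complete, and increasing), and descending on $\CS_n$ for even $n$. Combined with the fact from \cref{ft:typing_delta} that $\Delta$ is $n$\=/stable on $\CS_n$, \cref{ft:stability-in-limit} delivers $n$\=/stability of $\limU{\Delta}$ (resp.\ $\limD{\Delta}$) on $\CS_n$.

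There is really no hard step: the work was done in establishing $\CS_n$'s closure properties and in \cref{Fix-increases-S}. The only point that deserves a~mild sanity check is that the inequality condition in the $\limU{}$\=/rule ($\Delta(\vtau)\geq\vtau$) matches the parity convention used in defining $\CS_n$, i.e.\ that we use $\limU{\Delta}$ precisely for odd $n$ and $\limD{\Delta}$ for even $n$; this is why the statement splits by parity and why the target set $\CS_{n+1}$ has opposite parity, in line with \cref{Fix-increases-S}.
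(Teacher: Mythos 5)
Your proposal is correct and follows the paper's own proof exactly: it applies the limit typing rule with $A=\CS_n$, $B=\CS_{n+1}$, citing \cref{ft:typing_delta}, \cref{S-chain-complete}, \cref{Fix-increases-S}, and the definition of $\CS_n$ for the side conditions, then obtains the $n$\=/stability claim via \cref{ft:stability-in-limit}. No difference in approach.
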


\begin{proof}
	To obtain such a~derivation it is enough to apply one of the last two typing rules with $A=\CS_n$ and $B=\CS_{n-1}$;
	we use \cref{ft:typing_delta} to derive the assumption $\Delta\dcolon\CS_n\to\CS_n$,
	\cref{S-chain-complete} to obtain that $\CS_n\in\Pcc(\CV)$, and
	\cref{Fix-increases-S} to obtain that $\Fix(\Delta)\cap\CS_n\subseteq\CS_{n-1}$;
	the assumption that $\Delta(\vtau)\geq\vtau$ (for odd $n$) or $\Delta(\vtau)\leq\vtau$ (for even $n$) for $\vtau\in\CS_n$ is by the definition of $\CS_n$.
	Furthermore, $n$\=/stability of $\limU{\Delta}$ (or $\limD{\Delta}$, respectively) on $\CS_n$ follows from $n$\=/stability of $\Delta$ on $\CS_n$ by \cref{ft:stability-in-limit}.
\end{proof}

\begin{lemma}\label{ft:typing_bid}
	For $n\in\set{0,\ldots,d{-}1}$ we can derive $\Bid_n\dcolon\CS_n\to \CS_n$.
	Moreover, $\Bid_n$ is $n$\=/stable on~$\CS_n$.
\end{lemma}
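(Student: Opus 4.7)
The plan is to apply the first typing rule, so I need to check that $\vtau \in \CS_n$ implies $\Bid_n(\vtau) \in \CS_n$, and that $\Bid_n$ is $n$-stable on $\CS_n$. Let $\vsigma = \Bid_n(\vtau)$; recall that $\sigma_i = \tau_i$ for $i \leq n-1$ and $\sigma_i = \tau_{n-2}$ for $i \geq n$ (with $\tau_{-1} = \bot$ and $\tau_0 = \top$ in the boundary cases $n=1,2$). Both $n$-stability and $n$-fixedness of $\vsigma$ are immediate from the definition of $\Bid_n$. For $n$-saturation, observe $\vsigma\restr_{n-1} = \vtau\restr_{n-1}$, hence by \cref{cl:delta-restr} we obtain $\Delta(\vsigma)\restr_{n-1} = \Delta(\vtau)\restr_{n-1} = \vtau\restr_{n-1} = \vsigma\restr_{n-1}$, using $n$-saturation of $\vtau$.

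For orderedness, the diagonal cases $i,j < n$ and $i,j \geq n$ are handled by orderedness of $\vtau$ and by the constancy $\sigma_i = \tau_{n-2}$, respectively. The mixed cases with $i \preceq j$ reduce to the following two claims, verified by a short case split on the parities of the indices: (i) if $i < n \leq j$ and $i \preceq j$, then $i \preceq n-2$, whence $\sigma_i = \tau_i \leq \tau_{n-2} = \sigma_j$; and (ii) if $j < n \leq i$ and $i \preceq j$, then $n-2 \preceq j$, whence $\sigma_i = \tau_{n-2} \leq \tau_j = \sigma_j$. Both rely on the fact that $n$ and $n-2$ have the same parity, together with the shape of $\preceq$ (odds increasing, evens decreasing, odds below evens).

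The main obstacle is the parity inequality $\Delta(\vsigma) \geq \vsigma$ (for odd $n$) or $\Delta(\vsigma) \leq \vsigma$ (for even $n$). On coordinates $i < n$ it is an equality, by $n$-saturation of $\vsigma$ just shown. For $i \geq n$, all values $\Delta(\vsigma)_i$ coincide with a single quantity $V = \delta(\tau_1, \ldots, \tau_{n-1}, \tau_{n-2}, \ldots, \tau_{n-2})$ that must be compared with $\sigma_i = \tau_{n-2}$. The idea is to route through $\Delta(\vtau)_{n-2} = \delta(\tau_1, \ldots, \tau_{n-3}, \tau_{n-2}, \tau_{n-2}, \ldots, \tau_{n-2})$: this expression differs from $V$ only at position $n-1$, where $V$ uses $\tau_{n-1}$ while $\Delta(\vtau)_{n-2}$ uses $\tau_{n-2}$. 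Since $n-1$ has the parity opposite to $n-2$, orderedness of $\vtau$ yields $\tau_{n-1} \geq \tau_{n-2}$ for odd $n$ and $\tau_{n-1} \leq \tau_{n-2}$ for even $n$; monotonicity of $\delta$ then gives $V \geq \Delta(\vtau)_{n-2}$ (odd $n$) or $V \leq \Delta(\vtau)_{n-2}$ (even $n$). Finally, $\Delta(\vtau)_{n-2} \geq \tau_{n-2}$ or $\leq \tau_{n-2}$ follows directly from the parity inequality available in $\CS_n$ applied at coordinate $n-2$, and chaining the two inequalities closes the argument. The boundary cases $n = 1, 2$ become trivial once one unfolds the conventions $\tau_{-1} = \bot$ and $\tau_0 = \top$.
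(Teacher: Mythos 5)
Your proof is correct and follows essentially the same route as the paper's: derive the type by directly verifying that $\Bid_n$ maps $\CS_n$ to $\CS_n$ (checking orderedness, $n$\=/fixedness, $n$\=/saturation via \cref{cl:delta-restr}, and the parity inequality by reducing to the $n$\=/th coordinate and comparing the two $\delta$\=/expressions that differ only at position $n-1$). The only cosmetic differences are that you spell out orderedness in more detail than the paper's one\=/line appeal to the definitions, and that at the last step you use the inequality $\Delta(\vtau)_{n-2}\geq\tau_{n-2}$ (resp.\ $\leq$) from membership in $\CS_n$, whereas the paper uses the stronger equality $\Delta(\vtau)_{n-2}=\tau_{n-2}$ coming from $n$\=/saturation; both close the chain.
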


\begin{proof}
	We have to prove that if $\vtau=(\tau_0,\ldots,\tau_{d-1})\in \CS_n$ then $\Bid_n(\vtau)\in\CS_n$.
	Consider such $\vtau$ and let $\vtau'=(\tau'_0,\ldots,\tau'_{d-1})=\Bid_n(\vtau)$.

	The definition of $\Bid_n$ and the assumption that $\vtau$ is ordered guarantee that $\vtau'$ is ordered as well.

	The definition of $\Bid_n$ explicitly guarantees that $\vtau'$ is $n$\=/fixed.

	Recall that $\vtau'\sufcut_{n+1}=\vtau\sufcut_{n+1}$ (by the definition of $\Bid_n$) and $\Delta(\vtau)\sufcut_{n+1}=\vtau\sufcut_{n+1}$ (by $n$\=/saturation of~$\vtau$).
	We use these two facts and \cref{cl:delta-restr} to see that $\vtau'$ is $n$\=/saturated:
	\begin{align*}
		\Delta(\vtau')\sufcut_{n+1}=\Delta(\vtau)\sufcut_{n+1}=\vtau\sufcut_{n+1}=\vtau'\sufcut_{n+1}.
	\end{align*}

	Finally, we also have to check that for all $i\in\{0,\ldots,d{-}1\}$
	we have $\Delta(\vtau')_i\geq\tau'_i$ if $n$ is odd and $\Delta(\vtau')_i\leq\tau'_i$ if $n$ is even.
	For $i\in\set{n{+}1,\ldots,d{-}1}$ we actually have equality $\Delta(\vtau')_i=\tau'_i$ due to $n$\=/saturation of $\vtau'$.
	On the other hand, $\vtau'$ and $\Delta(\vtau')$ are both $n$\=/fixed (the latter by \cref{fixed-after-delta}); it is thus enough to check the $n$\=/th coordinate.
	If $n=d{-}1$ then $\tau'_n=\tau'_{d+1}=\bot$, so trivially $\Delta(\vtau')_n\geq\vtau'_n$.
	Likewise, if $n=d{-}2$ then $\tau'_n=\tau'_d=\top$, so $\Delta(\vtau')_n\leq\vtau'_n$.
	For $n\leq d{-}3$ observe that
	\begin{alignat*}{10}
		\Delta(\vtau')_n
		&=\delta\big(\tau'_n   &&,\tau'_n   &&,\ldots&&,\tau'_n   &&,\tau'_{n+1}&&,\tau'_{n+2}&&,\tau'_{n+3}&&,\ldots&&, \tau'_{d-1}&&\big)\\
	 	&=\delta\big(\tau_{n+2}&&,\tau_{n+2}&&,\ldots&&,\tau_{n+2}&&,\tau_{n+1} &&,\tau_{n+2} &&,\tau_{n+3} &&,\ldots&&, \tau_{d-1}&&\big);\\
		\tau'_n=\tau_{n+2}
		&=\delta\big(\tau_{n+2}&&,\tau_{n+2}&&,\ldots&&,\tau_{n+2}&&,\tau_{n+2} &&,\tau_{n+2} &&,\tau_{n+3} &&,\ldots&&, \tau_{d-1}&&\big).
	\end{alignat*}
	The first two equations here follow from the definitions of $\Delta$ and $\Bid_n$.
	The last two equations follow from the definition of $\Bid_n$, the fact that $\vtau$ is $n$\=/saturated (so that $\Delta(\vtau)_{n+2}=\tau_{n+2}$), and the definition of $\Delta$.
	Since $\vtau$ is ordered, for odd $n$ we have $\tau_{n+1}\geq\tau_{n+2}$, which by monotonicity of $\delta$ implies $\Delta(\vtau')_n\geq\tau'_n$;
	symmetrically for even $n$.

	This finishes the proof of $\vtau'\in\CS_n$, which allows us to derive $\Bid_n\dcolon \CS_n\to \CS_n$.
	It is immediate from the definition that $\Bid_n$ is $n$\=/stable on $\CS_n$.
\end{proof}

\begin{lemma}\label{ft:typing_cut}
	For $n\in\set{1,\ldots,d{-}1}$ we can derive $\Cut_n\dcolon \CS_{n-2}\to \CS_n$.
	Moreover, $\Cut_n$ is $n$\=/stable on~$\CS_{n-2}$.
\end{lemma}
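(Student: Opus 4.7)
Following the template of \cref{ft:typing_bid}, the plan is to take an arbitrary $\vtau=(\tau_1,\ldots,\tau_d)\in\CS_{n+2}$, set $\vtau'=\Cut_n(\vtau)=(\tau_1,\ldots,\tau_{n-1},\tau_{n+1},\tau_{n+1},\ldots,\tau_{n+1})$, and verify the four conditions defining $\CS_n$, after which a single application of the first typing rule yields $\Cut_n\dcolon\CS_{n+2}\to\CS_n$. The $n$\=/fixedness of $\vtau'$ is immediate from the definition of $\Cut_n$, and $n$\=/stability of $\Cut_n$ on $\CS_{n+2}$ holds because $\Cut_n(\vtau)\restr_{n-1}=\vtau\restr_{n-1}$.

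For $n$\=/saturation, the plan is to combine $(n+2)$\=/saturation of $\vtau$ (which by \cref{ft:saturation_mono} also yields $\Delta(\vtau)\restr_{n-1}=\vtau\restr_{n-1}$) with \cref{cl:delta-restr} applied to $\vtau'\restr_{n-1}=\vtau\restr_{n-1}$, obtaining
\[
	\Delta(\vtau')\restr_{n-1}=\Delta(\vtau)\restr_{n-1}=\vtau\restr_{n-1}=\vtau'\restr_{n-1}.
\]
For the required inequality $\Delta(\vtau')\geq\vtau'$ (odd $n$) or $\Delta(\vtau')\leq\vtau'$ (even $n$), the saturation just proved forces equality on the first $n{-}1$ coordinates, while \cref{fixed-after-delta} makes $\Delta(\vtau')$ also $n$\=/fixed, reducing the task to the $n$\=/th coordinate. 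There the key move is to compare $\Delta(\vtau')_n=\delta(\tau_1,\ldots,\tau_{n-1},\tau_{n+1},\tau_{n+1},\ldots,\tau_{n+1})$ with $\Delta(\vtau)_{n+1}=\delta(\tau_1,\ldots,\tau_{n-1},\tau_n,\tau_{n+1},\ldots,\tau_{n+1})$, which by $(n+2)$\=/saturation of $\vtau$ equals $\tau_{n+1}=\tau'_n$. These two values differ only in the $n$\=/th argument, and orderedness of $\vtau$ forces $\tau_n\leq\tau_{n+1}$ for odd $n$ (since $n\preceq n+1$) and $\tau_{n+1}\leq\tau_n$ for even $n$ (since $n+1\preceq n$); monotonicity of $\delta$ then yields the desired inequality.

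The main obstacle is verifying that $\vtau'$ is ordered. Comparisons among $\tau_1,\ldots,\tau_{n-1}$ are inherited from $\vtau$, and comparisons among coordinates $\geq n$ are trivial (all values being $\tau_{n+1}$), so only the mixed case remains: for $i<n$ and $j\geq n$ one must show $\tau_i\leq\tau_{n+1}$ whenever $i\preceq j$ and $\tau_{n+1}\leq\tau_i$ whenever $j\preceq i$. The plan is a short case analysis on the parities of $n$ and of $i$, guided by the observation that within $\{n,\ldots,d\}$ the index $n+1$ is $\preceq$\=/extremal: $\preceq$\=/largest when $n$ is odd and $\preceq$\=/smallest when $n$ is even. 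This is enough to reduce each nonvacuous sub-case to the comparison $i\preceq n+1$ or $n+1\preceq i$, which in turn gives the required inequality on $\vtau'$ via orderedness of $\vtau$; the remaining sub-cases degenerate to vacuous ones, because an $i$ of the opposite parity to $n+1$ cannot enter into the relevant parity relation with any $j\in\{n,\ldots,d\}$.
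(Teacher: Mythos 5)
Your proof is correct and follows the same route as the paper's: you verify that $\Cut_n(\vtau)$ is ordered, $n$\=/fixed, $n$\=/saturated, and satisfies the $\Delta$\=/inequality on the $n$\=/th coordinate by comparing $\Delta(\vtau')_n$ with $\Delta(\vtau)_{n+1}=\tau_{n+1}=\tau'_n$ and using orderedness of $\vtau$ plus monotonicity of $\delta$. The only difference is that you spell out the orderedness check in more detail (via the $\preceq$\=/extremality of $n{+}1$ in $\{n,\ldots,d\}$), whereas the paper asserts it in one line; both are fine.
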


\begin{proof}
	We have to prove that if $\vtau=(\tau_0,\ldots,\tau_{d-1})\in \CS_{n-2}$ then $\Cut_n(\vtau)\in \CS_n$.
	Consider such $\vtau$ and let $\vtau'=(\tau'_0,\ldots,\tau'_{d-1})=\Cut_n(\vtau)$.

	The definition of $\Cut_n$ and the assumption that $\vtau$ is ordered guarantee that $\vtau'$ is ordered as well.

	The definition of $\Cut_n$ explicitly guarantees that $\vtau'$ is $n$\=/fixed.

	By the definition of $\Cut_n$ we have $\vtau'\sufcut_{n+1}=\vtau\sufcut_{n+1}$, which allows us to apply \cref{cl:delta-restr}.
	Moreover, $(n{-}2)$\=/saturation of $\vtau$ together with \cref{ft:saturation_mono} imply that $\vtau$ is $n$\=/saturated.
	Thus, we obtain $n$\=/saturation of $\vtau'$:
	\begin{align*}
		\Delta(\vtau')\sufcut_{n+1}=\Delta(\vtau)\sufcut_{n+1}=\vtau\sufcut_{n+1}=\vtau'\sufcut_{n+1}.
	\end{align*}

	Finally, we also have to check that for all $i\in\{0,\ldots,d{-}1\}$
	we have $\Delta(\vtau')_i\geq\tau'_i$ if $n$ is odd and $\Delta(\vtau')_i\leq\tau'_i$ if $n$ is even.
	For $i\in\set{n{+}1,\ldots,d{-}1}$ we actually have equality $\Delta(\vtau')_i=\tau'_i$ due to $n$\=/saturation of $\vtau'$.
	On the other hand, $\vtau'$ and $\Delta(\vtau')$ are both $n$\=/fixed (the latter by \cref{fixed-after-delta}); it is thus enough to check the $n$\=/th coordinate.
	Observe that
	\begin{alignat*}{10}
		\Delta(\vtau')_n
		&=\delta\big(\tau'_n   &&,\tau'_n   &&,\ldots&&,\tau'_n   &&,\tau'_n   &&,\tau'_{n+1}&&,\tau'_{n+2}&&,\ldots&&,\tau'_{d-1}&&\big)\\
		&=\delta\big(\tau_{n-1}&&,\tau_{n-1}&&,\ldots&&,\tau_{n-1}&&,\tau_{n-1}&&,\tau_{n+1} &&,\tau_{n+2} &&,\ldots&&,\tau_{d-1}&&\big);\\
		\tau'_n=\tau_{n-1}
		&=\delta\big(\tau_{n-1}&&,\tau_{n-1}&&,\ldots&&,\tau_{n-1}&&,\tau_n    &&,\tau_{n+1} &&,\tau_{n+2} &&,\ldots&&,\tau_{d-1}&&\big).
	\end{alignat*}
	The first two equations here follow from the definitions of $\Delta$ and $\Cut_n$.
	The last two equations follow from the definition of $\Cut_n$, the fact that $\vtau$ is $(n{-}2)$\=/saturated (so that $\Delta(\vtau)_{n-1}=\tau_{n-1}$), and the definition of $\Delta$.
	Since $\vtau$ is ordered, for odd $n$ we have $\tau_{n-1}\geq\tau_n$, which by monotonicity of $\delta$ implies $\Delta(\vtau')_n\geq\tau'_n$;
	symmetrically for even $n$.

	This finishes the proof of $\vtau'\in\CS_n$, which allows us to derive $\Cut_n\dcolon \CS_{n-2}\to \CS_n$.
	It is immediate from the definition that $\Cut_n$ is $n$\=/stable on $\CS_{n-2}$.
\end{proof}

\section{The formula}\label{sec:formula}

Define by induction on $n\in\set{0,\ldots,d{-}1}$ the following formulae of unary $\mu$\=/calculus (recall our assumption that $d$ is even):
\begin{align*}
	\Phi_{0} &\eqdef \Bid_0\comp\limD{\Delta};\\
	\Phi_{n} &\eqdef \Bid_n\comp\limU{\big(\limU{\Delta}\comp\Phi_{n-1}\comp\Cut_n\big)} && \text{for odd $n > 0$;}\\
	\Phi_{n} &\eqdef \Bid_n\comp\limD{\big(\limD{\Delta}\comp\Phi_{n-1}\comp\Cut_n\big)} && \text{for even $n > 0$.}
\end{align*}

In \cref{sec:realisations} we prove that the last formula, $\Phi_{d-1}$, describes accepting runs of the considered automaton.

One may ask if this formula, which is the heart of our proof, could not be defined in a~more direct way, making the whole construction simpler.
However, such a shape of $\Phi_{d-1}$ is dictated by the need to be able to simulate this formula over probabilistic distributions, as explained in \cref{sec:why-phi}.

The formula $\Phi_3$ is depicted below.

\begin{center}
\begin{tikzpicture}
	\showTreeBas{0}
	\showTreeInd{1}
	\showTreeInd{2}
	\showTreeInd{3}
\end{tikzpicture}
\end{center}

\subsection{Graphical representation of \texorpdfstring{$\Phi_{d-1}$}{Phi\_\{d-1\}}}

To make it a bit easier to read the structure of the functions $\Phi_n$, we provide a graphical representation of them.
Fix $d=4$ for the sake of this subsection.

A single value $\vtau\in\CV^d$ has the form $\vtau=(\tau_0,\ldots,\tau_3)$.
When a tree $t$ is given, we obtain a vector $(\tau_0(t),\ldots,\tau_3(t))\in(\powerset(Q))^4$, that is, a quadruple of sets of states.
Thus, to depict $\CV^d$ we draw four horizontal intervals, each representing one copy of $Q$.
The $i$\=/th coordinate $\tau_i$ is depicted by the $i$\=/th interval.

\begin{center}
\begin{tikzpicture}[scale=1.1]

\evalInt{\dPar}{4}
\evalInt{\minD}{0}
\evalInt{\maxD}{\dPar-1}

\showT{0}{}
\showR{0.5}{1}
\end{tikzpicture}
\end{center}

The involved functions $F$ will be drawn bottom\=/up, that is, the argument $\vtau\in\CV^d$ is given in the line below and the result $F(\vtau)\in\CV^d$ is provided in the line above.
Symbols $\bot$ and $\top$ drawn below the upper line mean that the respective coordinate of $F(\vtau)$ is constantly equal to $\bot$ or $\top$ (i.e.,~its value on a given tree $t$ equals $\emptyset$ or $Q$).
Arrows going up between the two lines indicate that the respective coordinates of $F(\vtau)$ are just copied from the respective coordinates of $\vtau$.

The functions $\Bid_n$ for $n\in\set{0,1}$ copy the coordinates above $n$ and set the remaining ones to the coordinate $n{+}2$ (i.e., $2$ or $3$).
The function $\Bid_2$ copies the last coordinate and sets the other coordinates to $\top$.
The function $\Bid_3$ is constantly equal $\bot$.

\begin{center}
\begin{tikzpicture}[scale=1.1]
\evalInt{\dPar}{4}
\evalInt{\minD}{0}
\evalInt{\maxD}{\dPar-1}

\showT{0}{}

\foreach \llev in {\minD,...,\maxD} {
	\showR{\llev * 2}{0}
	\showBid{\llev * 2 + 1}{\llev}
	\showR{\llev * 2 + 1}{0}
}
\end{tikzpicture}
\end{center}

We can now move to $\Cut_n$ for $n\in\set{1,2,3}$.
The behaviour of $\Cut_n$ does not depend on the parity of $n$.
$\Cut_n$ copies the $(n{-}1)$\=/th coordinate into the coordinates up to the $n$\=/th.
The remaining coordinates remain unchanged.

\begin{center}
\begin{tikzpicture}[scale=1.1]

\evalInt{\dPar}{4}
\evalInt{\minD}{0}
\evalInt{\maxD}{\dPar-1}

\evalInt{\dParn}{\dPar-1}

\showT{0}{}

\foreach \llev in {1,...,\dParn} {
	\showR{\llev * 2 - 2}{0}
	\showCut{\llev * 2 - 1}{\llev}
	\showR{\llev * 2 - 1}{0}
}

\end{tikzpicture}
\end{center}

Finally, we can draw the functions $\Delta\from \CV^d \to \CV^d$ responsible for applying transitions of the automaton.
We see that transitions from states of a priority $i$ that we use to compute a coordinate $j$ take those states from the coordinate $\max(i,j)$.

More concretely, consider the $1$\=/st coordinate, that is, $\Delta(\vtau)_1\in\CV$.
Assume that a tree $t$ is given and we want to know the set of states $\Delta(\vtau)_1(t)\subseteq Q$.
This set contains all the states $q$ such that there exists a~transition of the form $(q,a,q_\dL, q_\dR)$ with $a=t(\epsilon)$,
where we require that $q_\dL$ belongs to the $\max(1,\Omega(q_\dL))$\=/th coordinate of the profile $\vtau$ applied to the left subtree $t.\dL$, and similarly for $q_\dR$.
This is graphically represented by the shaded region, within which the transitions need to lay.

\begin{center}
\begin{tikzpicture}[scale=1.1]
\evalInt{\dPar}{4}
\evalInt{\minD}{0}
\evalInt{\maxD}{\dPar-1}

\evalInt{\lev}{1}

\showLabel{$\Delta$}

\showT{0}{}
\showR{0}{1}
\showR{1}{1}

\showTran{1}{3}
\end{tikzpicture}
\end{center}

Thus, when drawing all the coordinates of $\Delta$, we obtain:

\begin{center}
\begin{tikzpicture}[scale=1.1]
\evalInt{\dPar}{4}
\evalInt{\minD}{0}
\evalInt{\maxD}{\dPar-1}

\showT{0}{}
\showR{0}{0}
\showR{1}{0}
\showDelta{1}
\end{tikzpicture}
\end{center}

Now we can see the formula $\Phi_1$ for $d=2$.
The arrows up and down on the right\=/hand side of the picture indicate the operators $\limU{F}$ and $\limD{F}$ of unary $\mu$\=/calculus.

\begin{center}
\begin{tikzpicture}[scale=1.1]

\eval{\xScale}{0.8}
\eval{\yScale}{1.0}

\evalInt{\dPar}{2}
\evalInt{\minD}{0}
\evalInt{\maxD}{\dPar-1}

\showT{0}{}
\showR{0}{2}
\showR{1}{1}
\showR{2}{0}
\showR{3}{0}
\showR{4}{1}
\showR{5}{1}

\showCut{1}{1}
\showDelta{2}
\showBid{3}{0}
\showDelta{4}
\showBid{5}{1}

\showLimD{2}{1}{0}
\showLimU{4}{3}{0}
\showLimU{4}{0}{1}

\showForm{3}{1}{0}{$\Phi_0$}
\showForm{5}{0}{1}{$\Phi_1$}

\end{tikzpicture}
\end{center}

The structure of $\Phi_1$ in this case indicates certain important features of this formula.
First of all, $\Bid_1$ is constantly equal $\bot$, so the actual argument of $\Phi_1$ does not matter (however, it matters for the typing properties, see below).
Also, the $0$\=/th coordinate of the lower sub\=/formula $\limU{\Delta}$ is immediately ignored by the consecutive application of $\Bid_0$.
Thus, the internal sub\=/formula $\limU{\Delta};\Bid_0;\limD{\Delta}$ in fact depends only on the $1$\=/st coordinate of the input and only its $0$\=/th coordinate is used later by $\Cut_1$, by copying it into both coordinates of the result.

Now it's time to see $\Phi_3$ for $d=4$:

\begin{center}
\begin{tikzpicture}[scale=1.1]

\eval{\xScale}{0.33}
\eval{\yScale}{1.5}

\evalInt{\dPar}{4}
\evalInt{\minD}{0}
\evalInt{\maxD}{\dPar-1}

\showT{0}{}
\showR{0}{2}
\showR{1}{1}
\showR{2}{0}
\showR{3}{0}
\showR{4}{1}
\showR{5}{1}
\showR{6}{2}
\showR{7}{1}
\showR{8}{0}
\showR{9}{0}
\showR{10}{1}
\showR{11}{1}

\showCut{1}{3}
\showCut{2}{2}
\showCut{3}{1}
\showDelta{4}
\showBid{5}{0}
\showDelta{6}
\showBid{7}{1}
\showDelta{8}
\showBid{9}{2}
\showDelta{10}
\showBid{11}{3}

\showLimD{4}{3}{0}
\showLimU{6}{5}{0}
\showLimD{8}{7}{0}
\showLimU{10}{9}{0}

\showLimU{6}{2}{1}
\showLimD{8}{1}{2}
\showLimU{10}{0}{3}

\showForm{5}{3}{0}{$\Phi_0$}
\showForm{7}{2}{1}{$\Phi_1$}
\showForm{9}{1}{2}{$\Phi_2$}
\showForm{11}{0}{3}{$\Phi_3$}

\end{tikzpicture}
\end{center}

\subsection{Typing the functions \texorpdfstring{$\Phi_n$}{Phi\_n}}

Recall, that the syntax of unary $\mu$\=/calculus does not guarantee anything about the domain of the function defined by such a~formula (maybe the domain is empty?).
Before really using the above formulae, we need to provide appropriate type derivations.
Thus, our goal now is to inductively prove the following \lcnamecref{prop}.

\begin{proposition}\label{prop}
	For $n\in\set{0,\ldots, d{-}1}$ we can derive $\Phi_n\dcolon \CS_n \to \CS_{n-1}$.
	Moreover, $\Phi_n$ is $n$\=/stable on~$\CS_n$.
\end{proposition}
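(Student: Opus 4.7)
The plan is to proceed by downward induction on $n$ from $d$ to $1$. For the base case $n = d$ (which is even), \cref{ft:typing_delta_arr} gives $\limD{\Delta} \dcolon \CS_d \to \CS_{d+1}$ and \cref{ft:typing_bid} gives $\Bid_d \dcolon \CS_d \to \CS_d$, both $d$-stable on $\CS_d$; composing, $\Phi_d = \Bid_d \comp \limD{\Delta} \dcolon \CS_d \to \CS_{d+1}$ is $d$-stable on $\CS_d$ as well.

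For the inductive step I treat odd $n \leq d-1$; the case of even $n$ is entirely symmetric, swapping $\uparrow$ with $\downarrow$. Let $G = \limU{\Delta} \comp \Phi_{n+1} \comp \Cut_n$. The first task is to derive $G \dcolon \CS_n \to \CS_n$ by chaining \cref{ft:typing_delta_arr} ($\limU{\Delta} \dcolon \CS_n \to \CS_{n+1}$), the inductive hypothesis ($\Phi_{n+1} \dcolon \CS_{n+1} \to \CS_{n+2}$), and \cref{ft:typing_cut} ($\Cut_n \dcolon \CS_{n+2} \to \CS_n$); since every stage preserves at least the first $n-1$ coordinates on its domain (the middle stage even preserves the $n$-th), $G$ is $n$-stable on $\CS_n$. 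The second task is to apply the typing rule for $\limU{G}$ with $A = \CS_n$ and $B = \CS_{n+1}$, verifying (a) $\CS_n \in \Pcc(\CV^d)$, from \cref{S-chain-complete}; (b) $G(\vtau) \geq \vtau$ for every $\vtau \in \CS_n$; and (c) $\Fix(G) \cap \CS_n \subseteq \CS_{n+1}$. For (b), writing $\vsigma = \limU{\Delta}(\vtau)$ and $\vrho = \Phi_{n+1}(\vsigma)$, the $n$-stability of $G$ pins down the first $n-1$ coordinates of $G(\vtau)$, and by $n$-fixedness of $G(\vtau) \in \CS_n$ and of $\vtau$ the higher coordinates collapse to a single value, so it suffices to compare the $n$-th. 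One then reads off $G(\vtau)_n = \rho_{n+1} \geq \rho_n = \sigma_n \geq \tau_n$, where the middle inequality uses orderedness of $\vrho \in \CS_{n+2}$ together with $n \prec n+1$ in the parity order (since $n$ is odd), the equality is $(n+1)$-stability of $\Phi_{n+1}$ on $\CS_{n+1}$, and the last inequality is $\vsigma \geq \vtau$. Once $\limU{G} \dcolon \CS_n \to \CS_{n+1}$ is in hand, composing with $\Bid_n \dcolon \CS_n \to \CS_n$ from \cref{ft:typing_bid} yields $\Phi_n \dcolon \CS_n \to \CS_{n+1}$, and $n$-stability transfers via \cref{ft:stability-in-limit} and composition.

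The main obstacle is condition (c): showing that any $\vtau \in \CS_n$ with $G(\vtau) = \vtau$ already lies in $\CS_{n+1}$. By \cref{Fix-increases-S} it suffices to prove $\Delta(\vtau) = \vtau$. With $\vsigma, \vrho$ as above, note that $\vsigma \in \Fix(\Delta) \cap \CS_{n+1}$ (via \cref{lem:lim-as-iteration} and \cref{Fix-increases-S}) and $\vrho \in \CS_{n+2}$. The fixed-point assumption $\Cut_n(\vrho) = \vtau$ forces $\rho_{n+1} = \tau_n$; $(n+1)$-stability of $\Phi_{n+1}$ forces $\rho_n = \sigma_n$; and orderedness of $\vrho$ combined with $\vsigma \geq \vtau$ pinches $\sigma_n$ to the common value $\tau_n$. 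The decisive move is then to invoke $(n+2)$-saturation of $\vrho$ at coordinate $n+1$, namely $\Delta(\vrho)_{n+1} = \rho_{n+1}$: after substituting the values just identified, this expands to $\delta(\tau_1, \ldots, \tau_{n-1}, \tau_n, \ldots, \tau_n) = \tau_n$, that is, $\Delta(\vtau)_n = \tau_n$. Combined with $n$-saturation of $\vtau$ on the coordinates below $n$ and with \cref{fixed-after-delta} above $n$, this promotes $\vtau$ to a fixed point of $\Delta$ and completes the induction.
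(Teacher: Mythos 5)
Your proof is correct, and the skeleton is the same as the paper's: induction from $n=d$ down, deriving $\Psi_n = \limU{\Delta}\comp\Phi_{n+1}\comp\Cut_n \dcolon \CS_n\to\CS_n$ by chaining \cref{ft:typing_delta_arr}, the induction hypothesis, and \cref{ft:typing_cut}, then discharging the side conditions of the $\limU{}$ typing rule, then composing with $\Bid_n$. Where you deviate is in how you verify those side conditions. The paper isolates a separate \cref{lem:psi-vs-delta}, namely the sharper inequality $\Psi_n(\vtau)\geq\Delta(\vtau)$ for $\vtau\in\CS_n$ (odd $n$), and then derives both $\Psi_n(\vtau)\geq\vtau$ (since $\Delta(\vtau)\geq\vtau$ on $\CS_n$) and $\Fix(\Psi_n)\cap\CS_n\subseteq\Fix(\Delta)\cap\CS_n$ (by squeezing $\Delta(\vtau)$ between $\vtau=\Psi_n(\vtau)$ and $\vtau$) from that single lemma. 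You instead prove $\Psi_n(\vtau)\geq\vtau$ directly via $\Psi_n(\vtau)_n=\rho_{n+1}\geq\rho_n=\sigma_n\geq\tau_n$, and handle the fixed-point condition by extracting from $\Cut_n(\vrho)=\vtau$ the identities $\rho_{n+1}=\rho_n=\sigma_n=\tau_n$ and then reading $\Delta(\vtau)_n=\tau_n$ off of the $(n{+}2)$\=/saturation equality $\Delta(\vrho)_{n+1}=\rho_{n+1}$. Both arguments are sound and use the same ingredients ($n$\=/stability, orderedness, $\limU{\Delta}(\vtau)\geq\vtau$); the paper's has the advantage of factoring the two verifications through one reusable comparison with $\Delta(\vtau)$, whereas yours is a bit more hands\=/on at the fixed\=/point step but avoids stating and proving an auxiliary lemma. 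One incidental observation: you correctly place $\vrho=\Phi_{n+1}(\vsigma)$ in $\CS_{n+2}$; the paper's proof of \cref{lem:psi-vs-delta} writes $\CS_{n+1}$ at the analogous spot, which appears to be a slip (orderedness is all that is used, so nothing breaks).
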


Before moving to the proof of the proposition, we mention an~obvious corollary.

\begin{corollary}
	We can derive $\Phi_{d-1}\dcolon \CS_{d-1} \to \CS_{d-2}$, so in particular $\Phi_{d-1}(\bar{\bot})$ is well defined
	(where $\bar{\bot}$ belongs to $\CS_{d-1}$ by \cref{ft:bot-in-cs}).
\qed\end{corollary}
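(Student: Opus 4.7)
The plan is to proceed by downward induction on $n$ from $d$ to $1$. I treat the odd case explicitly; the even case is symmetric, swapping $\limU{}$ with $\limD{}$, reversing the inequalities involving $\Delta$, and using that an~odd index is $\prec$ an~even one in the parity order (so the critical relation becomes $n{+}1\prec n$ rather than $n\prec n{+}1$).

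For the base case $n=d$ (even by assumption), the composition $\Phi_d=\Bid_d\comp\limD{\Delta}$ is typed $\CS_d\to\CS_{d+1}$ by chaining $\Bid_d\dcolon\CS_d\to\CS_d$ from \cref{ft:typing_bid} with $\limD{\Delta}\dcolon\CS_d\to\CS_{d+1}$ from \cref{ft:typing_delta_arr}; the $d$\=/stability of $\Phi_d$ on $\CS_d$ is immediate from the $d$\=/stability of each constituent stated in the same lemmata. For the inductive step with $n$ odd, set $G=\limU{\Delta}\comp\Phi_{n+1}\comp\Cut_n$ and follow the chain $\CS_n\to\CS_{n+1}\to\CS_{n+2}\to\CS_n$ obtained from \cref{ft:typing_delta_arr} (for odd $n$), the inductive hypothesis applied to $\Phi_{n+1}$, and \cref{ft:typing_cut}; this yields $G\dcolon\CS_n\to\CS_n$. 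Moreover $G$ is $n$\=/stable on $\CS_n$ because each constituent is, noting that $(n{+}1)$\=/stability of $\Phi_{n+1}$ entails $n$\=/stability. To apply the typing rule for $\limU{G}\dcolon\CS_n\to\CS_{n+1}$ I then need (i) $G$ is ascending on $\CS_n$ and (ii) $\Fix(G)\cap\CS_n\subseteq\CS_{n+1}$; the chain\=/completeness of $\CS_n$ comes from \cref{S-chain-complete}. Finally, composing on the left with $\Bid_n\dcolon\CS_n\to\CS_n$ yields $\Phi_n\dcolon\CS_n\to\CS_{n+1}$, and \cref{ft:stability-in-limit} transports $n$\=/stability through $\limU{G}$ to $\Phi_n$.

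Verifying (i) and (ii) is the core of the step. Given $\vtau\in\CS_n$, set $\vsigma=\limU{\Delta}(\vtau)\in\CS_{n+1}$ and $\bar{\rho}=\Phi_{n+1}(\vsigma)\in\CS_{n+2}$, so $G(\vtau)=\Cut_n(\bar{\rho})$ and in particular $G(\vtau)_n=\rho_{n+1}$. One has $\vsigma\geq\vtau$ (so $\sigma_n\geq\tau_n$), and $(n{+}1)$\=/stability of $\Phi_{n+1}$ gives $\bar{\rho}\restr_n=\vsigma\restr_n$, hence $\rho_n=\sigma_n$. Since $n\prec n{+}1$ and $\bar{\rho}$ is ordered, $\rho_n\leq\rho_{n+1}$, so $G(\vtau)_n=\rho_{n+1}\geq\rho_n=\sigma_n\geq\tau_n$; together with $n$\=/fixedness of both $G(\vtau)$ and $\vtau$ this gives (i). For (ii), if $G(\vtau)=\vtau$ then the chain $\tau_n=\rho_{n+1}\geq\rho_n=\sigma_n\geq\tau_n$ forces $\sigma_n=\tau_n$, hence $\vsigma\restr_n=\vtau\restr_n$; \cref{cl:delta-restr} combined with $(n{+}1)$\=/saturation of $\vsigma$ then gives $\Delta(\vtau)\restr_n=\Delta(\vsigma)\restr_n=\vsigma\restr_n=\vtau\restr_n$, which with \cref{fixed-after-delta} and $n$\=/fixedness of $\vtau$ upgrades to $\Delta(\vtau)=\vtau$, and \cref{Fix-increases-S} places $\vtau$ in $\CS_{n+1}$.

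The main obstacle I expect is the collapse argument underpinning (ii): the entire inductive invariant hinges on recognising that the parity order on $\CS_{n+2}$, applied precisely at the $(n{+}1)$\=/th coordinate that $\Cut_n$ reads out, pinches $\tau_n$, $\sigma_n$, and $\rho_{n+1}$ together, so that a~fixed point of the three\=/level composite $G$ is in fact a~fixed point of the single step $\Delta$. Without this identification \cref{Fix-increases-S} does not apply and the induction cannot advance the invariant from $\CS_n$ to $\CS_{n+1}$.
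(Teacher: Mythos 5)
Your argument is correct, but note that the paper treats this corollary as an immediate consequence of \cref{prop} (instantiated at $n=1$) together with \cref{ft:bot-in-cs}---the \qed signals there is nothing further to prove---so what you have in fact done is reconstruct the proof of \cref{prop} from scratch. Your reconstruction matches the paper's in its skeleton: downward induction on $n$, base case $n=d$ by \cref{ft:typing_bid,ft:typing_delta_arr}, and in the inductive step typing $\Psi_n=\limU\Delta\comp\Phi_{n+1}\comp\Cut_n\dcolon\CS_n\to\CS_n$ and invoking the typing rule for $\limU{\Psi_n}$. The one place you genuinely diverge is in how you verify the two side conditions of that rule. The paper isolates \cref{lem:psi-vs-delta}, which gives $\Psi_n(\vtau)\geq\Delta(\vtau)$ for odd $n$ and all $\vtau\in\CS_n$; from that single comparison both conditions drop out at once: $\Psi_n(\vtau)\geq\vtau$ because $\Delta(\vtau)\geq\vtau$ on $\CS_n$, and $\Fix(\Psi_n)\cap\CS_n\subseteq\Fix(\Delta)\cap\CS_n$ by the sandwich $\vtau=\Psi_n(\vtau)\geq\Delta(\vtau)\geq\vtau$, which forces $\Delta(\vtau)=\vtau$. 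You instead prove the weaker chain $G(\vtau)_n=\rho_{n+1}\geq\rho_n=\sigma_n\geq\tau_n$ to get the ascending property, and for the fixed-point inclusion you collapse $\sigma_n=\tau_n$, lift to $\vsigma\restr_n=\vtau\restr_n$ using $n$-stability of $\limU\Delta$, then pass through \cref{cl:delta-restr} and the $(n{+}1)$-saturation of $\vsigma$ to obtain $\Delta(\vtau)\restr_n=\vtau\restr_n$, and finally upgrade to $\Delta(\vtau)=\vtau$ via \cref{fixed-after-delta} and $n$-fixedness. Both routes are sound: the paper's version buys economy by proving the stronger comparison with $\Delta(\vtau)$ once and reusing it, while yours avoids that intermediary at the cost of some extra saturation bookkeeping in the fixed-point direction.
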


The rest of this section is devoted to the proof of \cref{prop}.
For $n=0$ this is a~direct application of \cref{ft:typing_bid,ft:typing_delta_arr}.
For an~induction step assume that we can derive $\Phi_{n-1}\dcolon \CS_{n-1} \to \CS_{n-2}$
and that $\Phi_{n-1}$ is $(n{-}1)$\=/stable (hence also $n$\=/stable) on $\CS_{n-1}$, and consider $\Phi_n$.
To fix attention, assume that $n$ is odd; the case of even $n$ is symmetric.
Denote
\[\Psi_n=\limU{\Delta}\comp\Phi_{n-1}\comp\Cut_n,\]
so that $\Phi_n=\Bid_n\comp\limU{\Psi_n}$.
First of all, a~derivation of $\Psi_n\dcolon \CS_n \to \CS_n$ follows from sequential application of \cref{ft:typing_delta_arr,ft:typing_cut}:
\begin{align*}
	\limU{\Delta}&\dcolon\CS_n\to\CS_{n-1}	&&\text{(\cref{ft:typing_delta_arr}),}\\
	\Phi_{n+1}&\dcolon\CS_{n-1}\to\CS_{n-2}		&&\text{(induction hypothesis),}\\
	\Cut_n&\dcolon\CS_{n-2}\to\CS_n			&&\text{(\cref{ft:typing_cut}).}
\end{align*}
Moreover, $\Psi_n$, being the composition of these three $n$\=/stable functions (on the respective domains), is $n$\=/stable on $\CS_n$.

We now want to derive $\limU{\Psi_n}\dcolon\CS_n\to\CS_{n-1}$ using the appropriate typing rule.
From \cref{S-chain-complete} we know that $\CS_n\in\Pcc(\CV^d)$;
we need to prove that $\Psi_n(\vtau)\geq\vtau$ for all $\vtau\in\CS_n$ and that $\Fix\big(\Psi_n)\cap\CS_n\subseteq\CS_{n-1}$.

\begin{lemma}\label{lem:psi-vs-delta}
	For $n\in\{1,\ldots,d{-}1\}$ if $\vtau\in\CS_n$ then we have
	\begin{align*}
		\Psi_n(\vtau)&\geq\Delta(\vtau) &&\text{for odd $n$;}\\
		\Psi_n(\vtau)&\leq\Delta(\vtau) &&\text{for even $n$.}
	\end{align*}
\end{lemma}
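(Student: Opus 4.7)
The plan is to handle the odd case explicitly (the even case is dual). Fix odd $n$ and $\vtau \in \CS_n$, and unfold
\[
\Psi_n(\vtau) \;=\; \Cut_n\bigl(\Phi_{n+1}\bigl(\limU{\Delta}(\vtau)\bigr)\bigr).
\]
The first move is to set $\vtau' \eqdef \limU{\Delta}(\vtau)$. Since $\Delta$ is ascending on $\CS_n$ for odd $n$ (by \cref{ft:typing_delta} together with the definition of $\CS_n$), \cref{lem:lim-as-iteration} yields $\vtau' \in \Fix(\Delta) \cap \CS_n$ with $\vtau' \geq \vtau$; by \cref{Fix-increases-S} we also get $\vtau' \in \CS_{n+1}$. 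Monotonicity of $\Delta$ then gives $\vtau' = \Delta(\vtau') \geq \Delta(\vtau)$, which will be the key lower bound.

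Next, I would reduce the inequality $\Psi_n(\vtau) \geq \Delta(\vtau)$ to a single coordinate. Both vectors are $n$-fixed: $\Psi_n(\vtau)$ because it lies in $\CS_n$ (using the already-derived typing $\Psi_n\dcolon\CS_n\to\CS_n$), and $\Delta(\vtau)$ by \cref{fixed-after-delta}. Hence for $i \geq n$ each side equals its $n$-th coordinate. For $i \leq n{-}1$, the $n$-stability of $\Psi_n$ on $\CS_n$ (established immediately before the lemma) gives $\Psi_n(\vtau)_i = \tau_i$, while $n$-saturation of $\vtau$ gives $\Delta(\vtau)_i = \tau_i$, so both sides agree. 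Everything thus reduces to showing $\Psi_n(\vtau)_n \geq \Delta(\vtau)_n$.

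For that coordinate, unfolding $\Cut_n$ gives $\Psi_n(\vtau)_n = \Phi_{n+1}(\vtau')_{n+1}$. By the inductive hypothesis, $\Phi_{n+1}$ is $(n{+}1)$-stable on $\CS_{n+1}$, so $\Phi_{n+1}(\vtau')_n = \vtau'_n$. Since $\Phi_{n+1}(\vtau') \in \CS_{n+2}$ is ordered and the parity order satisfies $n \prec n{+}1$ (because $n$ is odd and $n{+}1$ is even), orderedness yields $\Phi_{n+1}(\vtau')_{n+1} \geq \Phi_{n+1}(\vtau')_n = \vtau'_n$. Chaining with $\vtau'_n \geq \Delta(\vtau)_n$ from the first step closes the argument.

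The most delicate part is the bookkeeping of which invariant among those of $\CS_n$, $\CS_{n+1}$, $\CS_{n+2}$ is active at each stage. Information about coordinate $n$ of $\vtau'$ must be transferred to coordinate $n{+}1$ of $\Phi_{n+1}(\vtau')$ via orderedness, so that $\Cut_n$ can then bring it back down to coordinate $n$; this is precisely why the parity order (rather than the natural order) is the right notion of orderedness here, and it is also where the assumption that $n$ is odd (with $n{+}1$ even) plays its role. Everything else is straightforward monotonicity together with the typing lemmata established above.
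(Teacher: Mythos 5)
Your proof is correct and follows essentially the same route as the paper's: reduce to the $n$-th coordinate via $n$-stability and $n$-fixedness, obtain $\Delta(\vtau)_n \leq \vtau'_n$ from $\vtau \leq \limU{\Delta}(\vtau)$ together with monotonicity and the fixed-point property, transfer this to coordinate $n{+}1$ using orderedness and $(n{+}1)$-stability of $\Phi_{n+1}$, and bring it back to coordinate $n$ via $\Cut_n$. The only cosmetic difference is that you name $\vtau' = \limU{\Delta}(\vtau)$ explicitly and note $\vtau' \in \Fix(\Delta) \cap \CS_{n+1}$, while the paper keeps the composition inline.
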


\begin{proof}
	Consider $n\in\{1,\ldots,d{-}1\}$ and by symmetry assume that $n$ is odd.
	Take any $\vtau\in \CS_n$.
	Clearly it is enough to consider the $n$\=/th coordinate, as the higher coordinates of $\vtau$ are not changed by $\Psi_n$ nor by $\Delta$
	(by $n$\=/stability of these functions),
	and the lower coordinates are equal to the $n$\=/th coordinate (both $\Psi_n(\vtau)$ and $\Delta(\vtau)$, as elements of $\CS_n$, are $n$\=/fixed).
	First, $\vtau\leq(\limU{\Delta})(\vtau)$ by the definition of $\limU{\Delta}$, which by monotonicity of $\Delta$ implies
	$\Delta(\vtau)_n\leq\Delta\big((\limU{\Delta})(\vtau)\big)_n=(\limU{\Delta})(\vtau)_n$.
	Next, the operation $\Phi_{n-1}$ preserves the $n$\=/th coordinate, because it is $(n{-}1)$\=/stable on $\CS_{n-1}$.
	We thus have $\Delta(\vtau)_n\leq\big(\limU{\Delta}\comp\Phi_{n-1}\big)(\vtau)_n$.
	Because the result $\big(\limU{\Delta}\comp\Phi_{n-1}\big)(\vtau)$, as an~element of $\CS_{n-1}$, is ordered,
	we know that also $\Delta(\vtau)_n\leq\big(\limU{\Delta}\comp\Phi_{n-1}\big)(\vtau)_{n-1}$.
	This $(n{-}1)$\=/th coordinate is moved to the $n$\=/th coordinate by $\Cut_n$; we thus have
	\begin{align*}
		\Delta(\vtau)_n&\leq\big(\limU{\Delta}\comp\Phi_{n-1}\comp\Cut_n\big)(\vtau)_n
			=\Psi_n(\vtau)_n,
	\end{align*}
	which finishes the proof of \cref{lem:psi-vs-delta}.
\end{proof}

For $\vtau\in\CS_n$ we have $\Delta(\vtau)\geq\vtau$ (again assuming that $n$ is odd),
thus a~first consequence of \cref{lem:psi-vs-delta} is that $\Psi_n(\vtau)\geq\vtau$ for all $\vtau\in\CS_n$.
Moreover, for $\vtau\in\Fix(\Psi_n)\cap\CS_n$ we have
$\vtau=\Psi_n(\vtau)\geq\Delta(\vtau)$ (by \cref{lem:psi-vs-delta} again) and $\Delta(\vtau)\geq\vtau$ (by the definition of $\CS_n$).
This means that $\Delta(\vtau)=\vtau$, so $\vtau\in\Fix(\Delta)$.

This proves that $\Fix(\Psi_n)\cap\CS_n\subseteq\Fix(\Delta)\cap\CS_n\subseteq\CS_{n-1}$ (cf.~\cref{Fix-increases-S}).
The above allows us to apply the typing rule for $\limU{F}$ and derive $\limU{\Psi_n}\dcolon\CS_n\to\CS_{n-1}$.
By \cref{ft:stability-in-limit} we also have that $\limU{\Psi_n}$ is $n$\=/stable on $\CS_n$.

We finish the proof of \cref{prop} by recalling that $\Phi_n=\Bid_n\comp\limU{\Psi_n}$,
that we can derive $\Bid_n\dcolon\CS_n\to\CS_n$, and that $\Bid_n$ is $n$\=/stable on $\CS_n$ (cf.~\cref{ft:typing_bid}).

\section{\texorpdfstring{$\mu$\=/calculus}{mu-calculus} toolbox}

In this section we recall some facts concerning the standard $\mu$\=/calculus and develop some equalities, which are used in the next section, for the sake of proving \cref{prop:realisation}, that is, that $\Phi_{d-1}$ is equivalent to the formula from \cref{tau-automaton}.

Let $(\CV,{\leq})$ be any complete lattice---the tools developed here are generic $\mu$\=/calculus equalities.

First, by the Knaster\=/Tarski theorem (cf.\@ \cref{sec:basic-concepts}), for an~element $\tau\in\CV$, to prove $\mu x.\,f(x)\leq \tau$, it is enough to show $f(\tau)\leq \tau$.
Similarly, to prove $\tau\leq\nu x.\,f(x)$, it is enough to show $\tau\leq f(\tau)$.
Second, we have the following fact.

\begin{fact}[{\cite[Corollary~1.3.8]{niwinski_rudiments}}]\label{mu-rebase}
	Let $f\from\CV^n\to\CV$ be monotone, let $\theta_0,\ldots,\theta_{n-1}\in\set{\mu,\nu}$, and let $\sigma_\bot,\sigma_\top\in\CV$.
	If $\sigma_\bot\leq\theta_{n-1} x_{n-1}\ldots\theta_0 x_0.\,f(x_0,\ldots,x_{n-1})\leq \sigma_\top$ then
	\begin{alignat*}{4}
		&\theta_{n-1} x_{n-1}\ldots\theta_0 x_0.\,f(x_0&&,x_1,\ldots,x_{n-1})\\
		=\,&\theta_{n-1} x_{n-1}\ldots\theta_0 x_0.\,f((x_0{\wedge}\sigma_\top){\vee}\sigma_\bot&&,x_1,\ldots,x_{n-1}).
	\tag*{\qed}
	\end{alignat*}
\end{fact}

We now have a~series of lemmata, which allow us to simplify some expressions of $\mu$\=/calculus.
The first lemma can be shown using directly the Knaster\=/Tarski theorem.

\begin{lemma}\label{mu-piccolo3}
	For every monotone function $f\from\CV^2\to\CV$, we have
	\begin{align*}
		\nu y.\,\mu x.\,f(x{\land}y,y)=\nu y.\,\mu x.\,f(x,y),\\
		\mu y.\,\nu x.\,f(x{\lor}y,y)=\mu y.\,\nu x.\,f(x,y).
	\end{align*}
\end{lemma}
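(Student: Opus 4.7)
The plan is to prove the first identity; the second follows by duality (swap $\mu\leftrightarrow\nu$, $\land\leftrightarrow\lor$, and reverse all inequalities). Set $\alpha(x) := \mu y.\,f(x,y)$ and $\beta(x) := \mu y.\,f(x,x{\land}y)$, and $A := \nu x.\,\alpha(x)$, $B := \nu x.\,\beta(x)$. Both $\alpha$ and $\beta$ are monotone in $x$ by the usual parameter-monotonicity of least fixed points, so the outer $\nu$'s make sense.

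The easy direction $B \leq A$: for every $x$, since $x{\land}y \leq y$ and $f$ is monotone in its second argument, the functional $y\mapsto f(x,x{\land}y)$ is pointwise below $y\mapsto f(x,y)$; their least fixed points therefore satisfy $\beta(x) \leq \alpha(x)$. Monotonicity of $\nu$ in its functional argument then yields $B \leq A$.

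For the reverse inequality $A \leq B$, I would show that $A$ is itself a fixed point of $\beta$. From the previous step, $\beta(A) \leq \alpha(A) = A$, hence $A{\land}\beta(A) = \beta(A)$. Since $\beta(A)$ is, by definition, a fixed point of $y\mapsto f(A,A{\land}y)$, we get
\[ f(A,\beta(A)) \;=\; f(A,A{\land}\beta(A)) \;=\; \beta(A), \]
so $\beta(A)$ is a fixed point of $y\mapsto f(A,y)$. By leastness of $\alpha(A)$, we conclude $\alpha(A) \leq \beta(A)$, i.e.\ $A \leq \beta(A)$. Together with $\beta(A)\leq A$ this gives $\beta(A)=A$, and Knaster--Tarski (applied to the outer $\nu$) yields $A \leq \nu x.\,\beta(x) = B$.

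The only delicate point, which I expect to be the main conceptual step, is the collapse $A{\land}\beta(A)=\beta(A)$: it crucially uses the easy direction first to place $\beta(A)$ below $A$, after which the inner meet inside $f$ becomes a no-op and the two inner least fixed points coincide at $x=A$. Everything else is a direct application of Knaster--Tarski together with monotonicity of least/greatest fixed points in their parameters.
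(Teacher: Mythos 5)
Your proof is correct and takes essentially the same approach as the paper: both establish the easy direction by monotonicity, then use $\beta(A)\leq A$ to collapse the inner meet $A\land\beta(A)=\beta(A)$ at the point $x=A$, and conclude via Knaster--Tarski. The only cosmetic difference is that you derive the full equality $\beta(A)=A$, while the paper stops at the post-fixed-point inequality $R\leq\mu y.\,f(R,R\land y)$, which is all that Knaster--Tarski needs.
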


\begin{proof}
	We prove only the first equality; the second one is symmetric.
	Let $L=\nu y.\,\mu x.\,f(x{\wedge}y,y)$ and $R=\nu y.\,\mu x.\,f(x,y)$.
	The inequality $L\leq R$ follows from monotonicity.
	To show $R\leq L$, by the Knaster\=/Tarski theorem it is enough to show
	\begin{align*}
		R\leq\mu x.\,f(x{\wedge}R,R).
	\end{align*}
	Let $R'=\mu x.\,f(x{\wedge}R,R)$.
	Since $R$ itself is a~fixed point, $R=\mu x.\,f(x, R)$, therefore, again by the Knaster\=/Tarski theorem, to show $R\leq R'$,
	it is enough to show $f(R',R)\leq R'$.
	But we know that, by monotonicity, $R'\leq R$, hence $R'{\wedge}R=R'$ and then $R'=f(R'{\wedge}R, R)=f(R',R)$.
	Thus the claim is proved.
\end{proof}

In a~proof of the next lemma we use the Knaster\=/Tarski theorem and \cref{mu-rebase}.

\begin{lemma}\label{mu-piccolo2}
	For every monotone function $f\from\CV^3\to\CV$, taking
	\begin{alignat*}{4}
		L &=\mu z.\,\nu y.\,\mu x.\,f(x{\lor}z&&,y,z),\\
		R &=\mu z.\,\nu y.\,\mu x.\,f(x&&,y,z),
	\end{alignat*}
	we have $L=R$.
\end{lemma}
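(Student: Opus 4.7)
The plan is to prove both inequalities. The direction $R \leq L$ is straightforward: since $f$ is monotone in its third argument and $z \leq x \vee z$, we have $f(x,y,z) \leq f(x,y,x \vee z)$, and applying $\mu z$, then $\nu y$, then $\mu x$ (each of which is monotone in its body) yields $R \leq L$. No subtlety is involved in this direction.

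The interesting direction is $L \leq R$, and this is where the main obstacle lies. The natural impulse is to invoke Knaster\=/Tarski on $L = \mu x.\,\psi(x)$, where $\psi(x) = \nu y.\,\mu z.\,f(x,y,x \vee z)$, and check that $R$ is a~pre\=/fixed point of $\psi$, i.e., that $\psi(R) \leq R$. However, since $R \vee z \geq z$, term\=/wise monotonicity gives $f(R,y,R \vee z) \geq f(R,y,z)$ and hence $\psi(R) \geq R$, the wrong inequality. The direct attempt therefore fails, and one must rewrite $R$ before comparing.

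To sidestep this, I would apply $\mu$\=/rebase (\cref{mu-rebase}) to the expression defining $R$. Taking $\sigma_\bot = R$ and $\sigma_\top = \top$, the bounds $R \leq R \leq \top$ hold trivially, so rebase transforms
\[
	R \;=\; \mu x.\,\nu y.\,\mu z.\,f(x,y,z)
	\quad\text{into}\quad
	R \;=\; \mu x.\,\nu y.\,\mu z.\,f(x,y,R \vee z).
\]
Since $R$ is a~fixed point of the outer $\mu x$ of the new expression, unfolding it once gives $R = \nu y.\,\mu z.\,f(R,y,R \vee z)$, and the right\=/hand side is exactly $\psi(R)$.

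The resulting equality $\psi(R) = R$ shows that $R$ is in particular a~pre\=/fixed point of $\psi$, so Knaster\=/Tarski yields $L = \mu x.\,\psi(x) \leq R$. Combined with $R \leq L$, this gives $L = R$. The essential insight---and the reason rebase is indispensable here---is that rebase lets us inflate the innermost $z$ inside $R$ to $R \vee z$; this inflated shape matches precisely the form $x \vee z$ appearing in $L$ once the outer variable $x$ is instantiated to $R$.
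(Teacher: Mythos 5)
Your proof is correct and takes essentially the same route as the paper's: show $R\leq L$ by monotonicity, then apply \cref{mu-rebase} with $\sigma_\bot=R$, $\sigma_\top=\top$ and unfold the outer $\mu$ once to obtain $R=\nu y.\,\mu z.\,f(R,y,R{\lor}z)=\psi(R)$, whence $L\leq R$ by Knaster--Tarski. The paper states the same chain more tersely; your added explanation of why the naive pre\=/fixed\=/point check yields the wrong inequality and why rebase is the fix is a faithful elaboration of the same argument, not a different one.
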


\begin{proof}
	Since $f(x{\lor}z,y,z)\geq f(x,y,z)$, by monotonicity of fixed-point operations we have $L=\mu z.\,\nu y.\,\mu x.\,f(x{\lor}z,y,z)\geq R$.
	To show the other inequality, by the Knaster\=/Tarski theorem, it is enough to show
	\begin{align}\label[inequality]{enough}
		\nu y.\,\mu x.\,f(x{\lor}R,y,R)\leq R.
	\end{align}
	By \cref{mu-rebase} (used with $\sigma_\bot=R$ and $\sigma_\top=\top$), we have
	\begin{align*}
		R = \mu z.\,\nu y.\,\mu x.\,f(x{\lor}R,y,z) = \nu y.\,\mu x.\,f(x{\lor}R,y,R)
	\end{align*}
	(the last equality by definition of a~fixed point).
	This implies \cref{enough} and completes the proof.
\end{proof}

The next lemma is again just a (somewhat tricky) consequence of the Knaster\=/Tarski theorem.

\begin{lemma}\label{mu-piccolo}
	For every monotone function $g\from\CV^2\to\CV$, taking
	\begin{align*}
		L&=\mu y.\,\nu x.\,g \big(x,\mu z.\,g(z{\lor}y, z{\lor}y)\big),\\
		R&=\mu y.\,\nu x.\,g(x, y),
	\end{align*}
	we have $L=R$.
\end{lemma}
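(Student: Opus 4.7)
The plan is to prove both inequalities $L \leq R$ and $R \leq L$ separately, using the Knaster--Tarski theorem together with a transfinite iteration argument for the harder direction.

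\textbf{Easy direction ($L \leq R$).} I would show that $R$ is a pre-fixed point of the operator $x \mapsto \nu y.\, g(\mu z.\, g(x\lor z, x\lor z), y)$ whose least fixed point is $L$. The crucial observation is that $R = g(R,R)$, since $R$ is the greatest fixed point of $y \mapsto g(R,y)$. Plugging $z = R$ into $z \mapsto g(R\lor z, R\lor z)$ shows that $R$ is itself a fixed point of that map, so by Knaster--Tarski $\mu z.\, g(R\lor z, R\lor z) \leq R$. Monotonicity of $g$ and of $\nu y.\, g(\cdot, y)$ then gives $\nu y.\, g(\mu z.\, g(R\lor z, R\lor z), y) \leq \nu y.\, g(R, y) = R$, so $R$ is a pre-fixed point of the operator defining $L$, and hence $L \leq R$.

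\textbf{Harder direction ($R \leq L$).} Let $h(x) = \mu z.\, g(x\lor z, x\lor z)$ and abbreviate $L(x) = \nu y.\, g(h(x), y)$ and $R(x) = \nu y.\, g(x, y)$, so that $L = \mu x.\, L(x)$ and $R = \mu x.\, R(x)$; both $L(\cdot)$ and $R(\cdot)$ are monotone. I would approximate $R$ from below by its transfinite iterates $R^0 = \bot$, $R^{\alpha+1} = R(R^\alpha)$, $R^\lambda = \sup_{\alpha<\lambda} R^\alpha$, and prove by transfinite induction on $\alpha$ the combined invariant that $R^\alpha \leq L$ together with the post\=/diagonal property $R^\alpha \leq g(R^\alpha, R^\alpha)$. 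The second conjunct passes to successors via $R^{\alpha+1} = g(R^\alpha, R^{\alpha+1}) \leq g(R^{\alpha+1}, R^{\alpha+1})$ (using the auxiliary fact that $R^\bullet$ is increasing in $\alpha$), and to limits by monotonicity of $g$. From $R^\alpha \leq g(R^\alpha, R^\alpha)$, iterating the LFP definition of $h(R^\alpha)$ starting at $\bot$ produces a first iterate equal to $g(R^\alpha, R^\alpha) \geq R^\alpha$, hence $h(R^\alpha) \geq R^\alpha$, so by monotonicity $L(R^\alpha) \geq R(R^\alpha) = R^{\alpha+1}$. Combined with $L(R^\alpha) \leq L(L) = L$ (from monotonicity of $L(\cdot)$ and the induction hypothesis $R^\alpha \leq L$), this yields $R^{\alpha+1} \leq L$, closing the induction. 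Passing to the stabilising ordinal gives $R \leq L$.

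\textbf{Main obstacle.} The subtle point is that $h(x) \geq x$ can fail for general $x$ (e.g.\ when $g$ is constantly $\bot$), so one cannot directly compare $L(x)$ and $R(x)$ over the whole lattice. The argument only works because the iterates $R^\alpha$ carry the extra structural information $R^\alpha \leq g(R^\alpha, R^\alpha)$, inherited from the inner $\nu$\=/fixed\=/point in $R(\cdot)$. Identifying this post\=/diagonal invariant and verifying that it survives every successor and limit step is the key insight making the induction go through.
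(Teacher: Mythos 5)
Your proof of $L \leq R$ coincides with the paper's: both observe that $R = g(R,R)$ forces $\mu z.\, g(R\lor z, R\lor z) \leq R$ and then conclude by monotonicity of $\nu y.\, g(\cdot,y)$. For $R \leq L$, however, you take a genuinely different and more computational route. You approximate $R$ from below by its transfinite iterates $R^\alpha$ and run an induction carrying the auxiliary ``post\=/diagonal'' invariant $R^\alpha \leq g(R^\alpha, R^\alpha)$, which is precisely what makes $h(R^\alpha) \geq R^\alpha$, hence $L(R^\alpha) \geq R(R^\alpha)$, go through at each successor stage; this is correct but lengthy. The paper avoids iteration entirely: by Knaster--Tarski it suffices to show that $L$ is a pre\=/fixed point of $x\mapsto \nu y.\, g(x,y)$, i.e.\ $\nu y.\,g(L,y)\leq L$, and since $L = \nu y.\,g(M,y)$ where $M=\mu z.\,g(L\lor z, L\lor z)$, this reduces (by monotonicity of $\nu y.\,g(\cdot,y)$) to $L\leq M$, which falls out in one line from the two fixed\=/point equations $L=g(M,L)$ and $M=g(L\lor M, L\lor M)$ together with $L\lor M\geq M,\,L$. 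In effect the paper only needs your invariant at the single point $x = L$, where it is automatic because $M = h(L)$ is the inner fixed point evaluated at $L$ itself, whereas you prove and propagate the invariant along the entire approximating chain. Your version makes explicit the structural fact that drives the argument and why a naive pointwise comparison of $L(x)$ and $R(x)$ would fail; the paper's version buys brevity by working directly at the fixed points.
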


\begin{proof}
	We first show $L\leq R$.
	By the Knaster\=/Tarski theorem, it is enough to show
	\begin{align*}
		\nu x.\,g\big(x,\mu z.\,g(z{\lor}R, z{\lor}R)\big)\leq R.
	\end{align*}
	We have, by definition, $R=\nu x.\,g(x,R)$, so, by monotonicity, it is enough to show $\mu z.\,g(z{\lor}R,z{\lor}R)\leq R$.
	But this follows immediately (again by the Knaster\=/Tarski theorem) from the fact that	$R=g(R,R)=g(R{\lor}R,R{\lor}R)$.

	To show $R\leq L$, we shall in turn show
	\begin{align*}
		\nu x.\,g(x,L)\leq L.
	\end{align*}
	We have, by definition, $L=\nu x.\,g(x,M)$, where $M=\mu z.\,g(z{\lor}L,z{\lor}L)$.
	It is then enough to show $L\leq M$.
	But we have $L=g(L,M)$ and $M=g(M{\lor}L,M{\lor}L)$, thus the claim follows by monotonicity.
\end{proof}

Our next lemma is a~variant of \cref{mu-piccolo}, where we limit ourselves to the sublattice between $\sigma_\bot$ and $\sigma_\top$.
The statement may look somewhat complicated, but this is exactly what we need in the next section.

\begin{lemma}\label{real-reorganize}
	Let $f\from\CV^2\to\CV$ be monotone, let $\sigma_\bot,\sigma_\top\in\CV$, and let
	\begin{align*}
		L&=\mu y.\,\nu x.\, f\big((x{\lor}h(y)){\land}\sigma_\top,h(y)\big)&&\text{where}\\
		h(y)&=\mu z.\,f(z{\lor}y{\lor}\sigma_\bot,z{\lor}y{\lor}\sigma_\bot)&&\text{and}\\
		R&=\mu y.\, \nu x.\, f\big((x{\lor}((y{\land}\sigma_\top){\lor}\sigma_\bot)){\land}\sigma_\top,(y{\land}\sigma_\top){\lor}\sigma_\bot\big).
	\end{align*}
	If $\sigma_\bot\leq L\leq \sigma_\top$, $f(\sigma_\top,\sigma_\top)=\sigma_\top$, and $f(\sigma_\bot,\sigma_\bot)\geq \sigma_\bot$,
	then $L=R$.
\end{lemma}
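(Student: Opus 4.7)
The strategy is to reduce $L = R$ to \cref{mu-piccolo} via a carefully chosen auxiliary function. Define $g\from\CV^2\to\CV$ by
\[
g(u, y) \eqdef f\big(\sigma_\bot {\lor} (\sigma_\top {\land} u),\ \sigma_\top {\land} (\sigma_\bot {\lor} (\sigma_\top {\land} u) {\lor} y)\big),
\]
so that $R = \mu x.\,\nu y.\,g(x, y)$ by direct inspection. The hypotheses yield $g(u, y) \in [\sigma_\bot, \sigma_\top]$ for all $u, y \in \CV$: the upper bound comes from monotonicity and $f(\sigma_\top, \sigma_\top) = \sigma_\top$, while the lower bound uses that both arguments of $f$ in the definition of $g$ are at least $\sigma_\bot$ (thanks to $\sigma_\bot \leq \sigma_\top$) combined with $f(\sigma_\bot, \sigma_\bot) \geq \sigma_\bot$.

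Applying \cref{mu-piccolo} to $g$ yields $R = \mu x.\,\nu y.\,g\big(\mu z.\,g(x {\lor} z, x {\lor} z),\,y\big)$, so it remains to identify this right\=/hand side with $L$. The crucial observation is that when $x \lor z \leq \sigma_\top$ the clampings in $g(x {\lor} z, x {\lor} z)$ all collapse, leaving $g(x {\lor} z, x {\lor} z) = f(\sigma_\bot {\lor} x {\lor} z, \sigma_\bot {\lor} x {\lor} z)$. Since $g$'s range sits in $[\sigma_\bot, \sigma_\top]$, the iterates $z_0 = \bot$, $z_{k+1} = g(x {\lor} z_k, x {\lor} z_k)$ stay in $[\sigma_\bot, \sigma_\top]$ from $k = 1$ onwards, and for $x \leq \sigma_\top$ they match---iterate by iterate---those defining $h(x)$; therefore $\mu z.\,g(x {\lor} z, x {\lor} z) = h(x)$. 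Feeding this back, still for $x \leq \sigma_\top$, the inner body becomes $g(h(x), y) = f(h(x), \sigma_\top {\land} (h(x) {\lor} y))$---the same as the body appearing in $L$.

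To promote this pointwise agreement to an identity between the full $\mu x.\nu y$ expressions, I would first check $L, R \in [\sigma_\bot, \sigma_\top]$: for $L$, the upper bound follows from $h(\sigma_\top) = \sigma_\top$ (a direct consequence of $f(\sigma_\top, \sigma_\top) = \sigma_\top$), which makes the body of $L$ at $x = \sigma_\top$ evaluate to $\nu y.\,f(\sigma_\top, \sigma_\top) = \sigma_\top$, while the lower bound comes from a single iteration of the $L$\=/map from $\bot$ together with $f(\sigma_\bot, \sigma_\bot) \geq \sigma_\bot$; for $R$, the same bounds follow at once from $g$'s range lying in $[\sigma_\bot, \sigma_\top]$. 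Then \cref{mu-rebase}, applied with $n = 1$ to the outer $\mu x$ in each expression, lets us substitute $\sigma_\bot {\lor} (\sigma_\top {\land} x)$ for $x$, effectively restricting the outer variable to $[\sigma_\bot, \sigma_\top]$, where the preceding step identifies the two bodies and hence gives $L = R$. The main obstacle is the persistent bookkeeping around the nested clampings $\sigma_\bot {\lor} (\sigma_\top {\land} \cdot)$ and $\sigma_\top {\land} \cdot$; the hypotheses on $f(\sigma_\top, \sigma_\top)$ and $f(\sigma_\bot, \sigma_\bot)$ are precisely what guarantees that $[\sigma_\bot, \sigma_\top]$ is invariant under all operations involved, causing these clampings to vanish on the relevant values so that the formulas collapse into the common form required by \cref{mu-piccolo}.
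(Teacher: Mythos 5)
Your proof is correct and rests on the same two pillars as the paper's---\cref{mu-piccolo} and \cref{mu-rebase}---but it runs the chain of equalities in the opposite direction. The paper starts from $L$ and applies \cref{mu-rebase} several times (clamping $x$ and $y$, and then $z$ inside $h'$) so that $L$ takes the shape $\mu x.\,\nu y.\, g\big(\mu z.\, g(x{\lor}z, x{\lor}z), y\big)$, and only then invokes \cref{mu-piccolo} and recognises the result as $R$. You instead apply \cref{mu-piccolo} to $R$ at once and then match the resulting expression against $L$. In place of the paper's extra \cref{mu-rebase} step inside $h'$, you identify $\mu z.\,g(x{\lor}z,x{\lor}z)$ with $h(x)$ for $x\leq\sigma_\top$ by a direct comparison of the approximating iterates, using the invariant that the iterates stay inside $[\sigma_\bot,\sigma_\top]$ so the clampings in $g$ collapse; this is a valid, slightly more hands-on alternative and has the minor advantage of not relying on any distributivity hidden in the rewriting of the clamped second argument. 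Two details worth stating explicitly in a write-up: your claim $h(\sigma_\top)=\sigma_\top$ (not merely $\leq\sigma_\top$) is indeed correct, because $z\mapsto f(\sigma_\top{\lor}z,\sigma_\top{\lor}z)$ is everywhere $\geq\sigma_\top$, so $\sigma_\top$ is its least fixed point; and the single application of \cref{mu-rebase} with $n=1$ on the outer $x$ alone suffices precisely because your body-matching argument holds for all $y$ once $x$ has been clamped into range, so no separate clamping of the inner $y$ is needed.
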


\begin{proof}
	First, because of $\sigma_\bot\leq L\leq \sigma_\top$, by \cref{mu-rebase} we can replace $x$ with $(x{\land}\sigma_\top){\lor}\sigma_\bot$ and $y$ with $(y{\land}\sigma_\top){\lor}\sigma_\bot$:
	\begin{align*}
		L=\mu y.\,\nu x.\, f\big(((x{\land}\sigma_\top){\lor}\sigma_\bot{\lor}h'(y)){\land}\sigma_\top,h'(y))\big),
	\end{align*}
	where
	\begin{align*}
		h'(y)=\mu z.\,f\big(z{\lor}(y{\land}\sigma_\top){\lor}\sigma_\bot,z{\lor}(y{\land}\sigma_\top){\lor}\sigma_\bot\big).
	\end{align*}

	Next, for every value of $y$ we have $\sigma_\bot\leq(y{\land}\sigma_\top){\lor}\sigma_\bot\leq\sigma_\top$, so by monotonicity
	\begin{align*}
		\mu z.\,f(z{\lor}\sigma_\bot,z{\lor}\sigma_\bot)\leq h'(y)\leq\mu z.\,f(z{\lor}\sigma_\top,z{\lor}\sigma_\top).
	\end{align*}
	On the one hand, we have
	\begin{align*}
		\sigma_\bot\leq f(\sigma_\bot,\sigma_\bot)=\mu z.\,f(\sigma_\bot,\sigma_\bot)\leq \mu z.\,f(z{\lor}\sigma_\bot,z{\lor}\sigma_\bot).
	\end{align*}
	On the other hand, the assumption $f(\sigma_\top,\sigma_\top)=\sigma_\top$ says that $\sigma_\top$ is a~fixed point of the function $z\mapsto f(z{\lor}\sigma_\top,z{\lor}\sigma_\top)$,
	necessarily greater or equal than the least fixed point $\mu z.\,f(z{\lor}\sigma_\top,z{\lor}\sigma_\top)$.
	We thus have $\sigma_\bot\leq h'(y)\leq \sigma_\top$, so by \cref{mu-rebase} we can write
	\begin{align*}
		h'(x)=\mu z.\,f\big(((z{\lor}y){\land}\sigma_\top){\lor}\sigma_\bot,((z{\lor}y){\land}\sigma_\top){\lor}\sigma_\bot\big).
	\end{align*}
	Due to $\sigma_\bot\leq h'(y)\leq \sigma_\top$ we can also change $h'(y)$ into $(h'(y){\land}\sigma_\top){\lor}\sigma_\bot$ in the previous formula for~$L$, which gives us
	\begin{align*}
		L=\mu y.\,\nu x.\, f\big(((x{\lor}h'(y)){\land}\sigma_\top){\lor}\sigma_\bot,(h'(y){\land}\sigma_\top){\lor}\sigma_\bot\big).
	\end{align*}

	Denote now
	\begin{align*}
		g(x,y)=f\big(((x{\lor}y){\land}\sigma_\top){\lor}\sigma_\bot,(y{\land}\sigma_\top){\lor}\sigma_\bot\big);
	\end{align*}
	then
	\begin{align*}
		L=\mu y.\,\nu x.\, g\big(x,\mu z.\,g(z{\lor}y,z{\lor}y)\big).
	\end{align*}
	Using \cref{mu-piccolo} and expanding the definition of $g$ we obtain the thesis:
	\begin{alignat*}{4}
		L&=\mu y.\,\nu x.\,g(x,y)\\
		 &=\mu y.\,\nu x.\,f\big(((x{\lor}y){\land}\sigma_\top){\lor}\sigma_\bot&&,(y{\land}\sigma_\top){\lor}\sigma_\bot\big)\\
		 &=\mu y.\,\nu x.\,f\big((x{\lor}((y{\land}\sigma_\top){\lor}\sigma_\bot)){\land}\sigma_\top&&,(y{\land}\sigma_\top){\lor}\sigma_\bot\big)=R.
	\tag*{\qedhere}
	\end{alignat*}
\end{proof}

\section{What does the formula \texorpdfstring{$\Phi_{d-1}$}{Phi\_\{d-1\}} do?}\label{sec:realisations}

We now come back to the setting of \cref{sec:formula,sec:basic-functions}.
Recall that we have considered a~complete lattice $(\CV,{\leq})$ and a~monotone and chain-continuous function $\delta\from\CV^d\to\CV$.
Based on $\delta$, we have defined a~function $\Delta\from\CV^d\to\CV^d$,
which then together with functions $\Bid_n$ and $\Cut_n$ was used in a~formula $\Phi_{d-1}$ of unary $\mu$\=/calculus defined in \cref{sec:formula}.

The goal of this section is to prove the following \lcnamecref{prop:realisation}, connecting the formula $\Phi_{d-1}$ with the expression from \cref{tau-automaton},
which has been used there to describe accepting runs of the considered
automaton.

\begin{proposition}\label{prop:realisation}
	We have
	\begin{align*}
		\Phi_{d-1}(\bot)_{d-1}=\mu x_{d-1}.\,\nu x_{d-2}.\,\mu x_{d-3}.\,\nu x_{d-4}\ldots\mu x_{1}.\,\nu x_{0}.\,\delta(x_0,\ldots,x_{d-1}).
	\end{align*}
\end{proposition}

The proof of \cref{prop:realisation} consists of several steps.
At first, from the formulae $\Phi_n$ of unary $\mu$-calculus we obtain expressions that additionally use lattice operations $\wedge$ and $\vee$ (\cref{real-phi} below),
which are later eliminated in \cref{real-no-limits}.

We first give a~general lemma about $\limU{F}$ and $\limD{F}$ for any partial function $F$.

\begin{lemma}\label{real-lim}
	Let $F\from\CV^d\parto\CV^d$ be a~partial function that is $n$\=/stable on $A\subseteq \CV^d$, for some $n\in\set{0,\ldots,d{-}1}$.
	Let also $\phi\from\CV^{d-n}\to\CV$ be such that
	\begin{align}\label{eq:real-lim}
		F(\vtau)_n=\phi(\tau_n,\dots,\tau_{d-1})
	\end{align}
	for all $\vtau=(\tau_0,\ldots,\tau_{d-1})\in A$.
	If $F$ is ascending on $A$ then, for all $\vtau\in A$,
	\[(\limU{F})(\vtau)_n=\mu y_n.\,\phi(y_n{\lor}\tau_n,\tau_{n+1},\ldots,\tau_{d-1}).\]

	Dually, if $F$ is descending on $A$ then, for all $\vtau\in A$,
	\[(\limD{F})(\vtau)_n=\nu y_n.\,\phi(y_n{\land}\tau_n,\tau_{n+1},\ldots,\tau_{d-1}).\]
\end{lemma}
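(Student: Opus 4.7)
The plan is to reduce the statement to a calculation with the iterates of $F$ using Lemma~\ref{lem:lim-as-iteration}, and then observe that, thanks to $n$\=/stability, the $n$\=/th coordinate evolves independently of everything but itself and the (frozen) lower coordinates. I focus on the ascending case; the descending case is entirely symmetric, swapping $\mu/\nu$, $\lor/\land$, and $\leq/\geq$.

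First I would invoke Lemma~\ref{lem:lim-as-iteration}, which gives $(\limU{F})(\vtau)=F^\infty(\vtau)$, and in particular $(\limU{F})(\vtau)\in\Fix(F)\cap A$. Fixing $\vtau=(\tau_1,\dots,\tau_d)\in A$, the idea is now to extract the $n$\=/th coordinate from the transfinite iteration. Define the auxiliary monotone function $f\from\CV\to\CV$ by $f(y)\eqdef\phi(\tau_1,\dots,\tau_{n-1},y)$, and set $y_\zeta\eqdef F^\zeta(\vtau)_n$. Because $F$ is $n$\=/stable on $A$, one shows by a straightforward transfinite induction that for every $\zeta$ we have $F^\zeta(\vtau)\restr_{n-1}=\vtau\restr_{n-1}$ and $F^\zeta(\vtau)\in A$; the first fact uses $n$\=/stability at successor steps and the fact that chain suprema in $\CV^d$ are coordinate-wise at limit steps.

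Combining this with the hypothesis~\eqref{eq:real-lim}, at successor ordinals we obtain
\begin{align*}
y_{\zeta+1}=F(F^\zeta(\vtau))_n=\phi\bigl(\tau_1,\dots,\tau_{n-1},F^\zeta(\vtau)_n\bigr)=f(y_\zeta),
\end{align*}
while at limit ordinals $y_\zeta=\sup_{\xi<\zeta}y_\xi$ because chain suprema in $\CV^d$ are computed coordinate-wise. Hence $(y_\zeta)_\zeta$ is exactly the transfinite iteration of $f$ starting at $y_0=\tau_n$, and it stabilises at some ordinal (at latest at $\eta_\infty$), giving a fixed point $y_\infty$ of $f$ with $y_\infty\geq\tau_n$.

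It remains to identify $y_\infty$ with $\mu y_n.\,\phi(\tau_1,\dots,\tau_{n-1},\tau_n\lor y_n)$, i.e.\ with the least fixed point of $f$ that lies above $\tau_n$. Call this quantity $M$. For one inequality, any fixed point of $g(y)\eqdef f(\tau_n\lor y)$ lies above $f(\tau_n)\geq\tau_n$ (recall $f(\tau_n)\geq\tau_n$ because $F$ is ascending on $A$), hence is a fixed point of $f$ above $\tau_n$; so $M$ is a fixed point of $f$ above $\tau_n$, and by the standard transfinite induction argument used in Lemma~\ref{lem:lim-as-iteration} applied to $f$, we get $y_\infty\leq M$. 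For the reverse inequality, $y_\infty$ is a fixed point of $f$ above $\tau_n$, hence a fixed point of $g$, so $M\leq y_\infty$ by Knaster\=/Tarski.

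The only genuine subtlety is keeping the two transfinite inductions bookkept cleanly: showing that the iterates of $F$ on $\vtau$ stay in $A$, that their first $n-1$ coordinates are frozen, and that the $n$\=/th coordinate's evolution really is the iteration of $f$ from $\tau_n$. Once those are in place, the final identification with $\mu y_n.\,\phi(\tau_1,\dots,\tau_{n-1},\tau_n\lor y_n)$ is immediate from the characterisation of the least fixed point above a given element.
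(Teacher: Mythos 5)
Your proof is correct and follows essentially the same route as the paper's: invoke \cref{lem:lim-as-iteration}, use $n$\=/stability to freeze the first $n{-}1$ coordinates while tracking the $n$\=/th coordinate of the transfinite iteration of $F$, and identify the stabilised value with the least fixed point above $\tau_n$. The only presentational difference is that you make explicit the correspondence between fixed points of $f(y)=\phi(\tau_1,\dots,\tau_{n-1},y)$ above $\tau_n$ and fixed points of $g(y)=f(\tau_n\lor y)$, whereas the paper proves $F^\eta(\vtau)_n\leq\sigma$ by direct transfinite induction and then observes that the stabilised value is a fixed point of $y\mapsto\phi(\tau_1,\dots,\tau_{n-1},\tau_n\lor y)$.
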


A proof of this lemma follows from the fact that $F$ is ascending (resp.~descending) on~$A$,
and thus the operation $\lor$ (resp.~$\land$) on the first coordinate of the argument of $\phi$ boils down to taking the least (resp.~greatest) fixed point which is ${\geq}$ (resp.~${\leq}$) $\tau_n$.

\begin{proof}
	Let us concentrate on the case when $F$ is ascending on $A$; the case of descending $F$ is symmetric.
	Fix some $\vtau\in A$ and denote $\sigma=\mu y_n.\,\phi(y_n{\lor}\tau_n,\tau_{n+1},\ldots,\tau_{d-1})$.
	First, by \cref{lem:lim-as-iteration} we have $(\limU{F})(\vtau)=F^\infty(\vtau)$.
	Recall the sequence $F^\eta(\vtau)$ used to define $F^\infty(\vtau)$ as $F^{\eta_\infty}(\vtau)$.
	By construction $F^\eta(\vtau)\in A$ and $F^\eta(\vtau)\geq\vtau$ for all ordinals $\eta$.

	First, as in the proof of \cref{ft:stability-in-limit}, we can see that $F^\eta(\vtau)_i=\tau_i$ for $i\in\set{n{+}1,\dots,d{-}1}$, for all ordinals $\eta$.

	We now prove by induction on $\eta$ that $F^\eta(\vtau)_n\leq\sigma$.
	For $\eta=0$ we have
	\begin{alignat*}{4}
		F^0(\vtau)_n=\tau_n\leq F(\vtau)_n&=\phi(&\tau_n&,\tau_{n+1},\ldots,\tau_{d-1})\\
		&\leq\phi(&\sigma{\lor}\tau_n&,\tau_{n+1},\ldots,\tau_{d-1})=\sigma.
	\end{alignat*}

	For a~successor ordinal $\eta$ suppose that $F^{\eta-1}(\vtau)_n\leq\sigma$.
	Recalling that $F^{\eta-1}(\vtau)_n\geq\tau_n$, we have
	\begin{alignat*}{9}
		F^\eta(\vtau)_n=F\big(F^{\eta-1}(\vtau)\big)_n
			&&=   \phi\big(&F^{\eta-1}(\vtau)_n&&,F^{\eta-1}(\vtau)_{n+1}&&,\ldots&&,F^{\eta-1}(\vtau)_{d-1}&&\big)\\
			&&=   \phi\big(&F^{\eta-1}(\vtau)_n{\lor}\tau_n&&,\tau_{n+1}&&,\ldots&&,\tau_{d-1}&&\big)\\
			&&\leq\phi\big(&\sigma{\lor}\tau_n&&,\tau_{n+1}&&,\ldots&&,\tau_{d-1}&&\big)=\sigma,
	\end{alignat*}
	as required.

	Next, consider a~limit ordinal $\eta$, and suppose that for all $\zeta<\eta$ we have $F^\zeta(\vtau)_n\leq\sigma$.
	Recall that $F^\eta(\vtau)$ is the supremum of $\set{F^\zeta(\vtau)\mid\zeta<\eta}$.
	This supremum is taken coordinate\=/wise: $F^\eta(\vtau)_n$ is the supremum of $\set{F^\zeta(\vtau)_n\mid\zeta<\eta}$.
	Because $\sigma$ is an~upper bound of this chain, we have $F^\eta(\vtau)_n\leq\sigma$.

	The above in particular shows that $F^{\eta_\infty}(\vtau)_n\leq\sigma$.
	Simultaneously
	\begin{alignat*}{7}
		F^{\eta_\infty}(\vtau)_n=F^{\eta_\infty+1}(\vtau)_n=F\big(F^{\eta_\infty}(\vtau)\big)_n
			&=\phi\big(F^{\eta_\infty}(\vtau)_n            &&,F^{\eta_\infty}(\vtau)_{n+1}&&,\ldots&&,F^{\eta_\infty}(\vtau)_{d-1}&&\big)\\
			&=\phi\big(F^{\eta_\infty}(\vtau)_n{\lor}\tau_n&&,\tau_{n+1}&&,\ldots&&,\tau_{d-1}&&\big).
	\end{alignat*}
	This means that $F^{\eta_\infty}(\vtau)_n$ is a~fixed point of the function \[y_n\mapsto\phi(y_n{\lor}\tau_n,\tau_{n+1},\ldots,\allowbreak\tau_{d-1}),\]
	whose least fixed point is $\sigma$.
	We thus have $F^{\eta_\infty}(\vtau)_n=\sigma$, which finishes the proof.
\end{proof}

\begin{lemma}\label{real-delta-lim}
	For $n\in\set{0,\ldots,d{-}1}$ and $\vtau=(\tau_0,\dots,\tau_{d-1})\in\CS_n$ we have
	\[(\limU{\Delta})(\vtau)_n=\mu y_n.\,\delta(y_n{\lor}\tau_n,\ldots,y_n{\lor}\tau_n,\tau_{n+1},\ldots,\tau_{d-1})\]
	if $n$ is odd and
	\[(\limD{\Delta})(\vtau)_n=\nu y_n.\,\delta(y_n{\land}\tau_n,\ldots, y_n{\land}\tau_n,\tau_{n+1},\ldots,\tau_{d-1})\]
	if $n$ is even.
\end{lemma}

\begin{proof}
	Suppose that $n$ is odd (the other case is symmetric).
	Because we can derive $\limU{\Delta}\dcolon\CS_n\to\CS_{n+1}$ (cf.\@ \cref{ft:typing_delta_arr}), we know that $\Delta$ is ascending on $\CS_n$.
	It is also $n$\=/stable on $\CS_n$, by \cref{ft:typing_delta}.
	Finally, by the definition of $\Delta$ we have
	$\Delta(\vtau)_n=\delta(\tau_n,\ldots,\tau_n,\tau_{n+1},\ldots,\tau_{d-1})$,
	which by \cref{real-lim} implies the thesis.
\end{proof}

We now define inductively functions $\phi_n\from\CV^{d-n-1}\to\CV$ for $n=\set{-1,0,\ldots,d{-}1}$ taking
\begin{align*}
	\phi_{-1}(x_0,\ldots, x_{d-1})&\eqdef\delta(x_0,\ldots, x_{d-1})
\end{align*}
and, for $n\in\set{0,\ldots,d{-}1}$,
\begin{align*}
	\phi_n(x_{n+1},\ldots, x_{d-1})\eqdef\left\{\begin{array}{ll}
		\hspace{-0.3em}\mu y_n.\,\phi_{n-1}\big((y_n{\land}x_{n+1}){\lor}x_{n+2}, x_{n+1}, x_{n+2}, \ldots, x_{d-1}\big)
			&\mbox{if $n$ is odd;}\\
		\hspace{-0.3em}\nu y_n.\,\phi_{n-1}\big((y_n{\lor}x_{n+1}){\land}x_{n+2}, x_{n+1}, x_{n+2}, \ldots, x_{d-1}\big)
			&\mbox{if $n$ is even,}
	\end{array}\right.
\end{align*}
with the convention that $x_d=\top$ and $x_{d+1}=\bot$.

The functions $\phi_n$ are primarily designed so that they correspond to the formulae $\Phi_n$ (cf.~\cref{real-phi} below),
but they also give an~alternative description of $\delta$ (hence also $\Delta$), as stated by the following equality.

\begin{lemma}\label{phixx=delta}
	For $n\in\set{1,\ldots,d{-}1}$ and $\tau_n,\ldots,\tau_{d-1}\in\CV$ we have
	\begin{align*}
		\phi_{n-2}(\tau_n,\tau_n,\tau_{n+1},\ldots,\tau_{d-1})=\delta(\tau_n,\tau_n,\ldots,\tau_n,\tau_{n+1},\ldots,\tau_{d-1}).
	\end{align*}
\end{lemma}

\begin{proof}
	Induction on $n\in\set{1,\ldots,d{-}1}$.
	For $n=1$ the equality holds by the definition of $\phi_{-1}$.
	Let $n\geq 2$.
	To fix attention, assume that $n$ is odd; the case of even $n$ is symmetric.
	Then, by the definition of $\phi_{n-2}$ and by the induction hypothesis, we have
	\begin{align*}
		\phi_{n-2}(\tau_n,\tau_n,\tau_{n+1},\ldots,\tau_{d-1})
			&=\mu y_{n-1}.\,\phi_{n-3}\big((y_{n-1}{\land}\tau_{n}){\lor}\tau_n,\tau_n,\tau_n,\tau_{n+1},\ldots,\tau_{d-1}\big)\\
			&=\mu y_{n-1}.\,\phi_{n-3}\big(\tau_n,\tau_n,\tau_n,\tau_{n+1},\ldots,\tau_{d-1}\big)\\
			&=\phi_{n-3}\big(\tau_n,\tau_n,\tau_n,\tau_{n+1},\ldots,\tau_{d-1}\big)\\
			&=\delta(\tau_n,\tau_n,\ldots,\tau_n,\tau_{n+1},\ldots,\tau_{d-1}).
	\tag*{\qedhere}\end{align*}
\end{proof}

\begin{lemma}\label{real-phi}
	For $n\in\set{0,\ldots,d{-}1}$ and $\vtau=(\tau_0,\dots,\tau_{d-1})\in\CS_n$ we have
	\[\Phi_n(\vtau)_n=\phi_n(\tau_{n+1},\ldots,\allowbreak \tau_{d-1}).\]
\end{lemma}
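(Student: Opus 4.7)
My plan is to proceed by downward induction on $n$ from $d$ down to $1$, at each step expressing $\Phi_n(\vtau)_n$ as an explicit $\mu$\=/calculus formula and matching it with $\phi_n(\tau_1,\ldots,\tau_{n-1})$ using the toolbox lemmas.

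For the base case $n=d$, I would unfold $\Phi_d = \Bid_d\comp\limD{\Delta}$: applying $\Bid_d$ leaves the coordinates $1,\ldots,d-1$ unchanged and replaces the $d$\=/th by $\tau_{d-2}$, and since $\Bid_d(\vtau)\in\CS_d$ (\cref{ft:typing_bid}), \cref{real-delta-lim} gives
\[
\Phi_d(\vtau)_d \;=\; \nu y.\,\delta(\tau_1,\ldots,\tau_{d-1},\tau_{d-2}\land y).
\]
In the parity order $d-1\prec d-2$ (odd precedes even), so $\tau_{d-1}\le\tau_{d-2}$ by ordering of $\vtau\in\CS_d$; together with $d$\=/saturation this gives $\delta(\tau_1,\ldots,\tau_{d-1},\tau_{d-1})=\tau_{d-1}$, so $\tau_{d-1}$ is a fixed point of the map inside the $\nu$. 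Hence the greatest fixed point is $\ge\tau_{d-1}$, and \cref{mu-rebase} lets me substitute $y\mapsto\tau_{d-1}\lor y$ to recover $\phi_d(\tau_1,\ldots,\tau_{d-1})$.

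For the inductive step, assume the claim at $n+1$ and treat odd $n$ (the even case is dual). Recall $\Phi_n = \Bid_n\comp\limU{\Psi_n}$ with $\Psi_n = \limU{\Delta}\comp\Phi_{n+1}\comp\Cut_n$. First I compute $\Psi_n(\bar\tau')_n$ for $\bar\tau'\in\CS_n$: because $\Cut_n$ moves the $(n{+}1)$\=/th coordinate into position $n$, this equals $\Phi_{n+1}((\limU{\Delta})(\bar\tau'))_{n+1}$; by \cref{ft:typing_delta_arr} and the IH this becomes $\phi_{n+1}(\rho_1,\ldots,\rho_n)$ for $\bar\rho=(\limU{\Delta})(\bar\tau')\in\CS_{n+1}$; and $n$\=/stability of $\limU{\Delta}$ together with \cref{real-delta-lim} give $\rho_i=\tau'_i$ for $i\le n-1$ and $\rho_n=\mu z.\,\delta(\tau'_1,\ldots,\tau'_{n-1},\tau'_n\lor z,\ldots,\tau'_n\lor z)$. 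Packaged as a function of $(\tau'_1,\ldots,\tau'_n)$ this extends to a monotone $\psi\colon\CV^n\to\CV$ agreeing with $\Psi_n(\cdot)_n$ on $\CS_n$. Applying \cref{real-lim} to $\Psi_n$ at $\bar\sigma=\Bid_n(\vtau)\in\CS_n$ (with $\sigma_i=\tau_i$ for $i\le n-1$ and $\sigma_n=\tau_{n-2}$), I obtain
\[
\Phi_n(\vtau)_n \;=\; \mu y.\,\phi_{n+1}\bigl(\tau_1,\ldots,\tau_{n-1},M'(y)\bigr),
\]
where $M'(y)=\mu z.\,\delta(\tau_1,\ldots,\tau_{n-1},\tau_{n-2}\lor y\lor z,\ldots,\tau_{n-2}\lor y\lor z)$.

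To finish, I would set $f(a,b):=\phi_{n+2}(\tau_1,\ldots,\tau_{n-1},a,b)$ and $\sigma_\bot=\tau_{n-2}$, $\sigma_\top=\tau_{n-1}$. Unfolding $\phi_{n+1}$ (even index) gives $\phi_{n+1}(\tau_1,\ldots,\tau_{n-1},u)=\nu y'.\,f(u,\sigma_\top\land(u\lor y'))$, while \cref{phixx=delta} gives $f(u,u)=\delta(\tau_1,\ldots,\tau_{n-1},u,\ldots,u)$, so $M'(y)=\mu z.\,f(\sigma_\bot\lor y\lor z,\sigma_\bot\lor y\lor z)$. Thus $\Phi_n(\vtau)_n$ takes exactly the shape of $L$ in \cref{real-reorganize}, while a parallel unfolding of $\phi_n$ gives the shape of $R$. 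The three hypotheses of \cref{real-reorganize} come from membership in $\CS_n$: $\sigma_\bot\le\sigma_\top$ by ordering of $\vtau$; $f(\sigma_\top,\sigma_\top)=\Delta(\vtau)_{n-1}=\tau_{n-1}$ by $n$\=/saturation; and $f(\sigma_\bot,\sigma_\bot)\ge\Delta(\vtau)_{n-2}=\tau_{n-2}$ by monotonicity of $\delta$ (using $\tau_{n-2}\le\tau_{n-1}$) and $n$\=/saturation. Hence \cref{real-reorganize} yields $\Phi_n(\vtau)_n=L=R=\phi_n(\tau_1,\ldots,\tau_{n-1})$. The hardest part will be recognising this nested\=/fixed\=/point shape of $\Phi_n(\vtau)_n$ as the instance $L$ of \cref{real-reorganize} after collapsing repeated arguments via \cref{phixx=delta}, and deriving the structural hypotheses of that lemma from the delicate ordering and saturation invariants encoded in $\CS_n$.
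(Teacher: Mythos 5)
Your overall strategy coincides with the paper's: downward induction on $n$; base case via \cref{real-delta-lim} and \cref{mu-rebase}; inductive step via $\Cut_n$ shifting the $(n{+}1)$-th coordinate, the induction hypothesis, $n$-stability, \cref{real-lim} for $\limU{\Psi_n}$ at $\Bid_n(\vtau)$, and finally \cref{real-reorganize} with $f(a,b)=\phi_{n+2}(\tau_1,\ldots,\tau_{n-1},a,b)$, $\sigma_\bot=\tau_{n-2}$, $\sigma_\top=\tau_{n-1}$. Your base-case argument (exhibiting $\tau_{d-1}$ as a fixed point of $y\mapsto\delta(\tau_1,\ldots,\tau_{d-1},\tau_{d-2}\land y)$) differs from the paper's (which uses $\Phi_d(\vtau)\in\CS_{d+1}$ being ordered to get $\Phi_d(\vtau)_d\geq\tau_{d-1}$) but both are valid.

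There is one gap. You list the first hypothesis of \cref{real-reorganize} as ``$\sigma_\bot\le\sigma_\top$'', but that hypothesis (as used in the proof of \cref{real-reorganize}, where the ``$b$'' in its statement is a typo for $L$) is the stronger $\sigma_\bot\le L\le\sigma_\top$, i.e., $\tau_{n-2}\le\Phi_n(\vtau)_n\le\tau_{n-1}$. Your verification omits this. The missing step is: since $\Phi_n$ is $n$-stable on $\CS_n$, $\Phi_n(\vtau)_{n-2}=\tau_{n-2}$ and $\Phi_n(\vtau)_{n-1}=\tau_{n-1}$; since $\Phi_n(\vtau)\in\CS_{n+1}$ is ordered and $n{-}2\preceq n\preceq n{-}1$ in the parity order, $\tau_{n-2}\le\Phi_n(\vtau)_n\le\tau_{n-1}$ follows. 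Both facts are available from \cref{prop}, so this is readily fillable, but as stated the hypothesis you check is not the one the lemma requires.
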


\begin{proof}
	Induction on $n$.
	Consider first $n=0$.
	Recall that $\Phi_0=\Bid_0\comp\limD\Delta$.
	By definition $\Bid_0(\vtau)_0=\tau_{2}$ and $\Bid_0(\vtau)_i=\tau_i$ for $i>0$,
	so by \cref{real-delta-lim} for $\vtau\in\CS_n$ we have
	\begin{align*}
		\Phi_0(\vtau)_0=\nu y_0.\,\delta(y_0{\land}\tau_2,\tau_1,\tau_2,\ldots,\tau_{d-1}).
	\end{align*}
	Recall from \cref{prop} that $\Phi_0$ is $0$\=/stable on $\CS_0$ and that $\Phi_0(\vtau)\in\CS_{-1}$;
	in particular $\Phi_0(\vtau)$ is ordered, so
	$\Phi_0(\vtau)_0\geq\Phi_0(\vtau)_{1}=\tau_{1}$.
	In such a~situation \cref{mu-rebase} (used with $\sigma_\bot=\tau_1$ and $\sigma_\top=\top$)
	allows us to replace $y_0$ with $y_0{\lor}\tau_1$ in the above expression for $\Phi_0(\vtau)_0$:
	\begin{align*}
		\Phi_0(\vtau)_0&=\nu y_0.\,\delta((y_0{\lor}\tau_1){\land}\tau_2, \tau_1, \tau_2,\ldots,\tau_{d-1})\\
			&=\phi_0(\tau_1,\dots,\tau_{d-1}).
	\end{align*}

	Let now $n>0$ be odd; the case of even $n$ is symmetric.
	Recall that $\Phi_n=\Bid_n\comp\limU{\Psi_n}$ for $\Psi_n=\limU\Delta\comp\Phi_{n-1}\comp\Cut_n$.
	First, from \cref{real-delta-lim}, where we use \cref{phixx=delta} to replace $\delta$ with $\phi_{n-2}$,
	we obtain for $\vtau'=(\tau'_0,\dots,\tau'_{d-1})\in\CS_n$ that
	\begin{align*}
		(\limU{\Delta})(\vtau')_n=\mu z_n.\,\phi_{n-2}(z_n{\lor}\tau'_n,z_n{\lor}\tau'_n,\tau'_{n+1},\ldots,\tau'_{d-1}).
	\end{align*}
	Moreover, $(\limU{\Delta})(\vtau')_i=\tau_i'$ for $i>n$, because $\limU\Delta$ is $n$\=/stable on $\CS_n$ (cf.\@ \cref{ft:typing_delta_arr}).
	Together with the induction hypothesis this gives us
	\begin{align*}
		\Psi_n(\vtau')_n&=(\limU\Delta\comp\Phi_{n+1})(\vtau')_{n-1}\\
		&=\phi_{n-1}\big(\mu z_n.\,\phi_{n-2}(z_n{\lor}\tau'_n,z_n{\lor}\tau'_n,\tau'_{n+1},\ldots,\tau'_{d-1}), \tau'_{n+1},\ldots,\tau'_{d-1}\big),
	\end{align*}
	where the first equality holds because $\Cut_n$ puts the $(n{-}1)$\=/th coordinate on the $n$\=/th coordinate.
	In the proof of \cref{prop} we have shown that $\Psi_n$ is $n$\=/stable on $\CS_n$ and that we can derive $\limU{\Psi_n}\dcolon\CS_n\to\CS_{n-1}$;
	the latter implies that $\Psi_n$ is ascending on $\CS_n$.
	It follows that \cref{real-lim} can be applied.
	Recalling that $\Bid_n(\vtau)_n=\tau_{n+2}$ and $\Bid_n(\vtau)_i=\tau_i$ for $i>n$, it gives us for $\vtau\in\CS_n$
	\begin{align*}
		\Phi_n(\vtau)_n&=\mu y_n.\,\phi_{n-1}\big(\mu z_n.\,\phi_{n-2}(z_n{\lor}y_n{\lor}\tau_{n+2}, z_n{\lor}y_n{\lor}\tau_{n+2}, \tau_{n+1},\ldots,\tau_{d-1}), \tau_{n+1}, \ldots, \tau_{d-1})
	\end{align*}
	(comparing to the previous expression, we have replaced $\tau'_n$ with $y_n{\lor}\tau_{n+2}$ and every other $\tau'_i$ with~$\tau_i$).

	It remains to prove that the above value equals $\phi_n(\tau_{n+1},\ldots,\allowbreak \tau_{d-1})$.
	Denote $\sigma_\bot=\tau_{n+2}$, $\sigma_\top=\tau_{n+1}$,
	\begin{align*}
		&f(x,y)=\phi_{n-2}(x,y,\tau_{n+1},\ldots,\tau_{d-1}),\qquad\mbox{and}\\
		&h(y)=\mu z.\,f(z{\lor}y{\lor}\sigma_\bot,z{\lor}y{\lor}\sigma_\bot).
	\end{align*}
	Then
	\begin{align*}
		\Phi_n(\vtau)_n=\mu y_n.\,\phi_{n-1}\big(h(y_n),\tau_{n+1},\ldots,\tau_{d-1}\big).
	\end{align*}
	Expanding the definition of $\phi_{n-1}$ in the formula above, we obtain
	\begin{align*}
		\Phi_n(\vtau)_n&=\mu y_n.\,\nu y_{n-1}.\,\phi_{n-2}\big((y_{n-1}{\lor}h(y_n)){\land}\tau_{n+1},h(y_n),\tau_{n+1},\ldots,\tau_{d-1})\\
		&=\mu y.\,\nu x.\,f\big((x{\lor}h(y)){\land}\sigma_\top,h(y)\big).
	\end{align*}
	We know that $\Phi_n$ is $n$\=/stable on $\CS_n$ and that $\Phi_n(\vtau)$, as an~element of $\CS_{n+1}$, is ordered;
	we thus have $\tau_{n+2}\leq \Phi_n(\vtau)_n\leq \tau_{n+1}$, that is, $\sigma_\bot\leq \Phi_n(\vtau)_n\leq \sigma_\top$.
	Moreover, because $\vtau$ is $n$\=/saturated (and because $\tau_{n+2}\leq \tau_{n+1}$), we have
	\begin{alignat*}{6}
		f(\sigma_\bot,\sigma_\bot)
		&=   \delta(\tau_{n+2},\ldots,\tau_{n+2}&&,\tau_{n+1}&&,\tau_{n+2},\tau_{n+3},\ldots,\tau_{d-1})\\
		&\geq\delta(\tau_{n+2},\ldots,\tau_{n+2}&&,\tau_{n+2}&&,\tau_{n+2},\tau_{n+3},\ldots,\tau_{d-1})=\tau_{n+2}=\sigma_\bot
	\end{alignat*}
	and
	\begin{align*}
		f(\sigma_\top,\sigma_\top)
		&=   \delta(\tau_{n+1},\ldots,\tau_{n+1},\tau_{n+2},\ldots,\tau_{d-1})=\tau_{n+1}=\sigma_\top.
	\end{align*}
	This means that all the assumptions of \cref{real-reorganize} are satisfied; we can conclude that
	\begin{align*}
		\Phi_n(\vtau)_n&=\mu y.\, \nu x.\, f\big((x{\lor}((y{\land}\sigma_\top){\lor}\sigma_\bot)){\land}\sigma_\top,(y{\land}\sigma_\top){\lor}\sigma_\bot\big)\\
		&=\mu y_n.\,\nu y_{n-1}.\,\phi_{n-2}\big((y_{n-1}{\lor}((y_n{\land}\sigma_\top){\lor}\sigma_\bot)){\land}\sigma_\top,
			(y_n{\land}\sigma_\top){\lor}\sigma_\bot,\sigma_\top,\sigma_\bot,\tau_{n+3},\ldots,\tau_{d-1}\big)\\
		&=\mu y_n.\,\phi_{n-1}\big((y_n{\land}\sigma_\top){\lor}\sigma_\bot,\sigma_\top,\sigma_\bot,\tau_{n+3},\ldots,\tau_{d-1}\big)\\
		&=\mu y_n.\,\phi_{n-1}\big((y_n{\land}\tau_{n+1}){\lor}\tau_{n+2},\tau_{n+1},\tau_{n+2},\tau_{n+3},\ldots,\tau_{d-1}\big)\\
		&=\phi_n(\tau_{n+1},\ldots,\tau_{d-1}).
	\tag*{\qedhere}\end{align*}
\end{proof}

Next, we define functions $\psi_n\from\CV^{d-n-1}\to\CV$ for $n\in\set{-1,0,\ldots,d{-}1}$ taking
\begin{align*}
	\psi_{-1}(x_0,\ldots, x_{d-1})&=\delta(x_0,\ldots, x_{d-1})
\end{align*}
and, for $n\in\set{0,\ldots,d{-}1}$,
\begin{align*}
	\psi_n(x_{n+1},\ldots, x_{d-1})=\left\{\begin{array}{ll}
		\hspace{-0.3em}\mu y_n.\,\psi_{n-1}(y_n, x_{n+1}, \ldots, x_{d-1})
			&\mbox{if $n$ is odd;}\\
		\hspace{-0.3em}\nu y_n.\,\psi_{n-1}(y_n, x_{n+1}, \ldots, x_{d-1})
			&\mbox{if $n$ is even.}
	\end{array}\right.
\end{align*}
Note that the formula for $\psi_n$ is exactly the suffix of our target expression
\[\mu x_{n}.\,\nu x_{n-1}\ldots\mu x_{1}.\,\nu x_{0}.\,\delta(x_0,\ldots,x_{d-1})\]
without the first $d{-}n{-}1$ fixed\=/point operators (these variables are given as arguments to $\psi_n$).

\begin{lemma}\label{real-no-limits}
	For $n\in\set{1,\ldots,d{-}1}$ and $\tau_{n+1},\ldots,\tau_{d-1}\in\CV$ we have
	\begin{align*}
		\mu y_{n}.\,\nu y_{n-1}.\,\phi_{n-2}(y_{n-1}, y_n, \tau_{n+1}, \tau_{n+2},\ldots,\tau_{d-1})=\psi_n(\tau_{n+1},\ldots,\tau_{d-1})
	\end{align*}
	if $n$ is odd and
	\begin{align*}
		\nu y_{n}.\,\mu y_{n-1}.\,\phi_{n-2}(y_{n-1}, y_n, \tau_{n+1}, \tau_{n+2},\ldots,\tau_{d-1})=\psi_n(\tau_{n+1},\ldots,\tau_{d-1})
	\end{align*}
	if $n$ is even.
\end{lemma}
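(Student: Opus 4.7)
The plan is to prove both parity cases simultaneously by induction on $d-n$, i.e., decreasing $n$ from $d-1$ down to $1$. I will describe the odd case explicitly; the even case is the exact dual, swapping $\mu\leftrightarrow\nu$ and $\lor\leftrightarrow\land$ and using the dual versions of the toolbox lemmas from the previous section.

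For the base case $n=d-1$ (odd), we have $\phi_{n+2}=\phi_{d+1}=\delta$ by definition, so the left-hand side equals $\mu y_{d-1}.\,\nu y_d.\,\delta(\tau_1,\ldots,\tau_{d-2},y_{d-1},y_d)$, which is precisely the definition of $\psi_{d-1}(\tau_1,\ldots,\tau_{d-2})$.

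For the inductive step with odd $n\leq d-3$, first unfold $\phi_{n+2}$ one step (since $n+2$ is odd):
\[
\phi_{n+2}(\tau_1,\ldots,\tau_{n-1},y_n,y_{n+1}) = \mu y_{n+2}.\,\phi_{n+3}\bigl(\tau_1,\ldots,\tau_{n-1},y_n,y_{n+1},\,y_n\lor(y_{n+1}\land y_{n+2})\bigr).
\]
Wrapping with $\mu y_n.\nu y_{n+1}$ produces an expression whose last argument of $\phi_{n+3}$ carries the nested clamping $y_n\lor(y_{n+1}\land y_{n+2})$, which must be removed before any IH can be invoked. I will peel it off in two stages. First, \cref{mu-piccolo3} (the identity $\nu x.\mu y.f(x,x\land y)=\nu x.\mu y.f(x,y)$), applied with $x=y_{n+1}$ and $y=y_{n+2}$ and $y_n$ treated as a parameter, eliminates the inner $y_{n+1}\land$, leaving $y_n\lor y_{n+2}$ in that position. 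Second, \cref{mu-piccolo2}, applied with $x=y_n$, $y=y_{n+1}$, $z=y_{n+2}$, eliminates the outer $y_n\lor$, leaving
\[
\mu y_n.\,\nu y_{n+1}.\,\mu y_{n+2}.\,\phi_{n+3}(\tau_1,\ldots,\tau_{n-1},y_n,y_{n+1},y_{n+2}).
\]

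At this point I apply the induction hypothesis at the (even) index $n+1$, with $\tau_n$ instantiated to $y_n$, which reads
\[
\nu y_{n+1}.\,\mu y_{n+2}.\,\phi_{n+3}(\tau_1,\ldots,\tau_{n-1},y_n,y_{n+1},y_{n+2}) \;=\; \psi_{n+1}(\tau_1,\ldots,\tau_{n-1},y_n),
\]
so the whole expression reduces to $\mu y_n.\,\psi_{n+1}(\tau_1,\ldots,\tau_{n-1},y_n)$, which is by definition exactly $\psi_n(\tau_1,\ldots,\tau_{n-1})$. The main obstacle is choosing the right step size for the induction: naively one would invoke the IH at index $n+2$ (same parity), but that would force unfolding $\phi_{n+2}$ twice and produce clampings on the last \emph{two} arguments of $\phi_{n+4}$ involving all four of $y_n,y_{n+1},y_{n+2},y_{n+3}$, which is much harder to disentangle. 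Unfolding only once exposes a single clamped argument whose structure matches exactly the hypotheses of \cref{mu-piccolo2} and \cref{mu-piccolo3}, and correspondingly the IH that fits is the one at index $n+1$ of opposite parity.
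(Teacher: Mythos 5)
Your proposal is correct and essentially identical to the paper's proof: both proceed by downward induction on $n$, unfold $\phi_{n+2}$ once, apply \cref{mu-piccolo3} to remove the inner $y_{n+1}\land$, then \cref{mu-piccolo2} to remove the outer $y_n\lor$, and finally invoke the induction hypothesis at the opposite-parity index $n+1$ (with $\tau_n := y_n$) to recognise $\psi_{n+1}$ and hence $\psi_n$. Your closing remark about why a step of size one is the right choice is a helpful observation but does not change the argument.
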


\begin{proof}
	Induction on $n$.
	For $n=1$ the equality is by the definitions of $\phi_{-1}$ and $\psi_{1}$.
	Let $n>1$.
	To fix attention, assume that $n$ is odd; the case of even $n$ is symmetric.
	Then, by the definition of $\phi_{n-2}$, by \cref{mu-piccolo3}, by \cref{mu-piccolo2}, and by the induction hypothesis, we have
	\begin{align*}
		\mu y_n.\,\nu y_{n-1}.\,\phi_{n-2}(y_{n-1}, y_n, \tau_{n+1},\ldots,\tau_{d-1})\hspace{-7.5em}&\\
		&=\mu y_n.\,\nu y_{n-1}.\,\mu y_{n-2}.\,\phi_{n-3}\big((y_{n-2}{\land}y_{n-1}){\lor}y_n,y_{n-1},y_n,\tau_{n+1},\ldots,\tau_{d-1}\big)\\
		&=\mu y_n.\,\nu y_{n-1}.\,\mu y_{n-2}.\,\phi_{n-3}\big(y_{n-2}{\lor}y_n,y_{n-1},y_n,\tau_{n+1},\ldots,\tau_{d-1}\big)\\
		&=\mu y_n.\,\nu y_{n-1}.\,\mu y_{n-2}.\,\phi_{n-3}\big(y_{n-2},y_{n-1},y_n,\tau_{n+1},\ldots,\tau_{d-1}\big)\\
		&=\mu y_n.\,\psi_{n-1}(y_n,\tau_{n+1},\ldots,\tau_{d-1})\\
		&=\psi_n(\tau_{n+1},\ldots,\tau_{d-1}).
	\tag*{\qedhere}\end{align*}
\end{proof}

\begin{lemma}\label{phi-psi}
	We have $\phi_{d-1}()=\psi_{d-1}()$.
\end{lemma}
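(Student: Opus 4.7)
The plan is to unfold the definitions of $\phi_1$ and $\phi_2$, simplify using the lattice identities and the conventions $x_{-1} = \bot$, $x_0 = \top$, and then apply \cref{mu-piccolo3} together with \cref{real-no-limits}.

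First, I would unfold $\phi_1()$ using its definition with $n=1$ (odd). Since the argument list is empty, only the conventions $x_{-1}=\bot$ and $x_0=\top$ survive in the inner expression, giving
\begin{align*}
\phi_1() = \mu y_1.\,\phi_2\big(x_{-1}{\lor}(x_0{\land}y_1)\big) = \mu y_1.\,\phi_2(y_1).
\end{align*}
Next, I would unfold $\phi_2(y_1)$ using its definition with $n=2$ (even). Again using $x_0 = \top$, this yields
\begin{align*}
\phi_2(y_1) = \nu y_2.\,\phi_3\big(y_1,\, y_1{\lor}y_2\big),
\end{align*}
so that
\begin{align*}
\phi_1() = \mu y_1.\,\nu y_2.\,\phi_3(y_1,\,y_1{\lor}y_2).
\end{align*}

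Now I would invoke \cref{mu-piccolo3} (the second identity, applied to the monotone function $f(x,y) = \phi_3(x,y)$) to eliminate the extra $y_1$ in the second argument:
\begin{align*}
\mu y_1.\,\nu y_2.\,\phi_3(y_1,\,y_1{\lor}y_2) = \mu y_1.\,\nu y_2.\,\phi_3(y_1, y_2).
\end{align*}
Finally, I would apply \cref{real-no-limits} with $n=1$ (odd) and empty prefix $\tau_1,\ldots,\tau_{n-1}$, which gives exactly
\begin{align*}
\mu y_1.\,\nu y_2.\,\phi_3(y_1,y_2) = \psi_1().
\end{align*}
Chaining the equalities completes the proof.

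No step here is a real obstacle, since the previous lemmata have done all the heavy lifting; the only thing to be careful about is bookkeeping with the conventions $x_{-1}=\bot$ and $x_0=\top$ when unrolling $\phi_1$ and $\phi_2$, and checking that \cref{mu-piccolo3,real-no-limits} apply in the form required (which they do, because $\phi_3$ is monotone as a composition of monotone operations).
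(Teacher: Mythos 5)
Your proof is correct and follows essentially the same route as the paper: unfold $\phi_1$ and $\phi_2$ via the conventions $x_{-1}=\bot$, $x_0=\top$ to obtain $\mu y_1.\,\nu y_2.\,\phi_3(y_1,y_1{\lor}y_2)$, then apply \cref{mu-piccolo3} and \cref{real-no-limits}. The paper simply compresses the unfolding into a single displayed chain of equalities, while you spell out each step.
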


\begin{proof}
	Recall our convention that $x_{d}=\top$ and $x_{d+1}=\bot$.
	Thus the thesis follows directly from \cref{mu-piccolo3,real-no-limits}:
	\begin{align*}
		\phi_{d-1}()&=\mu y_{d-1}.\,\nu y_{d-2}.\,\phi_{d-3}(y_{d-2}{\lor}y_{d-1}, y_{d-1})\\
		&=\mu y_{d-1}.\,\nu y_{d-2}.\,\phi_{d-3}(y_{d-2},y_{d-1})=\psi_{d-1}().
	\tag*{\qedhere}\end{align*}
\end{proof}

\cref{real-phi,phi-psi} imply that $\Phi_{d-1}(\bot)_{d-1}=\psi_{d-1}()$
(recall that $\bot \in\CS_{d-1}$ by \cref{ft:bot-in-cs});
this finishes the proof of \cref{prop:realisation}.

\section{Measurability}\label{sec:measurability}

In our final argument (\cref{sec:tarski}) we want to describe profiles $\tau\in \CV$ by measures of sets like $\{t\in\trees_\Sigma\mid \tau(t)=R\}$ for some fixed $R$.
For this to make sense, we need to prove that such sets are measurable for all profiles $\tau$ in consideration;
this is the goal of this section.

Recall that the set of profiles is $\CV=\big(\trees_\Sigma\to\powerset(Q)\big)$.
Define $\CR$ as the set $\powerset\big(Q\times\set{0,\ldots,\allowbreak d{-}1}\big)$.
Instead of $\CV^d$ we prefer to consider $\CW\eqdef(\trees_\Sigma\to\CR)$.

There is a~canonical bijection between $\CV^d$ and $\CW$:
a~tuple $\vtau=(\tau_0,\ldots,\tau_{d-1})\in\CV^d$ is in bijection with a~function in $\CW$ that maps a~tree $t$ to $\big\{(q,i)\mid q\in\tau_i(t)\big\}$.
The coordinate\=/wise order $\leq$ on $\CV^d$ is then mapped to the order $\leq$ on $\CW$ given for $\sigma,\sigma'\in\CW$
by $\sigma\leq\sigma'$ if $\sigma(t)\subseteq\sigma'(t)$ for all trees $t\in\trees_\Sigma$.
We can thus identify these two spaces, $\CV^d$ and $\CW$.
We denote the elements of $\CW$ by normal letters $\tau$, $\sigma$, as they are no longer treated as vectors.

The measurability argument given in this section follows that from Gogacz et al.\@ \cite[page~117]{michalewski_measure_final}---%
the proof that $E^\alpha_v$ are all measurable and their measure converges to the measure of their union.
This is a~variant of the technique used by Lusin and Sierpiński~\cite{lusin_sierpinski} to prove that the set\=/theoretic \emph{Souslin operation} $\CA$ preserves measurability.

Our proof (except of \cref{our-local}, talking about particular basic functions) works for a~general space of the form $\CW=(\trees_\Sigma\to\CR)$,
with $(\CR,\subseteq)$ being a finite complete lattice; in our case $\CR=\powerset\big(Q\times\{0,\ldots,d{-}1\}\big)$, and $\subseteq$ is inclusion of sets.

\begin{definition}
	We say that $\tau\in\CW$ is a \emph{random variable} if for every $R\in\CR$ the set of trees $\tau^{-1}(\{R\})$ is measurable.

	We say that a~partial function $F\from\CW\parto\CW$ is \emph{local} on a~set $A\subseteq\CW$ if $A\subseteq\dom(F)$ and for all $\tau, \tau'\in A$ and all trees $t\in\trees_\Sigma$,
	if $\tau$ and $\tau'$ agree on all the subtrees of $t$ then also $F(\tau)$ and $F(\tau')$ agree on all the subtrees of $t$.

	We say that $F\from\CW\parto\CW$ \emph{preserves measurability} on $A\subseteq\CW$ if $A\subseteq\dom(F)$ and whenever $\tau\in A$ is a~random variable then so is $F(\tau)$.
\end{definition}

Because the measure on $\trees_\Sigma$ is the product measure, where the probability in every node is defined in the same way, we have the following \lcnamecref{subtree-measurable}.

\begin{fact}\label{subtree-measurable}
	Let $u\in\set{\dL,\dR}^\ast$.
	If a~set $S\subseteq\trees_\Sigma$ is measurable, then $\set{t\in\trees_\Sigma\mid t.u\in S}$ is measurable as well.
	Moreover, the measure of these two sets is the same.
\qed\end{fact}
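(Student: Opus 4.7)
The plan is to exhibit the map $\pi_u\from\trees_\Sigma\to\trees_\Sigma$ defined by $\pi_u(t)\eqdef t.u$ as measurable and measure\=/preserving, which immediately yields both assertions: the preimage $\pi_u^{-1}(S)=\set{t\in\trees_\Sigma\mid t.u\in S}$ is measurable and $\prob(\pi_u^{-1}(S))=\prob(S)$.

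First I would verify the statement on basic cylinders. Given a~finite partial function $f\from\dom(f)\to\Sigma$ with $\dom(f)\subset\set{\dL,\dR}^\ast$, define $f'$ by $\dom(f')=u\cdot\dom(f)$ and $f'(u\cdot v)=f(v)$. A~direct unfolding of the definition $t.u(w)=t(uw)$ gives $\pi_u^{-1}(U_f)=U_{f'}$, which is a~basic open set, hence measurable, and $\prob(U_{f'})=|\Sigma|^{-|\dom(f')|}=|\Sigma|^{-|\dom(f)|}=\prob(U_f)$.

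Next I would lift this from cylinders to all Borel sets by the standard monotone class / $\pi$\=/$\lambda$ argument: the family
\begin{align*}
\CM\eqdef\big\{S\subseteq\trees_\Sigma\mid \pi_u^{-1}(S)\text{ is Borel and }\prob(\pi_u^{-1}(S))=\prob(S)\big\}
\end{align*}
is easily checked to be closed under complements and countable disjoint unions (using that $\pi_u^{-1}$ commutes with these operations and that $\prob$ is countably additive), hence is a~Dynkin system. Since $\CM$ contains the $\pi$\=/system of basic cylinders $U_f$, which generates the Borel $\sigma$\=/algebra on $\trees_\Sigma$, the $\pi$\=/$\lambda$ theorem gives that $\CM$ contains every Borel set.

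Finally I would extend from Borel to $\prob$\=/measurable sets using completeness of $\prob$. Any $\prob$\=/measurable $S$ can be written as $S=B\cup N$ where $B$ is Borel and $N$ is contained in a~Borel null set $N'$. Then $\pi_u^{-1}(S)=\pi_u^{-1}(B)\cup\pi_u^{-1}(N)$, with $\pi_u^{-1}(B)$ Borel of the same measure as $B$, and $\pi_u^{-1}(N)\subseteq\pi_u^{-1}(N')$ which is Borel of measure zero; by completeness of $\prob$ the whole preimage is measurable and has the same measure as $S$. The only slightly subtle step is the Dynkin\=/class argument, but this is completely standard for product (Bernoulli) measures on countable product spaces, which is precisely the structure of $\trees_\Sigma$ under the coin\=/flipping measure via the homeomorphism with $\Sigma^{\set{\dL,\dR}^\ast}$.
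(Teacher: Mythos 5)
Your proof is correct. The paper gives no proof of this fact at all---it is stated with an immediate \qed and a one-sentence informal justification that the coin-flipping measure is a product measure which ``is defined in the same way'' at every node---so there is nothing to compare against; your argument simply fills in the standard rigorous version of that justification, verifying the claim on cylinders $U_f$, extending to Borel sets via the $\pi$\=/$\lambda$ theorem, and then to all $\prob$\=/measurable sets via completeness. The only cosmetic detail worth adding is that the family of basic cylinders is a $\pi$\=/system only after throwing in $\emptyset$ (since $U_f\cap U_g=\emptyset$ when $f$ and $g$ disagree), which is harmless as $\emptyset$ trivially lies in your class $\CM$.
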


\begin{lemma}\label{our-local}
	All the basic functions $\Delta$, $\Bid_n$, $\Cut_n$ (when treated as functions on $\CW$) are local and preserve measurability on the whole $\CW$.
\end{lemma}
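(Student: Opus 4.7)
The plan is to handle the three families of functions separately, using the fact that in the $\CW$-representation each of them is described by a simple pointwise rule: $\Bid_n$ and $\Cut_n$ are determined locally at each tree, while $\Delta$ depends only on the root letter and on the two immediate subtrees.

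First I would observe that $\Bid_n$ and $\Cut_n$, viewed through the bijection $\CV^d \cong \CW$, act by a fixed function $b\from\powerset(\CX)\to\powerset(\CX)$ applied pointwise, i.e.\ $\Bid_n(\tau)(t) = b(\tau(t))$ (and analogously for $\Cut_n$). Locality is then immediate, since the value at $t$ uses only $\tau(t)$, which agrees with $\tau'(t)$ whenever $\tau$ and $\tau'$ agree on all subtrees of $t$. For measurability, note that $\Bid_n(\tau)^{-1}(\{R\}) = \bigcup\bigl\{\tau^{-1}(\{R'\}) \mid R'\in\powerset(\CX),\ b(R') = R\bigr\}$, a finite union of measurable sets; the same decomposition works for $\Cut_n$.

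Next I would compute $\Delta$ in the $\CW$-representation. Unfolding the definitions in \cref{sec:spaces} gives $\Delta(\tau)(t) = \delta_{t(\epsilon)}\bigl(\tau(t.\dL), \tau(t.\dR)\bigr)$, because the quantities $R^{\dL}_t(\vtau)$ and $R^{\dR}_t(\vtau)$ in the definition of $\delta$ are precisely the values $\tau(t.\dL), \tau(t.\dR)\in\CR$ under the identification of $\CV^d$ with $\CW$. For locality, fix $t$ and suppose $\tau, \tau'$ agree on all subtrees of $t$. For any subtree $t.u$ of $t$, both $(t.u).\dL = t.u\dL$ and $(t.u).\dR = t.u\dR$ are again subtrees of $t$, so $\tau$ and $\tau'$ coincide there, and the root letter $(t.u)(\epsilon) = t(u)$ is the same; hence $\Delta(\tau)(t.u) = \Delta(\tau')(t.u)$.

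Finally, for measurability preservation of $\Delta$, I would write
\begin{align*}
\Delta(\tau)^{-1}(\{R\}) = \bigcup_{\substack{a\in\Sigma,\,R^\dL, R^\dR\in\CR \\ \delta_a(R^\dL,R^\dR)=R}} \bigl\{t \mid t(\epsilon)=a\bigr\}\cap\bigl\{t \mid \tau(t.\dL)=R^\dL\bigr\}\cap\bigl\{t \mid \tau(t.\dR)=R^\dR\bigr\}.
\end{align*}
The set $\{t \mid t(\epsilon) = a\}$ is clopen. Each set $\{t \mid \tau(t.u) = R^u\}$ for $u\in\{\dL,\dR\}$ is the preimage of the measurable set $\tau^{-1}(\{R^u\})$ under the map $t\mapsto t.u$, and is therefore measurable by \cref{subtree-measurable}. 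The union is finite, so $\Delta(\tau)^{-1}(\{R\})$ is measurable. No step here looks genuinely hard; the only care needed is in verifying the explicit formula $\Delta(\tau)(t) = \delta_{t(\epsilon)}(\tau(t.\dL), \tau(t.\dR))$ through the $\CV^d \leftrightarrow \CW$ identification, and in invoking \cref{subtree-measurable} to pass from measurability of events about $\tau(s)$ to measurability of events about $\tau(t.u)$ as a function of $t$.
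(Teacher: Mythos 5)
Your proof takes essentially the same approach as the paper's: the paper likewise handles $\Bid_n$ and $\Cut_n$ together (under the name \emph{superficial} functions, i.e.\ pointwise lifts of a map $\CR\to\CR$) and treats $\Delta$ separately by decomposing $\big(\Delta(\tau)\big)^{-1}(\{R\})$ into a finite union of root-letter events intersected with subtree events, invoking \cref{subtree-measurable}. One small notational slip: your formula should read $\Delta(\tau)(t)=\Delta_{t(\epsilon)}\big(\tau(t.\dL),\tau(t.\dR)\big)$ for a suitable map $\Delta_a\from\CR\times\CR\to\CR$ (as the paper later introduces), not $\delta_{t(\epsilon)}$, since $\delta_a$ has codomain $\powerset(Q)$ rather than $\CR$; the argument is unaffected.
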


\begin{proof}
	We say that a~function $F\from\CW\to\CW$ is \emph{superficial}
	if there exists a~function $F_\CR\from\CR\to\CR$ such that $F(\tau)(t)=F_\CR(\tau(t))$ for all $\tau\in\CW$ and $t\in\trees_\Sigma$.
	Note that the functions $\Bid_n$ and $\Cut_n$ are superficial.
	Clearly every superficial function $F\from\CW\to\CW$ is local on $\CW$;
	it also preserves measurability on $\CW$ because for any $\tau\in\CW$ and $R\in\CR$ the set
	$$\big(F(\tau)\big)^{-1}\big(\set{R}\big)=\big\{t\in\trees_\Sigma\mid F(\tau)(t)=R\big\}=\big\{t\in\trees_\Sigma\mid F_\CR(\tau(t))=R\big\}$$
	is a~finite union of sets of the form $\tau^{-1}\big(\{R'\}\big)$ ranging over certain $R'\in\CR$.

	Now consider $\Delta$.
	Clearly, the value of $\Delta(\tau)(t)$ depends only on $t(\epsilon)\in \Sigma$ and the values of $\tau(t.\dL)$ and $\tau(t.\dR)$, so $\Delta$ is also local on $\CW$.
	Next, observe that for any $\tau\in\CW$ and $R\in \CR$ the set
	\[\big(\Delta(\tau)\big)^{-1}\big(\{R\}\big)=\big\{t\in\trees_\Sigma\mid \Delta(\tau)(t)=R\big\},\]
	is a~finite union of the sets of the form
	\[\big\{t\in\trees_\Sigma\mid t(\epsilon)=a\land\tau(t.\dL)=R^\dL\land \tau(t.\dR)=R^\dR\big\},\]
	ranging over certain $a$, $R^\dL$, $R^\dR$.
	These sets are measurable by \cref{subtree-measurable}.
\end{proof}

Let $\zeta$ be an~ordinal and let $(x_\zeta)_{\zeta<\eta}$ be a~sequence of real numbers.
We say that this sequence \emph{converges} to $y\in\R$ (or that $y$ is the \emph{limit} of $(x_\zeta)_{\zeta<\eta}$)
if for every $\varepsilon>0$ there exists $\zeta_0<\eta$ such that for every $\zeta$ satisfying $\zeta_0\leq \zeta < \eta$ we have $|x_\zeta-y|<\epsilon$.

Consider an~ordinal $\eta$ and take a~sequence of random variables $(\tau_\zeta)_{\zeta<\eta}$ from $\CW$.
We say that this sequence \emph{converges in measure} to a~random variable $\sigma\in\CW$
if for every set $R\in\CR$ the sequence of numbers $(x_\zeta)_{\zeta<\eta}$ defined as $x_\zeta\eqdef\prob\big(\tau_\zeta^{-1}(\{R\})\big)$
converges to the number $\prob\big(\sigma^{-1}(\{R\})\big)$.

\subsection{Taking limits}

The crucial difficulty and potential source of non\=/measurability may come from the operations $\limU{F}$ and $\limD{F}$.
This is taken care of by the following proposition.

\begin{proposition}\label{pro:all-measurable-lim}
	Let $F\from\CW\parto\CW$ be a~partial function that is ether ascending or descending on $A\subseteq\CW$,
	local on $A$, and preserves measurability on $A$.
	Assume that $\tau\in A$ is a~random variable.
	Then all $F^\eta(\tau)$ from the definition of $F^\infty(\tau)$ are random variables.
	Moreover, for every limit ordinal $\eta$ the sequence $(F^\zeta(\tau))_{\zeta<\eta}$ converges in measure to $F^\eta(\tau)$.
\end{proposition}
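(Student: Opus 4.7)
The plan is a transfinite induction on $\eta$, establishing the random-variable claim for every ordinal $\eta$ and the convergence-in-measure claim at every limit ordinal. The base case $\eta=0$ is immediate since $F^0(\tau) = \tau$ is a random variable by hypothesis. For a successor $\eta = \zeta+1$, the induction hypothesis gives that $F^\zeta(\tau)$ is a random variable; since $F$ is ascending or descending on $A$, the value $F^\zeta(\tau)$ belongs to $A$, and by preservation of measurability $F^{\zeta+1}(\tau) = F(F^\zeta(\tau))$ is again a random variable.

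The limit case is the heart of the argument; I would handle the ascending case, the descending one being dual. It suffices to show that for every $S \subseteq \CX$ the set $B^S \eqdef \{t : F^\eta(\tau)(t) \supseteq S\}$ is measurable and that $\prob(B^S_\zeta) \to \prob(B^S)$ as $\zeta \to \eta$, where $B^S_\zeta \eqdef \{t : F^\zeta(\tau)(t) \supseteq S\}$. Granted this, each $F^\eta(\tau)^{-1}(\{R\}) = B^R \setminus \bigcup_{S \supsetneq R} B^S$ is a finite Boolean combination of measurable sets (using finiteness of $\CX$), hence measurable; and convergence of $\prob(F^\zeta(\tau)^{-1}(\{R\}))$ follows by inclusion-exclusion, applied to the monotone intersections $\bigcap_i B^{S_i}_\zeta$, together with continuity of $\prob$ on increasing chains.

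By the induction hypothesis each $B^S_\zeta$ is measurable; the family $(B^S_\zeta)_{\zeta<\eta}$ is increasing by monotonicity of $F$, and since $\powerset(\CX)$ is finite the values $F^\zeta(\tau)(t)$ must stabilize as $\zeta$ grows, giving the identity $B^S = \bigcup_{\zeta<\eta} B^S_\zeta$. The probabilities $p^S_\zeta \eqdef \prob(B^S_\zeta)$ form a non-decreasing bounded chain, converging to a value $p^{*,S}$. Picking an increasing sequence $\zeta_n < \eta$ with $p^S_{\zeta_n} \to p^{*,S}$, one sets $\zeta^* \eqdef \sup_n \zeta_n$ and $B^{*,S} \eqdef \bigcup_n B^S_{\zeta_n}$, which is measurable with $\prob(B^{*,S}) = p^{*,S}$. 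If $\eta$ has countable cofinality, we may take $(\zeta_n)$ cofinal in $\eta$, so $B^{*,S} = B^S$ by monotonicity and both measurability and convergence follow at once.

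The main obstacle is the uncountable-cofinality case, in which $\zeta^* < \eta$. Here every $B^S_\zeta$ for $\zeta^* \leq \zeta < \eta$ differs from $B^S_{\zeta^*}$ by a null measurable set (since the measures have already stabilized at $p^{*,S}$), but $B^S \setminus B^S_{\zeta^*}$ is a~priori an uncountable union of such null sets, which need not be null on its own. To overcome this I would follow the Lusin-Sierpi\'{n}ski technique as adapted by Gogacz et al.: exploit completeness of the coin-flipping measure on $\trees_\Sigma$, the separability of its measure algebra (it is isomorphic to the Lebesgue product measure on Cantor space), and the locality of $F$ (which constrains how $F^\zeta(\tau)$ at a node $t$ depends only on $\tau$ restricted to the subtree rooted at $t$) to show that the essentially countable approximation $B^{*,S}$ captures $B^S$ up to a measurable null set. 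Completeness of $\prob$ then delivers measurability of $B^S$, and the convergence $p^S_\zeta \to \prob(B^S) = p^{*,S}$ follows immediately.
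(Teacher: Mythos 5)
Your proposal identifies the correct obstacle (uncountable cofinality, where an uncountable union of null symmetric differences is not automatically null) and names the right technique (the Lusin--Sierpi\'nski argument as adapted by Gogacz et al.), which is exactly what the paper does. However, the proposal leaves the decisive step as a black box, and the reduction you set up does not actually support that step.

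The gap is in the final paragraph. You reduce to the sets $B^S = \{t : F^\eta(\tau)(t)\supseteq S\}$, which are indexed only by $S\subseteq\CX$ and speak only about the \emph{root} of the tree. You then assert that locality, completeness, and separability of the measure algebra let you conclude $B^S \setminus B^{*,S}$ is contained in a measurable null set. But locality does not make a statement about the root in isolation: the definition says that $F(\tau)$ and $F(\tau')$ agree on all subtrees of $t_0$ \emph{provided} $\tau$ and $\tau'$ agree on all subtrees of $t_0$. To exploit this, the paper introduces a family of sets indexed by pairs $(x,u)$ with $x\in\CX$ and $u$ ranging over all the (countably many) nodes, namely $E^\eta_{x,u}=\{t : x\in F^\eta(\tau)(t.u)\}$, and proves a dedicated propagation claim (\cref{cl:locality}): if $F^{\zeta+1}(\tau)$ and $F^\zeta(\tau)$ agree on \emph{all} subtrees of $t_0$, then $F^\eta(\tau)(t_0)$ stays put for every $\eta\geq\zeta$. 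The null set $U$ is then a countable union over $(x,u)$ of $E^{\zeta+1}_{x,u}\setminus E^\zeta_{x,u}$, each of measure zero once the measures have stabilized; and $t_0\notin U$ implies $F^{\omega_1}(\tau)(t_0)=F^\zeta(\tau)(t_0)$ by the propagation claim. Your family $B^S$ has no $u$-index, so it cannot detect a change in a proper subtree, and the locality argument simply does not apply to it.

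Two smaller remarks. First, the appeal to separability of the measure algebra is not what drives the argument in the paper; what is actually used is the more elementary combination of (i) countability of the set of nodes $\{\dL,\dR\}^*$, (ii) finiteness of $\CX$, and (iii) monotone convergence of measures stabilizing at a countable ordinal (\cref{sup-countable}). Invoking separability suggests a different mechanism that you have not worked out, and it is not clear it substitutes for the locality propagation claim. Second, the paper first proves stabilization of the whole sequence at $\omega_1$ (\cref{omega1-is-enough}), which reduces the entire uncountable case to the single ordinal $\omega_1$; without that reduction, your ``uncountable cofinality'' case would have to be handled for arbitrary $\eta$, which is messier. I would suggest proving the propagation claim explicitly, parametrizing the measurable events by nodes $u$ as well as by elements of $\CX$, and establishing the $\omega_1$-stabilization before handling the limit.
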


The rest of the subsection is devoted to a~proof of this \lcnamecref{pro:all-measurable-lim}.

Before starting, let us recall two known facts.
The first of them holds because the supremum for ordinals is defined as the union.

\begin{fact}\label{sup-countable}
	The supremum of a~countable set of countable ordinals is countable.
\qed\end{fact}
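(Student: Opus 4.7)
The plan is to unfold the set-theoretic definition of ordinals and reduce the statement to the standard fact that a countable union of countable sets is countable. Recall that each ordinal $\alpha$ is, by the von Neumann construction, the set of all ordinals strictly smaller than $\alpha$, so its cardinality is the cardinality of this set. Recall also that for any set $A$ of ordinals, $\sup A$ coincides with $\bigcup A$: indeed, $\bigcup A$ is itself a transitive set of ordinals, hence an ordinal, and it is clearly the least upper bound of $A$.

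Given this, I would argue as follows. Let $A = \{\alpha_i \mid i \in I\}$ be a countable set of countable ordinals, so $I$ is countable and each $\alpha_i$ is countable. Then
\[
\sup A \;=\; \bigcup_{i \in I} \alpha_i,
\]
viewed as a set of ordinals. Each $\alpha_i$ is countable by hypothesis, and there are countably many $i$, so this is a countable union of countable sets. By the classical theorem (a consequence of countable choice, which we tacitly assume throughout) that a countable union of countable sets is countable, $\sup A$ has at most countably many elements, and therefore, being an ordinal, is a countable ordinal.

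There is essentially no obstacle here: the only subtle point is that the argument uses (a weak form of) the axiom of choice to pick enumerations of the $\alpha_i$ simultaneously, but this is standard and unproblematic in the \textbf{ZFC} setting implicitly adopted in the paper. The statement is an elementary lemma invoked to control the length of the transfinite iteration $F^\eta(\tau)$ in \cref{pro:all-measurable-lim}, where it ensures that the stabilisation ordinal $\eta_\infty$ can be reached within a countable index set when the ambient lattice has suitable cardinality.
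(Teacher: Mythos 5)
Your proof is correct and takes exactly the route the paper gestures at: the paper's one-line justification preceding \cref{sup-countable} is precisely that ``the supremum for ordinals is defined as the union,'' and you fill in the remaining standard step that a countable union of countable sets is countable (modulo countable choice). No discrepancy to report.
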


The next lemma is a~classical result in measure theory.
It is usually stated for sequences indexed by naturals numbers (see, e.g., Oxtoby~\cite[Theorem~3.17]{oxtoby}), but we need it in a~slightly more general form.

\begin{lemma}\label{countable-union}
	Let $\eta>0$ be a~countable ordinal and let $(V_\zeta)_{\zeta<\eta}$ be an~increasing sequence of measurable sets.
	Then the (countable) union $V\eqdef\bigcup_{\zeta<\eta} V_\zeta$ is measurable and the measures of $V_\zeta$ converge to the measure of $V$.

	Similarly for a~decreasing sequence and its intersection.
\end{lemma}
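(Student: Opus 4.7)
The plan is to reduce the statement to the classical $\omega$-indexed monotone convergence theorem by extracting a cofinal subsequence. The measurability of $V$ is the easy part: since $\eta$ is a countable ordinal, the index set $\{\zeta : \zeta<\eta\}$ is countable, so $V=\bigcup_{\zeta<\eta}V_\zeta$ is a countable union of measurable sets and hence measurable.

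For the convergence of measures, I would split into two cases depending on whether $\eta$ is a successor or a limit. The successor case $\eta=\zeta^\ast+1$ is immediate, since then $V=V_{\zeta^\ast}$ by monotonicity. For the limit case, the key classical fact I would invoke is that every countable limit ordinal admits a strictly increasing cofinal sequence $(\xi_n)_{n<\omega}$ of order type $\omega$. By monotonicity of $(V_\zeta)_{\zeta<\eta}$ and cofinality, one has $V=\bigcup_{n<\omega}V_{\xi_n}$, which is an increasing $\omega$-indexed union of measurable sets. Applying the classical result (cited from Oxtoby~\cite{oxtoby}) then yields $\prob(V_{\xi_n})\to\prob(V)$ as $n\to\infty$.

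The last step is to pass from this $\omega$-indexed convergence back to the ordinal-indexed notion of convergence as defined in the paper. Given $\varepsilon>0$, I would choose $n$ with $\prob(V)-\prob(V_{\xi_n})<\varepsilon$ and set $\zeta_0:=\xi_n<\eta$. For any $\zeta$ with $\zeta_0\leq\zeta<\eta$, monotonicity gives $V_{\zeta_0}\subseteq V_\zeta\subseteq V$, and hence $|\prob(V_\zeta)-\prob(V)|<\varepsilon$, as required. The decreasing case follows by a symmetric argument, or by applying the increasing case to the complements $V_\zeta^c$ inside an ambient set of finite measure (here simply $\trees_\Sigma$, whose $\prob$-measure is $1$).

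I do not expect any genuine obstacle: the statement is essentially classical, with the only mild subtlety being the choice of a countable cofinal sequence. This is needed because the original result in Oxtoby is formulated only for $\omega$-sequences, whereas here the index is an arbitrary countable ordinal. Once the cofinal sequence is in place, the monotonicity sandwich takes care of filling in the intermediate terms for free.
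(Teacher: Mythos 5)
Your proof is correct and follows essentially the same route as the paper: split on successor vs.\ limit, extract a cofinal $\omega$-sequence in the limit case, invoke the classical $\omega$-indexed monotone convergence theorem from Oxtoby, and transfer back to the ordinal-indexed sequence via monotonicity. The only cosmetic difference is that the paper constructs the cofinal subsequence explicitly from an enumeration, while you cite the standard cofinality fact; your sandwich step and the passage to the decreasing case via complements match the paper's treatment.
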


\begin{proof}
	Consider the case of an~increasing sequence $(V_\zeta)_{\zeta<\eta}$; the part about a~descending sequence can be deduced by switching to complements.
	If $\eta$ is a~successor ordinal then $V=V_{\eta-1}$ and the \lcnamecref{countable-union} follows.

	Assume that $\eta$ is a~limit ordinal.
	Because $\eta$ is countable, there are only countably many ordinals $\zeta<\eta$, which means that we can organise them into a~sequence indexed by natural numbers.
	Out of this sequence let us select only those elements that are greater than all elements before them.
	This way we obtain an~increasing sequence of ordinals $(\zeta_i)_{i\in\N}$ such that $\eta=\sup_{i\in\N}\zeta_i$.

	We then have $V=\bigcup_{i\in\N} V_{\zeta_i}$ with $(V_{\zeta_i})_{i\in\N}$ being an~increasing sequence of sets,
	indexed by natural numbers.
	Thus, continuity of measure in its usual formulation~\cite[Theorem~3.17]{oxtoby} implies that the measure of $V$ is the limit of the measures of $V_{\zeta_i}$.
	But this implies that the measures of $(V_{\zeta})_{\zeta<\eta}$ also converge to the measure of $V$.
\end{proof}

\cref{countable-union} implies the following standard extension to random variables.

\begin{lemma}\label{lem:countable-limit}
	Let $\eta>0$ be a~countable ordinal and let $(\tau_\zeta)_{\zeta<\eta}$ be a~sequence of random variables from $\CW$.
	If the sequence is monotonic then its limit $\sigma$ is a~random variable and the sequence $(\tau_\zeta)_{\zeta<\eta}$ converges in measure to $\sigma$.
\end{lemma}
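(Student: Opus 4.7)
The plan is to reduce this statement to the set-theoretic \cref{countable-union} via the finite structure of $\CX$. By symmetry (switching to complements in $\CX$) it suffices to treat an increasing sequence $(\tau_\zeta)_{\zeta<\eta}$; its limit $\sigma$ satisfies $\sigma(t)=\bigcup_{\zeta<\eta}\tau_\zeta(t)$ for every tree $t$, because the supremum in $\CW$ is taken pointwise and pointwise this is a union of a monotone chain in the finite lattice $\powerset(\CX)$.

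First I would pass from exact values to upward closures. For each $R\subseteq\CX$ set
\[
V_\zeta^R \eqdef \{t\in\trees_\Sigma\mid \tau_\zeta(t)\supseteq R\} = \bigcup_{R'\supseteq R}\tau_\zeta^{-1}(\{R'\}),
\]
which is measurable as a finite union of measurable sets. Because $\tau_\zeta(t)$ is nondecreasing in $\zeta$, $(V_\zeta^R)_{\zeta<\eta}$ is an increasing sequence, and because $\CX$ is finite, $\sigma(t)\supseteq R$ iff $\tau_\zeta(t)\supseteq R$ for some $\zeta<\eta$. Hence
\[
V^R \eqdef \{t\mid \sigma(t)\supseteq R\} = \bigcup_{\zeta<\eta} V_\zeta^R.
\]
Applying \cref{countable-union} (this is where I use that $\eta$ is countable) yields that $V^R$ is measurable and that $\prob(V_\zeta^R)\to \prob(V^R)$.

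Next I would recover the exact preimages. Since $\CX$ is finite, for every $\tau\in\CW$ and every $R\subseteq\CX$ one has
\[
\{t\mid \tau(t)=R\} = \{t\mid \tau(t)\supseteq R\}\setminus \bigcup_{x\in\CX\setminus R}\{t\mid \tau(t)\supseteq R\cup\{x\}\},
\]
so $\sigma^{-1}(\{R\})$ is a finite Boolean combination of measurable sets $V^{R'}$ and is thus measurable; i.e.\ $\sigma$ is a random variable. Measure-wise, by inclusion--exclusion (Möbius inversion on the lattice of subsets of $\CX$),
\[
\prob(\tau^{-1}(\{R\})) = \sum_{R'\supseteq R} (-1)^{|R'\setminus R|}\,\prob(\{t\mid \tau(t)\supseteq R'\}),
\]
so $\prob(\tau_\zeta^{-1}(\{R\}))$ is a fixed finite $\Z$-linear combination of the quantities $\prob(V_\zeta^{R'})$. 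Each of these converges to $\prob(V^{R'})$, so by linearity $\prob(\tau_\zeta^{-1}(\{R\}))\to \prob(\sigma^{-1}(\{R\}))$, which is convergence in measure as defined.

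The only subtle point is the appeal to \cref{countable-union}, which requires countability of $\eta$; this hypothesis is explicit in the statement, so no work is needed here. Finiteness of $\CX$ is used twice and is essential: it makes $\supseteq R$ a measurable condition and allows the exact-value preimage to be expressed by finitely many Boolean operations, so no uncountable intersections ever appear. The descending case is symmetric, using $\{t\mid \tau(t)\subseteq R\}$ in place of $V_\zeta^R$ and the intersection clause of \cref{countable-union}.
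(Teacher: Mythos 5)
Your proof is correct and follows essentially the same route as the paper: reduce to \cref{countable-union} by passing to the measurable upward\-/closure sets $\{t\mid\tau_\zeta(t)\supseteq R\}$, then recover the exact preimages $\tau^{-1}(\{R\})$. The only cosmetic difference is in the last step: the paper uses the two\-/set decomposition $\tau_\zeta^{-1}(\{R\})=U_\zeta\setminus V_\zeta$ with $U_\zeta=\{t\mid\tau_\zeta(t)\supseteq R\}$ and $V_\zeta=\{t\mid\tau_\zeta(t)\supsetneq R\}$ (so only two applications of \cref{countable-union} and a single subtraction of measures are needed), whereas you invoke full Möbius inversion over $\powerset(\CX)$; both are fine, the paper's version is just a bit leaner. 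One small remark on your closing sentence: for a decreasing sequence, the sets $\{t\mid\tau_\zeta(t)\subseteq R\}$ are actually \emph{increasing} in $\zeta$, so you would still use the union clause there; the cleaner symmetry is the one you mention first (replace $\tau_\zeta$ by $\CX\setminus\tau_\zeta$ and reduce to the increasing case), or alternatively keep the $\supseteq R$ sets, which become decreasing, and use the intersection clause.
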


\begin{proof}
	Consider the case of an~increasing sequence.
	Let $\sigma$ be the limit from the statement and take $R\in\CR$.
	We need to prove that $\sigma^{-1}(\{R\})$ is a~measurable subset of $\trees_\Sigma$ and that it is the limit of~$\tau_\zeta^{-1}(\{R\})$.

	For $\zeta<\eta$ let $U_\zeta\eqdef \{t\in\trees_\Sigma\mid \tau_\zeta(t)\supseteq R\}$ and $V_\zeta\eqdef\{t\in\trees_\Sigma\mid \tau_\zeta(t)\supsetneq R\}$.
	Clearly all the sets $U_\zeta$, $V_\zeta$ are measurable, as Boolean combinations of sets of the form $\tau_\zeta^{-1}(\{R'\})$.
	Moreover, $\tau_\zeta^{-1}(\{R\})=U_\zeta \setminus V_\zeta$.

	Because the sequence $(\tau_\zeta)_{\zeta<\eta}$ is increasing, we know that also the two sequences of sets, $(U_\zeta)_{\zeta<\eta}$ and $(V_\zeta)_{\zeta<\eta}$, are increasing.
	Take $U\eqdef \bigcup_{\zeta<\eta} U_\zeta$ and $V\eqdef \bigcup_{\zeta<\eta} V_\zeta$.
	Apply \cref{countable-union} to both these sets to notice that $U$ and $V$ are both measurable and
	their measures are limits of the measures of $(U_\zeta)_{\zeta<\eta}$ and $(V_\zeta)_{\zeta<\eta}$.
	It follows that the measures of $\big(\tau_\zeta^{-1}(\set{R})\big)_{\zeta<\eta}$ converge to the measure of~$U\setminus V$.

	It remains to prove that
	\[\sigma^{-1}(\{R\})=U \setminus V.\]
	To this end, consider a~tree $t\in\trees_\Sigma$.
	Since the order in $\CW$ works separately on every tree, the sequence $\big(\tau_\zeta(t)\big)_{\zeta<\eta}$ is increasing, and its limit equals $\sigma(t)$.
	But this is an~increasing sequence in a~finite lattice $(\CR,\subseteq)$, so it must stabilise from some moment on:
	for some $\zeta_0<\eta$ we have $\sigma(t)=\tau_\zeta(t)$ for all $\zeta$ such that $\zeta_0\leq\zeta<\eta$.
	It follows that $\sigma(t)\supseteq R$ if and only if $t\in U$.
	Similarly, $\sigma(t)\supsetneq R$ if and only if $t\in V$.
	Therefore, $t\in U\setminus V$ if and only if $\sigma(t)=R$.
\end{proof}

We are now ready to start a~proof of \cref{pro:all-measurable-lim}.
We consider the case when $F$ is ascending on $A$; the dual case when $F$ is descending on $A$ is entirely symmetric.
We fix $\tau\in A$.

\begin{claim}\label{cl:locality}
	Let $t_0\in\trees_\Sigma$, and let $\zeta$ be an~ordinal.
	If $F^{\zeta+1}(\tau)$ and $F^\zeta(\tau)$ agree on all the subtrees of $t_0$,
	then $F^\eta(\tau)$ and $F^\zeta(\tau)$ agree on all the subtrees of $t_0$ (in particular $F^\eta(\tau)(t_0)=F^\zeta(\tau)(t_0)$), for all $\eta\geq\zeta$.
\end{claim}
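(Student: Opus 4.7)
The plan is a straightforward transfinite induction on $\eta\geq\zeta$, with the hypothesis on $F^{\zeta+1}$ vs.\ $F^\zeta$ serving as the one-step ``stability at height $\zeta$'' that then propagates upward. I fix the ascending case (the descending case is symmetric).

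For the base case $\eta=\zeta$ there is nothing to prove. For a successor step $\eta=\eta'+1$ with $\eta'\geq\zeta$, I would first apply the inductive hypothesis to conclude that $F^{\eta'}(\tau)$ and $F^{\zeta}(\tau)$ coincide on every subtree of $t_0$. Then I invoke locality of $F$ on $A$ (all iterates lie in $A$ because $F$ is ascending on $A$): since $F^{\eta'}(\tau)$ and $F^{\zeta}(\tau)$ agree on all subtrees of $t_0$, so do $F(F^{\eta'}(\tau))=F^\eta(\tau)$ and $F(F^{\zeta}(\tau))=F^{\zeta+1}(\tau)$. Combining this with the standing assumption that $F^{\zeta+1}(\tau)$ and $F^\zeta(\tau)$ agree on all subtrees of $t_0$ yields the desired agreement of $F^\eta(\tau)$ and $F^\zeta(\tau)$.

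For a limit step $\eta>\zeta$, by definition $F^\eta(\tau)=\lim_{\xi<\eta}F^\xi(\tau)$, which in $\CW$ is computed coordinate-wise, i.e.\ for each tree $t$, $F^\eta(\tau)(t)=\bigcup_{\xi<\eta}F^\xi(\tau)(t)$ (using that the sequence is increasing in the ascending case). Fix a subtree $t$ of $t_0$. For indices $\xi$ with $\zeta\leq\xi<\eta$ the induction hypothesis gives $F^\xi(\tau)(t)=F^\zeta(\tau)(t)$, while for $\xi<\zeta$ monotonicity of the iteration yields $F^\xi(\tau)(t)\subseteq F^\zeta(\tau)(t)$. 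Hence the union equals $F^\zeta(\tau)(t)$, which is what we wanted.

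The only genuine content is the successor step, and within it the single appeal to locality of $F$ on $A$; everything else is bookkeeping with the definition of $F^\eta$ and monotonicity. I do not anticipate any real obstacle, but one should be explicit that all iterates $F^\xi(\tau)$ stay inside $A$ (so that locality is applicable), which is guaranteed by the assumption that $F$ is ascending on $A$ together with $\tau\in A$, exactly as already used in the proof of \cref{lem:lim-as-iteration}.
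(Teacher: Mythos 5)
Your proof is correct and follows the same transfinite induction as the paper: the successor step uses the inductive hypothesis together with locality of $F$ on $A$ to pass from $F^{\eta'}(\tau)$ to $F^\eta(\tau)=F(F^{\eta'}(\tau))$, and the limit step observes that the tail of the increasing sequence stabilises. Your extra remark about indices $\xi<\zeta$ in the limit case is harmless bookkeeping the paper leaves implicit.
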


\begin{proof}
	Induction on $\eta\geq\zeta$.
	For $\eta=\zeta$ the statement is trivial.
	For successor ordinals $\eta\geq\zeta+1$ we have
	\begin{align*}
		F^\eta(\tau)(t)&=F(F^{\eta-1}(\tau))(t)=F(F^\zeta(\tau))(t)
		=F^{\zeta+1}(\tau)(t)=F^\zeta(\tau)(t)
	\end{align*}
	for all the subtrees $t$ of $t_0$, where the second equality is by the induction hypothesis and by locality of $F$ on $A$.
	If $\eta>\zeta$ is a~limit ordinal, then for every subtree $t$ of $t_0$ we have that $F^\eta(\tau)(t)$
	is the limit of the increasing sequence $\big(F^{\eta'}(\tau)(t)\big)_{\eta'<\eta}$ that stabilises on $F^\zeta(\tau)(t)$ by the induction hypothesis;
	thus this limit equals $F^\zeta(\tau)(t)$.
\end{proof}

Below, we write $\omega_1$ for the least uncountable ordinal.

\begin{claim}\label{omega1-is-enough}
	We have $F^\eta(\tau)=F^{\omega_1}(\tau)$ for all $\eta\geq\omega_1$.
\end{claim}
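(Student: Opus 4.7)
The plan is to first establish the single-step equality $F^{\omega_1+1}(\tau) = F^{\omega_1}(\tau)$; the general statement $F^\eta(\tau) = F^{\omega_1}(\tau)$ for $\eta \geq \omega_1$ then follows by a routine transfinite induction on $\eta$: at a successor ordinal we use $F^{\eta+1}(\tau) = F(F^\eta(\tau)) = F(F^{\omega_1}(\tau)) = F^{\omega_1+1}(\tau) = F^{\omega_1}(\tau)$ together with the induction hypothesis, and at a limit ordinal the increasing chain is eventually constant with value $F^{\omega_1}(\tau)$, so its limit agrees.

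For the key equality I would fix an arbitrary $t \in \trees_\Sigma$ and locate a countable ordinal $\gamma < \omega_1$ at which $F^\gamma(\tau)$ already coincides with $F^{\omega_1}(\tau)$ on every subtree of $t$. For each node $v \in \set{\dL,\dR}^\ast$ the sequence $(F^\eta(\tau)(t.v))_{\eta<\omega_1}$ is weakly increasing in the finite lattice $\powerset(\CX)$, so its supremum $F^{\omega_1}(\tau)(t.v)$ is a finite subset of $\CX$ and each of its elements enters the sequence at some countable ordinal. The maximum $\gamma_v$ of these finitely many ordinals is itself countable and satisfies $F^{\gamma_v}(\tau)(t.v) = F^{\omega_1}(\tau)(t.v)$. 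Since there are only countably many nodes $v$, \cref{sup-countable} gives $\gamma \eqdef \sup_v \gamma_v < \omega_1$, and monotonicity in $\eta$ upgrades the equality to $F^\gamma(\tau)(t.v) = F^{\omega_1}(\tau)(t.v)$ for every $v$ simultaneously; that is, $F^\gamma(\tau)$ and $F^{\omega_1}(\tau)$ agree on all subtrees of $t$.

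Locality of $F$ on $A$ (both $F^\gamma(\tau)$ and $F^{\omega_1}(\tau)$ belong to $A$ by transfinite induction, since $F$ is ascending on $A$ and $A$ is chain\=/complete) then yields $F(F^\gamma(\tau))(t) = F(F^{\omega_1}(\tau))(t)$, i.e., $F^{\gamma+1}(\tau)(t) = F^{\omega_1+1}(\tau)(t)$. Since $\gamma+1 \leq \omega_1$, monotonicity forces $F^{\omega_1+1}(\tau)(t) \leq F^{\omega_1}(\tau)(t)$, while the ascending property supplies the reverse inequality, so equality holds at $t$. As $t$ was arbitrary, $F^{\omega_1+1}(\tau) = F^{\omega_1}(\tau)$. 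I expect the main subtlety to be the countable\=/cofinality bookkeeping: exploiting finiteness of $\CX$ to bound each subtree's stabilisation ordinal by some $\gamma_v < \omega_1$, and then using \cref{sup-countable} to conclude that the countable supremum $\gamma = \sup_v \gamma_v$ remains countable. Once these two points are secured, locality and ascendance do the rest.
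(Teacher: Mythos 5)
Your proof is correct and uses the same key machinery as the paper's: finiteness of $\CX$ and countability of nodes to locate a countable stabilisation ordinal (via \cref{sup-countable}), and locality of $F$ to propagate agreement on subtrees to agreement at the root. The paper factors the final transfinite step into \cref{cl:locality}, which is proven just before, whereas you close with the ascendance-plus-monotonicity squeeze and an inline transfinite induction from the single equality $F^{\omega_1+1}(\tau)=F^{\omega_1}(\tau)$; these are equivalent routes to the same conclusion.
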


\begin{proof}
	Consider a~single tree $t\in\trees_\Sigma$.
	We prove that the value $F^\eta(\tau)(t)$ is constant for all $\eta\geq \zeta_t$ for some countable ordinal $\zeta_t$,
	hence even more for all $\eta\geq\omega_1$.

	Consider
	\[Z_t\eqdef \big\{\eta<\omega_1\mid \exists v\in\{\dL,\dR\}^\ast.\ F^{\eta+1}(\tau)(t.v)\neq F^\eta(\tau)(t.v)\big\}\]
	and take $\zeta_t\eqdef\sup Z_t+1$.
	Recall that $\big(F^\eta(\tau)\big)_{\eta<\omega_1}$ is an~increasing sequence.
	Since $\CR$ is a~finite lattice, the value of $F^\eta(\tau)(t.v)\in\CR$ on a~single tree $t.v\in\trees_\Sigma$ can grow only finitely many times.
	Because moreover there are only countably many nodes $v\in\{\dL,\dR\}^\ast$, the set $Z_t$ is countable;
	in consequence, $\zeta_t<\omega_1$ by \cref{sup-countable}.
	Simultaneously $\eta<\zeta_t$ for all $\eta\in Z_t$, which means that $F^{\zeta_t+1}(\tau)$ and $F^{\zeta_t}(\tau)$ agree on all the subtrees of $t$.
	By \cref{cl:locality} we then have $F^\eta(\tau)(t)=F^{\zeta_t}(\tau)(t)$ for all $\eta\geq\zeta_t$,
	which finishes the proof.
\end{proof}

Because $F$ preserves measurability on $A$ and so does taking countable limits (see \cref{lem:countable-limit}), we have the following \lcnamecref{cl:countable-measurable}.

\begin{fact}\label{cl:countable-measurable}
	All the elements $F^\eta(\tau)$ for $\eta<\omega_1$ are random variables.
	Moreover, for every limit ordinal $\eta<\omega_1$ the sequence $(F^\zeta(\tau))_{\zeta<\eta}$ converges in measure to $F^\eta(\tau)$.
\qed\end{fact}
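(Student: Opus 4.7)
The proof proposal is a straightforward transfinite induction on $\eta<\omega_1$, splitting into the three standard cases.

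For the base case $\eta=0$, the element $F^0(\tau)=\tau$ is a random variable by the assumption of the proposition. For the successor case $\eta=\zeta+1$, I would invoke the induction hypothesis to conclude that $F^\zeta(\tau)$ is a random variable, and then use the fact that $F^\zeta(\tau)\in A$ (which follows from $F$ being ascending or descending on $A$, so that iterates of $F$ starting from $\tau\in A$ remain in $A$) together with the hypothesis that $F$ preserves measurability on $A$. This gives that $F^{\zeta+1}(\tau)=F(F^\zeta(\tau))$ is a random variable.

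For the limit case $\eta<\omega_1$, I would observe that $\eta$ is countable and that the sequence $(F^\zeta(\tau))_{\zeta<\eta}$ is monotonic: it is increasing if $F$ is ascending on $A$ and decreasing if $F$ is descending on $A$ (this is ensured by how $F^\zeta(\tau)$ is defined in terms of iterations of an ascending or descending partial function). By the induction hypothesis, every element of this sequence is a random variable. Then I would directly apply \cref{lem:countable-limit} to conclude that the limit of the sequence is a random variable and that the sequence converges in measure to that limit. Finally, by the definition of $F^\eta(\tau)$ as the coordinate-wise limit of $(F^\zeta(\tau))_{\zeta<\eta}$, this limit coincides with $F^\eta(\tau)$, which simultaneously delivers both conclusions required at this stage.

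The two clauses of the statement are established together by the induction: the first clause is a direct consequence of all three cases, and the second clause is delivered by the limit case via \cref{lem:countable-limit}. There is no real obstacle here; the entire force of the argument has already been packaged into the previous lemmata, and the only subtlety worth double-checking is that the iterates $F^\zeta(\tau)$ remain in the domain $A$ throughout (so that the hypotheses of \enquote{preserves measurability on $A$} and of \cref{lem:countable-limit} may be invoked at each step), which is immediate from the definition of ascending/descending on $A$ together with $A$ being chain-complete.
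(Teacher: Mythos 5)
Your proof is correct and takes essentially the same route as the paper, which merely asserts in one sentence before the fact that the result follows from $F$ preserving measurability on $A$ and from \cref{lem:countable-limit}; you have spelled out the transfinite induction (base, successor, countable-limit cases) that this sentence implicitly packages, including the observation that iterates remain in $A$.
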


Note that \cref{cl:countable-measurable} gives us the thesis of \cref{pro:all-measurable-lim} for $\eta<\omega_1$.
Because $F^\eta(\tau)=F^{\omega_1}(\tau)$ for all $\eta\geq\omega_1$ (cf.\@ \cref{omega1-is-enough}),
to finish the proof of \cref{pro:all-measurable-lim} it remains to prove that that $F^{\omega_1}(\tau)$ is a~random variable
and that the sequence $(F^\zeta(\tau))_{\zeta<\omega_1}$ converges in measure to $F^{\omega_1}(\tau)$.

For all $\eta<\omega_1$, $R\in\CR$, and $u\in\{\dL,\dR\}^\ast$ we define
\[E^\eta_{R,u}\eqdef \big\{t\in\trees_\Sigma\mid R\subseteq F^\eta(\tau)(t.u)\big\}.\]

\begin{claim}
	For each $R\in \CR$ and $u \in\{\dL,\dR\}^\ast$ the sequence of sets $\big(E^\eta_{R,u}\big)_{\eta<\omega_1}$ is increasing in the subset order.
	Moreover, all these sets are measurable.
\end{claim}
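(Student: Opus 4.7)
The plan is to dispatch both assertions directly from results already established in the excerpt, without any new measure-theoretic work.

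For the monotonicity claim, I would first recall that by assumption $F$ is ascending on $A$ and $\tau\in A$, so by construction of the sequence $(F^\eta(\tau))_{\eta<\omega_1}$ (successor step applies $F$ to something $\leq$ its image, limit step takes supremum of a chain) the sequence is increasing in the coordinate-wise order of $\CW$. Since the order on $\CW$ works pointwise, for every subtree $t.u$ and every ordinals $\eta\leq\eta'<\omega_1$ we have $F^\eta(\tau)(t.u)\subseteq F^{\eta'}(\tau)(t.u)$. Hence if $x\in F^\eta(\tau)(t.u)$ then $x\in F^{\eta'}(\tau)(t.u)$, giving $E^\eta_{x,u}\subseteq E^{\eta'}_{x,u}$.

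For measurability, I would argue in two steps. By \cref{cl:countable-measurable}, every $F^\eta(\tau)$ with $\eta<\omega_1$ is a random variable, i.e.\ $(F^\eta(\tau))^{-1}(\{R\})$ is a measurable subset of $\trees_\Sigma$ for each $R\subseteq\CX$. The set $S^\eta_x\eqdef\{t\in\trees_\Sigma\mid x\in F^\eta(\tau)(t)\}$ is the finite union $\bigcup_{R\ni x}(F^\eta(\tau))^{-1}(\{R\})$, hence measurable. Then $E^\eta_{x,u}=\{t\in\trees_\Sigma\mid t.u\in S^\eta_x\}$, which is measurable by \cref{subtree-measurable}.

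Neither step is a real obstacle — the work was already done in \cref{cl:countable-measurable}, which handed us random-variable status for each $F^\eta(\tau)$ with $\eta<\omega_1$, and in \cref{subtree-measurable}, which lets us shift from a subtree to the whole tree while preserving measurability. The only point to be careful about is to use the countable-ordinal range $\eta<\omega_1$ (so that \cref{cl:countable-measurable} applies) and not to invoke any property of $F^{\omega_1}(\tau)$, whose measurability is precisely what the surrounding argument is about to deduce from the present claim together with \cref{countable-union}.
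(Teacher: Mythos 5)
Your proposal is correct and follows the same approach as the paper: the monotonicity is immediate from the fact that $(F^\eta(\tau))_{\eta<\omega_1}$ is an increasing sequence in the pointwise order on $\CW$, and measurability comes from combining the random-variable status established in \cref{cl:countable-measurable}, the finite union over $R\ni x$, and \cref{subtree-measurable} to shift from $t$ to $t.u$. The minor difference in the order of applying the finite union and the subtree shift is immaterial, and your explicit warning not to use $F^{\omega_1}(\tau)$ shows you understand the structure of the surrounding argument.
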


\begin{proof}
	The fact that the sequence is increasing is obvious.
	To see that each of the sets $E^\eta_{R,u}$ is measurable it is enough to recall that $F^\eta(\tau)$ is a~random variable,
	to use \cref{subtree-measurable} for shifting from $t$ to $t.u$,
	and to observe that $R\subseteq\tau(t)$ if and only if $t$ belongs to the (finite) union of the sets $\tau^{-1}(\{R'\})$ for $R'\supseteq R$.
\end{proof}

Fix $R\in \CR$ and $u \in\{\dL,\dR\}^\ast$ and consider the measures $m_\eta\eqdef \prob\big(E^\eta_{R,u}\big)$ for $\eta<\omega_1$.
It is a~nondecreasing sequence of real numbers in $[0,1]$ indexed by ordinals.
Whenever $m_{\eta+1}>m_{\eta}$ then there exists a~rational number in the interval $(m_\eta,m_{\eta+1})$.
Thus, the set
\begin{align*}
	Z_{R,u}\eqdef\set{\eta<\omega_1\mid m_{\eta+1}\neq m_\eta}
\end{align*}
is countable.
Let $\zeta_{R,u}\eqdef\sup Z_{R,u}+1$.
We have $\zeta_{R,u}<\omega_1$ by \cref{sup-countable} and
$\prob\big(E^\eta_{R,u}\big)=\prob\big(E^{\zeta_{R,u}}_{R,u}\big)$ for all $\eta\geq \zeta_{R,u}$ by definition.
Let $\zeta$ be the supremum of all the (countably many) numbers $\zeta_{R,u}$ for $R\in\CR$ and $u\in\{\dL,\dR\}^\ast$.
Using again \cref{sup-countable} we see that $\zeta<\omega_1$.
Let
\[U\eqdef \bigcup_{R\in\CR, u\in\{\dL,\dR\}^\ast} \left(E^{\zeta+1}_{R,u}\setminus E^\zeta_{R,u}\right).\]

Notice that for every $R\in\CR$ and $u \in\{\dL,\dR\}^\ast$ we have $\prob\big(E^{\zeta+1}_{R,u}\big)=\prob\big(E^{\zeta}_{R,u}\big)$ and, therefore, the set $E^{\zeta+1}_{R,u}\setminus E^\zeta_{R,u}$ has measure $0$.
This means that also $U$ is of measure $0$, as a~countable union of such sets.

\begin{claim}
	If $t_0\in\trees_\Sigma$ is such that $F^{\omega_1}(\tau)(t_0)\neq F^\zeta(\tau)(t_0)$ then $t_0\in U$.
\end{claim}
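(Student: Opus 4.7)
The plan is to argue by contrapositive: assuming $t_0 \notin U$, I will show that $F^{\omega_1}(\tau)(t_0) = F^\zeta(\tau)(t_0)$. The key observation is that $t_0 \notin U$ says precisely that for every $x \in \CX$ and every $u \in \{\dL,\dR\}^\ast$, we cannot have $t_0 \in E^{\zeta+1}_{x,u} \setminus E^\zeta_{x,u}$; equivalently, whenever $x \in F^{\zeta+1}(\tau)(t_0.u)$ we already have $x \in F^\zeta(\tau)(t_0.u)$.

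Since $F$ is ascending on $A$, the sequence $\big(F^\eta(\tau)\big)_\eta$ is increasing coordinate\=/wise, so on every subtree $t_0.u$ we have $F^\zeta(\tau)(t_0.u) \subseteq F^{\zeta+1}(\tau)(t_0.u)$; combined with the observation above, this gives equality $F^{\zeta+1}(\tau)(t_0.u) = F^\zeta(\tau)(t_0.u)$ for every $u$. In other words, $F^{\zeta+1}(\tau)$ and $F^\zeta(\tau)$ agree on all subtrees of $t_0$.

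At this point I invoke \cref{cl:locality} with the ordinal $\zeta$ and the tree $t_0$: its hypothesis is exactly what we just verified, so the conclusion is that $F^\eta(\tau)$ and $F^\zeta(\tau)$ agree on all subtrees of $t_0$ for every $\eta \geq \zeta$. Taking $\eta = \omega_1$ (which is $\geq \zeta$ since $\zeta < \omega_1$ by construction) and evaluating at $u = \epsilon$ yields $F^{\omega_1}(\tau)(t_0) = F^\zeta(\tau)(t_0)$, contradicting the hypothesis of the claim. Therefore $t_0 \in U$.

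The proof is short and essentially a bookkeeping argument; there is no real obstacle, since all the hard work has been offloaded to \cref{cl:locality} (transporting local agreement through transfinite iteration) and to the construction of $\zeta$ as a common stabilisation point. The only point that requires a moment of attention is the direction of the inclusion ``$x \in F^{\zeta+1}(\tau)(t_0.u) \Rightarrow x \in F^\zeta(\tau)(t_0.u)$'' coming from $t_0 \notin U$, and the symmetric direction coming for free from $F$ being ascending; these together upgrade ``same measure'' of the sets $E^\eta_{x,u}$ into pointwise equality $F^{\zeta+1}(\tau)(t_0.u) = F^\zeta(\tau)(t_0.u)$ at the specific tree $t_0$, which is what \cref{cl:locality} needs as input.
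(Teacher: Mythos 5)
Your proof is correct and follows essentially the same route as the paper's: unwinding $t_0 \notin U$ into pointwise equality of $F^\zeta(\tau)$ and $F^{\zeta+1}(\tau)$ on all subtrees of $t_0$, then invoking \cref{cl:locality} to propagate this to $\eta = \omega_1$. The only difference is cosmetic (contrapositive vs.\ contradiction, and you spell out that the increasing-sequence fact supplies the reverse inclusion needed to upgrade the definition of $U$ to a full equality of sets, which the paper leaves implicit).
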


\begin{proof}
	Consider such a~tree and assume for the sake of contradiction that $t_0\notin U$.
	By the definition of $U$ it means that for every $R\in\CR$ and $u\in\{\dL,\dR\}^\ast$ we have $t_0\in E^{\zeta}_{R,u}\Leftrightarrow t_0\in E^{\zeta+1}_{R,u}$.
	This implies that whenever $t$ is a~subtree of $t_0$ (i.e., whenever $t=t_0.u$ for some $u\in\set{\dL,\dR}^\ast$) then $F^\zeta(\tau)(t)=F^{\zeta+1}(\tau)(t)$.
	Thus, by \cref{cl:locality} we know that for every $\eta\geq\zeta$ we have $F^\eta(\tau)(t_0)=F^\zeta(\tau)(t_0)$.
	Due to $\zeta<\omega_1$ this implies that $F^{\zeta}(\tau)(t_0)=F^{\omega_1}(\tau)(t_0)$, a~contradiction.
\end{proof}

For every set $R\in\CR$, denoting $V_R=\big(F^\zeta(\tau)\big)^{-1}\big(\{R\}\big)$, we thus have
\begin{align*}
	V_R\subseteq\big(F^{\omega_1}(\tau)\big)^{-1}\big(\{R\}\big)\subseteq V_R\cup U.
\end{align*}
Because $V_R$ is measurable (by \cref{cl:countable-measurable}) and $U$ has measure $0$,
the set $\big(F^{\eta_\infty}(\tau)\big)^{-1}\big(\{R\}\big)$ lies between two sets of the same measure; it is thus measurable.

Actually, the above argument gives us more: the measure of the sets $\big(F^\eta(\tau)\big)^{-1}\big(\{R\}\big)$ for all $\eta\geq\zeta$ is the same.
Due to $\zeta<\omega_1$ this means that sequence of measures of the sets $\big(F^\eta(\tau)\big)^{-1}\big(\{R\}\big)$ stabilises at (hence its limit is equal to)
the measure of $\big(F^\zeta(\tau)\big)^{-1}\big(\{R\}\big)$, which equals the measure of $\big(F^{\omega_1}(\tau)\big)^{-1}\big(\{R\}\big)$.
The above holds for every set $R\in\CR$ which by definition means that $\big(F^\eta(\tau)\big)_{\eta<\omega_1}$ converges in measure to $F^{\omega_1}(\tau)$.
This finishes the proof of \cref{pro:all-measurable-lim}.

\subsection{Unary \texorpdfstring{$\mu$}{mu}-calculus preserves measurability}

Let $\mathfrak{F}$ be a~signature and assume that $(\CW,{\leq})$ is a~structure over $\mathfrak{F}$ in such a~way
that for every function name $H\in\mathfrak{F}$ the function $H_\CW\from \CW\to\CW$ is local and preserves measurability on the whole $\CW$.
Our goal is to prove that each formula of unary $\mu$\=/calculus is also local and preserves measurability on a~certain domain,
as stated in \cref{all-local,pro:all-measurable}.

As shown in \cref{our-local}, the above assumptions are met by our structure with functions $\Delta$, $\Bid_n$, and $\Cut_n$,
but results of this subsection are not specific to this structure.

\begin{lemma}\label{all-local}
	If we can derive $F\dcolon A\to B$ for a~formula $F$ of unary $\mu$\=/calculus then the partial function $F\from\CW\parto\CW$ is local on $A$.
\end{lemma}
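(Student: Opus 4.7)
The plan is to prove the statement by induction on the structure of the formula $F$, tracking the derivation $F \dcolon A \to B$. The base case $F = H \in \mathfrak{F}$ is immediate from the assumption that every basic function is local on the whole $\CW$, hence on $A$. For composition $F = F_1 \comp F_2$, a derivation of $F \dcolon A \to B$ must go through some intermediate $C$ with $F_1 \dcolon A \to C$ and $F_2 \dcolon C \to B$; by the induction hypothesis, $F_1$ is local on $A$ and $F_2$ is local on $C$. Given $\tau, \tau' \in A$ agreeing on all subtrees of $t$, locality of $F_1$ yields $F_1(\tau), F_1(\tau') \in C$ agreeing on all subtrees of $t$, and then locality of $F_2$ finishes the step.

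The interesting cases are $F = \limU{F'}$ and $F = \limD{F'}$; I treat the ascending case, the descending one being symmetric. A derivation of $\limU{F'} \dcolon A \to B$ requires $F' \dcolon A \to A$ with $F'$ ascending on $A$. By \cref{lem:lim-as-iteration}, for every $\tau \in A$ we have $(\limU{F'})(\tau) = (F')^\infty(\tau)$, where $(F')^\eta(\tau) \in A$ for all ordinals $\eta$. Fix $\tau, \tau' \in A$ agreeing on all subtrees of some tree $t$. I will show by transfinite induction on $\eta$ that $(F')^\eta(\tau)$ and $(F')^\eta(\tau')$ agree on all subtrees of $t$. The base case is the hypothesis. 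For a successor ordinal, the induction hypothesis together with locality of $F'$ on $A$ (inherited by the outer induction hypothesis) gives the conclusion. For a limit ordinal $\eta$, recall that the order on $\CW$ works tree-by-tree, so at every subtree $s$ of $t$ the value $(F')^\eta(\tau)(s)$ is the union of the increasing chain $\{(F')^\zeta(\tau)(s) \mid \zeta < \eta\}$; since each term of this chain coincides with the corresponding term for $\tau'$ by induction, the unions coincide as well. Applying this at $\eta = \eta_\infty$ (which may depend on $\tau$ versus $\tau'$ a priori, but it suffices to go sufficiently far for both) yields locality of $\limU{F'}$ on $A$.

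The only mild subtlety, and the one I would flag as the main conceptual point, is that the stabilisation ordinal $\eta_\infty$ in the definition of $F^\infty$ can differ for $\tau$ and $\tau'$; this is harmless because the transfinite induction gives agreement at every ordinal simultaneously, so one can take any common ordinal past both stabilisation points (for instance the maximum, or simply any large enough ordinal) and read off the agreement of $(\limU{F'})(\tau)$ with $(\limU{F'})(\tau')$ on all subtrees of $t$. The symmetric case of $\limD{F'}$ uses infima in the limit step instead of suprema, and is otherwise identical. This completes the induction and thus the proof.
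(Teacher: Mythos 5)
Your proof is correct and follows essentially the same route as the paper's: structural induction, with the $\limU{F'}$ case handled by transfinite induction on $\eta$ showing $(F')^\eta(\tau)$ and $(F')^\eta(\tau')$ agree on subtrees of $t$, using locality for successor steps and the tree-by-tree nature of the order for limit steps. Your explicit remark about the stabilisation ordinal possibly differing for $\tau$ and $\tau'$ is a point the paper glosses over; you resolve it correctly (agreement holds at \emph{every} ordinal, so take any common ordinal past both stabilisation points).
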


\begin{proof}
	Induction on the structure of the formula.
	Base functions $H\in\mathfrak{F}$ are local on the whole~$\CW$ by the assumption.
	For $F\comp G$ we directly use the induction hypothesis.
	Consider a~formula of the form $\limU{F}$.
	We use \cref{lem:lim-as-iteration} to see that $\limU{F}(\tau)=F^\infty(\tau)$ for $\tau\in A$.
	We then recall that $F^\infty(\tau)$ is defined by applying $F$ and by taking limits of monotonic sequences.
	Every application of $F$ is local: if $F^{\eta-1}(\tau)$ and $F^{\eta-1}(\tau')$ agree on all the subtrees of some $t_0$,
	then also $F^\eta(\tau)=F(F^{\eta-1}(\tau))$ and $F^\eta(\tau')=F(F^{\eta-1}(\tau'))$ agree on all the subtrees of $t_0$.
	For limits we have an~even stronger property:
	if $F^\zeta(\tau)$ and $F^\zeta(\tau')$ agree on a~tree $t$ (a~subtree of $t_0$) for every $\zeta<\eta$, where $\eta$ is a~limit ordinal,
	then also the limits $F^\eta(\tau)$ and $F^\eta(\tau')$ agree on $t$,
	because the order considers every tree $t$ separately.
	It follows that if $\tau,\tau'\in A$ agree on all the subtrees of $t_0$, then $\limU{F}(\tau)$ and $\limU{F}(\tau')$ agree on all the subtrees of $t_0$, that is, $\limU{F}$ is local on $A$.
	Finally, the case of $\limD{F}$ is symmetric.
\end{proof}

\begin{lemma}\label{pro:all-measurable}
	If we can derive $F\dcolon A\to B$ for a~formula $F$ of unary $\mu$\=/calculus then the partial function $F\from\CW\parto\CW$ preserves measurability on $A$.
\end{lemma}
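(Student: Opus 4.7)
The plan is to proceed by induction on the structure of the formula $F$, mirroring the structure of the derivation of the typing judgment $F\dcolon A\to B$. The base cases and the composition case are routine, and the heart of the argument is an invocation of \cref{pro:all-measurable-lim}, for which the hypotheses are furnished partly by \cref{all-local} (locality) and partly by the induction hypothesis (preservation of measurability).

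For the base case $F=H\in\mathfrak{F}$, the typing rule for basic functions forces $H_\CW(\tau)\in B$ for every $\tau\in A$, and $H_\CW$ preserves measurability on all of $\CW$ by assumption, so it certainly preserves measurability on $A$. For a composition $F_1\comp F_2$, the typing rule yields derivations $F_1\dcolon A\to C$ and $F_2\dcolon C\to B$ for some intermediate set $C\subseteq \CW$. The induction hypothesis applied to these two derivations gives that $F_1$ preserves measurability on $A$ and $F_2$ preserves measurability on $C$. Together with \cref{claim:type-system}, which guarantees $F_1(\tau)\in C\subseteq\dom(F_2)$ for every $\tau\in A$, we obtain that $(F_1\comp F_2)(\tau)=F_2(F_1(\tau))$ is a random variable whenever $\tau\in A$ is.

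For the case $F=\limU{G}$, the typing rule provides a derivation $G\dcolon A\to A$ together with the conditions that $A\in\Pcc(\CW)$ and $G_\CW$ is ascending on $A$. By the induction hypothesis, $G$ preserves measurability on $A$, and by \cref{all-local}, $G$ is local on $A$. These are precisely the hypotheses of \cref{pro:all-measurable-lim}, which, for any random variable $\tau\in A$, asserts that all approximants $G^\eta(\tau)$ are random variables. By \cref{lem:lim-as-iteration}, $(\limU{G})(\tau)=G^\infty(\tau)=G^{\eta_\infty}(\tau)$ for some ordinal $\eta_\infty$, so this value is a random variable. The case $F=\limD{G}$ is entirely symmetric.

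The genuine work here has already been packaged into \cref{pro:all-measurable-lim,all-local}: the first supplies the delicate transfinite measurability argument (passing through the countable ordinal bottleneck at $\omega_1$ via the Lusin--Sierpi\'nski style trick), and the second ensures that locality propagates through the formula constructors so that the former proposition actually applies. Consequently, the induction itself is a straightforward bookkeeping exercise, and there is no new obstacle to overcome at this stage.
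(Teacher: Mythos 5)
Your proof is correct and follows essentially the same route as the paper: induction on the structure of $F$, invoking \cref{all-local} together with the induction hypothesis to supply the hypotheses (locality, preservation of measurability, ascending/descending) of \cref{pro:all-measurable-lim} for the $\limU{G}$ and $\limD{G}$ cases, with \cref{lem:lim-as-iteration} identifying $(\limU{G})(\tau)$ with $G^\infty(\tau)$. Your composition case is even slightly more explicit than the paper's, noting via \cref{claim:type-system} that $F_1(\tau)\in C$ so the induction hypothesis for $F_2$ applies.
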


\begin{proof}
	We proceed inductively over the shape of $F$.
	For $H\in\mathfrak{F}$ the thesis follows from the assumption on $H$, and for $F\comp G$ it is a~direct consequence of the induction hypothesis.
	For a~formula of the form $\limU{F}$ we know that $F$ is ascending on $A$ because we can derive $\limU{F}\dcolon A\to B$;
	that $F$ is local on $A$ by \cref{all-local}; and that $F$ preserves measurability on $A$ by the induction hypothesis.
	Then \cref{pro:all-measurable-lim} says in particular that
	if $\tau$ is a~random variable then $F^\infty(\tau)$ (which equals $(\limU{F})(\tau)$ by \cref{lem:lim-as-iteration}) is a~random variable;
	this is exactly what we need.
	The case of $\limD{F}$ is analogous.
\end{proof}

\section{Distributions}\label{sec:distributions}

Each random variable $\tau\in\CW$ has its \emph{distribution}---a~function assigning measures to possible values of $\tau$.
Recall that $\CW=(\trees_\Sigma\to\CR)$, with $\CR$ being a~finite complete lattice (in our case $\CR=\powerset(Q\times\{0,\ldots,d{-}1\})$).
Let $\CD=\mathbb{D}(\CR)$ be the set of probability distributions over $\CR$, that is, $\CD\eqdef \big\{\alpha\in [0,1]^\CR\mid \sum_{R\in\CR} \alpha(R) = 1\big\}$.
The \emph{distribution} of a~random variable $\tau\in\CW$, denoted $\widehat{\tau}\in\CD$, is defined for $R\in\CR$ as
\[\widehat{\tau}(R)\eqdef \prob\big(\set{t\in\trees_\Sigma\mid \tau(t)=R}\big).\]

\begin{definition}\label{def:dist-order}
	Consider two distributions $\alpha,\beta\in\CD$.
	We say that $\alpha\preceq \beta$ if for each upward\=/closed subset $\CU\subseteq\CR$ we have
	\[\sum_{R\in\CU} \alpha(R)\leq \sum_{R\in\CU} \beta(R).\]
\end{definition}

The above order was first introduced by Saheb{-}Djahromi~\cite{saheb-powerdomain} and is often called a \emph{probabilistic powerdomain}~\cite{probabilisticPowerdomains}.
It has been used in the preceding papers where measures of
the weakly MSO\=/definable sets of infinite trees were computed~\cite{przybylko_simple_sets,niwinski_measures_wmso}.

We first remark that $(\CD,{\preceq})$ may not be a~lattice, as expressed by the following fact.

\begin{lemma}
	There exists a finite complete lattice $\CR$ such that $\CD=\mathbb{D}(\CR)$ is not a lattice.
\end{lemma}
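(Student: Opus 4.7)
The plan is to take $\CR$ to be the four-element diamond lattice $\{\bot, a, b, \top\}$, with $\bot$ below $a$ and $b$ (which are incomparable) and $\top$ above both, and to exhibit a pair of distributions whose supremum fails to exist in $(\CD, {\preceq})$. Concretely, I would consider $\alpha, \beta \in \CD$ concentrated on the two ``diagonals'' of the diamond:
\begin{align*}
\alpha(a) = \alpha(b) = \tfrac{1}{2}, \qquad \beta(\bot) = \beta(\top) = \tfrac{1}{2},
\end{align*}
with the remaining values zero. Note that $\alpha$ and $\beta$ are incomparable: $\beta \not\preceq \alpha$ via $\CU = \{\top\}$, and $\alpha \not\preceq \beta$ via $\CU = \{a, b, \top\}$.

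The upward-closed subsets of $\CR$ are precisely $\emptyset$, $\{\top\}$, $\{a,\top\}$, $\{b,\top\}$, and $\{a,b,\top\}$, so verifying \cref{def:dist-order} reduces to four scalar inequalities on each pair. I would then exhibit two minimal upper bounds of $\{\alpha,\beta\}$, namely $\gamma_a$ with $\gamma_a(a) = \gamma_a(\top) = \tfrac{1}{2}$ and its mirror $\gamma_b$ with $\gamma_b(b) = \gamma_b(\top) = \tfrac{1}{2}$; a direct calculation on each upward-closed set shows $\alpha, \beta \preceq \gamma_a$ and $\alpha, \beta \preceq \gamma_b$.

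To rule out a supremum, I would assume for contradiction that some $\eta \in \CD$ is a least upper bound of $\{\alpha, \beta\}$, so in particular $\eta \preceq \gamma_a$ and $\eta \preceq \gamma_b$. Applying the former on $\CU = \{b,\top\}$ and the latter on $\CU = \{a,\top\}$ yields $\eta(b)+\eta(\top) \leq \tfrac{1}{2}$ and $\eta(a)+\eta(\top) \leq \tfrac{1}{2}$; summing and using $\sum_R \eta(R) = 1$ gives $\eta(\top) \leq \eta(\bot)$. On the other hand, $\beta \preceq \eta$ on $\{\top\}$ forces $\eta(\top) \geq \tfrac{1}{2}$, while $\alpha \preceq \eta$ on $\{a, b, \top\}$ forces $\eta(\bot) = 0$. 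Combining, $\tfrac{1}{2} \leq \eta(\top) \leq \eta(\bot) = 0$, a contradiction.

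There is no substantive obstacle here; this is the standard witness that the probabilistic powerdomain over a non-distributive (here: non-chain) finite lattice fails to be a lattice. The only care needed is to enumerate the five upward-closed subsets of the diamond correctly and to track the inequalities in the right direction.
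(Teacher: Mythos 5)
Your proof is correct and takes essentially the same approach as the paper: the paper uses $\CR=\powerset(\{p,q\})$, which is your diamond, with the same two incomparable distributions (middle-level mass versus $\{\bot,\top\}$ mass) and the same two competing upper bounds $\gamma_a,\gamma_b$. The only difference is cosmetic in the endgame---the paper shows $\gamma_p$ and $\gamma_q$ are two distinct \emph{minimal} upper bounds, while you assume a least upper bound $\eta$ and derive a numerical contradiction; also, your enumeration of upward-closed sets omits $\CR$ itself, though that one only yields the trivial $1\leq 1$.
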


\begin{proof}
	Take as $\CR$ the powerset $\powerset(\{p,q\})$ for some two\=/element set $Q=\{p,q\}$.
	Consider two distributions, written as formal convex combinations of the elements of $\CR$:
	\begin{alignat*}{4}
		\alpha &= \frac{1}{2}\cdot \{p\} &&+ \frac{1}{2}\cdot \{q\},\\
		\beta &= \frac{1}{2}\cdot \emptyset &&+ \frac{1}{2}\cdot Q.
	\end{alignat*}

	Clearly $\alpha\npreceq\beta$ nor $\beta\npreceq \alpha$.
	To simplify notation, we say that $\gamma$ is an~\emph{upper bound} if both $\gamma\succeq \alpha$ and $\gamma\succeq \beta$.
	For $\CD$ to be a lattice, there should be a unique smallest upper bound $\gamma$.
	Consider
	\begin{alignat*}{4}
		\gamma_p &= \frac{1}{2}\cdot \{p\} &&+ \frac{1}{2}\cdot Q,\\
		\gamma_q &= \frac{1}{2}\cdot \{q\} &&+ \frac{1}{2}\cdot Q.
	\end{alignat*}
	It is easy to check that both $\gamma_p$ and $\gamma_q$ are upper bounds.
	Moreover, $\gamma_p\npreceq\gamma_q$ nor $\gamma_q\npreceq\gamma_p$.
	Consider any upper bound $\gamma$ and assume that $\gamma_p\succeq \gamma$.
	We will prove that $\gamma=\gamma_p$.
	Since $\gamma_p\succeq\gamma\succeq \beta$, by taking $\CU=\{Q\}$ we know that $\gamma(Q)=\frac{1}{2}$.
	Since $\gamma\succeq \alpha$, by taking $\CU=\{Q,\{p\}\}$ we know that $\gamma(Q)+\gamma(\{p\})=1$ and therefore $\gamma(\{p\})=\frac{1}{2}$.
	Thus, $\gamma=\gamma_p$.
	Similarly, $\gamma=\gamma_q$ for any upper bound $\gamma$ such that $\gamma_q\succeq\gamma$.

	This means that $\gamma_p$ and $\gamma_q$ are two distinct minimal upper bounds.
	Thus, there is no unique smallest upper bound.
\end{proof}

On the other hand, every chain (but not every set) in $(\CD, {\preceq})$ has a~supremum and an~infimum, as stated below
(this observation, however, is not used in our arguments).

\begin{proposition}
	Every chain $C$ in the structure $(\CD, {\preceq})$ has a~supremum and an~infimum.
\end{proposition}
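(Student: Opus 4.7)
The plan is to construct the supremum $\gamma$ as a valuation on the lattice of upward-closed subsets of $\CR$, then recover the distribution. For the chain $C$ and each upward-closed $\CU \subseteq \CR$, set $F_\alpha(\CU) = \sum_{R \in \CU} \alpha(R)$ and
\[s(\CU) \;=\; \sup_{\alpha \in C} F_\alpha(\CU).\]
Clearly $s(\emptyset) = 0$, $s(\CR) = 1$, and $s$ is inclusion-monotone. The essential step is to show that $s$ is \emph{modular}:
\[s(\CU_1 \cup \CU_2) + s(\CU_1 \cap \CU_2) = s(\CU_1) + s(\CU_2).\]
This is where the chain structure of $C$ enters: for any pair $\alpha, \beta \in C$ one of them is $\preceq$-larger, and then $F_{\max(\alpha,\beta)}$ simultaneously dominates $F_\alpha(\CU_1)$ and $F_\beta(\CU_2)$. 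Hence $\sup_\alpha F_\alpha(\CU_1) + \sup_\beta F_\beta(\CU_2) = \sup_{\gamma \in C}\bigl(F_\gamma(\CU_1) + F_\gamma(\CU_2)\bigr)$, and pointwise modularity of each $F_\gamma$ transforms this into the required equality.

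Next, I reconstruct a distribution $\gamma \in \CD$ from $s$. For each $R \in \CR$ define
\[\gamma(R) \;=\; s(\uparrow R) - s\bigl(\uparrow R \setminus \{R\}\bigr),\]
noting that $\uparrow R \setminus \{R\}$ is itself upward-closed because $R$ is minimal in $\uparrow R$. Monotonicity of $s$ yields $\gamma(R) \geq 0$. I then prove by induction on $|\CU|$ that $F_\gamma(\CU) = s(\CU)$ for every upward-closed $\CU$: pick a minimal element $R$ of $\CU$, observe that $\CU\setminus\{R\}$ is upward-closed, that $\CU = (\CU \setminus \{R\}) \cup \uparrow R$, and that $(\CU \setminus \{R\}) \cap \uparrow R = \uparrow R \setminus \{R\}$; modularity of $s$ combined with the induction hypothesis yields $F_\gamma(\CU) = \gamma(R) + F_\gamma(\CU\setminus\{R\}) = s(\CU)$. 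Specialising to $\CU = \CR$ gives $\sum_R \gamma(R) = s(\CR) = 1$, so $\gamma \in \CD$.

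With $F_\gamma = s$ established, the conclusion is immediate: for every $\alpha \in C$ and every upward-closed $\CU$ we have $F_\alpha(\CU) \leq s(\CU) = F_\gamma(\CU)$, so $\alpha \preceq \gamma$; and any other upper bound $\delta$ satisfies $F_\delta(\CU) \geq F_\alpha(\CU)$ for all $\alpha \in C$, hence $F_\delta(\CU) \geq s(\CU) = F_\gamma(\CU)$, so $\gamma \preceq \delta$. The infimum is constructed symmetrically by replacing every $\sup$ with $\inf$ and invoking $\min(\alpha,\beta)$ instead of $\max(\alpha,\beta)$ along the chain. The main obstacle is modularity of $s$: the identity $\sup_\alpha X_\alpha + \sup_\beta Y_\beta = \sup_\gamma (X_\gamma + Y_\gamma)$ fails in general, and only the chain hypothesis rescues it; once modularity is in place, extracting the distribution is routine combinatorics on the distributive lattice of upward-closed subsets of $\CR$.
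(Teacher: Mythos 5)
Your proof is correct and follows essentially the same strategy as the paper's: take the coordinate-wise supremum $s(\CU)$ over the (finitely many) upward-closed subsets of $\CR$, extract a candidate distribution by setting $\gamma(R) = s(\uparrow\! R) - s(\uparrow\! R \setminus \{R\})$, and verify by induction on $|\CU|$ that $F_\gamma = s$. The paper leaves the inductive step as a one-line remark; you correctly identify modularity of $s$ as the crux of that step, and your observation that modularity is exactly where the chain hypothesis is used (via $\sup_\alpha F_\alpha(\CU_1) + \sup_\beta F_\beta(\CU_2) = \sup_\gamma(F_\gamma(\CU_1) + F_\gamma(\CU_2))$, which fails for non-chains) is a welcome clarification of a point the paper glosses over.
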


\begin{proof}
	Consider a~chain $C\subseteq\CD$.
	By symmetry we assume that we need to find its supremum.
	Let $\emptyset=\CU_0,\ldots,\CU_{n-1}=\CR$ be the list of all upward\=/closed subsets $\CU_i\subseteq \CR$.
	We can represent each distribution $\alpha\in \CD$ as $\vec{\alpha}\in\R^n$, where $\vec{\alpha}_i=\sum_{R\in\CU_i} \alpha(R)$
	(the notation $\vec{\alpha}$ is used only for the sake of this proof).
	Clearly the value $\vec{\alpha}_i$ belongs to the interval $[0,1]$, as the probability of the event $R\in\CU_i$.

	By definition,
	\begin{equation}\label{eq:order-in-vectors}
		\alpha\preceq\beta\quad \Leftrightarrow \quad\vec{\alpha}\leq \vec{\beta},
	\end{equation}
	where we order the vectors $\vec{\alpha}$ and $\vec{\beta}$ by the coordinate\=/wise order in $\R$.
	Let $(a_0,\ldots,a_{n-1})$ be the coordinate\=/wise supremum of $\{\vec{\alpha}\mid \alpha\in C\}$.
	Due to this definition, the vector $(a_0,\ldots,a_{n-1})$ satisfies $a_0=0$, $a_{n-1}=1$, and if $\CU_i\subseteq \CU_j$ then $a_i\leq a_j$.

	We claim that $(a_0,\ldots,a_{n-1})=\vec{\beta}$ for some distribution $\beta\in \CD$.
	Note that in this case $\beta$ must be the supremum of $C$, because of \cref{eq:order-in-vectors}.

	Take $R\in\CR$ and let $U_R=\{R'\in\CR\mid R'\geq R\}$, $V_R=\{R'\in\CR\mid R' > R\}$.
	Both these sets are upward closed.
	Assume that $U_R=\CU_i$ and $V_R=\CU_j$ and put $\beta(R)\eqdef a_i - a_j$.
	This value clearly belongs to $[0,1]$ (we know that $\CU_j\subseteq \CU_i$).
	Moreover, by induction over the size of $\CU_i$ we can show that $\sum_{R\in \CU_i} \beta(R)= a_i$.
	Hence, $\sum_{R\in\CR} \beta(R)=a_{n-1}=1$, which means that $\beta\in \CD$.
	This implies that $\vec{\beta}$ is well defined and equal to $(a_0,\ldots,a_{n-1})$, which finishes the proof.
\end{proof}

The following two lemmata are rather standard.

\begin{lemma}\label{widehat-monotone}
	The mapping $\tau\mapsto\widehat{\tau}$ defined on the random variables $\tau\in\CW$ is monotone.
\end{lemma}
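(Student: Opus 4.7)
The plan is to unpack the two orders and reduce monotonicity of $\tau \mapsto \widehat\tau$ to monotonicity of the probability measure $\prob$. Suppose $\tau, \tau' \in \CW$ are random variables with $\tau \leq \tau'$, that is, $\tau(t) \subseteq \tau'(t)$ for every $t \in \trees_\Sigma$. I need to show $\widehat\tau \preceq \widehat{\tau'}$, which by \cref{def:dist-order} means that for every upward-closed $\CU \subseteq \CR$,
\[
  \sum_{R \in \CU} \widehat\tau(R) \leq \sum_{R \in \CU} \widehat{\tau'}(R).
\]

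First I would rewrite each side as a single probability. By the definition of $\widehat\tau$ and countable additivity (in fact, $\CR$ is finite, so this is a finite sum of disjoint events), we have
\[
  \sum_{R \in \CU} \widehat\tau(R) = \prob\bigl(\{t \in \trees_\Sigma \mid \tau(t) \in \CU\}\bigr),
\]
and analogously for $\tau'$. So it suffices to prove the set inclusion
\[
  \{t \in \trees_\Sigma \mid \tau(t) \in \CU\} \subseteq \{t \in \trees_\Sigma \mid \tau'(t) \in \CU\}
\]
and then invoke monotonicity of $\prob$. For the inclusion, fix any $t$ in the left-hand side, so $\tau(t) \in \CU$. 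Since $\tau \leq \tau'$ we have $\tau(t) \subseteq \tau'(t)$; since $\CU$ is upward-closed in $\CR$ (ordered by inclusion), $\tau'(t) \in \CU$, which places $t$ in the right-hand side. This yields the desired inequality.

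There is essentially no obstacle: the only tiny thing to check is that both sets appearing above are measurable, which holds because $\tau$ and $\tau'$ are random variables and each set is a finite union of the measurable sets $\tau^{-1}(\{R\})$ (respectively $\tau'^{-1}(\{R\})$) over $R \in \CU$. This finishes the argument.
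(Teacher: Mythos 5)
Your proof is correct and follows essentially the same route as the paper's: rewrite $\sum_{R\in\CU}\widehat\tau(R)$ as $\prob(\{t\mid\tau(t)\in\CU\})$, use the pointwise inclusion $\tau(t)\subseteq\tau'(t)$ together with upward-closedness of $\CU$ to get a set inclusion, and conclude by monotonicity of the measure. The extra remark about measurability is a fine sanity check, though the paper leaves it implicit.
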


\begin{proof}
	Take $\tau,\tau'\in\CW$ such that $\tau\leq\tau'$, and take any upward\=/closed subset $\CU\subseteq\CR$.
	Then
	\begin{align*}
		\sum_{R\in\CU}\widehat{\tau}(R)
		&=\sum_{R\in\CU}\prob\big(\{t\in\trees_\Sigma\mid\tau(t)=R\}\big)\\
		&=\prob\big(\{t\in\trees_\Sigma\mid\tau(t)\in\CU\}\big)\\
		&\leq\prob\big(\{t\in\trees_\Sigma\mid\tau'(t)\in\CU\}\big)\\
		&=\sum_{R\in\CU}\prob\big(\{t\in\trees_\Sigma\mid\tau'(t)=R\}\big)\\
		&=\sum_{R\in\CU}\widehat{\tau'}(R),
	\end{align*}
	where the middle inequality follows from the assumption that $\tau(t)\subseteq\tau'(t)$ for all $t\in\trees_\Sigma$ and the fact that $\CU$ is upward closed.
	Having this for all $\CU$ implies that $\widehat\tau\preceq\widehat\tau'$
\end{proof}

\begin{lemma}\label{lem:conv-in-measure-as-dist-limits}
	If $(\tau_\zeta)_{\zeta<\eta}$ is a~monotonic sequence of random variables that converges in measure to a~random variable $\tau$ then
	the sequence of distributions $\widehat{\tau_\zeta}$ is also monotonic and its limit is $\widehat{\tau}$.
\end{lemma}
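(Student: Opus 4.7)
The plan is to unpack directly from definitions, using \cref{widehat-monotone} as essentially the only external tool. I would first dispatch the monotonicity claim: given $\zeta\leq\zeta'<\eta$, the assumption that $(\tau_\zeta)$ is monotonic gives $\tau_\zeta\leq\tau_{\zeta'}$ (or the reverse), so \cref{widehat-monotone} immediately yields $\widehat{\tau_\zeta}\preceq\widehat{\tau_{\zeta'}}$ (or the reverse), which is the required monotonicity of $(\widehat{\tau_\zeta})$ in $(\CD,\preceq)$.

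For the limit claim, recall that the limit of a monotonic sequence in a partial order is, by the convention introduced at the beginning of the subsection on computing limits, its supremum (ascending case) or infimum (descending case). So it suffices to show that $\widehat\tau$ is the supremum (resp.\@ infimum) of the chain $\{\widehat{\tau_\zeta}\mid\zeta<\eta\}$ in $(\CD,\preceq)$. I would treat the ascending case; the descending one is entirely symmetric. Fix any upward-closed $\CU\subseteq\CR$ and denote
\[s_\zeta\eqdef\sum_{R\in\CU}\widehat{\tau_\zeta}(R),\qquad s\eqdef\sum_{R\in\CU}\widehat\tau(R).\]
By \cref{widehat-monotone} and the definition of $\preceq$, the real sequence $(s_\zeta)$ is non-decreasing; being a finite sum of the real sequences $\widehat{\tau_\zeta}(R)=\prob\big(\tau_\zeta^{-1}(\{R\})\big)$, each of which converges to $\widehat\tau(R)$ by the hypothesis of convergence in measure, it follows that $s_\zeta\to s$. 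A convergent non-decreasing ordinal-indexed sequence of reals has its limit equal to its supremum, which in one stroke yields both $s_\zeta\leq s$ for every $\zeta$ (so $\widehat\tau$ is an upper bound of the chain) and the fact that any other upper bound $\beta$ satisfies $s\leq\sum_{R\in\CU}\beta(R)$, obtained by passing to the limit in the inequality $s_\zeta\leq\sum_{R\in\CU}\beta(R)$. Since $\CU$ was arbitrary, $\widehat\tau$ is in fact the least upper bound, hence the limit.

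The one subtlety worth flagging, and the reason the argument routes through the numerical partial sums $s_\zeta$, is that convergence in measure does \emph{not} force $\tau$ to be the pointwise supremum of $(\tau_\zeta)$ in $\CW$ (the two may differ on a null set), so the inequality $\widehat{\tau_\zeta}\preceq\widehat\tau$ cannot be read off from \cref{widehat-monotone} applied to the pair $(\tau_\zeta,\tau)$ directly. It emerges instead as a byproduct of the limiting behaviour of the real partial sums associated to each upward-closed subset of the finite lattice $\CR$, which is the one quantitative ingredient the proof genuinely needs.
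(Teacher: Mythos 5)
Your proof is correct and follows essentially the same route as the paper's: monotonicity via \cref{widehat-monotone}, then identification of $\widehat\tau$ as the least upper bound by passing through the partial sums $s_\zeta=\sum_{R\in\CU}\widehat{\tau_\zeta}(R)$ over each upward-closed $\CU\subseteq\CR$, using that the real-valued limit of a non-decreasing sequence is its supremum. Your closing remark about why one cannot simply apply \cref{widehat-monotone} to the pair $(\tau_\zeta,\tau)$ is a useful clarification that the paper leaves implicit.
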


\begin{proof}
	To fix attention, suppose that the sequence $(\tau_\zeta)_{\zeta<\eta}$ is increasing; argumentation for a~decreasing sequence is symmetric.
	\cref{widehat-monotone} immediately implies that the sequence of distributions $(\widehat\tau_\zeta)_{\zeta<\eta}$ is increasing as well.
	
	Convergence in measure by definition implies that, for every $R\in\CR$, the sequence of numbers $\big(\widehat\tau_\zeta(R)\big)_{\zeta<\eta}$ converges to $\widehat\tau(R)$.
	Take any upward-closed subset $\CU\subseteq\CR$.
	Because $\CU$ is finite, the sequence of sums $(s_{\zeta,\CU})_{\zeta<\eta}$ defined as $s_{\zeta,\CU}\eqdef\sum_{R\in\CU}\widehat\tau_\zeta(R)$
	converges to $s_\CU\eqdef\sum_{R\in\CU}\widehat\tau(R)$.
	The sequence $(\widehat\tau_\zeta)_{\zeta<\eta}$ is increasing, which by definition means that $(s_{\zeta,\CU})_{\zeta<\eta}$ is increasing as well.
	In consequence $s_{\zeta,\CU}\leq s_\CU$ for all $\zeta<\eta$ (the limit of an~increasing sequence of numbers is above its elements).
	Having this for all $\CU$ implies $\widehat\tau_\zeta\preceq\widehat\tau$ for all $\zeta<\eta$.
	
	Consider now any other distribution $\beta$ such that $\widehat\tau_\zeta\preceq\beta$ for all $\zeta<\eta$,
	and take again any upward-closed subset $\CU\subseteq\CR$.
	The inequality $\widehat\tau_\zeta\preceq\beta$ implies $s_{\zeta,\CU}\leq s'_\CU$ for all $\zeta<\eta$, where $s'_\CU\eqdef\sum_{R\in\CU}\beta(R)$.
	But then also the limit of $(s_{\zeta,\CU})_{\zeta<\eta}$, namely $s_\CU$, satisfies $s_\CU\leq s'_\CU$.
	Having this for all $\CU$ implies $\widehat\tau\preceq\beta$.
	
	Thus $\widehat\tau$ is the least upper bound of the set $\{\widehat\tau_\zeta\mid\zeta<\eta\}$, that is, the limit of $(\widehat\tau_\zeta)_{\zeta<\eta}$.
\end{proof}

We remark that the mapping $\tau\mapsto\widehat{\tau}$ does not
preserve limits of all chains, or at least this would contradict the continuum hypothesis.
This is the reason why we need results of the previous section, in particular \cref{pro:all-measurable-lim}, saying that the limits we take are convergent in measure.
Indeed, assuming the continuum hypothesis, we can organize all trees in a~sequence of type $\omega_1$, that is, denote them $t_\eta$ for $\eta<\omega_1$.
Then all the sets $U_\eta=\set{t_\zeta\mid\zeta\leq\eta}$ for $\eta<\omega_1$ have measure $0$ (these are countable sets).
But the limit of the increasing sequence $(U_\eta)_{\eta<\omega_1}$ (i.e., union of these sets) is $\trees_\Sigma$, a~set of measure~$1$.

\subsection{Computing in \texorpdfstring{$\CD$}{D}---basic functions}\label{ssec:compute-D-basic}

Because now we consider both functions on $\CW$ and on $\CD$, in order to distinguish them we use names with a~subscript $\CW$ or $\CD$.

Recall the definition given in the proof of \cref{our-local}:
a~function $F_\CW\from\CW\to\CW$ is \emph{superficial} if there exists a~function $F_\CR\from\CR\to\CR$ such that $F_\CW(\tau)(t)=F_\CR(\tau(t))$ for all $\tau\in\CW$ and $t\in\trees_\Sigma$.
Having such a~function $F_\CW$ we define its \emph{lift to $\CD$} as the function $F_\CD\from \CD\to \CD$
such that $F_\CD(\alpha)(R)=\sum_{R'\in F_\CR^{-1}(\{R\})} \alpha(R')$ for all $\alpha\in \CD$ and $R \in \CR$.

\begin{lemma}
	If $F_\CW\from \CW\to \CW$ is superficial then for each random variable $\tau\in\CV$ we have
	\[\widehat{F_\CW(\tau)}=F_\CD(\widehat{\tau}).\]
\end{lemma}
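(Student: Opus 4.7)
The plan is to unfold the definitions on both sides and observe that the equality reduces to additivity of probability on a finite disjoint union. The statement is really just a bookkeeping identity: superficial functions act pointwise through $F_\CR$, so distributions push forward in the expected way.

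Concretely, I would fix an arbitrary $R\in\CR$ and compute $\widehat{F_\CW(\tau)}(R)$ from its definition as $\prob\big(\{t\in\trees_\Sigma\mid F_\CW(\tau)(t)=R\}\big)$. Using superficiality, $F_\CW(\tau)(t)=R$ is equivalent to $F_\CR(\tau(t))=R$, i.e.~to $\tau(t)\in F_\CR^{-1}(\{R\})$. Thus
\[
\{t\in\trees_\Sigma\mid F_\CW(\tau)(t)=R\}=\bigsqcup_{R'\in F_\CR^{-1}(\{R\})}\{t\in\trees_\Sigma\mid \tau(t)=R'\},
\]
and the union is disjoint because the sets $\tau^{-1}(\{R'\})$ are disjoint over distinct $R'$.

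Since $\CR$ is finite the preimage $F_\CR^{-1}(\{R\})$ is finite, so additivity of $\prob$ on finite disjoint unions applies and gives
\[
\widehat{F_\CW(\tau)}(R)=\sum_{R'\in F_\CR^{-1}(\{R\})}\prob\big(\{t\in\trees_\Sigma\mid\tau(t)=R'\}\big)=\sum_{R'\in F_\CR^{-1}(\{R\})}\widehat{\tau}(R'),
\]
which is exactly $F_\CD(\widehat{\tau})(R)$. Having the equality for every $R\in\CR$ yields $\widehat{F_\CW(\tau)}=F_\CD(\widehat{\tau})$.

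There is no real obstacle here: the only thing to check (implicitly) is that each of the finitely many sets $\tau^{-1}(\{R'\})$ is measurable, which is guaranteed by the assumption that $\tau$ is a random variable, so that $\prob$ can be applied to each summand and the finite additivity step is legitimate.
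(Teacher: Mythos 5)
Your proof is correct and follows essentially the same route as the paper: fix $R\in\CR$, use superficiality to rewrite $\{t\mid F_\CW(\tau)(t)=R\}$ as the finite disjoint union of the preimages $\tau^{-1}(\{R'\})$ over $R'\in F_\CR^{-1}(\{R\})$, and apply finite additivity of $\prob$. The explicit remark that measurability of each $\tau^{-1}(\{R'\})$ is needed (and supplied by $\tau$ being a random variable) is a sensible addition that the paper leaves implicit.
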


\begin{proof}
	Fix a~set $R\in\CR$.
	For every tree $t\in\trees_\Sigma$ we have
	\begin{align*}
		F_\CW(\tau)(t)=R
		\ &\Leftrightarrow\
		F_\CR(\tau(t))=R
		\ \Leftrightarrow\
		\tau(t)\in F_\CR^{-1}(\{R\})
		\Leftrightarrow\
		\bigvee\nolimits_{R'\in F_\CR^{-1}(\{R\})}\tau(t)=R'.
	\end{align*}
	Moreover, at most one component of the above alternative is true (there is a~unique set $R'$ such that $\tau(t)=R'$).
	It follows that
	\begin{align*}
		\widehat{F_\CW(\tau)}(R)
		&=\prob\big(\{t\in\trees_\Sigma\mid F_\CW(\tau)(t)=R\}\big)\\
		&=\sum\nolimits_{R'\in F_\CR^{-1}(\{R\})}\prob\big(\{t\in\trees_\Sigma\mid\tau(t)=R'\}\big)\\
		&=\sum\nolimits_{R'\in F_\CR^{-1}(\{R\})}\widehat\tau(R')
		=F_\CD(\widehat\tau)(R).
	\tag*{\qedhere}\end{align*}
\end{proof}

We now define probabilistic variants of functions $\Delta_\CW$, $(\Bid_n)_\CW$ and $(\Cut_n)_\CW$.
First, $(\Bid_n)_\CD$ and $(\Cut_n)_\CD$ are just lifts of $(\Bid_n)_\CW\from\CW\to\CW$ and $(\Cut_n)_\CW\from\CW\to\CW$, which are superficial.

Now we need to define a~variant of $\Delta_\CW$ working on distributions, which we denote $\Delta_\CD\from \CD\to\CD$.
We use the function $\delta_a\from\powerset(Q)\times\powerset(Q)\to\powerset(Q)$ defined in \cref{sec:spaces}.
First, for $a\in\Sigma$ and $R^\dL,R^\dR\in\CR$ let us define
\begin{align*}
	\Delta_a(R^\dL, R^\dR)&\eqdef\bigcup\nolimits_{i=0}^{d-1} \delta_a\big(K^\dL_i, K^\dR_i\big){\times}\set{i},&&\text{where}\\
	&\hspace{-4em}K^\dL_i=\big\{q\in Q\mid(q,\max(i,\Omega(q)))\in R^\dL\big\}&&\text{and}\\
	&\hspace{-4em}K^\dR_i=\big\{q\in Q\mid(q,\max(i,\Omega(q)))\in R^\dR\big\}.
\end{align*}
Now take $\alpha\in \CD$ and $R\in\CR$ and let
\begin{equation}\label{eq:delta-D-wzor}
	\Delta_\CD(\alpha)(R)\eqdef\sum_{a\in\Sigma} \frac{1}{|\Sigma|}\cdot \sum_{(R^\dL,R^\dR)\in\Delta_a^{-1}(\{R\})} \alpha(R^\dL)\cdot \alpha(R^\dR).
\end{equation}

The following lemma can be proved by a~simple computation.

\begin{lemma}\label{lem:delta-commutes}
	For every random variable $\tau\in\CW$ we have
	\[\widehat{\Delta_\CW(\tau)}=\Delta_\CD(\widehat{\tau}).\]
\end{lemma}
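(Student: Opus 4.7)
The plan is to expand both sides of the equality and then exploit the product structure of the coin\=/flipping measure. The starting observation is that $\Delta_\CW(\tau)(t)$ depends on $t$ only through the triple $(t(\epsilon),\tau(t.\dL),\tau(t.\dR))$, as already noted in the proof of \cref{our-local}. Tracing through the definition of $\Delta$ on $\CV^d$ via the bijection with $\CW$ and through the definition of $\delta_a$, one verifies that this dependence is exactly the function $\Delta_a$ defined just before \eqref{eq:delta-D-wzor}; that is, $\Delta_\CW(\tau)(t)=\Delta_{t(\epsilon)}(\tau(t.\dL),\tau(t.\dR))$. This is a routine but slightly delicate indexing calculation, since the $i$\=/th coordinate of $\Delta(\vtau)$ is $\delta(\tau_1,\ldots,\tau_{i-1},\tau_i,\ldots,\tau_i)$, so the condition $(q^\dL,\Omega(q))$ imposed by $\delta_a$ translates into a condition referring to the $\min(i,\Omega(q))$\=/th coordinate of the subtree profile.

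Next, fix $R\in\CR$. Because $\Delta_\CW(\tau)(t)$ is determined by that triple, the event $\{t:\Delta_\CW(\tau)(t)=R\}$ decomposes as a disjoint union, over triples $(a,R^\dL,R^\dR)$ with $\Delta_a(R^\dL,R^\dR)=R$, of the sets
\[
E_{a,R^\dL,R^\dR}=\{t:t(\epsilon)=a\}\cap\{t:\tau(t.\dL)=R^\dL\}\cap\{t:\tau(t.\dR)=R^\dR\}.
\]
These sets are measurable: $\tau$ is a random variable, and \cref{subtree-measurable} transports measurability through the subtree maps. Under the canonical bijection $t\leftrightarrow(t(\epsilon),t.\dL,t.\dR)$, the coin\=/flipping measure is the product of the uniform measure on $\Sigma$ with two independent copies of the coin\=/flipping measure on $\trees_\Sigma$; hence the three defining events are independent, and
\[
\prob(E_{a,R^\dL,R^\dR})=\frac{1}{|\Sigma|}\cdot\prob(\{t:\tau(t.\dL)=R^\dL\})\cdot\prob(\{t:\tau(t.\dR)=R^\dR\}).
\]

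Finally, \cref{subtree-measurable} gives $\prob(\{t:\tau(t.u)=R'\})=\prob(\{t:\tau(t)=R'\})=\widehat\tau(R')$ for $u\in\{\dL,\dR\}$. Summing over all valid triples $(a,R^\dL,R^\dR)$ and substituting, the resulting expression matches precisely the right\=/hand side of \eqref{eq:delta-D-wzor} evaluated at $\widehat\tau$ and $R$, which is the desired equality. I expect the main obstacle to be the first step --- the explicit indexing check that $\Delta_\CW(\tau)(t)$ really equals $\Delta_{t(\epsilon)}(\tau(t.\dL),\tau(t.\dR))$ with the $\Delta_a$ of \eqref{eq:delta-D-wzor} --- while the remaining steps are standard product\=/measure bookkeeping.
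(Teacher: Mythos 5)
Your proposal is correct and follows essentially the same two-step plan as the paper's proof: first establish the identity $\Delta_\CW(\tau)(t)=\Delta_{t(\epsilon)}\big(\tau(t.\dL),\tau(t.\dR)\big)$ by unwinding the definitions and the $\CV^d$--$\CW$ bijection, then decompose $\big\{t : \Delta_\CW(\tau)(t)=R\big\}$ into a disjoint union of independent root/left-subtree/right-subtree events and invoke \cref{subtree-measurable} plus the product structure of the coin\-/flipping measure to recover \cref{eq:delta-D-wzor}. You correctly identify the indexing calculation as the only delicate point; the paper likewise asserts that identity after a brief recall of the definitions and devotes the rest of its proof to the same measure\-/theoretic bookkeeping you describe.
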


\begin{proof}
	Fix a~random variable $\tau\in\CW$ and a~set $R\in\CR$.
	First, recalling the definition of $\Delta_{\CV^d}\from\CV^d\to\CV^d$, of $\delta\from\CV^d\to\CV$, and of the bijection between $\CV^d$ and $\CW$,
	we notice that $\Delta_\CW\from\CW\to\CW$ can be defined by
	\begin{align*}
		\Delta_\CW(\tau)(t)=\Delta_{t(\epsilon)}\big(\tau(t.\dL),\tau(t.\dR)\big).
	\end{align*}
	From the above it follows that
	\begin{align*}
		\big\{t\in\trees_\Sigma\mid\Delta_\CW(\tau)(t)=R\big\}
			=\bigcup\nolimits_{a\in\Sigma}\bigcup\nolimits_{(R^\dL,R^\dR)\in\Delta_a^{-1}(\{R\})}(A_a\cap B_{R^\dL}\cap C_{R^\dR}),
	\end{align*}
	where
	\begin{align*}
		A_a&=\big\{t\in\trees_\Sigma\mid t(\epsilon)=a\big\},\\
		B_{R^\dL}&=\big\{t\in\trees_\Sigma\mid \tau(t.\dL)=R^\dL\big\},&&\mbox{and}\\
		C_{R^\dR}&=\big\{t\in\trees_\Sigma\mid \tau(t.\dR)=R^\dR\big\}.
	\end{align*}
	Both unions in the above formula are disjoint unions: for every tree we have a~unique letter $t(\epsilon)$ and unique sets $\tau(t.\dL)$ and $\tau(t.\dR)$.
	Moreover, the three random events, $A_a$, $B_{R^\dL}$, and $C_{R^\dR}$ are independent:
	a~label in the root, labels in the left subtree, and labels in the right subtree are chosen independently.
	We have $\prob(A_a)=\frac{1}{|\Sigma|}$ by definition, and $\prob(B_{R^\dL})=\widehat\tau(R^\dL)$ and $\prob(C_{R^\dR})=\widehat\tau(R^\dR)$ by \cref{subtree-measurable}.
	Thus
	\begin{align*}
		\widehat{\Delta_\CW(\tau)}(R)&=\prob\big(\{t\in\trees_\Sigma\mid\Delta_\CW(\tau)(t)=R\}\big)\\
		&=\sum_{a\in\Sigma}\sum_{(R^\dL,R^\dR)\in\Delta_a^{-1}(\{R\})}\!\!\!\prob(A_a)\cdot\prob(B_{R^\dL})\cdot\prob(C_{R^\dR})\\
		&=\sum_{a\in\Sigma}\frac{1}{|\Sigma|}\cdot\sum_{(R^\dL,R^\dR)\in\Delta_a^{-1}(\{R\})}\!\!\!\widehat\tau(R^\dL)\cdot\widehat\tau(R^\dR)\\
		&=\Delta_\CD(\widehat\tau)(R).
	\tag*{\qedhere}\end{align*}
\end{proof}

\subsection{Moving from \texorpdfstring{$\CW$}{W} to \texorpdfstring{$\CD$}{D}}

Assume that $\CW$ and $\CD$ are structures over $\mathfrak{F}$ such that for every basic function $H\in\mathfrak{F}$,
\begin{enumerate}
\item the function $H_\CW$ is local and preserves measurability on the whole $\CW$, and
\item $H$ \emph{commutes with computing distributions}, that is, if $\tau\in\CW$ is a~random variable then $\widehat{H_\CW(\tau)}= H_\CD(\widehat\tau)$.
\end{enumerate}

\begin{proposition}\label{pro:commutes-with}
	Assume the above conditions on the structures $\CW$ and $\CD$.
	Let $F$ be a~formula of unary $\mu$\=/calculus such that we can derive $F\dcolon A\to B$ for some $A,B\subseteq\CW$.
	Let also $\tau\in A$ be a~random variable.
	Then $\widehat\tau\in\dom(F_\CD)$ and
	\[\widehat{F_{\CW}(\tau)} = F_\CD(\widehat\tau).\]
\end{proposition}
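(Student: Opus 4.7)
The plan is to proceed by structural induction on the formula $F$, with the two \emph{lim} cases being the substantive ones. The base case $F = H \in \mathfrak{F}$ is immediate from condition~(2). For $F = F_1 \comp F_2$, decomposing $F \dcolon A \to B$ through some intermediate set $C$, I apply the inductive hypothesis to $F_1$ at $\tau$ (obtaining $\widehat\tau \in \dom(F_{1,\CD})$ and the commutation), invoke \cref{pro:all-measurable} to conclude that $F_{1,\CW}(\tau) \in C$ is again a~random variable, and then apply the inductive hypothesis to $F_2$ at $F_{1,\CW}(\tau)$. Chaining the two commutation identities yields the claim.

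The essential case---and the main obstacle---is $F = \limU{F'}$, for which $F' \dcolon A \to A$ is derivable with $F'_\CW$ ascending on $A$. By \cref{lem:lim-as-iteration}, $F_\CW(\tau) = (F'_\CW)^\infty(\tau)$; denote $\sigma_\eta \eqdef (F'_\CW)^\eta(\tau)$ and $\alpha_\eta \eqdef \widehat{\sigma_\eta}$. By \cref{pro:all-measurable-lim} every $\sigma_\eta$ is a~random variable living in $A$, and at each limit ordinal $\eta$ the sequence $(\sigma_\zeta)_{\zeta<\eta}$ converges in measure to $\sigma_\eta$; by \cref{widehat-monotone,lem:conv-in-measure-as-dist-limits} the transferred sequence $(\alpha_\zeta)_{\zeta<\eta}$ is then a~monotonic chain in $\CD$ with supremum $\alpha_\eta$.

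Next I show by transfinite induction on $\eta$ that $\alpha_\eta \in \dom(F'_\CD)$ and that $\alpha_\eta = F'_\CD(\alpha_{\eta-1})$ at successor stages (using the outer inductive hypothesis applied to $\sigma_{\eta-1} \in A$, which is a~random variable by \cref{pro:all-measurable-lim}), while at limit stages $\alpha_\eta$ is the supremum of the previous $\alpha_\zeta$'s in $\CD$ (using \cref{lem:conv-in-measure-as-dist-limits}). A final application of the outer inductive hypothesis to $\sigma_\infty \in A$ shows that $\alpha_\infty = \widehat{\sigma_\infty}$ is a~fixed point of $F'_\CD$ lying above $\widehat\tau$.

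The delicate point is to prove that $\alpha_\infty$ is the \emph{least} such fixed point---this cannot be taken for granted, since $(\CD, \preceq)$ is not a~lattice. Given any $\beta \in \Fix(F'_\CD)$ with $\beta \succeq \widehat\tau$, I argue $\beta \succeq \alpha_\eta$ by transfinite induction: at successor stages, monotonicity of $F'_\CD$ (a consequence of \cref{claim:type-system}-style reasoning for $F'$) gives $\beta = F'_\CD(\beta) \succeq F'_\CD(\alpha_{\eta-1}) = \alpha_\eta$; at limit stages, $\beta$ is an~upper bound of the chain $\{\alpha_\zeta \mid \zeta < \eta\}$ whose supremum is $\alpha_\eta$. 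Therefore $\beta \succeq \alpha_\infty$, so $\widehat\tau \in \dom((\limU{F'})_\CD)$ and $(\limU{F'})_\CD(\widehat\tau) = \alpha_\infty = \widehat{F_\CW(\tau)}$. The case $F = \limD{F'}$ is entirely symmetric.
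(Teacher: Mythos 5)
Your proposal is correct and follows essentially the same approach as the paper: structural induction on $F$, reduction of the $\limU{F'}$ case to the transfinite iteration via \cref{lem:lim-as-iteration}, showing that $\widehat{F^\infty_\CW(\tau)}$ is a fixed point of $F'_\CD$ above $\widehat\tau$, and then a transfinite induction using \cref{pro:all-measurable-lim,lem:conv-in-measure-as-dist-limits} to show it is the \emph{least} such fixed point. Your treatment is slightly more explicit at the intermediate stages (e.g.\ tracking $\alpha_\eta \in \dom(F'_\CD)$ and invoking \cref{pro:all-measurable} in the composition case), but the key ideas and lemma dependencies coincide with those of the paper's proof.
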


Recall \cref{pro:all-measurable}, which, in our situation, ensures that $F_\CW$ preserves measurability on $A$;
in particular $F_{\CW}(\tau)$ is a~random variable and it makes sense to consider $\widehat{F_{\CW}(\tau)}$.

The rest of this subsection is devoted to a proof of this proposition.
We proceed by induction over the structure of $F$.

For the basic functions $H\in\mathfrak{F}$ the statement follows from the assumptions that they commute with computing distributions.
For $F; G$ the fact follows directly from the induction hypothesis.
Consider a~formula of the form $\limU{F}$ (the case of $\limD{F}$ is symmetric) and take a~random variable $\tau\in A$.
By the assumption that we can derive $\limU{F}\dcolon A\to B$, we know that $F$ is ascending on $A$.
Therefore, \cref{lem:lim-as-iteration} implies that $(\limU{F})_{\CW}(\tau)=(F_\CW)^\infty(\tau)$.
Denote this random variable by $\sigma$.

The rest of the proof is realised by the following claim.

\begin{claim}
	We have $\widehat{\tau}\in\dom\big((\limU{F})_\CD\big)$ and $\widehat{\sigma}=(\limU{F})_\CD(\widehat{\tau})$.
\end{claim}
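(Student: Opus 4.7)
The plan is to prove by transfinite induction on the ordinal $\eta$ that $(F_\CD)^\eta(\widehat{\tau})$ is well\=/defined and equals $\widehat{(F_\CW)^\eta(\tau)}$. Once this is established, stability of the $F_\CW$\=/iteration at $\eta_\infty$ transfers to the $F_\CD$\=/iteration, so $(F_\CD)^{\eta_\infty}(\widehat\tau)=\widehat\sigma$ is a~fixed point of $F_\CD$. A~direct application of \cref{lem:lim-as-iteration} to $F_\CD$, viewed as ascending on a~suitable chain\=/complete set, then yields both $\widehat\tau\in\dom((\limU{F})_\CD)$ and $(\limU{F})_\CD(\widehat\tau)=\widehat\sigma$.

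The base case $\eta=0$ is immediate. For the successor step, observe that $(F_\CW)^\zeta(\tau)\in A$ (since a~derivation of $\limU{F}\dcolon A\to B$ requires $F\dcolon A\to A$, so $F_\CW$ maps $A$ into $A$) and is a~random variable (by \cref{pro:all-measurable-lim}). Hence the outer induction hypothesis of \cref{pro:commutes-with}, applied to the subformula $F$ at input $(F_\CW)^\zeta(\tau)$, gives $F_\CD(\widehat{(F_\CW)^\zeta(\tau)})=\widehat{F_\CW((F_\CW)^\zeta(\tau))}$, which combined with the induction hypothesis on $\eta$ yields $(F_\CD)^{\zeta+1}(\widehat\tau)=\widehat{(F_\CW)^{\zeta+1}(\tau)}$.

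The limit case is the crux. Because $F_\CW$ is ascending on $A$, the sequence $((F_\CW)^\zeta(\tau))_{\zeta<\eta}$ is an~ascending chain in $\CW$, and by \cref{pro:all-measurable-lim} it converges in measure to $(F_\CW)^\eta(\tau)$. Then \cref{lem:conv-in-measure-as-dist-limits} tells us that $(\widehat{(F_\CW)^\zeta(\tau)})_{\zeta<\eta}$ is an~ascending chain in $(\CD,{\preceq})$ whose limit, and hence $\CD$\=/supremum, is $\widehat{(F_\CW)^\eta(\tau)}$. By the induction hypothesis this chain coincides with $((F_\CD)^\zeta(\widehat\tau))_{\zeta<\eta}$; consequently its supremum in $\CD$ exists and we set $(F_\CD)^\eta(\widehat\tau)=\widehat{(F_\CW)^\eta(\tau)}$, as required.

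To conclude, let $A_\CD$ denote the set of all elements $\widehat{(F_\CW)^\eta(\tau)}$ for $\eta$ ranging over ordinals. The induction just proved shows $A_\CD\subseteq\dom(F_\CD)$ and that $F_\CD$ sends $A_\CD$ into $A_\CD$; \cref{widehat-monotone} together with $F_\CW$ being ascending on $A$ gives $F_\CD(\alpha)\succeq\alpha$ for every $\alpha\in A_\CD$. Chain\=/completeness of $A_\CD$ in $\CD$ follows because $A_\CD$ is linearly ordered by $\preceq$ via the iteration index: infima of chains in $A_\CD$ are minima (since ordinals are well\=/ordered) and suprema coincide with the $\CD$\=/limits computed above at the supremum index. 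Thus $F_\CD$ is ascending on $A_\CD$, and \cref{lem:lim-as-iteration} delivers $\widehat\tau\in\dom((\limU{F})_\CD)$ together with $(\limU{F})_\CD(\widehat\tau)=(F_\CD)^\infty(\widehat\tau)=\widehat\sigma$. The main obstacle is precisely the limit step of the induction: the map $\tau'\mapsto\widehat{\tau'}$ does not in general commute with chain limits (as the CH\=/based remark after \cref{lem:conv-in-measure-as-dist-limits} witnesses), and here it is rescued by the convergence\=/in\=/measure results of \cref{sec:measurability}.
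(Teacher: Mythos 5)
Your proposal is correct, and its overall shape—running the transfinite iteration in parallel over $\CW$ and $\CD$, resting on \cref{pro:all-measurable-lim}, \cref{lem:conv-in-measure-as-dist-limits}, and the outer induction hypothesis of \cref{pro:commutes-with}—matches the paper's. The difference lies in how the conclusion is packaged. The paper performs the transfinite induction with the statement $\widehat{F_\CW^\eta(\tau)}\preceq\beta$ for an \emph{arbitrary} fixed point $\beta\succeq\widehat\tau$ of $F_\CD$, thereby verifying the defining minimality condition of $\limU{(F_\CD)}$ by hand (first showing $\widehat\sigma\in\Fix(F_\CD)$ and $\widehat\sigma\succeq\widehat\tau$ directly). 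You instead prove $(F_\CD)^\eta(\widehat\tau)=\widehat{(F_\CW)^\eta(\tau)}$, assemble the image set $A_\CD$, check that $F_\CD$ is ascending on it, and invoke \cref{lem:lim-as-iteration}. This is more systematic and re-uses the paper's general machinery, but it forces you to verify chain\=/completeness of $A_\CD$ in $\CD$, which the paper's bound\=/comparison argument sidesteps entirely. Your chain\=/completeness sketch is correct but a bit terse: it tacitly uses that any unbounded $S\subseteq\sup S$ is cofinal, so the supremum of the sub\=/chain coincides with the supremum at the limit index, and that beyond $\eta_\infty$ the values stabilize so suprema are attained. On balance, the two proofs are roughly the same length and difficulty; yours trades the explicit ``test against every fixed point $\beta$'' step for an explicit ``check $A_\CD$ is a chain\=/complete ascending set'' step. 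Your final remark about the CH counterexample correctly pins down why convergence in measure is indispensable at limit stages.
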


\begin{proof}
	Recall the definition of $(F_\CW)^\eta(\tau)$ used to define $(F_\CW)^\infty(\tau)$.
	By \cref{pro:all-measurable-lim} all the elements $(F_\CW)^\eta(\tau)$ are random variables.
	Note that $\sigma$ is a~fixed point of $F_\CW$, so the induction hypothesis implies that
	\[F_\CD(\widehat{\sigma})=\widehat{F_\CW(\sigma)}=\widehat{\sigma}.\]
	Thus, $\widehat{\sigma}\in\Fix(F_\CD)$.
	By definition $\tau\leq\sigma$, so also $\widehat\tau\preceq\widehat\sigma$ by \cref{widehat-monotone}.
	
	Consider now another distribution $\beta\in\CD$ with $\widehat{\tau}\preceq \beta$ such that $F_\CD(\beta)=\beta$.
	We need to prove that $\widehat{\sigma}\preceq \beta$.
	To this end, we prove by induction on $\eta$ that $\widehat{(F_\CW)^\eta(\tau)}\preceq \beta$.
	For $\eta=0$ this follows from the assumption.
	For a~successor ordinal $\eta$ we have
	\begin{align*}
		\widehat{F_\CW^\eta(\tau)}
		=\widehat{F_\CW\big(F_\CW^{\eta-1}(\tau)\big)}
		=F_\CD\big(\widehat{F_\CW^{\eta-1}(\tau)}\big)
		\preceq F_\CD(\beta)=\beta,
	\end{align*}
	where the second equality is the hypothesis of the external induction (about $F$)
	and the next inequality follows from monotonicity of $F_\CD$ and from the hypothesis of the internal induction (about $\beta$).

	Consider now a~limit ordinal $\eta$.
	\cref{pro:all-measurable-lim} implies that the monotonic sequence $\big(F_\CW^{\zeta}(\tau)\big)_{\zeta<\eta}$ converges in measure to $F_\CW^{\eta}(\tau)$.
	\cref{lem:conv-in-measure-as-dist-limits} thus implies that $\big(\widehat{F_\CW^{\zeta}(\tau)}\big)_{\zeta<\eta}$ has limit $\widehat{F_\CW^{\eta}(\tau)}$ in $\CD$.
	Therefore, using the induction hypothesis that $\widehat{F_\CW^{\zeta}(\tau)}\preceq \beta$ for all $\zeta<\eta$ we also have $\widehat{F_\CW^{\eta}(\tau)}\preceq \beta$.

	Thus, for the particular case of $\eta=\eta_\infty$ we have $\widehat{\sigma}=\widehat{(F_\CW)^{\eta_\infty}(\tau)}\preceq\beta$.

	We have just shown that $\widehat{\sigma}$ is the least fixed\=/point of $F_\CD$ above $\widehat{\tau}$ in $\CD$ and therefore $\widehat{\tau}\in\dom\big((\limU{F})_\CD\big)$
	and $(\limU{F})_\CD(\widehat{\tau})=\widehat{\sigma}$.
\end{proof}

\cref{ssec:compute-D-basic} defines basic functions $\Delta_\CD$, $(\Bid_n)_\CD$ and $(\Cut_n)_\CD$
in such a~way that the assumptions of \cref{pro:commutes-with} are met by our structures $\CW$ and $\CD$.
This gives us the following corollary for the specific case of $F=\Phi_{d-1}$.

\begin{corollary}\label{cor:commutes-with-Phi}
	For every random variable $\tau\in \CS_{d-1}$ we have $\widehat{\tau}\in\dom\big((\Phi_{d-1})_\CD\big)$ and
	\[\widehat{(\Phi_{d-1})_\CW(\tau)}=(\Phi_{d-1})_\CD(\widehat{\tau}).\]
\end{corollary}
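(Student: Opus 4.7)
The plan is to deduce the corollary directly from \cref{pro:commutes-with} by instantiating $F=\Phi_1$, $A=\CS_1$, $B=\CS_2$. To invoke the proposition I must verify two ingredients: a type derivation $\Phi_1\dcolon\CS_1\to\CS_2$, and that our two structures $\CW$ and $\CD$ over the signature $\mathfrak{F}=\{\Delta,\Bid_1,\ldots,\Bid_d,\Cut_1,\ldots,\Cut_{d-1}\}$ satisfy the hypotheses required of the basic functions.

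First I would recall that the type derivation $\Phi_1\dcolon\CS_1\to\CS_2$ has already been established as an immediate corollary of \cref{prop} (with $n=1$), via the chain of typings built up from \cref{ft:typing_delta,ft:typing_delta_arr,ft:typing_bid,ft:typing_cut} and the inductive step on the formulas $\Phi_n$. Since $\bar\bot\in\CS_1$ by \cref{ft:bot-in-cs}, any random variable $\tau\in\CS_1$ is a legitimate input for $\Phi_1$.

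Second, I would verify the two hypotheses of \cref{pro:commutes-with} for every $H\in\mathfrak{F}$. Locality and measurability preservation on the whole of $\CW$ are exactly the content of \cref{our-local}, so condition~(i) is free. For condition~(ii), commutation with computing distributions, the functions $\Bid_n$ and $\Cut_n$ are superficial in the sense of \cref{ssec:compute-D-basic}, and by the general lemma proved there for superficial functions and their lifts we get $\widehat{(\Bid_n)_\CW(\tau)}=(\Bid_n)_\CD(\widehat\tau)$ and analogously for $\Cut_n$, whenever $\tau$ is a random variable. For $\Delta$, the corresponding identity $\widehat{\Delta_\CW(\tau)}=\Delta_\CD(\widehat\tau)$ is precisely \cref{lem:delta-commutes}.

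With the hypotheses of \cref{pro:commutes-with} now in place and the derivation $\Phi_1\dcolon\CS_1\to\CS_2$ in hand, the proposition applied to $\tau\in\CS_1$ a random variable yields $\widehat\tau\in\dom\bigl((\Phi_1)_\CD\bigr)$ together with $\widehat{(\Phi_1)_\CW(\tau)}=(\Phi_1)_\CD(\widehat\tau)$, which is the statement of the corollary. There is no real obstacle here; the work has already been carried out, and the corollary is essentially just a bookkeeping step that packages the general transfer principle of \cref{pro:commutes-with} for the specific formula $\Phi_1$ of interest for the main theorem.
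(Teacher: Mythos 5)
Your proof is correct and follows exactly the paper's route: the paper dispatches this corollary in a single sentence noting that \cref{ssec:compute-D-basic} establishes the hypotheses of \cref{pro:commutes-with} for $\CW$ and $\CD$, and you have simply unpacked that sentence by citing the specific lemmata (\cref{our-local} for locality and measurability preservation, the superficial-function lemma for $\Bid_n,\Cut_n$, \cref{lem:delta-commutes} for $\Delta$) together with the type derivation $\Phi_1\dcolon\CS_1\to\CS_2$ from \cref{prop}. No gaps, no divergence.
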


\subsection{An a posteriori explanation of the formula \texorpdfstring{$\Phi_{d-1}$}{Phi\_\{d-1\}}}\label{sec:why-phi}

In this section we explain the reasons for the shape of the formula $\Phi_{d-1}$.
The source of difficulty of our construction can be explained already for $d=2$ when $\Omega\from Q\to\{0,1\}$ (i.e.,~the \emph{co\=/B\"uchi} case).
In this case the formula defining $\tau_\CA$ from \cref{tau-automaton} has the form
\begin{equation}\label{eq:form-aut-for-2}
	\mu x_1.\,\nu x_0.\,\delta(x_0,x_1).
\end{equation}
When we move to distributions of random variables, it is clearly not enough to consider separately a distribution of $x_0$ and a distribution of $x_1$,
because the distribution of $\delta(x_0,x_1)$ depends on the joint distribution of its arguments.

Thus, one could try to work with pairs of profiles and consider functions $\CV^2\to\CV^2$.
This leads to a~function
\[U_2\from (x_0,x_1)\mapsto \big(\delta(x_0,x_1),x_1\big),\]
which is a~monotone and chain\=/continuous function on $\CV^2$.
Moreover, the fixed point $\nu x_0.\,\delta(x_0,x_1)$ can be computed in a~descending way, that is, as the first coordinate of $\limD{U_2}(\top,x_1)$.

The problem here is that again the operation $U_2$ cannot be simulated on distributions of random variables:
the events used to compute the value of $\delta(\widehat{\tau}_0,\widehat{\tau}_1)$ (i.e.,~the root, the left subtree, and the right subtree of a~random tree)
and the event used to compute the value of $\widehat{\tau}_1$ (i.e.,~the whole tree) are not independent, unlike in the proof of \cref{lem:delta-commutes}.
Roughly speaking, to compute $U_2$ one has to look simultaneously at the whole randomly generated tree $t$ as well as its root label $t(\epsilon)$ and its two subtrees $t.\dL$ and $t.\dR$.
The variables $\big(t(\epsilon), t.\dL, t.\dR\big)$ are independent as a~vector, but stop being independent when we also consider the whole tree $t$.

The above problem would not arise, if the involved operator had the form
\[U'_2\from (x_0,x_1)\mapsto \big(\delta(x_0,x_1),\delta(x_1,x_1)\big).\]
This operation indeed can be translated to distributions in such a~way that it commutes with computing distributions.
Moreover, if $\tau_1$ is any fixed point of $x_1\mapsto\delta(x_1,x_1)$ then obviously $U'_2(\tau_0,\tau_1)=U_2(\tau_0, \tau_1)$.

This explains the motivation standing behind our formula $\Phi_1$: on some coordinates, instead of applying the identity $\tau_1\mapsto\tau_1$,
we can apply a~proper variant of $\delta$ if we make sure in advance that the current value of the given coordinate is already a~fixed point of it
(see the definition of a~$2$\=/saturated tuple $(\tau_0,\tau_1)$ from \cref{ssec:invariants}).
To achieve this, we need to add additional nested fixed\=/point operations on variants of $\delta$, as in the inductive definition of $\Phi_n$ from $\Phi_{n-1}$.
The operations $\Bid_n$ and $\Cut_n$, moving values between coordinates, give us a technical tool for switching between computations of particular fixed points from the original formula.

\section{Tarski's first order theory of reals}\label{sec:tarski}

Recall that $\CD$ contains probability distributions over a~finite lattice $\CR$.
Let $n$ be the cardinality of $\CR$.
We now treat $\CD$ as a~set of vectors $\vec{a}=(a_0,\ldots,a_{n-1})\in\R^n$.

In this section we think of $\R$ as a~logical structure with $0$, $1$, the operations of addition and multiplication, and the usual linear order, that is, $\langle \R, 0, 1, {+}, {*},{\leq}\rangle$.
Take the standard definition of first\=/order logic.
Our goal is to show that unary $\mu$\=/calculus over distributions $\CD$ can be expressed in first\=/order logic over $\R$.
The technique closely follows an~analogous argument given by Przybyłko and Skrzypczak~\cite{przybylko_simple_sets} for the case of safety automata.

Tarski's quantifier elimination~\cite{tarskiDecision1951} implies that if $\varphi(x_0,\allowbreak\ldots,\allowbreak x_{n-1})$ is a~formula of first\=/order logic over the reals $\R$
then the set of vectors $\vec{x}\in\R^n$ satisfying it is a~semialgebraic set~\cite[Chapter~2]{bochnak_real_algebraic},
that is, a~set represented by a~Boolean combination of equations and inequalities involving polynomials in the variables $x_0,\ldots,x_{n-1}$.
If this semialgebraic set is a~singleton $\{\vec{x}\}$, then we can treat its representation as a~representation of the vector $\vec{x}\in\R^n$.

\begin{theorem}[\cite{collins_algebraic_decomposition}]
	A~semialgebraic representation of the set of vectors satisfying a~given formula $\varphi(x_0,\ldots,x_{n-1})$ can be computed in time double\=/exponential in the size of $\varphi(x_0,\ldots,x_{n-1})$.
\qed\end{theorem}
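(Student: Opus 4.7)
The plan is to establish this via Collins' cylindrical algebraic decomposition (CAD). First I would put $\varphi(x_1,\ldots,x_n)$ into prenex form $Q_1 y_1 \cdots Q_k y_k\, \psi(x_1,\ldots,x_n,y_1,\ldots,y_k)$, where $\psi$ is a quantifier-free Boolean combination of polynomial equalities and inequalities involving a finite family $\mathcal{F}$ of polynomials of degree at most $d$. The goal is to eliminate the inner quantifiers one by one, producing at each step an equivalent quantifier-free (hence semialgebraic) formula describing the satisfying set.

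The heart of the proof is the projection operator. Given $\mathcal{F} \subseteq \mathbb{R}[z_1,\ldots,z_N]$, one constructs a set $\mathrm{Proj}(\mathcal{F}) \subseteq \mathbb{R}[z_1,\ldots,z_{N-1}]$ from leading coefficients, discriminants, and subresultants of elements of $\mathcal{F}$, with the property that any cylindrical decomposition of $\mathbb{R}^{N-1}$ which is sign-invariant for $\mathrm{Proj}(\mathcal{F})$ lifts to a cylindrical decomposition of $\mathbb{R}^{N}$ sign-invariant for $\mathcal{F}$. Iterating $\mathrm{Proj}$ produces univariate polynomials, whose real roots partition $\mathbb{R}$ into finitely many cells; I would then lift cell-by-cell back up the dimensions, pick an algebraic sample point in each cell using classical root-isolation, and evaluate $\psi$ there. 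Because sign-invariance inside a cell makes the truth of $\psi$ uniform, eliminating $\exists y_i$ becomes a finite union of cells (and $\forall y_i$ its dual), producing a new semialgebraic description using the same polynomial data.

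For the complexity bound, each projection step squares the number of polynomials and multiplies their degrees by a polynomial factor, so after $N$ iterations one obtains on the order of $m^{2^{O(N)}}$ univariate polynomials of degree $d^{O(1)^N}$; the sample-point construction, sign evaluation, and Boolean combination at each level contribute only polynomial overhead in the current data size, yielding the claimed doubly exponential time. I would expect the main technical obstacle to be the correctness of the projection operator, namely Collins' theorem that coefficients, discriminants, and subresultants capture \emph{all} the ways in which the real root structure of a polynomial can change as its parameters vary across a cell. Once this combinatorial/algebraic fact is in place, the rest of the argument—organising the lifting inductively and tracking the size blow-up—follows routinely.
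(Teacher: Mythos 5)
Your sketch is a faithful outline of Collins' cylindrical algebraic decomposition argument, which is exactly the content of the cited reference; the paper itself offers no proof and simply invokes the result as a black box. The one point worth stating more carefully is that the doubly-exponential bound is in the number of variables (and degree), which is bounded by the size of $\varphi$, so the theorem as stated follows; otherwise the projection-and-lift structure, the role of sign-invariance, and the size accounting you give match the standard treatment.
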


Our goal is to use this method to represent the distribution $\widehat{\tau_{\CA}}$.
To improve the complexity of the decision problems (e.g.,~whether $\prob\big(\lang(\CA)\big)>0$) we rely on the following result.

\begin{theorem}[\cite{benOrFOonRealsComplexity}]\label{thm:fo-R-decidable}
	The theory of real\=/closed fields can be decided in deterministic exponential space.
	In particular, given a~formula $\varphi$ with no free variables, it is decidable in deterministic exponential space whether $\varphi$ holds.
\qed\end{theorem}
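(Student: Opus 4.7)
The statement is a classical result cited from Ben-Or, Kozen, and Reif, so the proposal here is really a sketch of the underlying proof technique that yields the \textsc{ExpSpace} upper bound, rather than a novel argument.

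The plan is to reduce the decision problem for the full first-order theory to a recursive analysis based on sign conditions of polynomials, as in Tarski's original quantifier elimination, but with sharper complexity bookkeeping. First I would observe that an arbitrary prenex formula $\varphi = Q_1 x_1 \ldots Q_k x_k.\,\psi(x_1,\ldots,x_k)$ with $\psi$ quantifier-free over polynomial (in)equalities can be handled by processing the quantifiers from the inside out. For the innermost layer, given a set $P$ of polynomials in one variable $x_k$ (with coefficients that are polynomials in $x_1,\ldots,x_{k-1}$), the real line stratifies into finitely many cells on which every polynomial in $P$ has constant sign. The truth of $\exists x_k.\,\psi$ reduces to checking whether some cell satisfies the sign pattern required by $\psi$.

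The main technical tool is Sturm's theorem together with subresultant/resultant computations, which allow one to count real roots of univariate polynomials and to project sign conditions on $P$ down to sign conditions on a derived set $P'$ in the variables $x_1,\ldots,x_{k-1}$. Iterating yields a decision procedure in the spirit of cylindrical algebraic decomposition. The naive realisation (Collins) is doubly exponential in time; to obtain \textsc{ExpSpace}, I would follow the Ben-Or--Kozen--Reif route: formulate the iterated sign determination as a problem solvable by a uniform family of arithmetic circuits of exponential size and polynomial depth, then appeal to the parallel-to-space simulation (an analogue of Savitch/Borodin) which converts polylogarithmic parallel depth on exponentially many processors into polynomial space of the corresponding magnitude---here polynomial in the exponentially large circuit description, hence exponential space overall.

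The main obstacle is exactly this complexity analysis: while decidability follows from Tarski's quantifier elimination directly, getting down to \textsc{ExpSpace} requires avoiding the naive blow-up of successive resultant computations, which otherwise push the degrees of intermediate polynomials to doubly exponential values. The critical ingredient is to represent real algebraic numbers implicitly (as roots of polynomials together with Thom codings or isolating intervals) and to manipulate them via arithmetic circuits of controlled depth, so that at each quantifier layer the blow-up in description size is only singly exponential. Once this is achieved the space bound follows, and the particular consequence stated---decidability of a closed formula in \textsc{ExpSpace}---is immediate by specialising the general algorithm to the case of no free variables.
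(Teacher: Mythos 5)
The paper does not prove this statement; it is imported as a black box from Ben\-/Or, Kozen, and Reif (hence the \qed immediately after the statement), so there is no in\-/paper proof to compare against. Your sketch is a faithful high\-/level account of that reference's argument: sign\-/condition analysis via Sturm sequences and subresultants, quantifier\-/by\-/quantifier projection, implicit coding of real algebraic numbers, and---crucially, for beating Collins' doubly\-/exponential CAD---the reformulation as a family of arithmetic/Boolean circuits of exponential size and polynomial (in fact polylogarithmic in the circuit size) depth, followed by a Borodin\-/style parallel\-/time\-/to\-/space simulation to land in deterministic exponential space. The closed\-/formula corollary is, as you say, immediate. The only place where your sketch diverges cosmetically from the cited source is the mention of Thom encodings, which belong to later refinements (Coste--Roy, Renegar) rather than to the original Ben\-/Or--Kozen--Reif paper; their actual device is representation of roots via sign vectors obtained from Sturm\-/sequence evaluations. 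This does not affect correctness of the outline.
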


A set $A\subseteq\CD$ is \emph{definable} if there exists a~formula $\psi_A(x_0,\ldots,x_{n-1})$ such that $\vec{x}\in A$ if and only if $\psi_A(\vec{x})$ holds.
A distribution $\alpha\in \CD$ is \emph{definable} if the singleton set $\{\alpha\}$ is definable.
A partial function $F\from\CD\parto\CD$ is \emph{definable}
if there exists a~formula $\psi_F(x_0,\ldots,x_{n-1}, y_0,\ldots, y_{n-1})$ such that $\psi_F(\vec{x},\vec{y})$ holds if and only if $\vec{x}\in\dom(F)$ and $\vec{y}=F(\vec{x})$.
Similarly, we introduce a~notion of a~relation on $\CD$ being definable.

\begin{lemma}
	The basic functions $(\Bid_n)_\CD$, $(\Cut_n)_\CD$, and $\Delta_\CD$ from $\CD$ to $\CD$ are definable.
\end{lemma}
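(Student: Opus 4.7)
The plan is to produce, for each of the three basic functions, an explicit quantifier\=/free first\=/order formula over the language $\langle \R,0,1,{+},{*},{\leq}\rangle$, exploiting that each function is given coordinate\=/by\=/coordinate by a \emph{polynomial} (in fact, linear for the lifts and quadratic for $\Delta_\CD$) in the coordinates of the input. Throughout, I treat a~distribution as a~vector $\vec{x}=(x_R)_{R\in\CR}\in\R^n$, and I~always include the domain constraint $\psi_\CD(\vec{x})\equiv\bigwedge_{R\in\CR}x_R\geq 0 \land \sum_{R\in\CR}x_R=1$ (and the analogous $\psi_\CD(\vec{y})$ on the output), which is obviously first\=/order. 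Small rational constants such as $\tfrac{1}{|\Sigma|}$ are handled by writing $|\Sigma|\cdot c=1$, since every natural number is $1+\cdots+1$.

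For the lifts $(\Bid_n)_\CD$ and $(\Cut_n)_\CD$, I~use that the underlying superficial functions $(\Bid_n)_\CW$ and $(\Cut_n)_\CW$ come from explicit concrete functions $(\Bid_n)_\CR,(\Cut_n)_\CR\from\CR\to\CR$ (which the excerpt writes out component\=/wise via the underlying vector operations on $\CV^d\cong\CR$\=/indexed data). Since $\CR$ is a~fixed finite set, each preimage $F_\CR^{-1}(\{R\})\subseteq\CR$ is a~fixed finite set that can be computed once and for all, and by the definition of the lift the output $F_\CD(\vec{x})_R$ equals the linear expression $\sum_{R'\in F_\CR^{-1}(\{R\})}x_{R'}$. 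The defining formula is therefore $\psi_F(\vec{x},\vec{y})\equiv\psi_\CD(\vec{x})\land\bigwedge_{R\in\CR}\big(y_R=\sum_{R'\in F_\CR^{-1}(\{R\})}x_{R'}\big)$.

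For $\Delta_\CD$, I~read off the explicit formula~\eqref{eq:delta-D-wzor}: for each $R\in\CR$, $\Delta_\CD(\vec{x})_R$ is a~polynomial of degree $2$ in the $x_{R^\dL},x_{R^\dR}$ whose coefficients are the rational numbers $\tfrac{1}{|\Sigma|}$ multiplied by integers counting preimages. Concretely, having enumerated the finitely many pairs $(R^\dL,R^\dR)\in\Delta_a^{-1}(\{R\})$ for each letter $a\in\Sigma$ (this enumeration is done once in the meta\=/theory, and the result is hard\=/coded into the formula), I~write $\psi_{\Delta_\CD}(\vec{x},\vec{y})$ as $\psi_\CD(\vec{x})$ conjoined with, for every $R\in\CR$, the equality $|\Sigma|\cdot y_R=\sum_{a\in\Sigma}\sum_{(R^\dL,R^\dR)\in\Delta_a^{-1}(\{R\})}x_{R^\dL}\cdot x_{R^\dR}$.

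The construction is mostly bookkeeping: the only potential pitfall is to be careful about the domain condition, since the excerpt defines ``definable partial function'' by the biconditional ``$\psi_F(\vec{x},\vec{y})$ holds iff $\vec{x}\in\dom(F)$ and $\vec{y}=F(\vec{x})$''; conjoining with $\psi_\CD(\vec{x})$ ensures that the formula does not spuriously accept vectors outside~$\CD$. No genuine obstacle arises here, because all three functions are total on~$\CD$ and the polynomial expressions above are literally the definitions given in \cref{ssec:compute-D-basic}.
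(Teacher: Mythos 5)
Your proposal is correct and follows essentially the same route as the paper, which simply observes that the lifts $(\Bid_n)_\CD$, $(\Cut_n)_\CD$ are definable directly from their definitions and that for $\Delta_\CD$ one writes out the arithmetic expression from Equation~\eqref{eq:delta-D-wzor}. You merely make explicit the bookkeeping (linear combinations for the lifts, quadratic polynomials with the $|\Sigma|$\=/clearing trick for $\Delta_\CD$, and the domain constraint $\psi_\CD$) that the paper leaves implicit.
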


\begin{proof}
	The case of $(\Bid_n)_\CD$ and $(\Cut_n)_\CD$ follows directly from the definition.
	For $\Delta_\CD$, it is enough to write the arithmetic expression from \cref{eq:delta-D-wzor}.
\end{proof}

\begin{lemma}\label{lem:delta-and-preceq-def}
	The set of distributions $\alpha$ is definable.
	Moreover, the order $\preceq$ on distributions is also definable.
\end{lemma}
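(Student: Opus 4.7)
The plan is to exploit the fact that $\CR$ is a fixed finite lattice, so every quantification over ``subsets of $\CR$'' or ``upward-closed subsets of $\CR$'' appearing in the definitions can be unfolded into a finite, explicit conjunction/disjunction in the object-level first-order formula. Throughout, I identify $\CD$ with a subset of $\R^{\CR}$, and use one real variable $x_R$ for each element $R\in\CR$.

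For the first claim, a vector $(x_R)_{R\in\CR}\in\R^{\CR}$ represents a probability distribution exactly when every coordinate is non-negative and the coordinates sum to $1$. So I would take
\[
\psi_{\CD}\big((x_R)_{R\in\CR}\big)\ \eqdef\ \Big(\bigwedge_{R\in\CR} x_R\geq 0\Big)\ \wedge\ \Big(\sum_{R\in\CR} x_R = 1\Big),
\]
which is a quantifier-free first-order formula in the signature $\langle\R,0,1,{+},{*},{\leq}\rangle$. Clearly $\vec{x}$ satisfies $\psi_{\CD}$ iff $\vec{x}\in\CD$, so $\CD$ is definable.

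For the second claim, I would unfold \cref{def:dist-order} directly. Since $\CR$ is finite, the collection $\CU(\CR)$ of upward-closed subsets of $\CR$ (with respect to the order on $\CR$) is a finite, explicitly computable family of concrete subsets. For each $\CU\in\CU(\CR)$ I have a linear-arithmetic atomic formula $\sum_{R\in\CU} x_R\leq \sum_{R\in\CU} y_R$, and I form the finite conjunction:
\[
\psi_{\preceq}\big((x_R)_{R\in\CR},(y_R)_{R\in\CR}\big)\ \eqdef\ \psi_{\CD}(\vec{x})\,\wedge\,\psi_{\CD}(\vec{y})\,\wedge\,\bigwedge_{\CU\in\CU(\CR)}\,\sum_{R\in\CU} x_R\leq \sum_{R\in\CU} y_R.
\]
By the very definition of $\preceq$, $\psi_{\preceq}(\vec{x},\vec{y})$ holds iff $\vec{x},\vec{y}\in\CD$ and $\vec{x}\preceq\vec{y}$, so $\preceq$ is definable as a binary relation on $\CD$.

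There is essentially no obstacle: both clauses reduce to rewriting finite set-theoretic data ($\CR$ and its upward-closed subsets) as an explicit Boolean combination of polynomial (in fact linear) (in)equalities. The only thing to note is that the size of $\psi_{\preceq}$ is at worst exponential in $|\CR|$, because $|\CU(\CR)|$ can be exponential in $|\CR|$; this is harmless for definability, and fits the complexity bounds announced for the overall algorithm via \cref{thm:fo-R-decidable}.
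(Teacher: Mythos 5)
Your proof is mathematically correct and follows the first route the paper explicitly acknowledges: unfolding Definition~\ref{def:dist-order} into an explicit finite conjunction over all upward-closed subsets of $\CR$, together with the obvious quantifier-free formula for membership in $\CD$. The paper, however, deliberately does not use this naive conjunction for the final analysis; after naming it as ``one solution'' it invokes the construction from the proof of Theorem~6.1 of Przybyłko and Skrzypczak, which simulates quantification over upward-closed subsets of $\CR$ inside first-order logic and yields a formula whose size is only polynomial in $|\CR|$.

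The difference matters for your closing remark, which is not correct. The complexity accounting in Section~\ref{ssec:complexity} takes $N=|Q|$ as the parameter, notes that $|\CR|$ is exponential in $N$, and needs the formula for $\preceq$ to be \emph{exponential in $N$}, i.e.\ polynomial in $|\CR|$. Your formula $\psi_{\preceq}$ has one conjunct per upward-closed subset of $\CR$, and there can be up to $2^{|\CR|}$ such subsets, so $\psi_{\preceq}$ is potentially exponential in $|\CR|$ and hence \emph{doubly} exponential in $N$. Plugging that into the analysis would push the announced bounds up by one exponential (roughly four-fold exponential time for the representation and three-fold exponential space for the decision problems). So your construction is fine for definability and for mere decidability, but it does not ``fit the complexity bounds announced for the overall algorithm''; that is precisely what the polynomial-size formula of Przybyłko and Skrzypczak buys.
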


\begin{proof}
	A vector encodes a~distribution if its values are between $0$ and $1$ and they sum up to $1$.

	It remains to write a~formula $\varphi_{\preceq}(\alpha, \beta)$ which holds if and only if $\alpha\preceq \beta$.
	One solution is just to write a~large conjunction over all upward\=/closed subsets of $\CR$, as in \cref{def:dist-order}.
	However, one can simulate the quantification over upward\=/closed sets within first\=/order logic,
	see the proof of Theorem~6.1 in Przybyłko and Skrzypczak~\cite{przybylko_simple_sets}.
	The formula given there is only polynomial in the cardinality of $\CR$.
\end{proof}

\begin{proposition}\label{prop:all-definable}
	If all the basic functions $H\in\mathfrak{F}$ are definable then for every formula $F$ of unary $\mu$\=/calculus we know that the partial function $F_\CD\from\CD\parto\CD$ is definable.
\end{proposition}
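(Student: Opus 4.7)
The plan is a straightforward induction on the structure of the formula $F$, exploiting the fact that first-order logic over $\langle\R,0,1,+,*,\leq\rangle$ is closed under the Boolean operations and quantification, and that both $\preceq$ and the set of vectors that encode elements of $\CD$ are already definable by \cref{lem:delta-and-preceq-def}. Let $\psi_\CD(\vec{x})$ denote the formula expressing ``$\vec{x}$ encodes an element of $\CD$'' and $\varphi_\preceq(\vec{x},\vec{y})$ the formula expressing $\vec{x}\preceq\vec{y}$.

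The base case $F=H\in\mathfrak{F}$ is immediate from the assumption. For composition $F_1\comp F_2$, I would take the defining formulae $\psi_{F_1},\psi_{F_2}$ provided by the induction hypothesis and define
\[\psi_{F_1\comp F_2}(\vec{x},\vec{z}) \;\eqdef\; \exists \vec{y}.\ \psi_\CD(\vec{y})\land\psi_{F_1}(\vec{x},\vec{y})\land\psi_{F_2}(\vec{y},\vec{z}).\]
This precisely captures the composition as defined in \cref{ssec:intensional}, including the domain condition.

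The interesting cases are $\limU{F}$ and $\limD{F}$; I treat the former, the latter being symmetric. Unwinding the definition of $\limU{F}$ from \cref{ssec:intensional}, the pair $(\vec{x},\vec{y})$ belongs to the graph of $(\limU{F})_\CD$ exactly when $\vec{y}$ is a fixed point of $F_\CD$, $\vec{y}\succeq\vec{x}$, and every other fixed point of $F_\CD$ above $\vec{x}$ is also above $\vec{y}$. Using the induction hypothesis, one sets
\begin{align*}
\psi_{\limU{F}}(\vec{x},\vec{y}) \;\eqdef\;{}&\psi_\CD(\vec{y})\land\psi_F(\vec{y},\vec{y})\land\varphi_\preceq(\vec{x},\vec{y})\\
&{}\land\forall\vec{y}'.\,\big(\psi_\CD(\vec{y}')\land\psi_F(\vec{y}',\vec{y}')\land\varphi_\preceq(\vec{x},\vec{y}')\big)\\
&\hspace{7em}\Rightarrow\varphi_\preceq(\vec{y},\vec{y}').
\end{align*}
The clause $\psi_F(\vec{y}',\vec{y}')$ simultaneously encodes $\vec{y}'\in\dom(F_\CD)$ and $F_\CD(\vec{y}')=\vec{y}'$, so it expresses $\vec{y}'\in\Fix(F_\CD)$ exactly. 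Together, the conjuncts state that $\vec{y}$ is the least fixed point of $F_\CD$ above $\vec{x}$, which is precisely the condition defining $\dom(\limU{F})_\CD$ and the graph of the function on that domain.

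The only potential obstacle lies in the quantifier $\forall\vec{y}'$: since $\CD$ is merely a subset of $\R^n$, one must explicitly restrict the quantifier to vectors encoding distributions, which is done by the guard $\psi_\CD(\vec{y}')$. This restriction, together with the already-available definability of $\preceq$ from \cref{lem:delta-and-preceq-def}, is what makes the whole construction go through within first-order logic over the reals; no transfinite iteration needs to be simulated, because the Knaster-Tarski-style characterisation of $\limU{F}$ used in \cref{ssec:intensional} is purely order-theoretic.
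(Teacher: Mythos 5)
Your proof is correct and takes essentially the same approach as the paper, which gives a one-sentence argument that the order-theoretic definitions of $\dom(F_\CD)$ and $F_\CD(x)$ from \cref{ssec:intensional} are directly expressible over $\langle \R,0,1,+,*,\leq\rangle$ once $\preceq$ and membership in $\CD$ are definable; you simply spell out the induction and the explicit formula for $\limU{F}$ that the paper leaves implicit. The only nitpick is that $\psi_{\limU{F}}$ should also carry a conjunct $\psi_\CD(\vec{x})$ to ensure the domain is contained in $\CD$, matching the convention in the definition of a definable partial function.
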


\begin{proof}
	The equations used to define $\dom(F_\CD)$ and $F_\CD(x)$ in \cref{ssec:intensional} are clearly expressible in first\=/order logic over $\R$,
	using the formulae for the basic functions and for the order ${\preceq}$ on $\CD$.
\end{proof}

The following fact follows directly from definitions.

\begin{fact}\label{ft:result-definable}
	If a~partial function $F\from \CD\parto \CD$ is definable and its argument $\vec{x}\in\dom(F)$ is also definable then the value $F(\vec{x})$ is also definable.
\qed\end{fact}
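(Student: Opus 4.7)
The plan is to give a one-line first-order construction. Since $F\from\CD\parto\CD$ is definable, we have a formula $\psi_F(\vec{x},\vec{y})$ witnessing definability; since the argument $\vec{x}_0\in\dom(F)$ is definable, we have a formula $\psi_{\vec{x}_0}(\vec{z})$ such that $\psi_{\vec{x}_0}(\vec{z})$ holds iff $\vec{z}=\vec{x}_0$. The natural candidate defining the singleton $\{F(\vec{x}_0)\}$ is then
\[
	\psi(\vec{y}) \;\eqdef\; \exists z_1\ldots\exists z_n.\,\psi_{\vec{x}_0}(\vec{z})\,\wedge\,\psi_F(\vec{z},\vec{y}).
\]

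To verify correctness, I would simply unfold the definitions. Any $\vec{z}$ satisfying $\psi_{\vec{x}_0}$ must equal $\vec{x}_0$, so $\psi(\vec{y})$ is equivalent to $\psi_F(\vec{x}_0,\vec{y})$. By the definability of $F$ this in turn is equivalent to the conjunction $\vec{x}_0\in\dom(F)\wedge \vec{y}=F(\vec{x}_0)$; the first conjunct is given by hypothesis, hence $\psi(\vec{y})$ holds iff $\vec{y}=F(\vec{x}_0)$. Thus $\{F(\vec{x}_0)\}$ is defined by $\psi$, which is exactly what it means for the distribution $F(\vec{x}_0)$ to be definable.

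There is no real obstacle here: the claim is a standard closure property of first-order definable sets and functions under composition by existential quantification, and the only care needed is to keep separate variable names for the argument and the value so that the existential block correctly plugs $\vec{x}_0$ into the first argument slot of $\psi_F$.
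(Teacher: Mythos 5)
Your construction is exactly the standard one and is correct; the paper states the fact without proof precisely because it is this immediate closure-under-composition argument. Nothing to add.
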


We are now in place to prove our main theorem.

\begin{proof}[Proof of \cref{thm:main-theorem}]
	Input a~nondeterministic (or alternating, see \cref{rem:alternating}) parity tree automaton $\CA$.
	Define $\CI\subseteq\CR$ as the set $\{R\in\CR\mid (q_I,d{-}1)\in R\}$.
	Observe that
	\begin{align}
		\prob\big(\lang(\CA)\big)&=\prob\Big(\big\{t\in\trees_\Sigma\mid \tau_\CA(t)\ni q_I\big\}\Big)\nonumber\\
		&=\prob\Big(\big\{t\in\trees_\Sigma\mid (\Phi_{d-1})_{\CV^d}(\bot)_{d-1}(t)\ni q_I\big\}\Big)\nonumber\\
		&=\prob\Big(\big\{t\in\trees_\Sigma\mid (\Phi_{d-1})_\CW(\bot)(t)\ni (q_I,d{-}1)\big\}\Big)\nonumber\\
		&=\sum_{R\in \CI}\prob\Big(\big\{t\in\trees_\Sigma\mid (\Phi_{d-1})_\CW(\bot)(t)=R\big\}\Big)\nonumber\\
		&=\sum_{R\in \CI}\widehat{(\Phi_{d-1})_\CW(\bot)}(R)\nonumber\\
		&=\sum_{R\in \CI}(\Phi_{d-1})_\CD(\widehat{\bot})(R),\label{eq:final-sum}
	\end{align}
	where:
	\begin{itemize}
	\item the first equality is just the definition of $\lang(\CA)$;
	\item the second is the composition of equations from \cref{tau-automaton,prop:realisation};
	\item the third relies on the identification between $\CV^d=\big(\trees_\Sigma\to \powerset(Q)\big)^d$ and $\CW=(\trees_\Sigma\to \CR)$ where $\CR=\powerset\big(Q\times\{0,\ldots,d{-}1\}\big)$;
	\item the fourth is just additivity of measure (for distinct sets $R$ the considered sets of trees are disjoint);
	\item the fifth is just the definition of $\widehat{\tau}$;
	\item the sixth equality follows from \cref{cor:commutes-with-Phi}.
	\end{itemize}

	Now, recall that $\bot\in \CW$ is the function constantly equal $\emptyset\in\CR$.
	Therefore, the distribution $\widehat{\bot}$ assigns probability $1$ to $\emptyset$ and $0$ to all other $R\in\CR$.
	Thus, this distribution is definable and therefore \cref{ft:result-definable} implies that the distribution $(\Phi_{d-1})_\CD\big(\widehat{\bot}\big)$ (treated as a~vector) is definable.
	In consequence, the sum of some of its coordinates as in~\cref{eq:final-sum} must also be definable.

	Therefore, first of all, the number $\prob\big(\lang(\CA)\big)$ is an~algebraic number.
	Moreover, Tarski's quantifier elimination allows us to effectively compute a~representation of this number.
	Finally, \cref{thm:fo-R-decidable} says that the decision problems, like whether $\prob\big(\lang(\CA)\big)>q$ for a~given rational number $q$, are also decidable.
\end{proof}

\subsection{Complexity}\label{ssec:complexity}

We provide here upper bounds on the complexity of the decision method provided in the proof of \cref{thm:main-theorem}.
The situation is analogous to that from the cited proof of Theorem~6.1 in Przybyłko and Skrzypczak~\cite{przybylko_simple_sets}.
Our parameter is $N=|Q|$, that is, the number of states of the given automaton~$\CA$.
The representation of the whole automaton $\CA$ is polynomial in $N$, and so is the value of $d\leq\max_{q\in Q}\Omega(q)+1$
(we can assume that the values of $\Omega$ form an~interval and the least value equals either $0$ or $1$).
Thus, the syntactic size of the formula $\Phi_{d-1}$ defined at the beginning of \cref{sec:formula} is also polynomial in $N$ (in fact linear).

Now, the cardinality of the set $\CR=\powerset\big(Q\times\{0,\ldots,d{-}1\}\big)$ is exponential in $N$.
Thus, so is $n$---the length of the vectors $\vec{x}\in\R^n$ which encode distributions $\alpha\in \CD$.
Similarly, exponential in $N$ are the formulae of first\=/order logic that define the function $\Delta_\CD$ and the order ${\preceq}$
(see the comment in the proof of \cref{lem:delta-and-preceq-def}).

\begin{lemma}
	The translation of $\Phi_{d-1}$ into first\=/order logic over reals given by \cref{prop:all-definable} is exponential in the size of $\CA$.
\end{lemma}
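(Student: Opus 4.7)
The plan is to argue by induction on the structure of $\Phi_1$, tracking how the size of the first-order translation grows at each construct. The two ingredients whose sizes drive the bound are: (i)~the size of the formulae defining the basic functions and the order~${\preceq}$, and (ii)~the structural size of $\Phi_1$ itself. From the paragraph preceding the lemma we know that $|\CR| = 2^{O(N)}$, that the formulae defining $\Delta_\CD$, $(\Bid_n)_\CD$, $(\Cut_n)_\CD$, and ${\preceq}$ all have size $2^{O(N)}$, and that $\Phi_1$ has syntactic size $O(N)$ (in particular, its depth is $O(N)$).

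Next, I would analyse the inductive translation provided by \cref{prop:all-definable} case by case. For a basic $H \in \mathfrak{F}$, the associated defining formula has size $s_0 = 2^{O(N)}$. For a composition $F_1 \comp F_2$, the natural translation is
\[\psi_{F_1 \comp F_2}(\vec{x},\vec{y}) \;\equiv\; \exists \vec{z}.\ \psi_{F_1}(\vec{x},\vec{z}) \wedge \psi_{F_2}(\vec{z},\vec{y}),\]
whose size is $|\psi_{F_1}| + |\psi_{F_2}| + O(n)$, that is, the two sizes simply add (with an $n = 2^{O(N)}$ additive overhead for the intermediate vector). For $\limU{F}$, the translation expresses that $\vec{y}$ is a fixed point of $F$ above $\vec{x}$ that is below every other such fixed point, using the formulae $\psi_F$ and $\psi_{{\preceq}}$ and the domain-membership predicate, together with one universal quantifier over $\vec{z}$; schematically,
\[\psi_F(\vec{y},\vec{y})\ \wedge\ \vec{x}{\preceq}\vec{y}\ \wedge\ \forall \vec{z}.\big(\psi_F(\vec{z},\vec{z}) \wedge \vec{x}{\preceq}\vec{z}\big)\Rightarrow \vec{y}{\preceq}\vec{z},\]
which contains $\psi_F$ at most twice plus $O(1)$ copies of $\psi_{{\preceq}}$. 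The case of $\limD{F}$ is symmetric.

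With these estimates in hand, I would define $T(F)$ as the size of the translation of a subformula $F$ and derive a recurrence: $T(H) \leq s_0$, $T(F_1 \comp F_2) \leq T(F_1) + T(F_2) + O(n)$, and $T(\limU{F}),\, T(\limD{F}) \leq 2\cdot T(F) + O(s_0)$. Solving this recurrence along the syntactic structure of $\Phi_1$ (which has constant branching factor and depth $O(N)$), the worst case is the chain of nested limit operators, which yields $T(\Phi_1) \leq 2^{O(N)} \cdot s_0 = 2^{O(N)}$. The main obstacle is simply to be careful that the duplication of $\psi_F$ inside the universal quantifier in the $\limU{F}$ case does not compound catastrophically; since the nesting depth of $\Phi_1$ is only linear in $N$, this doubling can be absorbed into the exponential bound coming from the basic formulae. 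Hence the translation of $\Phi_1$ has size exponential in $|\CA|$, as required.
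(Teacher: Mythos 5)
Your proposal is correct and takes essentially the same approach as the paper: an induction along the syntactic structure of $\Phi_1$ in which the $\limU{\cdot}$ and $\limD{\cdot}$ cases duplicate the sub-translation a constant number of times, yielding a $C^{O(N)}$ factor over the $O(N)$ nesting levels, which is then multiplied by the $2^{O(N)}$ size of the basic predicates for $\Delta_\CD$, $(\Bid_n)_\CD$, $(\Cut_n)_\CD$, and ${\preceq}$. Your version is merely more explicit, spelling out the recurrence $T(\limU F)\le 2\,T(F)+O(s_0)$ that the paper summarises as ``copying it a constant number of times.''
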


\begin{proof}[Sketch of the proof]
	The proof is inductive on the sub\=/formula $\Phi_i$ for $i\in\set{0,\ldots,d{-}1}$.
	The translation of the formula $\Phi_0$ is exponential in $N = |Q|$, because so are the vectors of quantified variables and the basic predicates for expressing $\Delta_\CD$ and ${\preceq}$.
	Now, the translation of the formula $\Phi_{i+1}$ is obtained from the translation of $\Phi_{i}$ by copying it a~constant number of times
	(because of the definition of $\limU{F}(x)$ and $\limD{F}(x)$) and combining with translations of $\Delta_\CD$.
	Thus, the translation of the final formula $\Phi_{d-1}$ has size of the form $C^N\cdot E(N)$,
	where $C$ is a~constant and $E$ is an exponential function bounding the size of the basic predicates.
\end{proof}

\begin{corollary}
	Given a (nondeterministic or alternating) automaton $\CA$, a representation of the algebraic number $\prob\!\big(\lang(\CA)\big)$ can be computed in three\=/fold exponential time in the size of $\CA$.
	For every fixed rational number $q$, the decision problem whether $\prob\!\big(\lang(\CA)\big)$ is equal, smaller, or greater than~$q$
	can be solved in two\=/fold exponential space in the size of $\CA$.
\qed\end{corollary}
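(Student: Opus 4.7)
The plan is to bookkeep sizes through the construction of the previous sections and then invoke the two cited algorithms about first\=/order theory of the reals. Set $N=|Q|$. The automaton $\CA$ has representation polynomial in $N$, and we may assume $d = O(N)$. By the inductive definition at the beginning of \cref{sec:formula}, the syntactic size of $\Phi_1$ is linear in $d$, hence polynomial (in fact linear) in $N$.

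Next, I would invoke the preceding lemma, which provides a translation of $\Phi_1$ into a first\=/order formula $\varphi_{\Phi_1}$ over the reals of size exponential in $N$. The key size blow\=/up happens here: the variable vectors encoding distributions live in $\R^n$ with $n = |\CR| = 2^{O(N \log N)}$, and the basic predicates expressing $\Delta_\CD$ and ${\preceq}$ are already exponential in $N$ (for the latter see the proof of \cref{lem:delta-and-preceq-def}). The formula $\varphi_{\Phi_1}$ states that a certain sum of coordinates of $(\Phi_1)_\CD(\widehat{\bot})$ equals the target value, which by the proof of \cref{thm:main-theorem} equals $\prob\!\big(\lang(\CA)\big)$.

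For the first claim I would apply the cited theorem of Collins (quantifier elimination for semialgebraic sets) to $\varphi_{\Phi_1}$, which in time double\=/exponential in $|\varphi_{\Phi_1}|$ produces a semialgebraic representation of the singleton defined by this formula. Since $|\varphi_{\Phi_1}|$ is already exponential in $N$, the resulting runtime is three\=/fold exponential in $N$, and the output is a semialgebraic representation of the algebraic number $\prob\!\big(\lang(\CA)\big)$. For the second claim, given a fixed rational $q$, I would form the closed sentence obtained from $\varphi_{\Phi_1}$ by adding an existential block that binds the vector of coordinates and asserts the desired comparison between the target sum and $q$ (this adds only a polynomial overhead in $|\varphi_{\Phi_1}|$). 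Applying the theorem of Ben\=/Or, Kozen, and Reif (\cref{thm:fo-R-decidable}), which decides such a sentence in space exponential in its length, we obtain a procedure running in space exponential in $|\varphi_{\Phi_1}|$, that is, two\=/fold exponential in $N$.

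The routine obstacle is really just verifying the size estimate for $\varphi_{\Phi_1}$: one has to check that the inductive translation underlying \cref{prop:all-definable} only blows up the formula by a constant factor at each of the $O(N)$ levels of $\Phi_n$, so that the final size is of the form $C^{O(N)} \cdot E(N)$ with $E$ exponential, yielding a single exponential bound overall. Once this is in hand, the two complexity bounds follow immediately from the black\=/box application of Collins's and Ben\=/Or et al.'s theorems to a formula of exponential size.
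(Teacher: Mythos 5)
Your proposal is correct and follows essentially the same route as the paper: bound the syntactic size of $\Phi_1$ (polynomial in $N=|Q|$), observe that the translation into first\=/order logic over the reals via \cref{prop:all-definable} is singly exponential in $N$ (the exponential cost coming from the dimension $n=|\CR|$ and the basic predicates), and then apply Collins's quantifier elimination (double\=/exponential time) and the Ben\=/Or--Kozen--Reif decision procedure (exponential space) as black boxes, yielding the three\=/fold exponential time and two\=/fold exponential space bounds respectively. One small slip: $|\CR|=2^{|Q|\cdot d}=2^{O(N^2)}$ rather than $2^{O(N\log N)}$, but this does not change the conclusion since both are singly exponential in the size of $\CA$.
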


Note that the latter decision problem is known to be EXPTIME\=/hard~\cite[Proposition~7.6]{przybylko_simple_sets}.

\section{Stochastic branching processes}\label{sec:branching}

In this section we consider measures on the set of trees induced by stochastic branching processes, showing how \cref{cor:branching} follows from \cref{thm:main-theorem}.
Essentially, such a reduction (in a slightly different setting) was already given by Niwiński, Przybyłko, and Skrzypczak~\cite{niwinski_measures_wmso} (with unpublished proofs in~\cite{niwinski_measures_wmso_arxiv}).
To make this paper self\=/contained, we provide here a similar reduction, with a~minor improvement with respect to complexity:
we construct a product automaton whose size is polynomial in the size of the branching process (including the binary representations of the rational probabilities involved) and of the original automaton.

In this section, we slightly generalize the notion of a tree, as follows.
A \emph{ranked alphabet} $(\Sigma,\ar)$ consists of a~finite set $\Sigma$ and of an~arity function $\ar\colon\Sigma\to\N$.
Given a number $s\in\N$ we write $\Sigma_s$ for $\set{a\in\Sigma\mid\ar(a)=s}$.
Then, a \emph{tree} over $(\Sigma,\ar)$ is a partial function $t\colon\N^*\parto\Sigma$ whose domain $\dom(t)$ is non\=/empty and closed under taking prefixes,
and such that for each \emph{node} $v\in\dom(t)$ and \emph{direction} $d\in\N$ we have $vd\in\dom(t)$ if and only if $1\leq d\leq\ar(t(v))$
(i.e., $\ar(a)$ specifies the number of children of every node labelled by $a$).
A binary tree as defined in \cref{sec:basic-concepts} can be then seen as a~special case where the arity of every letter is $2$, if we identify the directions $\dL$, $\dR$ with $1$, $2$.
The set of all trees over $(\Sigma,\ar)$ is denoted $\trees_{\Sigma,\ar}$.

An automaton over such a tree is defined as previously, except that a transition relation is of the form $\gamma\subseteq Q\times\bigcup_{s=0}^\infty\Sigma_s\times Q^s$;
then, defining a run $\rho$ we require that $\big(\rho(v),t(v),\rho(v1),\dots,\allowbreak\rho(vs)\big)\in\gamma$, where $s=\ar(t(v))$.

A \emph{branching process} is a tuple $\CP=\langle\Sigma,\ar,R,r_I,\Theta,\theta\rangle$, where $(\Sigma,\ar)$ is a finite ranked alphabet, $R$ is a finite set of states,
$r_I\in R$ is an initial state, $\Theta\subseteq\bigcup_{s=0}^\infty\Sigma_s\times R^s$ is a set of transitions, and $\theta\colon R\to\mathbb{D}(\Theta)$ is a function assigning distributions of transitions to states.
We assume that all probabilities occurring in $\theta$ are rational, with numerator and denominator encoded in binary.

Intuitively, in order to generate a tree, we start with the state $r_I$ in the root;
then, while having a~state~$r$ in some node $v$, the function $\theta(r)$ specifies the probability of each tuple $(a,r_1,\dots,r_{\ar(a)})\in\Theta$;
we choose at random one of such tuples, we put the label $a$ in $v$, and we send the states $r_1,\dots,r_{\ar(a)}$ to the children of $v$.
Formally, a \emph{tree prefix} is a partial function $f\colon\N^*\to\Sigma$ such that for each node in $\dom(f)$ of the form $vd$, where $n\in\N^*$ and $d\in\N$, we have $v\in\dom(f)$ and $1\leq d\leq\ar(f(v))$;
it \emph{agrees} with a tree $t$ if $t\restr_{\dom(f)}=f$.
A \emph{run} of $\CP$ over a tree prefix $f$ is a function $\rho\colon\dom(f)\to\Theta$ such that for each node $v\in\dom(\rho)$, the first coordinate of $\rho(v)$ equals $f(v)$.
Suppose that $\dom(\rho)$ is finite.
For every state $r\in R$ we define by induction on $|\dom(\rho)|$ probabilities
$$p_\rho^r=\left\{\begin{array}{ll}
	1&\mbox{if }\dom(\rho)=\emptyset,\\
	\theta(r)(\rho(\epsilon))\cdot p_{\rho\restr_1}^{r_1}\cdots p_{\rho\restr_s}^{r_s}&\mbox{if }\rho(\epsilon)=(a,r_1,\dots,r_s).
\end{array}\right.$$
Then, we define $U_f$ as the set of trees $t$ that agree with a tree prefix $f$,
and we equip $\trees_{\Sigma,\ar}$ with a topology generated by a basis consisting of sets $U_f$ for all tree prefixes $f$ having finite domain.
Finally, we consider the Lebesgue measure $\prob_\CP$ defined for the basis by $\prob_\CP(U_f)=\sum_\rho p_\rho^{r_I}$, where the sum is over all runs $\rho$ of $\CP$ over $f$.

Our goal is to prove the following theorem, which allows us to deduce \cref{cor:branching} from \cref{thm:main-theorem}.

\begin{theorem}\label{thm:reduction}
	Given a branching process $\CP$ and a nondeterministic parity tree automaton $\CA$ over the same ranked alphabet $(\Sigma,\ar)$,
	one can construct a nondeterministic parity tree automaton $\CB$ over an unranked (i.e., with all letters of arity $2$) alphabet $\set{0,1}$ such that $\prob(\lang(\CB))=\prob_\CP(\lang(\CA))$.
	Moreover, the size of $\CB$ can be polynomial in the size of $\CP$ and $\CA$.
\end{theorem}

As a first step, we show how to realize a random choice from a finite set according to an arbitrary distribution as a choice of a random tree.

\begin{lemma}\label{lem:reduction-numbers}
	Let $\alpha\in\mathbb{D}(\Theta)$ be a probability distribution over a finite set $\Theta$, with all probabilities rational.
	Then, there exists a function $e_\alpha\colon\trees_{\set{0,1}}\to\Theta$ such that for each $x\in\Theta$
	\begin{itemize}
	\item	we can construct an automaton recognizing the set $e^{-1}_\alpha(x)$, of size polynomial in the size of the representation of~$\alpha$, and
	\item	the measure of $e^{-1}_\alpha(x)$ equals $\alpha(x)$.
	\end{itemize}
\end{lemma}

\begin{proof}
	Let $x_1,\dots,x_k$ be all the elements of $\Theta$,
	let $N$ be the product of denominators of all probabilities in $\alpha$, and let $n$ be the length of the binary representation of $N$.
	Consider also the numbers $p_i=N\cdot\sum_{j\leq i}\alpha(x_i)$ for $i\in\set{0,\dots,k}$; these are integers, where $p_0=0$ and $p_k=N$.

	Given a tree $t\in\trees_{\set{0,1}}$, we split its rightmost branch into (infinitely many) fragments of length~$n$, and we read labels on these fragments as binary numbers between $0$ and $2^n-1$.
	If none of these numbers is smaller than $N$, we put $e_\alpha(t)=x_1$.
	Otherwise, we choose the first number that is smaller that $N$, denote it $\ell_t$, and we define $e_\alpha(t)$ to be the unique element $x_i$ for which $p_{i-1}\leq\ell_t<p_i$.
	
	It is easy to construct automata recognizing the sets $e^{-1}_\alpha(x_i)$.
	Moreover, note that $n$ is of polynomial size (the length of a~product is bounded by the sum of lengths of factors),
	and that the automata do not need to remember the whole number written in a fragment of the rightmost branch, but rather they can compare it with $p_{i-1}$ and $p_i$ bit-by-bit.
	This means that they can be of size polynomial in the size of the representation of $\alpha$.

	For the second item of the thesis, note that the probability of obtaining a particular number in a given fragment of the rightmost branch is $2^{-n}$, and that results in different fragments are independent.
	Thus, given $i\in\set{1,\dots,k}$ and $j\geq 0$,
	the probability that the first $j$ fragments contain numbers above $N-1$ equals $(1-N\cdot2^{-n})^j$,
	and the probability that the $j$-th fragment contains a number $\ell$ satisfying $p_{n-1}\leq\ell<p_n$ equals $N\cdot\alpha(x_i)\cdot 2^{-n}$,
	hence the probability that both these events hold equals $(1-N\cdot2^{-n})^j\cdot N\cdot\alpha(x_i)\cdot 2^{-n}$.
	Moreover, all the fragments contain numbers above $N{-}1$ with probability $\lim_{j\to\infty}(1-N\cdot2^{-n})^j=0$; this way of obtaining $e_\alpha(t)=x_1$ is negligible.
	Summing up over all $j$, we obtain
	$$\prob(e^{-1}_\alpha(x_i))=\sum_{j=0}^\infty(1-N\cdot2^{-n})^j\cdot N\cdot\alpha(x_i)\cdot2^{-n}=\frac{N\cdot\alpha(x_i)\cdot2^{-n}}{N\cdot 2^{-n}}=\alpha(x_i),$$
	as requested.
\end{proof}

Heading towards a proof of \cref{thm:reduction}, we now specify how binary trees over $\set{0,1}$ encode ranked trees over $(\Sigma,\ar)$.
To this end, let $\CP=\langle\Sigma,\ar,R,r_I,\Theta,\theta\rangle$.
We define a tuple of functions $F_r\colon\trees_{\set{0,1}}\to\trees_{\Sigma,\ar}$, one for each state $r\in R$.
Consider a tree $t\in\trees_{\set{0,1}}$, and recall the function $e_{\theta(r)}\colon\trees_{\set{0,1}}\to\Theta$ from \cref{lem:reduction-numbers}.
Denote $e_{\theta(r)}(t\restr_\dL)=(a,r_1,\dots,r_s)$; we have $s=\ar(a)$.
Then, we define $F_r(t)$ to be the tree having label $a$ in its root, and the tree $F_{r_i}(t\restr_{\dR^i\dL})$ in the $i$-th child of the root, for $i\in\set{1,\dots,s}$
(see \cref{fig:reduction} for an illustration).
The above definition is self-recursive, but clearly there is a unique tuple of functions satisfying it.

\begin{figure}
\begin{center}
	\begin{tikzpicture}[scale=0.6]
	
	\newcommand{\subtr}[3]{
	\coordinate (#1) at (#2) {};
	
	\draw (#1) -- ++(-0.5,-1);
	\draw (#1) -- ++(+0.5,-1);
	\node[anchor=center] at ($(#1)+(0,-0.8)$) {#3};
	}
	
	\newcommand{\subTR}[3]{
	\coordinate (#1) at (#2) {};
	
	\draw (#1) -- ++(-1.2,-2.4);
	\draw (#1) -- ++(+1.2,-2.4);
	\node[anchor=center] at ($(#1)+(0,-2.0)$) {#3};
	}
	
	\node (t) at (0,+1.5) {$t$};
	\node (T) at (10,+1.5) {$F_r(t)$};
	\draw (t) edge[|-Latex, bend left=10] (T);
	
	\tikzstyle{treeN} = [draw,circle]
	\tikzstyle{treeE} = [draw]
	\tikzstyle{treeD} = [line cap=round,thick,dash pattern=on \pgflinewidth off 12pt]
	
	\node[treeN] (r) at (0,0) {};
	\subtr{tl}{$(r)+(-1.5,-2)$}{$t'$}
	\draw[treeE] (r) -- (tl);
	
	\node[treeN] (t1) at ($(r)+(+1,-2)$) {};
	\draw[treeE] (r) -- (t1);
	\subtr{tr1}{$(t1)+(-1,-2)$}{$t_1$}
	\draw[treeE] (t1) -- (tr1);
	\draw[treeE] (t1) -- ++(0.6,-1.2);
	\coordinate (dds) at ($(t1)+(+1.2,-2.4)$);
	
	\draw[treeD] ($(dds)+(-0.3,+0.6)$) -- ($(dds)+(+0.3,-0.6)$);
	
	\coordinate (dds) at ($(t1)+(+1.2,-2.4)+(-1,-2)$);
	
	\draw[treeD] ($(dds)+(-0.3,+0.6)$) -- ($(dds)+(+0.3,-0.6)$);
	
	\node[treeN] (t2) at ($(t1)+(+2,-4)$) {};
	
	\subtr{tr2}{$(t2)+(-1,-2)$}{$t_s$}
	\subtr{te}{$(t2)+(+1,-2)$}{}
	
	\draw[treeE] (t2) -- (tr2);
	\draw[treeE] (t2) -- (te);
	
	\node[treeN] (r) at (10,0) {$a$};
	\subTR{tr1}{$(r)+(-2.5,-2)$}{$T_{r_1}(t_1)$}
	\subTR{tr2}{$(r)+(+2.5,-2)$}{$T_{r_s}(t_s)$}
	\draw[treeE] (r) -- (tr1);
	\draw[treeE] (r) -- (tr2);
	
	\draw[treeD] ($(r)+(-1,-2.5)$) -- ($(r)+(+1,-2.5)$);
	\end{tikzpicture}
\end{center}
	\caption{An illustration of the function $F_r$, for $r\in R$.
		The first subtree, denoted $t'$, is used to determine $(a,r_1,\dots,r_s)$ as $e_{\theta(r)}(t')$.
		This $a$ goes to the label of the root,	while states $s_i$ together with subtrees denoted $t_i$ are used to recursively define subtrees attached below the root.
		Labels in the remaining part of the input tree $t$ are irrelevant for the construction of $F_r(t)$.
	}\label{fig:reduction}
\end{figure}

\begin{claim}\label{cl:reduction}
	For every measurable set $L\subseteq\trees_{\Sigma,\ar}$ we have $\prob(F_{r_I}^{-1}(L))=\prob_\CP(L)$.
\end{claim}

\begin{proof}
	We define $G_r$ analogously to $F_r$, except that it labels the root of the created tree (and likewise, recursively, all its nodes) by the whole tuple $(a,r_1,\dots,r_s)$, not only by the letter $a$.
	Then $G_r(t)$ is a run of $\CP$ over $F_r(t)$.
	Moreover, for a run $\rho$ over a tree prefix we define $U_\rho$ as the set of runs $\rho'$ of $\CP$ over a whole tree such that $\rho'\restr_{\dom(\rho)}=\rho$.
	
	By additivity of measures (more specifically Dynkin's $\pi$\=/$\lambda$ Theorem), it is enough to prove the claim for sets $L$ in the basis, that is, when $L=U_f$ for a tree prefix $f$ with finite domain.
	By definition $\prob_\CP(U_f)=\sum_\rho p_\rho^{r_I}$, and we can see that $F^{-1}_{r_I}(U_f)=\biguplus_\rho G^{-1}_{r_I}(U_\rho)$;
	in both cases $\rho$ ranges over runs of $\CP$ over $f$.
	It is thus enough to prove that $\prob(G^{-1}_r(U_\rho))=p_\rho^r$ for every run $\rho$ of $\CP$ with a~finite domain, and every $r\in R$.
	We prove it by induction on $|\dom(\rho)|$.
	If $\dom(\rho)=\emptyset$, then every run $\rho'$ of $\CP$ satisfies $\rho'\restr_\emptyset=\rho$, hence belongs to $U_\rho$;
	we thus have $\prob(G^{-1}_r(U_\rho))=1=p_\rho^r$.
	Let now $\dom(\rho)\neq\emptyset$ with $\rho(\epsilon)=(a,r_1,\dots,r_s)$.
	By definition $G^{-1}_r(U_\rho)$ contains trees $t$ such that $t\restr_\dL\in e^{-1}_{\theta(r)}(\rho(\epsilon))$ and $t\restr_{\dR^i\dL}\in G^{-1}_{r_i}(U_{\rho\restr_i})$ for all $i\in\set{1,\dots,s}$.
	These events are independent, because they concern disjoint parts of the tree $t$.
	The measure of $e^{-1}_{\theta(r)}(\rho(\epsilon))$ equals $\theta(r)(\rho(\epsilon))$ by \cref{lem:reduction-numbers},
	and the measure of $G^{-1}_{r_i}(U_{\rho\restr_i})$ equals $p_{\rho\restr_i}^{r_i}$ by the induction hypothesis.
	We thus have $\prob(G_{r_I}^{-1}(U_\rho))=\theta(r)(\rho(\epsilon))\cdot p_{\rho\restr_1}^{r_1}\cdots p_{\rho\restr_s}^{r_s}=p^r_\rho$, as needed.
\end{proof}

Finally, we construct an automaton $\CB$, which reads a tree $t\in\trees_{\set{0,1}}$, and simulates a run of $\CA=\langle \Sigma,Q,q_I,\gamma,\Omega\rangle$ on $F_{r_I}(t)$.
This automaton keeps track of a state $q$ of $\CA$ and a state $r$ of~$\CP$.
It guesses some transition $x=(a,r_1,\dots,r_s)$, and sends to the left subtree $t'$ a subautomaton (constructed in \cref{lem:reduction-numbers}) checking that $e_{\theta(r)}(t')$ is indeed $x$;
then it simulates a transition of~$\CA$ reading the letter $a$, and sends appropriate states of $\CA$ and $\CP$ to the next subtrees to the left of rightmost branch
(we omit tedious details of this construction, as they are easy to reproduce).
The size of $\CB$ is polynomial: its ``main states'' are pairs from $Q\times R$;
it also has auxiliary states that remember transitions $(a,r_1,\dots,r_s)\in\Theta$ and $(q,a,q_1,\dots,q_s)\in\gamma$ together with a number up to $s$,
and states of the $|R|\cdot|\Theta|$ automata from \cref{lem:reduction-numbers}.
Then, we have $\lang(\CB)=F_{r_I}^{-1}(\lang(\CA))$, hence $\prob(\lang(\CB))=\prob_\CP(\lang(\CA))$ by \cref{cl:reduction}.
This finishes a proof of \cref{thm:reduction}.

\section{Conclusions}

We have shown that the following question can be effectively solved.
\begin{quote}
	What is the probability that a~randomly chosen tree satisfies a~given formula of MSO?
\end{quote}
This can be viewed as a~completion of the famous Rabin tree theorem by its probabilistic aspect.
If a~formula is replaced by an equivalent nondeterministic or alternating parity automaton $\CA$, our algorithm works in elementary time.

We believe that our results or techniques may find some further applications although---as mentioned in Introduction---%
a~number of related problems in probabilistic verification of finite-state system is known to be undecidable.
One of the research directions points to logics whose original version reduces to MSO logic over trees, but the decidability status of the probabilistic version is unknown as, for instance~PCTL*.

On the technical level, our work confirms the usefulness of the construction of the probabilistic powerdomain $\CD$ that comprises all the necessary information and is still manageable.
We have also discovered that the framework of the $\mu$\=/calculus---originally based on complete lattices---can be transposed to more general partial orders.
The unary $\mu$\=/calculus that we have introduced for this purpose has a~potential of incorporating other formalisms, going beyond the MSO theory of the tree.
As a matter of fact, our construction provides a way to translate a~generic formula of unary $\mu$\=/calculus into a~formula of Tarski’s first\=/order theory of reals,
which yields decidability of the formalism in question.

\paragraph*{Acknowledgements} The authors would like to express their gratitude to Grzegorz Cichosz, Marcin Przybyłko, and Igor Walukiewicz, for their valuable comments on the paper.

\bibliographystyle{alpha}
\bibliography{mskrzypczak}

\end{document}